\documentclass{article}

\usepackage{arxiv}

\usepackage[utf8]{inputenc}
\usepackage[T1]{fontenc}
\usepackage{amsmath,amssymb,amsthm}
\usepackage{array,booktabs,tabularx,longtable}
\usepackage{graphicx}
\usepackage[font=small,labelfont=bf]{caption}
\usepackage{microtype}
\usepackage{hyperref}
\hypersetup{colorlinks=true,linkcolor=black,citecolor=black,urlcolor=blue,
            breaklinks=true,
            pdftitle={Cluster-Graph Edit Distance: Optimal Explicit
                      Embeddings, Metric Proxies, and Complexity},
            pdfauthor={JiYe Liu, Wenkai Wang, Qiang Tian, Wenjun Wang}}

\theoremstyle{plain}
\newtheorem{theorem}{Theorem}[section]
\newtheorem{lemma}[theorem]{Lemma}
\newtheorem{proposition}[theorem]{Proposition}
\newtheorem{corollary}[theorem]{Corollary}
\newtheorem{conjecture}[theorem]{Conjecture}
\theoremstyle{definition}
\newtheorem{definition}[theorem]{Definition}
\newtheorem{remark}[theorem]{Remark}
\numberwithin{equation}{section}

\title{Cluster-Graph Edit Distance:\\ Optimal Explicit Embeddings,
       Metric Proxies, and Complexity}

\author{%
  JiYe Liu\thanks{ORCID 0009-0006-4765-0053} \\
  School of Artificial Intelligence\\
  Tianjin University\\
  Tianjin 300354, China\\
  \texttt{jayliu@tju.edu.cn}
  \And
  Wenkai Wang\thanks{ORCID 0000-0003-1702-1871} \\
  School of Artificial Intelligence\\
  Tianjin University\\
  Tianjin 300354, China\\
  \texttt{wkwang@tju.edu.cn}
  \And
  Qiang Tian\thanks{ORCID 0000-0002-6641-3639} \\
  School of Computer and Information Engineering\\
  Tianjin Normal University\\
  Tianjin 300384, China\\
  \texttt{tianqiang@tjnu.edu.cn}
  \And
  Wenjun Wang\thanks{Corresponding author. ORCID 0009-0009-3547-6061} \\
  School of Artificial Intelligence\\
  Tianjin University\\
  Tianjin 300354, China\\
  \texttt{wjwang@tju.edu.cn}
}

\date{}

\begin{document}
\maketitle

\begin{abstract}
The cluster graphs on $n$ vertices, the disjoint unions of complete graphs, have the integer partitions of $n$ as their isomorphism classes, and the quotient edit distance $q^*(\lambda,\mu)=\min_{\sigma\in S_n}|E(G_\lambda)\triangle\sigma E(G_\mu)|$ makes that set a metric space. Its geometry and its complexity both issue from one identity: $q^*$ is an affine function of the maximum of $\lVert X\rVert_F^2$ over the contingency tables with margins $\lambda,\mu$. Our main result is an explicit optimal embedding. The weighted dyadic sums of the Ferrers staircase, taken at the critical exponent $\frac14$, give a map $F_n$ into $\ell_2^{\,<4n}$ that acts on a single partition and is computable in $O(n)$ time, and its distortion is $\Theta(n^{1/4})$. That order is optimal, since $c_2(\mathcal K_n)=\Theta(n^{1/4})$: the lower half comes from a $\Theta(\sqrt n)$-dimensional Hamming cube of partitions and Enflo's theorem, so no other external input is needed. The analytic core is a scale-free inverse inequality for every integer sequence with $v(1)=v(N+1)=0$ and $v(s)-v(s+1)\in s\mathbb Z$: its critical dyadic energy is at least $\lVert v\rVert_1^2/(63504\sqrt{\mathrm{TV}(v)})$. Combinatorially the same identity yields two explicit $\ell_1$ models, the vertex-mass metric on sorted degree sequences with $\frac12\delta_1\le q^*<\frac32\delta_1$ and the block-energy metric with $q^*\le B\le2q^*-1$, both constants optimal; hence $c_1(\mathcal K_n)\le2$, and an $O(n\log n)$-time algorithm returns an alignment of cost below $2q^*$ with a two-sided certificate. Deciding $q^*(\lambda,\mu)\le Q$ is strongly NP-complete with no FPTAS, while the farthest alignment is polynomial. The best constant is open: an exactly solvable chirp family caps the scale-free constant at $\frac{2\sqrt2}3$ and, since it attains $\mathrm{TV}(v)=2n$, caps the $n$-normalized constant over realizable pairs at $\frac23$.
\end{abstract}

\keywords{cluster graphs; graph edit distance; metric embeddings; Euclidean distortion; multiscale analysis; approximation algorithms; computational complexity; transportation polytopes}

\section{Introduction}\label{sec:1}
\subsection{Cluster profiles and quotient edit distance}\label{sec:1.1}

Comparing two graphs up to isomorphism means relabelling one of them optimally and counting the disagreements. For $n$-vertex graphs this is the \textbf{quotient edit distance}
\begin{equation*}
q^*(G,H)=\min_{\sigma\in S_n}\bigl|E(G)\,\triangle\,\sigma E(H)\bigr| ,
\end{equation*}
the Hamming metric on labelled graphs pushed down along the action of the symmetric group. It is a metric on isomorphism classes, and it is tied to the quadratic assignment problem: minimizing $|E(G)\triangle\sigma E(H)|$ over $\sigma$ is a quadratic assignment instance, and graph isomorphism is its zero-cost decision version. Few sharp quantitative results about it are known, whether exact approximation ratios or exact distortion orders, and this remains so even on restricted graph classes; the state of the art nearest to this setting is surveyed in \S\ref{sec:1.4}.

This paper supplies such results for one class: the \textbf{cluster graphs}, disjoint unions of complete graphs, which are small enough for exact answers to be available and large enough for them not to be trivial. Two features drive everything below.

First, the combinatorics is exact. A cluster graph is determined by the multiset of its block sizes, so
\begin{equation*}
\mathcal K_n/\!\cong\;\longleftrightarrow\;\{\lambda:\lambda\vdash n\}
\end{equation*}
and $q^*$ becomes a metric on the integer partitions of $n$.

Second, the metric has an exact algebraic form. Writing $T(\lambda,\mu)$ for the contingency tables with row margins $\lambda$ and column margins $\mu$, Theorem~\ref{thm:2.3} below states
\begin{equation*}
q^*(\lambda,\mu)=\frac{\lVert\lambda\rVert_2^2+\lVert\mu\rVert_2^2}2-\max_{X\in T(\lambda,\mu)}\lVert X\rVert_F^2 .
\end{equation*}
We call this the \textbf{transportation representation}. It is the single hub of the paper. Every result below is a reading of it, and there are exactly two readings.

Its \emph{combinatorial} reading asks which tables are optimal and what one unit of transported mass costs. Section 3 extracts from it two explicit $\ell_1$ metrics on partitions, each two-sidedly equivalent to $q^*$ with an optimal constant, and Sections 4 and 5 turn those into Euclidean geometry: the exact distortion order of the class, and explicit coordinate maps measured against it.

Its \emph{computational} reading presents $q^*$ as the maximization of a convex objective over a transportation polytope. Section 8 shows that the degree of that objective and the direction of the optimization are jointly decisive, and derives an exact and parameterized classification of the nearest and farthest alignment problems, together with a dichotomy between them.

These are two readings of one identity, not two investigations placed side by side.

\textbf{Where the object comes from.} On cluster graphs the labelled version of this distance is a familiar statistic. A partition of a labelled vertex set into blocks is a clustering, and for two clusterings $P,Q$ of the same $n$ points the quantity $|E(G_P)\triangle E(G_Q)|$ counts the pairs of points on which they disagree, being together in one clustering and separated in the other. That is the pair-counting distance underlying the Rand index, and it is exactly half the Mirkin metric:
\begin{equation*}
M(P,Q)=\sum_i|P_i|^2+\sum_j|Q_j|^2-2\sum_{ij}x_{ij}^2\;=\;2\,\bigl|E(G_P)\,\triangle\,E(G_Q)\bigr| ,
\end{equation*}
where $x_{ij}=|P_i\cap Q_j|$ is the contingency table of the two clusterings. Minimizing over relabellings therefore produces a correspondence-free comparison: a distance between two clustering \emph{profiles}, that is, between the multisets of block sizes, rather than between two labelled clusterings of one data set. That is the appropriate object when the two clusterings live on different data sets, or on one data set with no known correspondence between its points. It also explains why the extremal problem behind Theorem~\ref{thm:2.3}, maximizing $\sum_{ij}x_{ij}^2$ over tables with prescribed margins, is the central quantity of the comparison-of-clusterings literature (\S\ref{sec:1.4}): that maximization is this same optimization, read from the other side.

Throughout we do not distinguish a partition from its isomorphism class, and $p(n)=|\mathcal K_n/\!\cong|$.

\subsection{Main results}\label{sec:1.2}

Let $\mathrm{sd}(\lambda)\in\mathbb Z_{\ge0}^n$ be the sorted degree sequence of $G_\lambda$, put $\delta_1(\lambda,\mu)=\lVert\mathrm{sd}(\lambda)-\mathrm{sd}(\mu)\rVert_1$, and let
\begin{equation*}
e(\lambda)=\Bigl(\tbinom{\lambda_1}2,\dots,\tbinom{\lambda_{k(\lambda)}}2,0,\dots,0\Bigr)\in\mathbb Z_{\ge0}^{\,n},\qquad B(\lambda,\mu)=\lVert e(\lambda)-e(\mu)\rVert_1
\end{equation*}
be the \textbf{block-energy vector} and the metric it induces. Distortion means product distortion: $\rho(f)=L^+(f)/L^-(f)$ with $L^\pm$ the two Lipschitz constants of $f$, and $c_p(\mathcal X)=\inf_f\rho(f)$ over embeddings into $\ell_p$ (\S\ref{sec:2.5}).

The paper's centre is a single map. Let $u_\lambda(s)=\sum_{i:\lambda_i\ge s}\lambda_i$ be the Ferrers staircase of $\lambda$, let $N$ be the least power of two at least $n$, and let $\mathcal D_N$ be the dyadic subintervals of $[N]$, of which there are $2N-1$. Define
\begin{equation}\label{eq:1.1}
F_n(\lambda)=\Bigl(2^{-\ell(I)/4}\textstyle\sum_{s\in I}u_\lambda(s)\Bigr)_{I\in\mathcal D_N}\ \in\ \mathbb R^{2N-1} ,
\end{equation}
the weighted dyadic sums of the staircase at exponent $\frac14$. Section 5 identifies $\frac14$ as the critical endpoint selected by the two obstruction families, and says what is and is not proved about the exponents below it.

\begin{theorem}[an explicit embedding of optimal order]\label{thm:1.1}
 For every $n\ge2$ the map $F_n$ of (1.1) depends only on $n$, acts on a single partition, calls no distance oracle, and is injective on $\mathcal K_n/\!\cong$. Its dimension is $2N-1<4n$ and it is computable in $O(n)$ time and $O(n)$ machine words from either partition representation. Its distortion satisfies
\begin{equation*}
\rho(F_n)\ \le\ 1662\,n^{1/4} ,
\end{equation*}
and no embedding of $(\mathcal K_n/\!\cong,q^*)$ into a Hilbert space does better in order:
\begin{equation*}
c_2(\mathcal K_n)=\Theta\bigl(n^{1/4}\bigr),\qquad \rho(F_n)=\Theta\bigl(n^{1/4}\bigr) .
\end{equation*}

\emph{(Proposition~\ref{prop:5.1}, Theorem~\ref{thm:5.11}, Lemma~\ref{lem:4.2} and Corollary~\ref{cor:5.12}.)}
\end{theorem}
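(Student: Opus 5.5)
The plan is to reduce everything to the staircase difference $v=u_\lambda-u_\mu$ and then combine a soft upper Lipschitz bound with the inverse inequality announced in the abstract. The first step is a pointwise identity for $\delta_1$. A vertex lying in a block of size $\lambda_i$ has degree $\lambda_i-1$. Hence the number of vertices of $G_\lambda$ with degree $\ge s-1$ equals $u_\lambda(s)$, for every $s$. Writing the $\ell_1$ distance between two sorted sequences via the layer-cake formula then gives
\begin{equation*}
\delta_1(\lambda,\mu)=\sum_{s\ge1}\bigl|u_\lambda(s)-u_\mu(s)\bigr|=\lVert v\rVert_1 .
\end{equation*}
By the vertex-mass comparison of Section 3, $\frac12\lVert v\rVert_1\le q^*<\frac32\lVert v\rVert_1$, so it suffices to compare $\lVert F_n(\lambda)-F_n(\mu)\rVert_2$ with $\lVert v\rVert_1$. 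I would record three structural facts about $v$:
\begin{itemize}
\item $v(1)=n-n=0$ and $v(N+1)=0$;
\item $v(s)-v(s+1)=s\,(m_\lambda(s)-m_\mu(s))\in s\mathbb Z$, where $m_\lambda(s)$ is the multiplicity of the part $s$;
\item $\mathrm{TV}(v)\le\mathrm{TV}(u_\lambda)+\mathrm{TV}(u_\mu)=2n$, since each staircase decreases monotonically from $n$ to $0$.
\end{itemize}

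\textbf{Injectivity, dimension and cost.} These are routine.
\begin{itemize}
\item \emph{Injectivity.} The coordinates indexed by length-one dyadic intervals are the values $u_\lambda(s)$ themselves. The multiplicities are then recovered as $m_\lambda(s)=(u_\lambda(s)-u_\lambda(s+1))/s$, so $F_n(\lambda)$ determines $\lambda$.
\item \emph{Dimension.} It is $2N-1<4n$ because $N<2n$.
\item \emph{Cost.} I would build $u_\lambda$ from the multiplicity vector in $O(n)$ time. Each dyadic sum is then the sum of its two children, so all $2N-1$ coordinates come out bottom-up in $O(N)=O(n)$ time and space.
\end{itemize}

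\textbf{Upper Lipschitz bound.} The squared distance splits by dyadic level:
\begin{equation*}
\lVert F_n(\lambda)-F_n(\mu)\rVert_2^2=\sum_{I}2^{-\ell(I)/2}\Bigl(\sum_{s\in I}v(s)\Bigr)^2 .
\end{equation*}
On one level the intervals are disjoint, so $\sum_I(\sum_I v)^2\le\lVert v\rVert_1\max_I|\sum_I v|\le\lVert v\rVert_1^2$. The weights $2^{-\ell/2}$ form a geometric series, which gives $\lVert\Delta F_n\rVert_2\le C\lVert v\rVert_1\le 2C\,q^*$ with an absolute constant $C$.

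\textbf{Lower Lipschitz bound.} I would apply the scale-free inverse inequality (Theorem~\ref{thm:5.11}) to $v$; its hypotheses are exactly the structural facts listed above. It gives
\begin{equation*}
\lVert\Delta F_n\rVert_2^2\ \ge\ \frac{\lVert v\rVert_1^2}{63504\sqrt{\mathrm{TV}(v)}}\ \ge\ \frac{\lVert v\rVert_1^2}{63504\sqrt{2n}} ,
\end{equation*}
that is, $\lVert\Delta F_n\rVert_2\ge\lVert v\rVert_1/(252\,(2n)^{1/4})\ge\frac{2}{3}\,q^*/(252\,(2n)^{1/4})$. Multiplying the two Lipschitz constants yields $\rho(F_n)=O(n^{1/4})$. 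Tracking the explicit constants should reproduce $1662$.

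\textbf{Matching lower bound on $c_2$.} I would use the Hamming-cube construction (Lemma~\ref{lem:4.2}). Take $d=\Theta(\sqrt n)$ independent switches, each choosing between two nearby block configurations at well-separated sizes, and pad with singletons to total $n$. The lemma should say that $q^*$ restricted to these partitions is bi-Lipschitz, with absolute constants, to a constant multiple of the Hamming metric on $\{0,1\}^d$. Enflo's theorem gives $c_2(\{0,1\}^d)=\sqrt d$, hence $c_2(\mathcal K_n)=\Omega(d^{1/2})=\Omega(n^{1/4})$. Combined with the upper bound, this gives $c_2(\mathcal K_n)=\Theta(n^{1/4})$ and $\rho(F_n)=\Theta(n^{1/4})$ (Corollary~\ref{cor:5.12}).

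The main obstacle is the inverse inequality itself: it must convert the arithmetic constraint $v(s)-v(s+1)\in s\mathbb Z$ into a lower bound on the critical dyadic energy that loses only $\mathrm{TV}^{1/4}$. Everything else is bookkeeping around it. If I had to prove it, I would first decompose $v$ into its excursions between zeros. On each excursion I would use the divisibility constraint to show that $v$ is forced to stay roughly constant on intervals whose length is comparable to the scale of $s$. I would then pick, for each excursion, one dyadic level where its mass is captured by a single interval sum, and optimize over levels using $\mathrm{TV}(v)$.
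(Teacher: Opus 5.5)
Your proposal is correct and is the paper's own assembly: Proposition~\ref{prop:5.1} for the explicitness claims, the geometric series of Proposition~\ref{prop:5.2} for $L^+\le2\sqrt{2+\sqrt2}$, the inverse-energy inequality combined with $\mathrm{TV}(v)\le2n$ and $q^*<\frac32\lVert v\rVert_1$ exactly as in the proof of Theorem~\ref{thm:5.11} (indeed $3\sqrt{2+\sqrt2}\cdot252\cdot2^{1/4}<1662$), and Lemma~\ref{lem:4.2} with Enflo's theorem for $c_2(\mathcal K_n)=\Omega(n^{1/4})$. Two corrections: the inequality you invoke is Theorem~\ref{thm:6.2} (announced as Theorem~\ref{thm:1.2}), not Theorem~\ref{thm:5.11}, which is the distortion bound you are deriving from it; and your sketched proof of that inequality would not go through as stated, since divisibility forces flatness only below the half-line $|v(s)|<\frac{s-1}2$ (Lemma~\ref{lem:6.10}), the chirp of Proposition~\ref{prop:7.1} being a single excursion that reverses sign every two positions with every dyadic sum at levels $\ge2$ equal to $0$ or $2$, while pricing each sign-constant run as a whole and optimizing with $\mathrm{TV}(v)$ yields only the octave bound of Proposition~\ref{prop:6.14}, which loses $(\log n)^{1/4}$ in distortion on flat octave spectra, the loss that the stopping-time, island and folded-remainder modules of \S\ref{sec:6} remove.
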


Both halves of $c_2(\mathcal K_n)=\Theta(n^{1/4})$ are proved here. The lower half places a $\Theta(\sqrt n)$-dimensional Hamming cube inside the metric space at distortion at most $8$ and applies Enflo's theorem; the upper half is $F_n$ itself. In particular the determination of $c_2(\mathcal K_n)$ uses no external input beyond Enflo \cite{ref10}, and no general embedding theorem for finite subsets of $\ell_1$ is invoked (Remark~\ref{rem:4.3}). To the best of our knowledge $F_n$ is the first explicit pointwise embedding of this metric attaining the optimal distortion order.

The analytic content of Theorem~\ref{thm:1.1} is a statement about integer sequences, and it is proved in a strictly larger generality than the application uses.

\begin{theorem}[quantized inverse-energy theorem]\label{thm:1.2}
 Let $N$ be a power of two and let $v:\{1,\dots,N+1\}\to\mathbb Z$ satisfy
\begin{equation*}
v(1)=v(N+1)=0,\qquad v(s)-v(s+1)\in s\,\mathbb Z\quad(1\le s\le N) .
\end{equation*}
Write $\mathrm{TV}(v)=\sum_{s\le N}|v(s)-v(s+1)|$ and let
\begin{equation*}
\mathcal E(v)=\sum_{\ell=0}^{\log_2N}2^{-\ell/2}\sum_{I\in\mathcal D_N,\,|I|=2^\ell}\Bigl(\sum_{s\in I}v(s)\Bigr)^2
\end{equation*}
be the critical dyadic energy. Then for every non-zero such $v$,
\begin{equation*}
\mathcal E(v)\ \ge\ \frac1{63504}\cdot\frac{\lVert v\rVert_1^{\,2}}{\sqrt{\mathrm{TV}(v)}} .
\end{equation*}

\emph{(Theorem~\ref{thm:6.2}.)}
\end{theorem}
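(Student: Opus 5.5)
The plan is a stopping-time decomposition of $[N]$ into dyadic intervals on which $v$ is constant, followed by one Cauchy--Schwarz inequality. Write $L=\lVert v\rVert_1$ and $T=\mathrm{TV}(v)$.

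\textbf{Step 1: quantization.} The constraint $v(s)-v(s+1)\in s\mathbb Z$ means that a non-zero jump at position $s$ costs at least $s$ in total variation. Hence the number $J_k$ of jumps in the block $R_k=[2^k,2^{k+1})$ satisfies
\begin{equation*}
\sum_k 2^k J_k\le T .
\end{equation*}
In particular there are no jumps beyond $T$. The boundary condition $v(N+1)=0$ then forces $v\equiv0$ on $(T,N+1]$, so only the blocks with $2^k\le T$ carry mass.

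\textbf{Step 2: a disjoint family of constant intervals.} In each block $R_k$, I select a dyadic level $\ell_k$ with $2^{\ell_k}\asymp 2^k/(J_k+1)$, chosen so that at least half of the level-$\ell_k$ dyadic intervals inside $R_k$ contain no jump. On such a jump-free interval $I$ the sum is $2^{\ell_k}v_I$, where $v_I$ is the constant value of $v$ on $I$. The remaining intervals, those containing jumps, I handle separately: their mass is bounded by $2^{\ell_k}$ times the local sup of $|v|$, and this sup is controlled by $T$ through the jumps to its right. The choice of $\ell_k$ is meant to make the mass carried by the jump-free intervals at least a fixed fraction of the block mass $M_k$ of $R_k$. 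Since the energy $\mathcal E(v)$ sums over all levels, restricting it to this disjoint family $\mathcal F$ gives a lower bound. Cauchy--Schwarz then yields
\begin{equation*}
\mathcal E(v)\ \ge\ \sum_{I\in\mathcal F}2^{-\ell_I/2}\Bigl(\sum_{s\in I}v(s)\Bigr)^2\ \ge\ \frac{\bigl(\sum_{I\in\mathcal F}|\sum_{I}v|\bigr)^2}{\sum_{I\in\mathcal F}2^{\ell_I/2}} .
\end{equation*}
By Step 2 the numerator is $\gtrsim L^2$, so everything reduces to the denominator.

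\textbf{Step 3: the denominator, which I expect to be the main obstacle.} Block $R_k$ contributes about $(J_k+1)\sqrt{2^k/(J_k+1)}=\sqrt{2^k(J_k+1)}$ to the denominator. Applying Cauchy--Schwarz over $k$ together with $\sum_k 2^kJ_k\le T$ bounds the total by $\sqrt T$ times the square root of the number of active blocks. This loses a factor $\sqrt{\log T}$, which the scale-free constant cannot afford.

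To remove the logarithm, I would not treat all blocks uniformly. Instead I would discard the blocks carrying little mass relative to their share of the variation, keeping only blocks with
\begin{equation*}
M_k\ge c\,L\sqrt{2^k(J_k+1)/T} ,
\end{equation*}
and run Step 2 only on the retained blocks. The discarded blocks carry total mass at most $cL\sum_k\sqrt{2^k(J_k+1)/T}$, and this sum has to be shown $\le L/2$. The idea for that estimate is geometric decay: blocks with $J_k=0$ are constant on all of $R_k$, and $v$ on $R_k$ is bounded by the jump sizes to the right of $R_k$, so the relevant terms should telescope geometrically in $k$ rather than accumulate linearly. Making this bound summable, rather than merely bounded by the number of blocks, is the crux.

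\textbf{Step 4: constants.} Once Step 3 is done, the remaining work is to track the absolute constants through the three steps (the fraction of mass kept in Step 2, the two Cauchy--Schwarz applications, and the threshold $c$), and to check that they combine to the stated $1/63504$.
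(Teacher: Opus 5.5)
Your plan has a genuine gap in two places.

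\textbf{Step 2.} The claim that the jump-free level-$\ell_k$ intervals carry a fixed fraction of $M_k$ is false. A single level per octave, chosen from $J_k$ alone, cannot see runs of different widths, and the mass can sit entirely in the intervals that contain jumps. For example, $v=s(s-1)\mathbf 1_{\{s\}}$ lies in $\mathcal Q_N$; divisibility at $s-1$ and at $s$ forces an isolated spike to have height in $s(s-1)\mathbb Z$. Its octave has $J_k\le2$, so $2^{\ell_k}\asymp2^k/3$. Every jump-free level-$\ell_k$ interval then has sum $0$, and all of $L$ sits in an interval whose mass you only bound from above by $2^{\ell_k}\max|v|$, which produces no energy.

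The fix is to work with the sign-constant run rather than the constant interval. Tile each run by its maximal dyadic subintervals: there are at most two per scale, and there is no cancellation because the sign is fixed. This prices each run at its own scale, $\sum_R\mu_R^2/\sqrt{W_R}\le7\mathcal E(v)$ (Proposition~\ref{prop:6.6}). It also works for arbitrary disjoint subintervals of runs, which the paper needs later.

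\textbf{Step 3.} This is the more serious gap, because it does not remove the logarithm. Run-wise pricing together with the run counting from (6.2) gives exactly the octave bound $\mathcal E(v)\ge L^2/(14\sum_i\sqrt{\tau_i})$ of Proposition~\ref{prop:6.14}, with its $\sqrt{\log T}$ loss on flat octave spectra.

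Your repair needs $\sum_k\sqrt{2^k(J_k+1)/T}\le1/(2c)$. That is a statement about jump counts alone, and it is false. Put $J_k\asymp T/(K2^k)$ jumps of size $\asymp2^k$ in each of $K\asymp\log T$ consecutive octaves. Choose signs so that the partial sums stay small, and absorb the remainder in the jump at $s=1$, which the divisibility leaves free. The sum is then $\asymp\sqrt K$. The geometric decay you invoke only concerns octaves with $J_k=0$, which are harmless; the bad configurations have jumps in every octave.

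So thresholds alone cannot control the discarded mass. You need to know where the mass sits relative to the jumps, and that requires using divisibility beyond the budget $\sum_k2^kJ_k\le T$. The paper does this as follows.
\begin{itemize}
\item The thick regime $L\ge252T$ uses a stopping time at $P=L/(252\sqrt T)$ on subintervals of runs. The mass of the expensive pieces is then at most $14\mathcal E(v)/P$ (Lemma~\ref{lem:6.7}).
\item The clean mass above the half-line $|v(s)|=\frac{s-1}2$ is at most $14P\sqrt T$. This comes from a layer-cake island count whose integral is at most $T$ (Lemma~\ref{lem:6.8}).
\item The mass below the half-line is at most $8$ times the mass above it (Proposition~\ref{prop:6.9}). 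Low positions form exactly flat plateaus. The residue $\operatorname{dist}(v(t),(t-1)\mathbb Z)$ reads off a plateau's height and can fall by at most $|v(t)|/t+\frac12$ per step. Hence every descent between plateaus is paid for by high mass on the bridge, and an Abel summation converts plateau mass into these payments.
\item The thin regime $L<252T$ is handled separately by a level-$0$ position tax, giving $\mathcal E(v)\ge L^{3/2}/35$ (Lemma~\ref{lem:6.13}).
\end{itemize}
None of these ingredients appears in your plan. Without something playing the role of the capital lemma, the argument stays at the octave bound.
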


Three things about this statement. It applies to every closed sequence obeying the divisibility, whether or not it is a difference of two Ferrers staircases, so it is not a theorem about partitions. It is scale-free, normalized by $\mathrm{TV}(v)$ rather than by the budget $\mathrm{TV}(v)\le2n$ that realizability supplies. And some arithmetic hypothesis is indispensable: for an alternating $\pm1$ pattern the inequality fails by a factor $\Theta(\sqrt N)$, and the divisibility is exactly what forbids that pattern (\S\ref{sec:6.1}). The constant is not optimized and is far from the truth; \S\ref{sec:6.7} says by how much and why.

The bridge from Theorem~\ref{thm:1.2} to Theorem~\ref{thm:1.1} is the following pair of $\ell_1$ models, which also carry the paper's algorithmic results. They are not interchangeable: each is optimal for a different purpose.

\begin{center}\begin{small}
\begin{tabularx}{\linewidth}{>{\raggedright\arraybackslash}p{0.13\linewidth}>{\raggedright\arraybackslash}X>{\raggedright\arraybackslash}X>{\raggedright\arraybackslash}X}
\toprule
\textbf{model} & \textbf{definition} & \textbf{relation to $q^*$} & \textbf{carries} \\
\midrule
vertex-mass $\delta_1$ & $\lVert\mathrm{sd}(\lambda)-\mathrm{sd}(\mu)\rVert_1$ & $\tfrac12\delta_1\le q^*<\tfrac32\delta_1$, both constants optimal & the cube lower bound (\S\ref{sec:4.1}), the staircase coordinates and all of \S\ref{sec:5}--\S\ref{sec:6} \\
block-energy $B$ & $\lVert e(\lambda)-e(\mu)\rVert_1$ & $q^*\le B\le2q^*-1$, the constant $2$ optimal and never attained & the $2$-approximation, $c_1(\mathcal K_n)\le2$ \\
\bottomrule
\end{tabularx}
\end{small}\end{center}

\begin{theorem}[two $\ell_1$ models, approximation, and $c_1$]\label{thm:1.3}
\textbf{(i)} \emph{(Vertex-mass model; Theorem~\ref{thm:3.1}.)} For all $n\ge2$ and all $\lambda\ne\mu\vdash n$,
\begin{equation*}
\tfrac12\,\delta_1(\lambda,\mu)\ \le\ q^*(\lambda,\mu)\ <\ \tfrac32\,\delta_1(\lambda,\mu) .
\end{equation*}
The constant $\frac12$ is attained; the constant $\frac32$ is the supremum over all $n$ and all pairs and is attained by none. Hence the sorted-degree map into $\ell_1$ has distortion strictly below $3$ for each $n$, with supremum $3$, and that factor is optimal \emph{for this map}.

\textbf{(ii)} \emph{(Block-energy model; Corollary~\ref{cor:3.10}.)} For all $n$ and all $\lambda\ne\mu\vdash n$,
\begin{equation*}
q^*(\lambda,\mu)\ \le\ B(\lambda,\mu)\ \le\ 2\,q^*(\lambda,\mu)-1 .
\end{equation*}
The constant $2$ is optimal and is approached but never attained, along $\lambda=(2k)$, $\mu=(k,k)$. This is a consequence of a per-table inequality (Theorem~\ref{thm:3.9}) valid for \emph{every} $X\in T(\lambda,\mu)$, whose slack is an explicit measure of how far $X$ is from a bijection between blocks.

\textbf{(iii)} \emph{(Approximation and $c_1$; Corollaries 3.12 and 3.13.)} An $O(n\log n)$-time algorithm outputs a feasible alignment together with its cost and the certificate $B$, of cost at most $B$ and hence strictly less than $2q^*(\lambda,\mu)$ whenever $\lambda\ne\mu$; the same certificate localizes the optimum to $q^*\in\bigl[\lceil(B+1)/2\rceil,\,B\bigr]$. Consequently $c_1(\mathcal K_n)\le2$. Here $n$ is the number of vertices.
\end{theorem}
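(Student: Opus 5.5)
Everything will be derived from the transportation representation (Theorem~\ref{thm:2.3}),
\[
q^*=\tfrac12(\lVert\lambda\rVert_2^2+\lVert\mu\rVert_2^2)-\max_{X\in T(\lambda,\mu)}\lVert X\rVert_F^2 ,
\]
rewritten for a fixed table $X$ as a sum of local costs. The cost of $X$ equals $\sum_i\binom{\lambda_i}2+\sum_j\binom{\mu_j}2-2\sum_{ij}\binom{x_{ij}}2$, since the identity $\binom a2=(a^2-a)/2$ absorbs the linear terms. For \textbf{(ii)} the plan is the following.

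First, the upper bound $q^*\le B$ comes from an explicit table. Sort both partitions decreasingly and match the $i$-th block of $\lambda$ to the $i$-th block of $\mu$ as far as possible, using a greedy northwest-corner fill. Then bound its cost by $\sum_i|\binom{\lambda_i}2-\binom{\mu_i}2|$. The key inequality is that for a matched pair $(a,b)$ with $a\ge b$, placing $b$ inside $a$ costs at least $\binom a2-\binom b2$. That alone is not $\le$. So instead I would prove $q^*\le B$ directly from the monotone bijection, with leftover mass of blocks split greedily, and check by induction on the number of blocks. The plausible mechanism is that when mass is split, $\binom{a}{2}-\sum\binom{x}{2}\le$ the corresponding telescoped differences, because $\binom{\cdot}2$ is superadditive.

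Second, the lower bound $B\le 2q^*-1$ is the per-table inequality Theorem~\ref{thm:3.9}. For every $X$, I would aim to show
\[
\sum_i\Bigl|\tbinom{\lambda_i}2-\tbinom{\mu_{\pi(i)}}2\Bigr|\le 2\,\mathrm{cost}(X)-1-\text{slack}(X)
\]
for a suitable block pairing $\pi$ (pair each row with its heaviest column). Then $B$, being the minimum $\ell_1$ distance over pairings, is at most the sorted one, which is at most the left side by the rearrangement inequality for $\ell_1$ on sorted vectors. The per-row estimate is: if row $i$ puts $x$ into its partner column and $r=\lambda_i-x$ elsewhere, the row's cut edges number at least $xr+\binom r2$-type terms. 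These dominate half of $\binom{\lambda_i}2-\binom x2$, and symmetrically for columns. The $-1$ comes from integrality together with $\lambda\neq\mu$, which forces at least one genuinely split row. I expect this charging argument, ensuring that each cut edge is counted at most twice across rows and columns, to be the main obstacle. Tightness then follows from the example $(2k)$ versus $(k,k)$: there $q^*=k^2$ and $B=\binom{2k}2-2\binom k2=k^2+\dots$. Rechecking, the computation should give a ratio tending to $2$, so the constant is approached but not attained.

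For \textbf{(i)}, transport is viewed vertexwise. A vertex in a block of size $a$ sent to a block of size $b$ has degree change $|a-b|$, and the edge symmetric difference counts each incident disagreement twice across endpoints. This gives the lower bound $\tfrac12\delta_1$ via the rearrangement on sorted degrees. For the upper bound, take the monotone vertex matching and bound the edges lost within splits by $\tfrac32$ of the degree discrepancy, with strictness from the boundary vertex. The extremal examples then show $\tfrac12$ attained and $\tfrac32$ approached.

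For \textbf{(iii)}, the algorithm sorts in $O(n\log n)$ time, builds the explicit table from the first step, and outputs its cost $\le B$ together with $B$. Then $q^*\ge\lceil(B+1)/2\rceil$ by (ii). Finally, $e(\cdot)$ is an embedding into $\ell_1$ of distortion at most $2$, which gives $c_1\le2$.
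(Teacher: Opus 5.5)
\textbf{Verdict: there are genuine gaps in both halves.} Your lower bound in (i) is the paper's argument (Lemma~\ref{lem:2.5}) and is fine. Both upper bounds, however, rest on the wrong alignment. The northwest-corner fill on sorted margins is exactly the monotone vertex matching, and it satisfies neither $\operatorname{cost}\le B$ nor $\operatorname{cost}<\tfrac32\delta_1$:
\begin{itemize}
\item For $\lambda=(3,3,3)$, $\mu=(4,4,1)$ it gives the rows $(3,0,0),(1,2,0),(0,2,1)$, of cost $11$, while $B=q^*=9$.
\item For $\lambda=(6^5)$, $\mu=(5^6)$ it costs $55$, while $\delta_1=30$ and $\tfrac32\delta_1=45$.
\item On the adjacent uniform pair of Theorem~\ref{thm:3.1}(D) its cost is $n(2b+1)/6$, an unbounded multiple of $\delta_1=n$.
\end{itemize}
The table that works pairs blocks by rank, keeps $\min(\lambda_i,\mu_i)$ in the diagonal cell, and routes donor surpluses to receiver deficits arbitrarily. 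By Lemma~\ref{lem:3.8} a donor row then costs $\mu_i(\lambda_i-\mu_i)+\sum_{j<j'}y_jy_{j'}\le g(\lambda_i)-g(\mu_i)$, a receiver column costs symmetrically, and every other row and column costs nothing. Hence $\operatorname{cost}\le B$; the paper reaches the same table as a path of unit transfers of exact cost $(p-1)+q$ (Lemma~\ref{lem:3.5}).

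More importantly, the $\tfrac32$ of (i) is not an alignment estimate at all. It is $q^*\le B$ composed with the staircase comparison $B\le\frac{3S-4}{2S}\delta_1$ of Lemma~\ref{lem:3.6}. That lemma rewrites $\delta_1=\sum_j j(j+1)|d_j|$ and $B\le\sum_s\frac{s(3s-1)}2|d_s|$, then compares coefficients, and nothing in your plan replaces it. The sharpness of $\tfrac32$ further needs the exact value of $q^*$ on the adjacent uniform pair: the row-wise convexity bound on $\max_X\lVert X\rVert_F^2$ together with its attainability (Lemma~\ref{lem:3.7}). Part (iii) inherits the defect, since it outputs the table of your first step.

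For $B\le 2q^*-1$, pairing each row with its heaviest column does not define a bijection between blocks, and without one your target inequality is false. For $\lambda=(3,3)$, $\mu=(6)$ both rows choose the single column, so $\sum_i|g(\lambda_i)-g(\mu_{\pi(i)})|=24>2q^*=18$. Even when $\pi$ is injective, the column half of the charge, $g(\mu_j)-g(x_{ij})\le 2m_j$, fails unless $x_{ij}$ is also the largest cell of column $j$. For example, cells $1$ and $M$ give $g(M+1)$ against $2M$, which fails for $M\ge4$.

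The missing idea is to pass through the spectrum of the table itself, $\widehat e_X=\sum_{ij}\delta_{g(x_{ij})}$, padded to $K=k_\lambda k_\mu$ atoms, and use
\[
B=W_1(\widehat e_\lambda,\widehat e_\mu)\le W_1(\widehat e_\lambda,\widehat e_X)+W_1(\widehat e_X,\widehat e_\mu).
\]
Rows and columns are then matched to their own largest cells independently, and the remaining cells to zero atoms. The row estimate becomes $g(\lambda_i)-g(x_{ij_i})+\sum_{j\ne j_i}g(x_{ij})\le 2s_i-(\lambda_i-\max_jx_{ij})$.

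The $-1$ is not a separate term in the per-table inequality. It comes from $\operatorname{rowdef}+\operatorname{coldef}\ge1$ when $\lambda\ne\mu$, and that unit may come from a merged column rather than a split row, as for $(1,1)$ versus $(2)$. If your slack is the deficiency, subtracting both it and $1$ makes your display false on $(2k)$ versus $(k,k)$, where Theorem~\ref{thm:3.9} holds with equality. Finally, on that family $B=g(2k)=2k^2-k$, because the second $\mu$-block is matched to a padding zero. The quantity $\binom{2k}2-2\binom k2=k^2$ you computed is $q^*$, not $B$.
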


\begin{theorem}[computational classification]\label{thm:1.4}
 All statements refer to the explicit part-list representation (P1) of \S\ref{sec:2.4}.

\textbf{(i)} \emph{(Hardness; Theorem~\ref{thm:8.1}.)} Deciding whether $q^*(\lambda,\mu)\le Q$ is strongly NP-complete; consequently computing $q^*(\lambda,\mu)$ is strongly NP-hard. This holds already when all blocks of $\lambda$ have equal size.

\textbf{(ii)} \emph{(Parameterized; Corollary~\ref{cor:8.2}.)} Under unary encoding, parameterized by the number $k$ of parts of that uniform side, the problem is $\mathrm W[1]$-hard, and under ETH admits no algorithm of running time $f(k)\cdot L^{o(k/\log k)}$, with $L$ the input length. Under binary encoding it is already NP-hard at $k=2$.

\textbf{(iii)} \emph{(Inapproximability; Corollary~\ref{cor:8.3}.)} Unless $\mathrm P=\mathrm{NP}$, neither $q^*$ nor the underlying maximum $\max_X\lVert X\rVert_F^2$ admits an FPTAS, under the standard binary encoding.

\textbf{(iv)} \emph{(Direction dichotomy; Proposition~\ref{prop:8.4}.)} The farthest alignment $q^{\max}(\lambda,\mu)=\max_\sigma|E(G_\lambda)\triangle\sigma E(G_\mu)|$ is computable exactly in polynomial time, as a minimum-cost flow with separable convex arc costs.
\end{theorem}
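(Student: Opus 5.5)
Four parts, all resting on the transportation representation (Theorem~\ref{thm:2.3}): with $P(\lambda,\mu)=\max_{X\in T(\lambda,\mu)}\lVert X\rVert_F^2$,
\[
q^*(\lambda,\mu)=\tfrac12\bigl(\lVert\lambda\rVert_2^2+\lVert\mu\rVert_2^2\bigr)-P(\lambda,\mu).
\]
So every question about $q^*$ becomes a question about maximizing or minimizing a separable convex quadratic over integer tables with prescribed margins.

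\textbf{(i) Hardness.} Membership in NP is immediate, since a permutation $\sigma$ (equivalently an integer table) is a polynomial certificate. For strong NP-hardness I would reduce from 3-Partition, which is strongly NP-complete. Take $\lambda=(b^k)$, that is $k$ blocks of size $b$, and let $\mu=(a_1,\dots,a_{3k})$ with $\sum a_j=kb$ and $b/4<a_j<b/2$. Every table column of size $a_j$ has $\sum_i x_{ij}^2\le a_j^2$, with equality exactly when column $j$ is unsplit. Hence $P\le\sum_j a_j^2$, and this bound is attained iff every $a_j$ lies in a single row. That happens iff the $a_j$ can be grouped into triples each summing to $b$. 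Setting $Q=\tfrac12\bigl(kb^2-\sum_j a_j^2\bigr)$ finishes the reduction. All numbers are polynomial in unary, so the hardness is strong, and the uniform side is $\lambda$ as the statement requires.

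\textbf{(ii) Parameterized and binary bounds.} The same construction, started instead from Unary Bin Packing parameterized by the number of bins $k$, gives the parameterized claims. That problem is W[1]-hard and has no $f(k)L^{o(k/\log k)}$ algorithm under ETH (Jansen et al.). One has to handle bins that are not exactly full: pad the items with singletons so that total size equals $kb$, and note that exact packing is then equivalent to every original column being unsplit. I would check that the padding keeps the equivalence, for instance by scaling the items so that singletons can never substitute for an unsplit large item. For $k=2$ under binary encoding, reduce from Partition: $\lambda=(b,b)$, $\mu=(a_j)$, and unsplit columns exist iff a perfect partition exists.

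\textbf{(iii) Inapproximability.} In the reduction above the target values are integers of polynomially bounded magnitude in unary. An FPTAS run with $\varepsilon<1/P$ (respectively $\varepsilon<1/(Q+1)$) would therefore decide the instance exactly. For $q^*$, the yes-case is $q^*=Q$ and the no-case is $q^*\ge Q+1$, so a $(1+\varepsilon)$-approximation with $\varepsilon<1/Q$ separates the two cases in time polynomial in the instance. This contradicts strong NP-hardness unless $\mathrm P=\mathrm{NP}$, by the standard Garey--Johnson argument.

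\textbf{(iv) Farthest alignment.} Maximizing the symmetric difference means minimizing $\lVert X\rVert_F^2$ over the integer points of $T(\lambda,\mu)$. This is a transportation problem with separable convex arc costs $x^2$, which is the key point: convexity under minimization is tractable. Model it as a flow network from row nodes to column nodes, with each arc split into unit-capacity arcs of increasing marginal cost $2t-1$. The network has pseudo-polynomial size. To get a genuinely polynomial bound, use capacity scaling for convex-cost flow (Minoux, or Hochbaum--Shanthikumar); integrality of the optimum follows from total unimodularity. I expect the main obstacle to be the padding details in (ii), specifically transferring the exact ETH lower bound while preserving the uniform-side structure.
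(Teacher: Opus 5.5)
Your proposal is correct and matches the paper on (i), (iii) and (iv). For (i) you use the same 3-Partition reduction with uniform $\lambda=(b^{\,k})$ and the column-wise bound $\lVert X\rVert_F^2\le\sum_ja_j^2$. For (iii) you use the same Garey--Johnson rounding on the polynomially bounded subfamily. For (iv) you use the same separable convex-cost flow solved by scaling.

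The one genuine difference is in (ii). The paper does not pad. It starts from the perfect-fill version of Unary Bin Packing, which Jansen--Kratsch--Marx--Schlotter state remains $\mathrm W[1]$-hard and ETH-hard, so $\sum_ja_j=kb$ holds from the outset. Your padding is a self-contained substitute, and it is simpler than you fear: no scaling is needed, because a column of margin $1$ can never be split. With $r=kb-\sum_ja_j$ unit parts added to $\mu$, the bound becomes $\sum_ja_j^2+r$. It is attained iff every original item lies in a single row, iff the items pack into $k$ bins of capacity $b$; the singletons then fill the slack exactly. After discarding the trivial instances with $k\ge t$, the padded unary length is $O(kb)=O(L^2)$ with the parameter unchanged. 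So both the $\mathrm W[1]$-hardness and the $L^{o(k/\log k)}$ lower bound transfer. The paper's route uses the bin-packing instance verbatim; yours avoids relying on the perfect-fill remark of Jansen et al.

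Two slips remain.
\begin{itemize}
\item In (i), a permutation $\sigma\in S_n$ is not a polynomial certificate under binary encoding, since $n$ may be exponential in $L$. Only the table is: it has $k(\lambda)k(\mu)$ cells of $O(\log n)$ bits, which is why the paper places all of \S8 in (P1).
\item In (iv), total unimodularity gives integrality for the linearized unit-arc network, not for the quadratic program over $T_{\mathbb R}(\lambda,\mu)$. The minimizer of that program is in general fractional: for $\lambda=\mu=(2,1)$ it has entries $\frac54,\frac34,\frac34,\frac14$ and value $\frac{11}4$, below the integer minimum $3$. Read as a statement about the unit-arc network, your argument is fine.
\end{itemize}
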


Theorem~\ref{thm:1.3}(iii) and Theorem~\ref{thm:1.4}(iii) are consistent: a constant-factor approximation is not an approximation scheme. We make no claim about the existence of a PTAS, and for that reason we describe \S\ref{sec:8} as an exact and parameterized classification together with a nearest/farthest dichotomy, not as a complete one.

\textbf{What remains open, and what does not.} Theorem~\ref{thm:1.2} gives the order and a crude explicit constant. The best constant is open, and it must be quoted in a stated normalization. For the $n$-normalized inequality over realizable pairs, \S\ref{sec:7} exhibits an exactly solvable \emph{chirp} family along which $\mathcal E(v)\sqrt n/\lVert v\rVert_1^2\to\frac23$, so no proof can do better than $\frac23$ there, while what we prove is $1/(63504\sqrt2)$. In the scale-free formulation of Theorem~\ref{thm:1.2} the same family gives $\frac{2\sqrt2}3$, because it attains the budget $\mathrm{TV}(v)=2n$ that realizability supplies as an inequality. Whether either value is the truth we do not know. This is a question about a constant, not about an order: nothing in Theorem~\ref{thm:1.1} is conditional on it.

\subsection{Proof overview}\label{sec:1.3}

The paper has one main line and one branch. The main line runs from the transportation representation to the optimal embedding in five steps, and the complexity results branch off the same representation without touching it.

\textbf{Step 1: the metric becomes an $\ell_1$ norm of a staircase difference.} Write $u_\lambda$ for the Ferrers staircase and $v=v_{\lambda,\mu}=u_\lambda-u_\mu$. Reading a contingency table by vertices bounds the net degree change at a vertex by its degree in the set of flipped edges, which gives $\delta_1\le2q^*$ under \emph{every} alignment (Lemma~\ref{lem:2.5}); routing by rank along elementary unit transfers gives the reverse. Hence $q^*\asymp\lVert v\rVert_1$ with the optimal constants of Theorem~\ref{thm:1.3}(i). Every later step works with $v$ alone.

\textbf{Step 2: the map is a diagonal reweighting of dyadic sums.} By linearity, $\lVert F_n(\lambda)-F_n(\mu)\rVert_2^2=\mathcal E(v)$ exactly. So the upper Lipschitz constant is a summable geometric series, giving $L^+(F_n)\le2\sqrt{2+\sqrt2}$ (Proposition~\ref{prop:5.2}), and the entire problem is the lower bound on $\mathcal E(v)$.

\textbf{Step 3: realizable staircase differences are arithmetically rigid.} Every jump of $v$ satisfies $\Delta v(s)=s\bigl(m_s(\lambda)-m_s(\mu)\bigr)\in s\mathbb Z$, so a non-zero jump at position $s$ has magnitude at least $s$: \emph{oscillation at position $s$ costs amplitude at least $s$}. The same identity gives the budget $\mathrm{TV}(v)\le2n$ (Lemma~\ref{lem:2.6}). These are arithmetic properties of Ferrers diagrams with no continuous analogue, and they are the only properties of realizability that any later step uses.

\textbf{Step 4: the inverse-energy theorem.} Theorem~\ref{thm:1.2} converts Step 3 into $\mathcal E(v)\gtrsim\lVert v\rVert_1^2/\sqrt{\mathrm{TV}(v)}$, and $\mathrm{TV}(v)\le2n$ turns that into $\mathcal E(v)\gtrsim\lVert v\rVert_1^2/\sqrt n$. With Steps 1 and 2 this is $\rho(F_n)=O(n^{1/4})$.

The proof of Theorem~\ref{thm:1.2} has four independent modules. A Whitney tiling of each sign-constant run by the maximal dyadic intervals it contains, at most two per scale, converts $\mathcal E$ into a price bound on every disjoint family of subintervals of runs. A Carleson stopping time spends those prices to peel off the positions where $v$ is locally large, leaving a clean remainder all of whose ancestor intervals are cheap. An island estimate bounds the clean mass lying \emph{above} the half-line $|v(s)|=\frac{s-1}2$ by $14P\sqrt{\mathrm{TV}(v)}$. Finally the folded-remainder capital lemma bounds all the mass \emph{below} that half-line by $8$ times the mass above it: the low region consists of exactly flat plateaus, a plateau can lose height only by paying in a residue that decreases at a controlled rate, and an Abel summation converts the total plateau mass into a sum of such payments. That last module is where the divisibility does more than supply a budget, and it is the part most likely to be reusable elsewhere.

\textbf{Step 5: the matching lower bound.} A $\Theta(\sqrt n)$-dimensional Hamming cube embeds into $(\mathcal K_n/\!\cong,q^*)$ at distortion at most $8$, with block sizes separated by gaps of $2$ so that the encoding is injective and each coordinate gets its own threshold (Lemma~\ref{lem:4.2}). Enflo's theorem then gives $c_2(\mathcal K_n)=\Omega(n^{1/4})$, which applies to every embedding, not to a named map. Combining with Step 4 closes Theorem~\ref{thm:1.1}.

\textbf{Why the exponent is $\frac14$.} Section 5.3 pins it between two failures, both realized by families present in the space at once. A width-one spike of height $\Theta(n)$ is coherent across all $\log_2n$ scales, so the unweighted functional over-prices it, and $\rho(F^{(0)})=\Theta(n^{1/4}\sqrt{\log n})$ exactly. Discounting coarse scales by $2^{-\gamma\ell}$ repairs that for every $\gamma>0$; but over-discounting destroys the coarse levels that carry coherent plateau mass, and on the most classical pair in the space, $K_n$ against the empty graph, every $\gamma>\frac14$ loses $n^{\gamma-1/4}$. The exponent $\frac14$ is where the two failures meet.

\textbf{The branch: degree and direction.} Optimizing a linear objective over the transportation polytope $T(\lambda,\mu)$ is a minimum-cost-flow problem, polynomial in either direction. Replacing it by the quadratic $\sum_{ij}x_{ij}^2$ breaks the symmetry: maximizing $\lVert X\rVert_F^2$ is strongly NP-hard, while minimizing it is a separable convex cost on the arcs of a bipartite network and hence polynomial. Read back through the transportation representation, nearest alignment is hard and farthest alignment is easy.

\subsection{Related work}\label{sec:1.4}

\textbf{Graph edit distance and quadratic assignment.} Exact graph edit distance is intractable in general, and recent work has concentrated on approximation under structural restrictions. Closest to the present setting, Dahan--Grohe--Neuen--Novotný \cite{ref28} study graph edit distance, quadratic assignment and robust graph isomorphism, and obtain $\varepsilon n^2$ \emph{additive} approximation for graphs of bounded VC dimension. Our concern is orthogonal in guarantee type. We ask for exact-decision hardness, multiplicative approximation, and bi-Lipschitz geometry, on a class where all three admit sharp or matching-order results.

\textbf{The extremal problem behind the transportation representation.} As \S\ref{sec:1.1} noted, maximizing $\sum_{ij}x_{ij}^2$ over non-negative integer tables with prescribed margins is not new. It is the central quantity of the comparison-of-clusterings literature, introduced by Hubert--Arabie \cite{ref2} and pursued for its exact maximum by Lerman--Peter \cite{ref1}, Messatfa \cite{ref3}, Brusco--Steinley \cite{ref4}, Steinley et al. \cite{ref5} and Chacón \cite{ref6}, the last of these giving the explicit $2\times2$ solution. Chacón--Rastrojo \cite{ref27} treat the related question in which the numbers of clusters, rather than the full margins, are fixed. The verdict recorded along this line is Hubert and Arabie's, quoted verbatim in \cite{ref1}: \emph{``Constructing an exact bound, conditional on the fixed row and column totals of the given contingency table, is a very difficult problem of combinatorial optimization.''}

Its complexity was settled elsewhere. Kovačević--Stanojević--Šenk \cite{ref7} prove that minimizing the Rényi entropy $H_\alpha$ over a coupling polytope is strongly NP-hard for every $\alpha\ge0$, and that for $\alpha>1$ this is equivalent to maximizing the $\ell_\alpha$ norm. At $\alpha=2$, after writing the coupling as $X/n$, that is the same optimization problem. We therefore do not claim the first complexity classification of this maximization. Section 8 adds three things: the identification with a graph metric, which is what makes the hardness a statement about an edit distance; NP \emph{membership} for an integer-threshold formulation, hence completeness rather than hardness alone; and the parameterized lower bound together with the nearest/farthest dichotomy.

\textbf{Partitions, transfers, and transport.} The partition graph with elementary unit transfers as edges is classical \cite{ref15,ref16,ref17}, and its fine structure is still under study \cite{ref18,ref19}; what we add is a weighting of its edges by the exact metric cost. The identity $\lVert z\rVert_1=W_1(\nu_\lambda,\nu_\mu)$ of Lemma~\ref{lem:2.4} is the classical CDF form of the Wasserstein-1 distance on the line \cite{ref31}. Erickson \cite[Prop.~3.1]{ref14} gives the counting version on Young diagrams; the version we use is mass-weighted.

The block-energy metric of \S\ref{sec:3.2} is a second, unrelated appearance of $W_1$ on the line, this time between the \emph{spectra of block energies}, and the per-table bound of \S\ref{sec:3.4} is proved by a triangle inequality there. The common-block reduction of Lemma~\ref{lem:3.2} is a value-version of \cite[Th.~1]{ref1}, which proves the stronger statement that every optimal table matches a shared margin value to itself. The single-cell $2\times2$ computation inside Lemma~\ref{lem:3.3} is \cite[Th.~3]{ref1} and \cite[Thm~1]{ref6}, convex-parabola endpoint argument included.

\textbf{Metric embeddings.} The determination of $c_2(\mathcal K_n)$ in this paper has exactly one external input: Enflo \cite{ref10}, who gives $c_2(\{0,1\}^d,\ell_1)=\sqrt d$. It supplies the lower half, applied to the cube of partitions of Lemma~\ref{lem:4.2}; the upper half is the explicit map of Theorem~\ref{thm:5.11}. Nothing else is quoted, and in particular no general theorem on the Euclidean distortion of finite subsets of $\ell_1$ is used.

Such theorems do give the same order non-constructively, and Remark~\ref{rem:4.3} records the comparison. Applied through either $\ell_1$ model of \S\ref{sec:3} with $M=p(n)$ and $\log p(n)=\Theta(\sqrt n)$ \cite{ref11}, Bourgain \cite{ref12} with Linial--London--Rabinovich \cite{ref13} gives $O(\sqrt n)$, the negative-type theorem of Arora--Lee--Naor \cite{ref9} gives $O(n^{1/4}\log n)$, and Chang--Naor--Ren \cite{ref22} gives the optimal $O(n^{1/4})$. What Theorem~\ref{thm:5.11} adds over the last of these is not the order but the map: an embedding evaluable on a single partition in $O(n)$ time, in place of an existence statement about an abstractly listed $p(n)$-point metric space.

\textbf{Multiscale methods.} Sections 5 and 6 are written in the language of dyadic analysis, and three of their devices are standard there. The first is the tiling of an interval by the maximal dyadic intervals it contains, with at most two per scale, which is the discrete Whitney decomposition \cite[Ch.~VI]{ref30}. We use it to recover the whole $\ell_1$ mass of a sign-constant run rather than the single interval that the naive argument keeps. The second is the functional $\sum_\ell2^{-\ell/2}E_\ell$, a discrete square function. Appendix A identifies it exactly in Haar coordinates, where it is a homogeneous norm of smoothness index $-\frac12$; a point mass has logarithmically divergent mass in that norm across scales, which is the $\sqrt{\log n}$ of the spike seen analytically. The third is the stopping-time decomposition of \S\ref{sec:6.3}, a Carleson-type selection of the maximal intervals on which a normalized mass exceeds a threshold, together with the layer-cake accounting of \S\ref{sec:6.4}; these are the standard devices of dyadic analysis and we use them in their standard form.

We use this dictionary for orientation only. No statement below depends on it, and the substance of \S\ref{sec:6} is not an analytic estimate but the arithmetic rigidity of realizable staircases, which has no continuous analogue. We do not develop a general Besov framework here.

On the complexity side the 3-Partition template for packing reductions is standard \cite{ref8}, as is the Garey--Johnson mechanism excluding an FPTAS for a strongly NP-hard problem with integral, polynomially bounded objective. Section 8.3 supplies the problem-specific bounds that mechanism requires, separately for each of the two objectives.

\subsection{Organization}\label{sec:1.5}
Section 2 fixes the objects and the input representations, establishes the transportation representation together with the universal inequality $\delta_1\le2q^*$, and records the jump quantization that Sections 5 and 6 run on. Section 3 develops the two $\ell_1$ models, the per-table bound behind the second of them, and their algorithmic and metric consequences. Section 4 proves the lower bound $c_2(\mathcal K_n)=\Omega(n^{1/4})$ from a cube of partitions and locates the sorted-degree coordinate relative to it. Section 5 defines the weighted dyadic family, pins the distortion of its unweighted member exactly, identifies $\gamma=\frac14$ as the critical exponent, and deduces from Section 6 that the critical member has optimal distortion order. Section 6 is independent of the rest of the paper: it proves the quantized inverse-energy theorem for every closed quantized integer sequence. Section 7 calibrates its constant against an exactly solvable family and states the one question that remains. Section 8 reads the same transportation representation computationally. Section 9 collects open problems. Appendix A records an exact Haar identity used only for interpretation, Appendix B the computational cross-checks, which carry no part of any proof, and Appendix C two proofs deferred from the main line.

\section{Transportation representation and metric interfaces}\label{sec:2}
\subsection{Cluster graphs, partitions, and the quotient metric}\label{sec:2.1}

All graphs are finite, simple and uncoloured. A \textbf{cluster graph} is a disjoint union of complete graphs; $\mathcal K_n$ is the class of cluster graphs on $n$ vertices. The connected components are called \textbf{blocks}, and a cluster graph is determined by the multiset of block sizes, so isomorphism classes correspond to integer partitions $\lambda=(\lambda_1\ge\lambda_2\ge\cdots)\vdash n$. We write $G_\lambda$ for a representative, $g(x)=\binom x2$, $E(\lambda)=\sum_ig(\lambda_i)$ for its number of edges, $m_s(\lambda)$ for the number of blocks of size $s$, and $p(n)$ for the number of isomorphism classes. Two counts attached to a partition are used throughout and must be kept apart:
\begin{equation*}
k(\lambda)=\#\{i:\lambda_i>0\}\ \ (\text{the number of parts}),\qquad r(\lambda)=\#\{s:m_s(\lambda)>0\}\ \ (\text{the number of \textit{distinct} part sizes}),
\end{equation*}
so that $r(\lambda)\le k(\lambda)\le n$ and $r(\lambda)=O(\sqrt n)$ by Lemma~\ref{lem:5.3}. The complexity statements of \S\ref{sec:8} are governed by $k$, the geometric statements of \S\ref{sec:5} by $r$. Each vertex has degree one less than the size of its block. The object of study is
\begin{equation*}
q^*(\lambda,\mu)=\min_{\sigma\in S_n}\bigl|E(G_\lambda)\,\triangle\,\sigma E(G_\mu)\bigr| .
\end{equation*}

\begin{proposition}\label{prop:2.1}
$q^*$ is a metric on $\mathcal K_n/\!\cong$.
\end{proposition}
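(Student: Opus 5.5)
The plan is to treat $q^*$ as the quotient of the Hamming metric $d(E,E')=|E\triangle E'|$ on labelled edge sets of $K_n$ by the action of $S_n$, and to verify the metric axioms using only two facts. First, $d$ is a metric on subsets of $\binom{[n]}2$: it is the $\ell_1$ distance between indicator vectors. Second, $S_n$ acts on these subsets by isometries, since $\tau(E\triangle E')=\tau E\triangle\tau E'$ and $\tau$ is a bijection on pairs.

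Well-definedness comes first: the value of $\min_\sigma|E(G)\triangle\sigma E(H)|$ does not depend on the chosen representatives. Replacing $G$ by $\tau G$ gives $|\tau E(G)\triangle\sigma E(H)|=|E(G)\triangle\tau^{-1}\sigma E(H)|$, and replacing $H$ by $\tau H$ merely reparametrizes $\sigma$. The minimum over the group is therefore unchanged.

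The axioms then follow in turn.
\begin{itemize}
\item \emph{Nonnegativity and finiteness} are clear.
\item \emph{Symmetry:} $|E(G)\triangle\sigma E(H)|=|\sigma^{-1}E(G)\triangle E(H)|$, and $\sigma\mapsto\sigma^{-1}$ is a bijection of $S_n$.
\item \emph{Identity of indiscernibles:} $q^*(G,H)=0$ iff $E(G)=\sigma E(H)$ for some $\sigma$, i.e.\ iff $G\cong H$. Equivalently, $\lambda=\mu$ as partitions, using that a cluster graph is determined up to isomorphism by its block sizes.
\item \emph{Triangle inequality:} for $G,H,K$, choose $\sigma$ attaining $q^*(G,H)$ and $\tau$ attaining $q^*(H,K)$. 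Then
\[
|E(G)\triangle\sigma\tau E(K)|\le|E(G)\triangle\sigma E(H)|+|\sigma E(H)\triangle\sigma\tau E(K)| .
\]
The second term equals $|E(H)\triangle\tau E(K)|$ by isometry. Minimizing the left side over the composite permutation gives $q^*(G,K)\le q^*(G,H)+q^*(H,K)$.
\end{itemize}

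There is no real obstacle here. The only point needing care is the triangle inequality, which hinges on the composition step, i.e.\ that the action is by isometries so the optimal $\tau$ can be transported by $\sigma$. Finiteness of $S_n$ guarantees that the minima are attained; otherwise one would argue with infima.
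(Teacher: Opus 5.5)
Your proof is correct and follows essentially the same argument as the paper. Symmetry and nonnegativity are immediate, zero distance means some permutation carries one edge set onto the other, and the triangle inequality comes from composing the two optimal permutations $\sigma\tau$, using that $S_n$ acts on edge sets by isometries of the symmetric-difference metric. Your explicit check that the minimum does not depend on the chosen representatives is a small addition that the paper leaves implicit.
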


\begin{proof}
Symmetry and non-negativity are immediate, and $q^*(\lambda,\mu)=0$ exactly when some bijection preserves edges, that is when $\lambda=\mu$. For the triangle inequality let $\sigma$ realize $q^*(\lambda,\mu)$ and $\tau$ realize $q^*(\mu,\nu)$. Acting by $\sigma$ on edge sets preserves the size of a symmetric difference, so $|\sigma E(G_\mu)\triangle\sigma\tau E(G_\nu)|=q^*(\mu,\nu)$, and the triangle inequality for symmetric difference gives
\begin{equation*}
q^*(\lambda,\nu)\le\bigl|E(G_\lambda)\triangle\sigma\tau E(G_\nu)\bigr|\le q^*(\lambda,\mu)+q^*(\mu,\nu).
\qedhere
\end{equation*}
\end{proof}

Section 3.2 uses the triangle inequality to bound the distance between the endpoints of a path by the sum of the costs of its steps.

\textbf{Normalization.} We write $d_{\rm edit}=2q^*/n^2$, which places distances and coordinates on a common scale. It cancels from every distortion ratio (Proposition~\ref{prop:4.4}(A)), and all statements below are unaffected by it.

\subsection{Contingency tables and the transportation representation}\label{sec:2.2}

For $\lambda,\mu\vdash n$, let
\begin{equation*}
T(\lambda,\mu)=\Bigl\{X\in\mathbb Z_{\ge0}^{k(\lambda)\times k(\mu)}:\ \textstyle\sum_j x_{ij}=\lambda_i,\ \sum_i x_{ij}=\mu_j\Bigr\}
\end{equation*}
be the integer points of the \textbf{transportation polytope}, $T_{\mathbb R}(\lambda,\mu)$ its real relaxation, and $\lVert X\rVert_F^2=\sum_{ij}x_{ij}^2$.

\begin{lemma}[integrality]\label{lem:2.2}
For integral $\lambda,\mu$ the vertices of $T_{\mathbb R}(\lambda,\mu)$ are integral, and
\begin{equation*}
\max_{X\in T(\lambda,\mu)}\lVert X\rVert_F^2=\max_{X\in T_{\mathbb R}(\lambda,\mu)}\lVert X\rVert_F^2 .
\end{equation*}
In particular the value of any \emph{feasible real} table is a lower bound for the integer maximum.
\end{lemma}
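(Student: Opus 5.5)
The plan is to combine the total unimodularity of the transportation constraint matrix with the convexity of the objective. First I will show that every vertex of $T_{\mathbb R}(\lambda,\mu)$ is integral. The constraint matrix of $T_{\mathbb R}(\lambda,\mu)$ is the node--arc incidence matrix of the complete bipartite graph $K_{k(\lambda),k(\mu)}$: each column $x_{ij}$ has exactly one $1$ in the row block and one $1$ in the column block. Such a matrix is totally unimodular, for instance by the Ghouila-Houri criterion or by the standard fact that bipartite incidence matrices are TU.

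A vertex is determined by setting some coordinates to $0$ and solving a nonsingular square subsystem of the margin equations. Since the right-hand side $(\lambda,\mu)$ is integral, Cramer's rule with determinants $\pm1$ gives an integral solution. If one prefers an argument avoiding total unimodularity, the elementary alternative is the usual cycle argument. If a feasible $X$ has fractional entries, the fractional cells contain a cycle in the bipartite support graph, because every row or column meeting a fractional cell meets at least two, by integrality of the margins. Alternately adding and subtracting $\varepsilon$ along that cycle writes $X$ as a midpoint of two feasible points, so $X$ is not a vertex.

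Next I will prove the equality of maxima. The map $X\mapsto\lVert X\rVert_F^2$ is convex. $T_{\mathbb R}(\lambda,\mu)$ is a nonempty bounded polytope: it contains the product table $\lambda_i\mu_j/n$, and it is bounded since $0\le x_{ij}\le\min(\lambda_i,\mu_j)$. A convex function on a polytope attains its maximum at a vertex. To see this, write any point as a convex combination of vertices and apply Jensen's inequality, which gives value at most the largest vertex value. That vertex is integral by the first step, so it lies in $T(\lambda,\mu)$. This gives $\max_{T_{\mathbb R}}\ge\max_{T}$ trivially from $T\subseteq T_{\mathbb R}$, and the reverse inequality from the vertex argument. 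The final sentence of the lemma is then immediate, since any feasible real table has value at most $\max_{T_{\mathbb R}}=\max_T$.

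No step is a genuine obstacle; this is classical. The only point needing care is to state the total unimodularity, or the cycle-perturbation argument, cleanly enough that integrality of the vertices is justified without appeal to anything beyond the bipartite structure of the constraints.
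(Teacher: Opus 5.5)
Your proposal is correct and follows the same route as the paper: total unimodularity of the bipartite incidence matrix (the paper cites Hoffman--Kruskal) gives integral vertices, and a convex function on a compact polytope attains its maximum at a vertex. Your additional details are sound, namely nonemptiness via the product table, boundedness, the Jensen step, and the cycle-perturbation alternative to total unimodularity.
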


\begin{proof}
The constraint matrix is the incidence matrix of a bipartite graph, hence totally unimodular, so by the Hoffman--Kruskal criterion \cite{ref20} (see \cite{ref21}) integral right-hand sides force integral vertices. A convex function on a compact polytope attains its maximum at a vertex.
\end{proof}

The last sentence is what is used: the proof of Lemma~\ref{lem:3.2} certifies a bound with a real-valued product table and never has to integralize it.

\begin{theorem}[transportation representation]\label{thm:2.3}
For all $n\ge1$ and all $\lambda,\mu\vdash n$, vertex bijections correspond to tables $X\in T(\lambda,\mu)$. Writing
\begin{equation*}
\operatorname{cost}(X)=E(\lambda)+E(\mu)-2\sum_{ij}g(x_{ij})
\end{equation*}
for the number of edges on which the two graphs disagree under the alignment encoded by $X$,
\begin{equation*}
q^*(\lambda,\mu)=\min_{X\in T(\lambda,\mu)}\operatorname{cost}(X)=\frac{\lVert\lambda\rVert_2^2+\lVert\mu\rVert_2^2}2-\max_{X\in T(\lambda,\mu)}\lVert X\rVert_F^2 .
\end{equation*}
\end{theorem}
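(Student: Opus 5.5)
Fix labelled representatives: $G_\lambda$ with blocks $P_1,\dots,P_{k(\lambda)}$ of sizes $\lambda_i$, and $G_\mu$ with blocks $Q_1,\dots,Q_{k(\mu)}$ of sizes $\mu_j$. The plan is to reduce the minimization over $\sigma\in S_n$ to a minimization over contingency tables in three steps.

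\textbf{Step 1: from a bijection to a table.} For $\sigma\in S_n$, put $x_{ij}=|P_i\cap\sigma Q_j|$. Since the $\sigma Q_j$ partition the vertex set, the row sums are $\lambda_i$. Since the $P_i$ partition it, the column sums are $\mu_j$. Hence $X\in T(\lambda,\mu)$.

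\textbf{Step 2: from a table back to a bijection.} Conversely, take $X\in T(\lambda,\mu)$. Split each $P_i$ into consecutive pieces of sizes $x_{i1},x_{i2},\dots$. Collect, for each $j$, the pieces of sizes $x_{1j},x_{2j},\dots$; they have total size $\mu_j$. Map $Q_j$ bijectively onto their union. This gives a $\sigma$ realizing $X$, so the map $\sigma\mapsto X$ is surjective onto $T(\lambda,\mu)$. That is the sense of ``correspond'' in the statement, so the minimum over $\sigma$ equals the minimum over $X$ once the cost is shown to depend only on $X$.

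\textbf{Step 3: the cost depends only on $X$.} A pair $\{a,b\}$ is an edge of both $G_\lambda$ and $\sigma G_\mu$ exactly when $a,b$ lie in a common $P_i$ and a common $\sigma Q_j$, that is, in a common cell $P_i\cap\sigma Q_j$. So
\[
|E(G_\lambda)\cap\sigma E(G_\mu)|=\sum_{ij}g(x_{ij}),
\]
and
\[
|E\triangle\sigma E|=E(\lambda)+E(\mu)-2\sum_{ij}g(x_{ij})=\operatorname{cost}(X).
\]
This gives $q^*=\min_X\operatorname{cost}(X)$.

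\textbf{Closed form.} Expand $g(x)=(x^2-x)/2$. The linear terms are constant on $T(\lambda,\mu)$ because $\sum_{ij}x_{ij}=\sum_i\lambda_i=\sum_j\mu_j=n$. Substituting,
\[
\operatorname{cost}(X)=\tfrac12\bigl(\lVert\lambda\rVert_2^2-n+\lVert\mu\rVert_2^2-n\bigr)-\bigl(\lVert X\rVert_F^2-n\bigr)=\frac{\lVert\lambda\rVert_2^2+\lVert\mu\rVert_2^2}{2}-\lVert X\rVert_F^2 .
\]
Minimizing the cost is therefore maximizing $\lVert X\rVert_F^2$, which gives the stated identity.

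No step is a real obstacle. The one place needing care is Step 2, the surjectivity of $\sigma\mapsto X$: without it the minimum over tables could fall below the minimum over bijections. The explicit piecewise construction there settles it.
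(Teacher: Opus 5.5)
Your proof is correct and follows the paper's argument essentially step for step. The steps match: the table $x_{ij}=|P_i\cap\sigma Q_j|$, the cut-and-reassemble construction for the converse, the count of common edges as $\sum_{ij}g(x_{ij})$ through shared cells, and the cancellation of the linear terms of $g(x)=(x^2-x)/2$ using $\sum_{ij}x_{ij}=n$; you are also right that the two directions (image in $T(\lambda,\mu)$, and surjectivity) are exactly what make the two minima coincide.
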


\begin{proof}
Given a bijection $\pi$, let $x_{ij}$ be the number of vertices lying in the $i$-th block of $\lambda$ and, after $\pi$, in the $j$-th block of $\mu$; row and column sums are $\lambda_i$ and $\mu_j$. Conversely, given $X\in T(\lambda,\mu)$, cut the blocks of $\lambda$ according to the cell sizes, reassemble into the blocks of $\mu$, and take any bijection inside each cell. Under $\pi$, an edge lies in both graphs exactly when its two endpoints share a block on each side, that is, fall in a common cell; so the number of common edges is $\sum_{ij}g(x_{ij})$ and
\begin{equation*}
\bigl|E(G_\lambda)\triangle\pi E(G_\mu)\bigr|=E(\lambda)+E(\mu)-2\sum_{ij}g(x_{ij})=\operatorname{cost}(X) .
\end{equation*}
Minimizing over $\pi$ maximizes the common edges, giving the first form. For the second, $g(a)=\frac{a^2-a}2$ and $\sum_{ij}x_{ij}=n$ give $\sum_{ij}g(x_{ij})=\frac{\lVert X\rVert_F^2-n}2$, $E(\lambda)=\frac{\lVert\lambda\rVert_2^2-n}2$ and $E(\mu)=\frac{\lVert\mu\rVert_2^2-n}2$; the three occurrences of $-n$ cancel.
\end{proof}

\begin{figure}[tb]
\centering
\includegraphics[width=\textwidth]{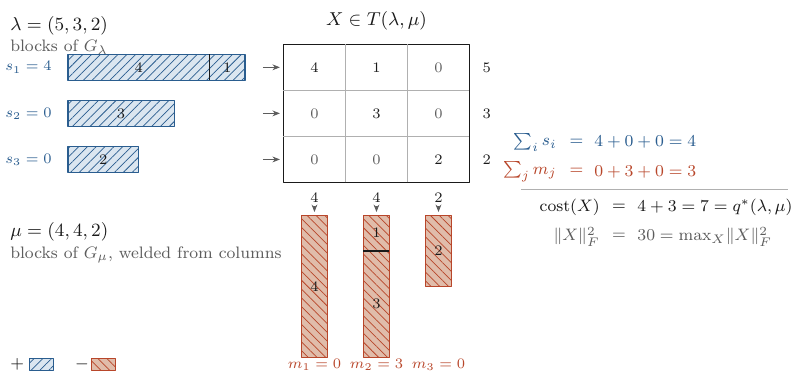}
\caption{\textbf{The transportation table read three ways.}
For $\lambda=(5,3,2)$ and $\mu=(4,4,2)$, the table $X$ records $x_{ij}$, the
number of vertices of the $i$-th block of $G_\lambda$ that the alignment
carries into the $j$-th block of $G_\mu$; its margins are $\lambda$ and $\mu$.
Along the rows $X$ cuts the blocks of $\lambda$, here $\lambda_1=4+1$, at
splitting cost $s_i=\sum_{j<j'}x_{ij}x_{ij'}$; down the columns it welds the
pieces into the blocks of $\mu$, here $\mu_2=1+3$, at merging cost
$m_j=\sum_{i<i'}x_{ij}x_{i'j}$; as an energy it costs
$E(\lambda)+E(\mu)-2\sum_{ij}g(x_{ij})=14+13-20=7$. This $X$ attains
$\max_X\lVert X\rVert_F^2=30$, so it is optimal, and the maximiser is unique
up to exchanging the two blocks of $\mu$ of size~$4$.
Throughout the six figures, blue with a \texttt{///} hatch is the positive,
$\lambda$-side, splitting quantity and vermillion with a
\texttt{\textbackslash\textbackslash\textbackslash} hatch the negative,
$\mu$-side, merging quantity; every colour distinction is doubled by the
hatch, so the figures survive greyscale printing.}
\label{fig:transportation}
\end{figure}

This identity is the hub of the paper, and everything that follows reads it in one of two ways.
Figure~\ref{fig:transportation} carries out both readings on one optimal table of a small pair, priced by rows, by columns, and as an energy.

\emph{Combinatorially}, it converts a minimum over $n!$ alignments into a maximum over tables, and the tables can be manipulated locally. Two such manipulations are carried out in \S\ref{sec:3}, and they are what produce the paper's two $\ell_1$ models. Stripping a shared margin value (Lemma~\ref{lem:3.2}) and solving the resulting $2\times2$ system exactly (Lemma~\ref{lem:3.3}) prices a single unit of transported mass, which the routing argument of \S\ref{sec:3.2} then accumulates. Decomposing $\operatorname{cost}(X)$ into a splitting part and a merging part (Lemma~\ref{lem:3.8}) prices a whole table at once. All of \S\ref{sec:3}--\S\ref{sec:5} rests on these.

\emph{Computationally}, it exhibits $q^*$ as an affine image of the maximum of a convex objective over a transportation polytope. The degree of the objective and the direction of the optimization are then both decisive. The linear counterpart, optimizing $\sum_{ij}c_{ij}x_{ij}$ over $T(\lambda,\mu)$, is a minimum-cost-flow problem and is polynomial in either direction. The quadratic maximum is strongly NP-hard, while the quadratic \emph{minimum} is again polynomial. Section 8 develops all three consequences.

The extremal problem itself is classical \cite{ref1,ref2}; what Theorem~\ref{thm:2.3} contributes is its identification with the quotient edit distance of unlabelled cluster graphs.

\subsection{Staircase coordinates, the universal inequality, and quantization}\label{sec:2.3}

Let $\mathrm{sd}(\lambda)\in\mathbb Z_{\ge0}^n$ be the degree sequence of $G_\lambda$ in non-increasing order, so that it lists the value $\lambda_i-1$ repeated $\lambda_i$ times, for each $i$, sorted; and put
\begin{equation*}
z(\lambda,\mu)=\mathrm{sd}(\lambda)-\mathrm{sd}(\mu)\in\mathbb Z^n,\qquad \delta_1(\lambda,\mu)=\lVert z(\lambda,\mu)\rVert_1 .
\end{equation*}
Reading off multiplicities recovers $\lambda$ from $\mathrm{sd}(\lambda)$, so $\mathrm{sd}$ is injective and $\lambda\ne\mu$ implies $z\ne0$; in particular $\delta_1$ is a metric on $\mathcal K_n/\!\cong$. Both norms of $z$ occur below and are not interchangeable. Here $\lVert z\rVert_1$ is the proxy for $q^*$ in \S\ref{sec:3}, and $\lVert z\rVert_2$ is the distance in the Euclidean coordinate of \S\ref{sec:4.2}. The $n$-dimensional Cauchy--Schwarz factor between them is what turns the constant $\frac32$ into the order $\sqrt n$.

The \textbf{Ferrers staircases} are fixed here once and used unchanged through \S\ref{sec:5}:
\begin{equation*}
u_\lambda(s)=M_\lambda(s)=\sum_{i:\lambda_i\ge s}\lambda_i,\qquad N_\lambda(s)=\#\{i:\lambda_i>s\},\qquad S=\max(\lambda_1,\mu_1),
\end{equation*}
\begin{equation*}
v=v_{\lambda,\mu}=u_\lambda-u_\mu ,
\end{equation*}
so $u_\lambda$ is the staircase weighted by mass and $N_\lambda$ the same staircase weighted by count. We extend $u_\lambda$ by zero above $\lambda_1$ whenever a larger index range is convenient, and adopt the convention $v(s)=0$ for $s$ beyond the support. Note $v(1)=n-n=0$.

\begin{lemma}[bridge identity]\label{lem:2.4}
Let $\nu_\lambda=\sum_{s\ge1}\bigl(s\,m_s(\lambda)\bigr)\delta_s$. Then
\begin{equation*}
\delta_1(\lambda,\mu)=\lVert z(\lambda,\mu)\rVert_1=\sum_{s\ge1}\bigl|u_\lambda(s)-u_\mu(s)\bigr|=\lVert v\rVert_1=W_1(\nu_\lambda,\nu_\mu).
\end{equation*}
\end{lemma}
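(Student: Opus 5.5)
The plan is to prove the chain of equalities by comparing cumulative distribution functions, since every quantity in Lemma~\ref{lem:2.4} is a way of writing the $\ell_1$ distance between two monotone functions on the line.

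\emph{The staircase is the tail-CDF of $\nu_\lambda$.} Observe first that $\nu_\lambda$ has total mass $\sum_s s\,m_s(\lambda)=n$. Its tail distribution function is $\nu_\lambda([s,\infty))=\sum_{t\ge s}t\,m_t(\lambda)=\sum_{i:\lambda_i\ge s}\lambda_i=u_\lambda(s)$.

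\emph{Last equality.} For two measures of equal mass on the integers, the classical CDF formula gives $W_1(\nu_\lambda,\nu_\mu)=\sum_{s}|F_\lambda(s)-F_\mu(s)|$. Here $F(s)=\nu([1,s])$ is summed over unit intervals $[s,s+1)$. Equivalently, using tails, $\nu([1,s])=n-u(s+1)$, so the sum equals $\sum_{s\ge1}|u_\lambda(s+1)-u_\mu(s+1)|$. Since $u_\lambda(1)=u_\mu(1)=n$, the term at $s=1$ of $\sum_{s\ge1}|u_\lambda(s)-u_\mu(s)|$ vanishes. The two sums therefore agree, giving $\lVert v\rVert_1=W_1(\nu_\lambda,\nu_\mu)$. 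I will cite \cite{ref31} for the CDF form rather than reprove it.

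\emph{First nontrivial equality, $\lVert z\rVert_1=\lVert v\rVert_1$.} The sorted degree sequence $\mathrm{sd}(\lambda)$ is a quantile function: the uniform measure on $n$ vertices, pushed forward by degree, is $\frac1n$ times $\nu_\lambda$ shifted by $-1$. Indeed a block of size $t$ contributes $t$ vertices of degree $t-1$. For two nonincreasing integer sequences $a,b$ of length $n$, the layer-cake formula gives
\[
\sum_i|a_i-b_i|=\sum_{h}\bigl|\#\{i:a_i\ge h\}-\#\{i:b_i\ge h\}\bigr| ,
\]
with $h$ ranging over the integers $\ge 1$. The reason is that for each $i$ and each threshold $h$, exactly one of $a_i,b_i$ is $\ge h$ precisely for $h$ strictly between the two values, and sortedness makes the sets $\{i:a_i\ge h\}$ initial segments, hence nested. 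Now apply this with $a=\mathrm{sd}(\lambda)$. We get $\#\{i:\mathrm{sd}(\lambda)_i\ge h\}=\#\{\text{vertices in blocks of size}\ge h+1\}=u_\lambda(h+1)$. Summing over $h\ge1$ gives $\sum_{s\ge2}|v(s)|=\lVert v\rVert_1$, again because $v(1)=0$. The convention $v(s)=0$ beyond the support, from \S\ref{sec:2.3}, makes the sums finite. The identity $\delta_1=\lVert z\rVert_1$ is the definition.

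The only delicate point is the layer-cake identity for sorted sequences, specifically the nestedness that lets the per-threshold count differences be exact, not merely inequalities. The rest is bookkeeping of index shifts: degree versus block size, and $[1,s]$ versus $[s,\infty)$. Both shifts are absorbed by $v(1)=0$.
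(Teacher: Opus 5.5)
Your proof is correct and follows the paper's route. The paper likewise identifies $\lVert z\rVert_1$ with the $\ell_1$ distance between the generalized inverses of the two sorted degree sequences, whose level counts are the staircases $u_\lambda,u_\mu$, and then invokes the classical CDF form of $W_1$ on the line; your layer-cake computation and the index-shift bookkeeping via $v(1)=0$ make that argument explicit.

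One wording fix is needed. The indicators $\mathbb 1[a_i\ge h]$ and $\mathbb 1[b_i\ge h]$ differ exactly for $\min(a_i,b_i)<h\le\max(a_i,b_i)$, not for $h$ ``strictly between'' the two values. It is this half-open range that gives the count $|a_i-b_i|$.
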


\begin{proof}
Both $\mathrm{sd}(\lambda)$ and $\mathrm{sd}(\mu)$ are non-increasing staircases, and the $\ell_1$ distance between two such equals that between their generalized inverses, whose level-$s$ counts are $u_\lambda(s)$ and $u_\mu(s)$. The last equality is the classical identity between $W_1$ on the line and the $L^1$ norm of the difference of cumulative distribution functions \cite{ref31}.
\end{proof}

Lemma~\ref{lem:2.4} is a change of coordinates: the same quantity is read either as an $\ell_1$ norm of degree differences, or as a sum over thresholds, or as a transport cost. Section 4.1 needs the threshold form and \S\ref{sec:5} works exclusively with $v$. Erickson \cite[Prop.~3.1]{ref14} gives the counting version on Young diagrams; the version used here is mass-weighted. We use $\delta_1$ and $\lVert v\rVert_1$ interchangeably from now on.

\begin{lemma}[universal inequality]\label{lem:2.5}
For all $n\ge1$ and all $\lambda,\mu\vdash n$,
\begin{equation*}
\delta_1(\lambda,\mu)\ \le\ 2\,q^*(\lambda,\mu) .
\end{equation*}
\end{lemma}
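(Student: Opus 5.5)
The plan is to work with a single optimal alignment and compare degrees vertex by vertex, so that the inequality holds for \emph{every} alignment and in particular for an optimal one. Fix $\lambda,\mu\vdash n$ and a bijection $\pi$ attaining $q^*(\lambda,\mu)$, with table $X\in T(\lambda,\mu)$ as in Theorem~\ref{thm:2.3}. Let $D=E(G_\lambda)\triangle\pi E(G_\mu)$ be the set of flipped edges, so that $|D|=q^*(\lambda,\mu)$. For each vertex $w$ write $d_\lambda(w)$ for its degree in $G_\lambda$ and $d_\mu(w)$ for its degree in $\pi G_\mu$.

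\textbf{Step 1 (local bound).} Each edge at $w$ lies in exactly one of the two graphs or in both. Hence
\[
\bigl|d_\lambda(w)-d_\mu(w)\bigr|\le\deg_D(w).
\]
Concretely, if $w$ lies in cell $(i,j)$, then $d_\lambda(w)=\lambda_i-1$ and $d_\mu(w)=\mu_j-1$. Its incident edges in $D$ number $(\lambda_i-x_{ij})+(\mu_j-x_{ij})$, and this is at least $|\lambda_i-\mu_j|$ because $x_{ij}\le\min(\lambda_i,\mu_j)$.

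\textbf{Step 2 (summation).} Summing over all $n$ vertices and using the handshake identity $\sum_w\deg_D(w)=2|D|$ gives
\[
\sum_{w}\bigl|d_\lambda(w)-d_\mu(w)\bigr|\le 2\,q^*(\lambda,\mu).
\]

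\textbf{Step 3 (sorting can only help).} The left-hand side is the $\ell_1$ distance between the multiset of $\lambda$-degrees and the multiset of $\mu$-degrees under the pairing induced by $\pi$. The remaining claim is that the monotone (sorted) pairing minimizes $\sum|a_k-b_{\tau(k)}|$ over all pairings $\tau$, which gives $\delta_1=\lVert\mathrm{sd}(\lambda)-\mathrm{sd}(\mu)\rVert_1\le\sum_w|d_\lambda(w)-d_\mu(w)|$. I will prove this by the standard uncrossing exchange. If $a\ge a'$ are paired with $b<b'$ respectively, then a case check on the relative order of the four numbers gives
\[
|a-b'|+|a'-b|\le|a-b|+|a'-b'|,
\]
since $x\mapsto|x|$ is convex and the uncrossed pairing is the monotone coupling. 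Each swap strictly reduces the number of inversions, so finitely many swaps reach the sorted pairing without increasing the cost. Alternatively, one can cite that the monotone coupling is optimal for $W_1$ on the line, which matches the transport reading of Lemma~\ref{lem:2.4}.

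Chaining Steps 1–3 gives $\delta_1\le 2q^*$. The only step needing care is Step 3, the optimality of the sorted pairing, and it is classical. I expect no real obstacle: the content of the lemma is that the per-vertex bound of Step 1 survives the passage to sorted sequences.
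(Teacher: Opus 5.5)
Your proposal is correct and follows the paper's proof essentially step for step: you bound each vertex's net degree change by its degree in the flipped-edge set, sum via the handshake identity to get $2q^*$, and then use the uncrossing exchange to show that the sorted pairing minimizes the $\ell_1$ cost. The explicit cell computation in Step~1 is a harmless extra, since the general edge-counting argument already works for any two graphs on a common vertex set.
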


\begin{proof}
\textbf{(i)} Choose $\sigma$ attaining the minimum, put $H=\sigma G_\mu$ and $F=E(G_\lambda)\triangle E(H)$, so $|F|=q^*$. For a vertex $w$ set $a_w=\deg_{G_\lambda}(w)-\deg_H(w)$. Outside $F$ the graphs agree, and those edges contribute equally to both degrees, so $a_w$ receives contributions only from edges of $F$ at $w$: $|a_w|\le t_w$, where $t_w$ is the number of such edges. Each edge has two endpoints, so
\begin{equation*}
\sum_w|a_w|\ \le\ \sum_w t_w\ =\ 2|F|\ =\ 2q^* .
\end{equation*}

\textbf{(ii)} Sorting contracts $\ell_1$ distance: for $x,y\in\mathbb R^n$ with non-increasing rearrangements $x^\downarrow,y^\downarrow$ one has $\lVert x^\downarrow-y^\downarrow\rVert_1\le\lVert x-y\rVert_1$. Indeed, for $\alpha\le\alpha'$ and $\beta\le\beta'$,
\begin{equation*}
|\alpha-\beta|+|\alpha'-\beta'|\ \le\ |\alpha-\beta'|+|\alpha'-\beta| ,
\end{equation*}
which by $|p-q|=2\max(p,q)-p-q$ reduces to $\max(\alpha,\beta)+\max(\alpha',\beta')\le\max(\alpha,\beta')+\max(\alpha',\beta)$, checked according to the positions of $\beta$ and $\beta'$. Hence in $\sum_i|x_i-y_{\pi(i)}|$ any inverted pair may be swapped into same-order position without increasing the cost, and after finitely many swaps the cost is $\lVert x^\downarrow-y^\downarrow\rVert_1$; the identity pairing is one competitor among these.

\textbf{(iii)} Take $x=(\deg_{G_\lambda}(w))_w$, $y=(\deg_H(w))_w$. Rearrangement preserves the multiset of degrees and $H\cong G_\mu$, so $x^\downarrow=\mathrm{sd}(\lambda)$ and $y^\downarrow=\mathrm{sd}(\mu)$. By (ii) and (i),
\begin{equation*}
\lVert z\rVert_1\le\lVert x-y\rVert_1=\sum_w|a_w|\le2q^* .
\qedhere
\end{equation*}
\end{proof}

The inequality holds under every alignment, not merely a canonical one, so it disposes of all cross-block rematchings at once; this is what lets \S\ref{sec:4.1} avoid enumerating contingency tables, and it is the only bridge from $q^*$ to $\lVert v\rVert_1$ used in \S\ref{sec:5}. The $\ell_2$ analogue, Lemma~\ref{lem:4.5}, runs on the same skeleton with the rearrangement inequality in its squared form.

\begin{lemma}[jump quantization and the total-variation budget]\label{lem:2.6}
Let $v=v_{\lambda,\mu}$ and write $\Delta v(s)=v(s)-v(s+1)$ for $1\le s\le N$, with $v(N+1)=0$. Then
\begin{equation*}
\Delta v(s)=s\bigl(m_s(\lambda)-m_s(\mu)\bigr)\in s\,\mathbb Z ,
\end{equation*}
so a non-zero jump at position $s$ has magnitude at least $s$; and
\begin{equation*}
\mathrm{TV}(v):=\sum_{s=1}^N\bigl|\Delta v(s)\bigr|=\sum_s s\,\bigl|m_s(\lambda)-m_s(\mu)\bigr|\ \le\ \sum_s s\bigl(m_s(\lambda)+m_s(\mu)\bigr)=2n .
\end{equation*}
\end{lemma}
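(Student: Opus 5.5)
The plan is to compute $\Delta v(s)$ directly from the definition of the staircase, then read the budget off the triangle inequality.

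\textbf{The jump.} By definition $u_\lambda(s)=\sum_{i:\lambda_i\ge s}\lambda_i=\sum_{t\ge s}t\,m_t(\lambda)$, with $u_\lambda$ extended by zero beyond $\lambda_1$. Hence for every $s\ge1$,
\begin{equation*}
u_\lambda(s)-u_\lambda(s+1)=s\,m_s(\lambda),
\end{equation*}
since the two sums differ exactly in the parts of size $s$, each contributing $s$. Subtracting the same identity for $\mu$ gives $\Delta v(s)=s\bigl(m_s(\lambda)-m_s(\mu)\bigr)$. The multiplicities are integers, so $\Delta v(s)\in s\mathbb Z$, and a non-zero element of $s\mathbb Z$ has modulus at least $s$.

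\textbf{The endpoint.} We must check the convention $v(N+1)=0$. Since $N\ge n\ge\max(\lambda_1,\mu_1)$, both staircases vanish at $N+1$, and the telescoping identity above holds up to $s=N$ under the zero extension. (It is also consistent at the bottom, since $v(1)=n-n=0$.)

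\textbf{The budget.} Summing $|\Delta v(s)|$ over $1\le s\le N$ gives $\sum_s s\,|m_s(\lambda)-m_s(\mu)|$. The triangle inequality bounds each term by $s\bigl(m_s(\lambda)+m_s(\mu)\bigr)$. Finally, $\sum_s s\,m_s(\lambda)=n$ because the part sizes of $\lambda$ counted with multiplicity sum to $n$, and likewise for $\mu$, so the total is $2n$.

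\textbf{Where care is needed.} Nothing here is a genuine obstacle. The only points requiring attention are the zero extension beyond $\lambda_1$ and the range $s\le N$, since every part size is at most $n\le N$.
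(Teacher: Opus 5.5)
Your proof is correct and follows the paper's argument exactly: the telescoping identity $u_\lambda(s)-u_\lambda(s+1)=s\,m_s(\lambda)$, subtraction of the same identity for $\mu$, and then the triangle inequality together with $\sum_s s\,m_s(\lambda)=n$. Your explicit check that $N\ge n\ge\max(\lambda_1,\mu_1)$, which makes the zero extension and the range $1\le s\le N$ harmless, is a detail the paper leaves implicit.
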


\begin{proof}
$u_\lambda(s)-u_\lambda(s+1)=\sum_{i:\lambda_i=s}\lambda_i=s\,m_s(\lambda)$; subtract the same identity for $\mu$. The budget follows from $\sum_ss\,m_s(\lambda)=n$ and the triangle inequality.
\end{proof}

The quantization is the decisive constraint: \emph{oscillation at position $s$ costs amplitude at least $s$.} Cheap wide oscillation is possible only at low positions; at high positions it forces spike-scale amplitudes. Nothing in the ambient dyadic analysis sees this, since it is an arithmetic property of Ferrers staircases, and \S\ref{sec:6} is exactly the argument that exploits it.

\subsection{Input models}\label{sec:2.4}

Three input conventions occur in this paper. They differ by more than a polynomial factor, so each algorithmic and complexity statement below names the one it refers to.

\textbf{(P1) Explicit part-list representation.} Each of the two partitions is written out part by part, $\lambda=(\lambda_1,\dots,\lambda_{k(\lambda)})$, each part in binary. The input length is
\begin{equation*}
L=\Theta\Bigl(\sum_{i}\log(\lambda_i+1)+\sum_j\log(\mu_j+1)\Bigr),
\end{equation*}
so $\Omega(k(\lambda)+k(\mu))\le L\le O\bigl((k(\lambda)+k(\mu))\log n\bigr)$. Under \textbf{unary} encoding of the parts the length is $\Theta(n)$. Thresholds are integers encoded in the same way. \textbf{This is the representation used for all complexity statements in \S\ref{sec:8}.}

\textbf{(P2) Compressed multiplicity representation.} Each partition is given as the set of pairs $\{(s,m_s(\lambda)):m_s(\lambda)>0\}$, both entries in binary. The length is $\Theta\bigl((r(\lambda)+r(\mu))\log n\bigr)$, which may be exponentially shorter than (P1): for $\lambda=(1^{\,n})$ the pair $(1,n)$ has length $O(\log n)$ while (P1) needs $\Theta(n)$. This representation is convenient for evaluating the coordinate maps of \S\ref{sec:4.2} and \S\ref{sec:5}, and we use it only there.

\textbf{(G) Promised graph representation.} The input is the upper-triangular adjacency string $x\in\{0,1\}^{\binom n2}$ of a representative graph under a fixed vertex order, \emph{promised} to encode a cluster graph, with $n$ recovered from the string length. The promise matters: the sorted degree sequence is a complete invariant of $\mathcal K_n$, not of all graphs, so every claim of completeness below is relative to it.

Two consequences of the separation should be recorded, since both are used.

\emph{Vertex count versus input length.} The parameter $n$ is the number of vertices, and it is not interchangeable with the input length. Under (G) one has $L=\Theta(n^2)$; under (P1) with unary parts, $L=\Theta(n)$; but under (P1) with binary parts $L$ may be as small as $O(\log n)$, as for $\lambda=(n)$. A running-time bound stated in $n$ is therefore not a bound in $L$, and running times below are stated in $n$ with the representation named. In particular the $O(n\log n)$ bound of Corollary~\ref{cor:3.12} is a bound in the number of vertices. We do not describe it as near-linear: the output it produces, an alignment of $n$ vertices, already has size $\Omega(n)$ regardless of how short the input is.

\emph{Certificate size.} A contingency table in $T(\lambda,\mu)$ has $k(\lambda)k(\mu)$ cells, which is polynomial in the length of (P1) but need not be polynomial in the length of (P2): for $\lambda=\mu=(1^{\,n})$ it has $n^2$ cells against an input of length $O(\log n)$. This is why \S\ref{sec:8} works in (P1), whose NP membership argument exhibits such a table.

From (G) the sorted degree sequence, hence any of the coordinate maps below, is obtainable in $\Theta(n^2)$ bit operations, the same order as reading the input once. A bit-level accounting of that delivery is in the supplementary material and is used nowhere in what follows.

\subsection{Distortion conventions}\label{sec:2.5}

For an injection $f$ from a metric space $(\mathcal X,d)$ into $\ell_p$ put
\begin{equation*}
L^+(f)=\sup_{x\ne y}\frac{\lVert f(x)-f(y)\rVert_p}{d(x,y)},\qquad L^-(f)=\inf_{x\ne y}\frac{\lVert f(x)-f(y)\rVert_p}{d(x,y)},\qquad \rho(f)=\frac{L^+(f)}{L^-(f)} ,
\end{equation*}
call $\rho$ the \textbf{product distortion}, and set $c_p(\mathcal X)=\inf_f\rho(f)$, the infimum over all injections into $\ell_p$. This paper uses the product distortion throughout.

Two distinctions are used repeatedly. \emph{Class versus map}: $c_p(\mathcal X)$ is an infimum over all embeddings and is an invariant of the space, whereas $\rho(f)$ refers to one named map. Theorem~\ref{thm:4.1} is of the first kind, Proposition~\ref{prop:4.4} and all of \S\ref{sec:5} of the second, and Corollary~\ref{cor:4.8} relates them. \emph{Conventions}: part of the literature uses the symmetrized constant $D(f)=\sqrt{\rho(f)}$, which we do not use again; for the sorted-degree coordinate $\rho=\Theta(\sqrt n)$ and $D=\Theta(n^{1/4})$, and the coincidence of the latter exponent with the exponent in Theorem~\ref{thm:4.1} is an arithmetic accident.

Finally, the Euclidean coordinate studied in \S\ref{sec:4.2} is the scaled sorted degree sequence
\begin{equation*}
\Phi_n([G])=\frac2{n^2}\bigl(d_{(1)},\dots,d_{(n)}\bigr),\qquad d_{(1)}\ge\cdots\ge d_{(n)},
\end{equation*}
an isomorphism invariant, complete on $\mathcal K_n/\!\cong$ by the injectivity noted above, of dimension $n$, and computable without any call to a distance oracle.

\subsection{Notation}\label{sec:2.6}

The symbols below are fixed for the whole paper. Those in the lower block are used only in \S\ref{sec:5}.

\begin{small}
\begin{longtable}{>{\raggedright\arraybackslash}p{0.30\linewidth}>{\raggedright\arraybackslash}p{0.50\linewidth}>{\raggedright\arraybackslash}p{0.12\linewidth}}
\toprule
\textbf{symbol} & \textbf{meaning} & \textbf{fixed in} \\
\midrule\endfirsthead
\toprule
\textbf{symbol} & \textbf{meaning} & \textbf{fixed in} \\
\midrule\endhead
$\lambda,\mu\vdash n$ & partitions of $n$, identified with isomorphism classes of cluster graphs & \S\ref{sec:2.1} \\
$g(x)=\binom x2$, $E(\lambda)=\sum_ig(\lambda_i)$ & pair count; number of edges of $G_\lambda$ & \S\ref{sec:2.1} \\
$m_s(\lambda)$, $k(\lambda)$, $r(\lambda)$, $p(n)$ & multiplicity of the part $s$; number of parts; number of \emph{distinct} part sizes; number of classes & \S\ref{sec:2.1} \\
$q^*$, $q^{\max}$ & nearest and farthest alignment cost & \S\ref{sec:2.1}, \S\ref{sec:8.4} \\
$L$ & input length under the named representation, never the vertex count & \S\ref{sec:2.4} \\
$T(\lambda,\mu)$, $X$, $\operatorname{cost}(X)$ & contingency tables with the two margins; a table; its induced edit cost & \S\ref{sec:2.2} \\
$s_i$, $m_j$, $\operatorname{rowdef}$, $\operatorname{coldef}$ & splitting and merging costs of a table; its row and column deficiencies & \S\ref{sec:3.4} \\
$\mathrm{sd}(\lambda)$, $z$, $\delta_1$ & sorted degree sequence; its difference; the vertex-mass metric $\lVert z\rVert_1$ & \S\ref{sec:2.3} \\
$e(\lambda)$, $B$ & block-energy vector $\bigl(g(\lambda_i)\bigr)_i$; the block-energy metric $\lVert e(\lambda)-e(\mu)\rVert_1$ & \S\ref{sec:3.2} \\
$\widehat e_\lambda^{\,K}$, $\widehat e_X$ & the spectrum of block energies of $\lambda$, of a table $X$, as a measure on $\mathbb R$ with $K$ atoms & \S\ref{sec:3.2}, \S\ref{sec:3.4} \\
$u_\lambda$, $v=u_\lambda-u_\mu$ & Ferrers staircase weighted by mass; the realizable difference & \S\ref{sec:2.3} \\
$N_\lambda$, $S=\max(\lambda_1,\mu_1)$ & staircase weighted by count; largest part occurring & \S\ref{sec:2.3} \\
$L^\pm(f)$, $\rho(f)=L^+/L^-$, $c_p(\mathcal X)$ & Lipschitz constants; product distortion; distortion of the class & \S\ref{sec:2.5} \\
$\Phi_n$ & the scaled sorted-degree coordinate map & \S\ref{sec:2.5} \\
$J=\lceil\log_2n\rceil$, $N=2^J$, $\mathcal D_N$, $\ell(I)$ & top level; padded length; dyadic intervals of $[N]$; level of $I$ & \S\ref{sec:5.1} \\
$E_\ell(v)$, $F^{(\gamma)}$, $k_0(v)$ & level-$\ell$ energy; weighted dyadic map; number of non-zero runs of $v$ & \S\ref{sec:5.1} \\
$\Delta v(s)=v(s)-v(s+1)$, $\mathrm{TV}(v)$, $\tau_i$ & jump at $s$; total variation; total variation localized to the octave $[2^i,2^{i+1})$ & \S\ref{sec:5.3}, \S\ref{sec:6} \\
$\mathcal E(v)=\sum_\ell2^{-\ell/2}E_\ell(v)$, $\kappa$ & critical energy; the normalized ratio $\sqrt{\mathcal E(v)}\,n^{1/4}/\lVert v\rVert_1$ & \S\ref{sec:6}, \S\ref{sec:7} \\
$\mathcal W(R)$, $W_R$, $\mu_R$ & Whitney tiling of a run; its width; its mass & \S\ref{sec:6} \\
$\Gamma$ & shape constant of an octave spectrum, in the hypothesis $\sum_i\sqrt{\tau_i}\le\Gamma\sqrt{\,\cdot\,}$ & \S\ref{sec:6}, \S\ref{sec:7} \\
\bottomrule
\end{longtable}
\end{small}

Three pairs must be kept apart throughout. First, $k(\lambda)$ against $r(\lambda)$: the first governs \S\ref{sec:8}, the second \S\ref{sec:5}. Second, $L$, the input length, against $L^\pm$, the Lipschitz constants. Third, $\mu$, the second partition, against $\mu_R$, the mass of a run; these occur together only in \S\ref{sec:6}.

A few further letters are reused as scratch variables inside a single proof and carry no meaning outside it. They are $D$ and $R$ for the donor and receiver index sets (Lemma~\ref{lem:3.5}), $D$ again for the set of differing coordinates (Lemma~\ref{lem:4.2}), $m$ for $\lfloor n/2\rfloor$ (Lemma~\ref{lem:5.6}) and for the top rung of a ladder (\S\ref{sec:7}), $K$ for the number of atoms of a spectrum (\S\ref{sec:3.4}) and for a decision threshold (\S\ref{sec:8}), and $C$ for a bin capacity (\S\ref{sec:8}). Each is announced where it occurs.

\section{Two explicit $\ell_1$ models}\label{sec:3}

This section develops the combinatorial reading of Theorem~\ref{thm:2.3}. It produces two metrics on $\mathcal K_n/\!\cong$, each given by an explicit coordinate map into $\ell_1$ and each two-sidedly equivalent to $q^*$ with an optimal constant. They are obtained from a table by two different accountings, they are extremal on different families, and neither subsumes the other. The vertex-mass metric is the one that survives into the Euclidean analysis of \S\ref{sec:4}--\S\ref{sec:5}; the block-energy metric gives the better constants for approximation and for $c_1$.

\subsection{Unit transfers and the vertex-mass metric}\label{sec:3.1}

\begin{theorem}\label{thm:3.1}
With the notation of \S\ref{sec:2.3}:

\textbf{(A) (exact cost of a unit transfer)} Move one vertex out of a block of size $p\ge1$ into a \textbf{different} block of size $q\ge0$, where $q=0$ means a new block is created, deleting the source block if it empties; write $\pi\to\pi'$. Then $\pi'=\pi$ if and only if $p=q+1$, and
\begin{equation*}
q^*(\pi,\pi')=\begin{cases}0,&p=q+1,\\[2pt](p-1)+q,&\text{otherwise,}\end{cases}
\qquad
\delta_1(\pi,\pi')=\begin{cases}0,&p=q+1,\\[2pt]2\max(p-1,q),&\text{otherwise.}\end{cases}
\end{equation*}

\textbf{(B) (main bound)} For all $n\ge2$ and all $\lambda\ne\mu\vdash n$,
\begin{equation*}
q^*(\lambda,\mu)\ \le\ \frac{3S-4}{2S}\,\delta_1(\lambda,\mu)\ <\ \frac32\,\delta_1(\lambda,\mu) .
\end{equation*}

\textbf{(C) (supremum)} With Lemma~\ref{lem:2.5} and the family in (D),
\begin{equation*}
\sup_{n\ge2}\ \sup_{\lambda\ne\mu\vdash n}\ \frac{q^*(\lambda,\mu)}{\delta_1(\lambda,\mu)}=\frac32 ,
\end{equation*}
attained by no pair, since $S\le n<\infty$ makes $(3S-4)/(2S)<\frac32$ strict.

\textbf{(D) (witness)} For $b\ge1$ with $b(b+1)\mid n$, the adjacent uniform pair $\lambda=\bigl((b+1)^{\,n/(b+1)}\bigr)$, $\mu=\bigl(b^{\,n/b}\bigr)$ turns both inequalities of (B) into equalities:
\begin{equation*}
q^*=\frac{n(3b-1)}{2(b+1)},\qquad \delta_1=n,\qquad S=b+1,\qquad \frac{q^*}{\delta_1}=\frac{3b-1}{2b+2}\nearrow\frac32 .
\end{equation*}

\textbf{(E) (left end)} For all $n\ge2$ and all $\lambda\ne\mu\vdash n$, $q^*/\delta_1\ge\frac12$, and this is \textbf{attained}: $\lambda=(2,1^{\,n-2})$ against $\mu=(1^{\,n})$ gives $q^*=1$ and $\delta_1=2$.

\textbf{(F) (optimal factor for this map)} The identity map $(\mathcal K_n/\!\cong,q^*)\to(\mathcal K_n/\!\cong,\delta_1)$ has product distortion strictly below $3$ for each fixed $n$, and $\sup_{n\ge2}\rho=3$.
\end{theorem}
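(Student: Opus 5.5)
The plan is to prove (A) by a local computation, get (B) by routing along unit transfers and summing via the triangle inequality of Proposition~\ref{prop:2.1}, and read (C)--(F) off from (B), Lemma~\ref{lem:2.5} and two explicit families.

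\textbf{Part (A).} First, $\pi'=\pi$ exactly when the multiset $\{p,q\}$ equals $\{p-1,q+1\}$, i.e.\ $p=q+1$. Otherwise I strip the common blocks using the common-block reduction (Lemma~\ref{lem:3.2}, a value version of the Lerman--Peter theorem), which reduces Theorem~\ref{thm:2.3} to the pair $\lambda'=(p,q)$, $\mu'=(p-1,q+1)$. The transportation polytope of this $2\times2$ system is a segment, and the convex objective $\lVert X\rVert_F^2$ is maximized at an endpoint (Lemma~\ref{lem:2.2}); this is the computation of Lemma~\ref{lem:3.3}.

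\begin{itemize}
\item The endpoint $\begin{pmatrix}p-1&1\\0&q\end{pmatrix}$ gives $\lVert X\rVert_F^2=(p-1)^2+1+q^2$, hence $q^*=p+q-1$ after the $-n$ cancellation of Theorem~\ref{thm:2.3}.
\item The other endpoint sends as much as possible of the $p$-block into the $(q+1)$-block. A direct comparison of the two endpoint values shows it is never larger, and this case check is the only delicate point in (A).
\end{itemize}

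For $\delta_1$ I use Lemma~\ref{lem:2.4} and compute $\lVert v\rVert_1$ directly. In $\nu$ the transfer moves $p$ units of mass from $p$ to $p-1$, and $q$ units from $q$ to $q+1$. The resulting staircase difference is supported on an interval of length $|p-1-q|$ where it has height about $\min$, plus the two end contributions. Summing gives $2\max(p-1,q)$, which I would verify separately for $p-1>q$ and $p-1<q$.

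\textbf{Part (B).} This is the heart. I connect $\lambda$ to $\mu$ by a sequence of unit transfers, each moving a vertex from a donor block to a receiver block, with donors and receivers determined by comparing the sorted block lists at equal rank (the donor/receiver sets $D,R$ of Lemma~\ref{lem:3.5}). Proposition~\ref{prop:2.1} bounds $q^*(\lambda,\mu)$ by the sum of the step costs $(p-1)+q$ from (A). The work is an accounting showing that this sum is at most $\frac{3S-4}{2S}\delta_1$. My first attempt would charge each step to the decrease of $\lVert v\rVert_1$ (equivalently of $W_1(\nu_\cdot,\nu_\mu)$) that it causes, using the quantization of Lemma~\ref{lem:2.6} to relate that decrease to $p$ and $q$.

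The witness family (D) shows the path cannot be $\delta_1$-geodesic, because single steps have ratio at most $1$ while the target ratio approaches $\frac32$. The charging must therefore exploit the fact that, when a vertex is taken from a block that is being dissolved, part of its cost is paid by mass leaving a level with no matching receiver. Controlling this uniformly by $S$, so that the extremal step has $p-1\approx q\approx S-1$ and ratio $\frac{3S-4}{2S}$, is where I expect the main obstacle. If the direct charge fails, I would strengthen the induction to carry a potential combining $\lVert v\rVert_1$ with the count staircase $N_\lambda-N_\mu$.

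\textbf{Parts (C)--(F).} For (D), $\mathrm{sd}(\lambda)$ and $\mathrm{sd}(\mu)$ differ by exactly $1$ in every coordinate, so $\delta_1=n$, and (B) gives the upper bound $q^*\le\frac{n(3b-1)}{2(b+1)}$. For the matching lower bound, every cell of a table is at most $b$ and every row sums to $b+1$, so each row contributes at most $b^2+1$ to $\lVert X\rVert_F^2$. Hence $\lVert X\rVert_F^2\le\frac{n(b^2+1)}{b+1}$, and Theorem~\ref{thm:2.3} gives $q^*\ge\frac{n(2b+1)}2-\frac{n(b^2+1)}{b+1}=\frac{n(3b-1)}{2(b+1)}$.

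Part (C) follows by letting $b\to\infty$ in (D) and using the strict bound in (B). Part (E) is Lemma~\ref{lem:2.5}, with the pair $(2,1^{n-2})$ against $(1^n)$ giving $q^*=1$ and $\delta_1=2$ directly. Part (F) is the quotient of the two ends: $L^+\le\sup(\delta_1/q^*)=2$ (attained by (E)) and $L^-\ge\inf(\delta_1/q^*)>\frac23$ for each fixed $n$, so $\rho<3$. The infimum of $\delta_1/q^*$ tends to $\frac23$ along (D), so $\sup_n\rho=3$.
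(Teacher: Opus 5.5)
Your part (A) is the paper's argument (common-block reduction, then the endpoint comparison on the $2\times2$ segment), and (C), (E), (F) are read off correctly once (B) is in hand. One slip in the $\delta_1$ computation of (A): the middle plateau of the staircase difference has height $\pm1$, not about $\min(p-1,q)$. The two end values are $\min(p-1,q)$ and $\max(p-1,q)+1$, and that is what sums to $2\max(p-1,q)$.

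The genuine gap is (B), which you leave as a hoped-for charging scheme. The per-step version you propose fails at the rate (B) requires. Take $\lambda=(2,2,2)$, $\mu=(3,3)$: here $S=3$, $q^*=5$ and $\delta_1=6$, so (B) is tight at ratio $\frac{3S-4}{2S}=\frac56$. Rank routing forces the first step to move a vertex from the donor of size $2$ into a receiver of size $2$, at cost $(p-1)+q=3$. Meanwhile $\delta_1(\cdot,\mu)$ falls only from $6$ to $4$. So no charge of a step's cost against the decrease of $\lVert v\rVert_1$ it causes, at rate $\frac56$, can pay for that step. The constant $\frac{3S-4}{2S}$ is a property of the whole path, not of its worst step.

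The missing idea is to evaluate the path cost exactly instead of comparing with $\delta_1$ along the way.

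\emph{Closed-form routing cost.} Track labelled slots rather than current ranks, and split each step cost $(p-1)+q$ into the donor's size after removal and the receiver's size before insertion. Donor $i$ runs through $\lambda_i-1,\dots,\mu_i$ and receiver $j$ through $\lambda_j,\dots,\mu_j-1$, each exactly once, whatever the interleaving. The total is therefore $\sum_i|g(\lambda_i)-g(\mu_i)|=B(\lambda,\mu)=\sum_s s\,|N_\lambda(s)-N_\mu(s)|$ (Lemma~\ref{lem:3.5}, Proposition~\ref{prop:3.4}(c)).

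\emph{Static comparison.} What remains is $B\le\frac{3S-4}{2S}\delta_1$, obtained by a change of variables. From $u_\lambda(s)=sN_\lambda(s-1)+\sum_{w\ge s}N_\lambda(w)$ you get $v(s)=s\,y_{s-1}-(s-1)\,y_s$ with $y_s=\sum_{w\ge s}\bigl(N_\lambda(w)-N_\mu(w)\bigr)$. Put $d_j=y_j/j-y_{j+1}/(j+1)$, supported in $[1,S-1]$. Then $\delta_1=\sum_j j(j+1)|d_j|$ and $B\le\sum_j\frac{j(3j-1)}2|d_j|$, and the coefficient ratio $\frac{3j-1}{2(j+1)}$ is at most $\frac{3S-4}{2S}$ (Lemma~\ref{lem:3.6}).

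Your hint of a potential built from $N_\lambda-N_\mu$ points this way, but without these two steps (B) is unproved. With it fall (C), (F), and the upper bound on $q^*$ in your (D), which you draw from (B). The paper proves (D) independently of (B), by exhibiting a table that attains the row-wise bound $\frac n{b+1}(b^2+1)$.
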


The two ends behave differently: the left constant is a minimum, the right one only a supremum. Part (C) is a supremum over both $n$ and the pair; by (B), $\max_{\lambda\ne\mu\vdash n}q^*/\delta_1<\frac32$ for every fixed $n$, and we give no closed form for that maximum. The upper bound in (B) is proved in two stages, through the block-energy metric of \S\ref{sec:3.2}: a routing argument gives $q^*\le B$, and a change of variables compares $B$ with $\delta_1$ term by term.

The upper bound in (A) is immediate. Under the identity bijection the moved vertex loses the $p-1$ edges of its source block and gains the $q$ edges of its target, all other edges unchanged, so the symmetric difference is $(p-1)+q$ and $q^*\le(p-1)+q$. The boundary cases agree: for $q=0$ no edge is gained, cost $p-1$; for $p=1$ the source block disappears and no edge is lost, cost $q$. The degeneracy criterion is equally direct: if $p=q+1$ the two sizes are exchanged and the multiset is unchanged; conversely $\pi'=\pi$ forces $p-1=q$, since only the two entries $p\to p-1$ and $q\to q+1$ change.

The lower bound is the substantive half, and needs two steps: reduce to the two blocks that move, then analyse that subsystem completely.

\begin{lemma}[common-block reduction]\label{lem:3.2}
Let $\alpha,\beta\vdash n$ share a part value $a$, and put $\alpha'=\alpha\setminus\{a\}$, $\beta'=\beta\setminus\{a\}$. Then
\begin{equation*}
\max_{X:(\alpha,\beta)}\lVert X\rVert_F^2\;=\;a^2+\max_{X':(\alpha',\beta')}\lVert X'\rVert_F^2 .
\end{equation*}
\end{lemma}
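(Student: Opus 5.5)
The plan is to prove the two inequalities separately. The lower bound ``$\ge$'' is immediate, and the upper bound ``$\le$'' follows by folding the shared row and column of an optimal table into a real product correction and then invoking Lemma~\ref{lem:2.2}.

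\emph{Lower bound.} Take an optimal $X'\in T(\alpha',\beta')$ and form the block-diagonal table that puts $a$ in the cell matching the removed row of $\alpha$ to the removed column of $\beta$, with zeros elsewhere in that row and column. It is feasible for $(\alpha,\beta)$, and its value is $a^2+\lVert X'\rVert_F^2$.

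\emph{Upper bound.} Let $X\in T(\alpha,\beta)$ be optimal, let $r$ be a row with margin $a$ and $c$ a column with margin $a$, and put $t=a-x_{rc}=\sum_{i\ne r}x_{ic}=\sum_{j\ne c}x_{rj}$. If $t=0$, then row $r$ and column $c$ meet only in the cell $x_{rc}=a$, so deleting them leaves a table in $T(\alpha',\beta')$ and we are done. Otherwise define the real table on $(i\ne r,\ j\ne c)$ by
\begin{equation*}
x'_{ij}=x_{ij}+\frac{x_{ic}\,x_{rj}}{t}.
\end{equation*}
Row $i$ gains exactly the $x_{ic}$ it lost, and column $j$ gains exactly the $x_{rj}$ it lost, so $X'\in T_{\mathbb R}(\alpha',\beta')$. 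By the last sentence of Lemma~\ref{lem:2.2}, $\lVert X'\rVert_F^2$ is at most the integer maximum for $(\alpha',\beta')$. This is the use of a real product table that the paper announced after Lemma~\ref{lem:2.2}.

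\emph{The estimate.} It remains to show $\lVert X'\rVert_F^2\ge\lVert X\rVert_F^2-a^2$. Expanding the square, the cross terms $2x_{ij}x_{ic}x_{rj}/t$ are nonnegative, so dropping them gives
\begin{equation*}
\lVert X'\rVert_F^2\ \ge\ \sum_{i\ne r,\,j\ne c}x_{ij}^2+\frac{AB}{t^2},\qquad A=\sum_{i\ne r}x_{ic}^2,\quad B=\sum_{j\ne c}x_{rj}^2 .
\end{equation*}
On the other side, $\lVert X\rVert_F^2-a^2$ equals the same inner sum plus $A+B+x_{rc}^2-a^2$, and $x_{rc}^2-a^2=-t^2-2tx_{rc}\le-t^2$. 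So it suffices to prove $AB/t^2\ge A+B-t^2$, which is the factorization
\begin{equation*}
\frac{(t^2-A)(t^2-B)}{t^2}\ \ge\ 0 .
\end{equation*}
Both factors are nonnegative because $A\le t^2$ and $B\le t^2$: each is a sum of squares of nonnegative numbers summing to $t$.

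The main obstacle is choosing the redistribution. A naive reassignment, for example dumping everything into one cell, can lose Frobenius mass, whereas the proportional product correction makes the comparison collapse to the factorization above. If that inequality had failed, the fallback would be to keep the cross terms and use optimality of $X$.
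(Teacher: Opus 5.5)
Your proof is correct and follows the paper's argument essentially step for step. For ``$\ge$'' you use the same block-diagonal extension. For ``$\le$'' you use the same real product plan $x_{ic}x_{rj}/t$, certified through Lemma~\ref{lem:2.2}, with the cross terms dropped and the comparison reduced to $(t^2-A)(t^2-B)\ge0$ together with $x_{rc}\le a$.
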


\begin{proof}
For ``$\ge$'', take an optimal table for $(\alpha',\beta')$, append a row and column, and place $a$ in the new diagonal cell.

For ``$\le$'', let $X$ have margins $(\alpha,\beta)$ with row $i$ and column $j$ both of margin $a$. Write $x=X_{ij}$, $p_{j'}=X_{ij'}$ for $j'\ne j$, $q_{i'}=X_{i'j}$ for $i'\ne i$, and $u=a-x=\sum_{j'}p_{j'}=\sum_{i'}q_{i'}$; let $\Sigma$ be the sum of squares off the cross. Delete row $i$ and column $j$ and move the stranded mass back: choose any plan $t$ with row sums $(q_{i'})$ and column sums $(p_{j'})$, both totalling $u$, and set $Y_{i'j'}=X_{i'j'}+t_{i'j'}$, which has margins $(\alpha',\beta')$. By Lemma~\ref{lem:2.2} the table $Y$ need only be feasible, not integral, so
\begin{equation*}
\max_{X'}\lVert X'\rVert_F^2\ \ge\ \lVert Y\rVert_F^2\ \ge\ \Sigma+\sum t_{i'j'}^2 ,
\end{equation*}
discarding non-negative cross terms. Take the product plan $t_{i'j'}=q_{i'}p_{j'}/u$, assuming $u>0$; if $u=0$ then $x=a$ and the claim is immediate. With $A=\sum q_{i'}^2$, $A'=\sum p_{j'}^2$, $U=u^2$,
\begin{equation*}
\sum t^2=\frac{A\,A'}{U}\ \ge\ A+A'-U ,
\end{equation*}
this being equivalent to $(A-U)(A'-U)\ge0$ with both factors non-positive, since $A\le U$ and $A'\le U$. Now $\lVert X\rVert_F^2=x^2+A+A'+\Sigma$ while the right side is at least $a^2+\Sigma+A+A'-u^2$, so it suffices that $x^2\le a^2-u^2$; and $u=a-x$ gives $a^2-u^2=2ax-x^2$, reducing this to $x\le a$.
\end{proof}

Lerman--Peter \cite[Th.~1]{ref1} prove the stronger statement that every optimal table matches a shared margin value to itself, by induction on $n$. The short proof above gives only the statement about values, which is all we use, and is available because Lemma~\ref{lem:2.2} permits a real-valued certificate.

\begin{lemma}[two-block subsystem]\label{lem:3.3}
Let $p\ne q+1$. Then $q^*(\pi,\pi')=(p-1)+q$.
\end{lemma}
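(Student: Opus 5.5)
The plan is to strip everything the two partitions share, then solve the remaining two-row, two-column transportation problem exactly. The upper bound $q^*(\pi,\pi')\le(p-1)+q$ was already obtained from the identity bijection, so only the lower bound remains. By Theorem~\ref{thm:2.3} it is equivalent to
\begin{equation*}
\max_{X\in T(\pi,\pi')}\lVert X\rVert_F^2\;\le\;\frac{\lVert\pi\rVert_2^2+\lVert\pi'\rVert_2^2}{2}-(p-1)-q .
\end{equation*}

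\textbf{Reduction to two blocks.} The partitions $\pi$ and $\pi'$ agree outside the two moved blocks: $\pi$ has parts $p,q$ and $\pi'$ has parts $p-1,q+1$, with parts of size $0$ dropped. I will apply Lemma~\ref{lem:3.2} once for each common part. Each application subtracts $a^2$ from the maximum and also $a^2$ from the half-sum $\tfrac12(\lVert\pi\rVert_2^2+\lVert\pi'\rVert_2^2)$, so the value of $q^*$ is unchanged. After all common parts are removed the problem is $\alpha=(p,q)$ against $\beta=(p-1,q+1)$. The hypothesis $p\ne q+1$ guarantees that no part of $\alpha$ equals a part of $\beta$, so no further reduction is possible; it also keeps the case $q=0$ (where $\alpha=(p)$ and $\beta=(p-1,1)$) and the case $p=1$ honest. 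These degenerate cases either have a single row or column or can be read as tables with a zero margin, and I will check them separately.

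\textbf{The $2\times2$ system.} With rows $p,q$ and columns $p-1,q+1$, every table is determined by $x=X_{11}$. Its entries are $x,\;p-x,\;p-1-x,\;q+1-p+x$, and $x$ ranges over $[\max(0,p-q-1),\,p-1]$. Then $\lVert X\rVert_F^2$ is a convex quadratic in $x$, so its maximum over this interval is attained at an endpoint; this is the convex-parabola endpoint argument already cited from \cite{ref1,ref6}. The target value is $\tfrac12\bigl(p^2+q^2+(p-1)^2+(q+1)^2\bigr)-(p-1)-q=(p-1)^2+q^2+1$.
\begin{itemize}
\item At the endpoint $x=p-1$ the table is $\bigl(\begin{smallmatrix}p-1&1\\0&q\end{smallmatrix}\bigr)$, which attains exactly $(p-1)^2+q^2+1$.
\item If $p\ge q+2$, the other endpoint gives $(p-q-1)^2+(q+1)^2+q^2$. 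Its excess over the target is $2q(q-p+2)\le0$.
\item If $p\le q$, the other endpoint $x=0$ gives $p^2+(p-1)^2+(q+1-p)^2$. Its excess over the target is $2(p-1)(p-q)\le0$.
\end{itemize}
Hence the maximum equals $(p-1)^2+q^2+1$, and so $q^*=(p-1)+q$.

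\textbf{Main obstacle.} The delicate step is the bookkeeping in the reduction: making sure every common part is stripped, and that $p\ne q+1$ is precisely what prevents an accidental coincidence such as $q=p-1$, which would collapse the pair. Once that is settled, the endpoint comparisons are short factorizations.
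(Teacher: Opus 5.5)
Your proof is correct and follows the paper's argument essentially step for step. You strip the common parts with Lemma~\ref{lem:3.2}, parametrize the $2\times2$ tables by a single cell, and use convexity to reduce to the two endpoints. Your excesses $2q(q-p+2)$ and $2(p-1)(p-q)$ are the paper's comparisons $2q(d-1)\ge0$ and $2(p-q)(1-p)\ge0$ with the sign reversed. The degenerate cases $q=0$ and $p=1$, which you say you will check separately but never do, are already covered by your computation read with a zero margin. There the admissible interval collapses to the single point $x=p-1$, respectively $x=0$, exactly as the paper notes.
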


\begin{proof}
Applying Lemma~\ref{lem:3.2} to every part value common to $\pi$ and $\pi'$ leaves $q^*(\pi,\pi')=q^*\bigl(\{p,q\},\{p-1,q+1\}\bigr)$, a block being absent when $q=0$ or $p-1=0$. By Theorem~\ref{thm:2.3}, $q^*=\frac{\lVert\pi\rVert_2^2+\lVert\pi'\rVert_2^2}2-\max_X\lVert X\rVert_F^2$, and the tables with row margins $(p,q)$ and column margins $(p-1,q+1)$ form a one-parameter family:

\begin{center}\begin{small}
\begin{tabular}{lll}
\toprule
 & \textbf{column $p-1$} & \textbf{column $q+1$} \\
\midrule
row $p$ & $c$ & $p-c$ \\
row $q$ & $p-1-c$ & $q+1-p+c$ \\
\bottomrule
\end{tabular}
\end{small}\end{center}

with $\max(0,p-q-1)\le c\le p-1$. The objective
\begin{equation*}
Q(c)=c^2+(p-c)^2+(p-1-c)^2+(q+1-p+c)^2
\end{equation*}
has leading coefficient $4>0$, so it is a convex parabola and its maximum over the interval is at an endpoint: there are exactly two candidates. With $A_0=\frac12(p^2+q^2+(p-1)^2+(q+1)^2)=p^2+q^2-p+q+1$ and $q^*=A_0-\max Q$:

\begin{center}\begin{small}
\begin{tabular}{llll}
\toprule
\textbf{endpoint} & \textbf{exists when} & \textbf{$Q$} & \textbf{$A_0-Q$} \\
\midrule
$c=p-1$ (identity coupling) & always & $(p-1)^2+1+q^2$ & $(p-1)+q$ \\
$c=d:=p-q-1$ & $d\ge0$ & $d^2+(q+1)^2+q^2$ & $d(2q+1)$ \\
$c=0$ & $d<0$, i.e. $p\le q$ & $p^2+(p-1)^2+(q+1-p)^2$ & $-2p^2+2pq+3p-q-1$ \\
\bottomrule
\end{tabular}
\end{small}\end{center}

and $q^*$ is the smaller of the two available values. Case by case: $d=0$ gives $q^*=0$, the excluded degenerate case; $d=1$ gives $d(2q+1)=2q+1=(p-1)+q$, the endpoints agreeing; $d\ge2$ gives $d(2q+1)-[(p-1)+q]=2q(d-1)\ge0$; $p\le q$ gives $[-2p^2+2pq+3p-q-1]-[(p-1)+q]=2(p-q)(1-p)\ge0$; $q=0$ makes the interval the single point $c=p-1$, so $q^*=p-1$; and $p=1$ makes it $\{0\}$, so $q^*=q$.
\end{proof}

The same $2\times2$ extremum is solved in \cite[Th.~3]{ref1} and \cite[Thm~1]{ref6}, the latter by the same convex-parabola argument and the same parametrization; what is used here is its reading as a distance between two partitions, together with the degeneracy criterion.

\textbf{Single-step displacement.} Write $\Delta=\nu_{\pi'}-\nu_\pi$. For $p\ge2$, $\Delta$ equals $-p$ at $p$ and $+(p-1)$ at $p-1$; for $p=1$ it is $-1$ at $1$. For $q\ge1$ it equals $-q$ at $q$ and $+(q+1)$ at $q+1$; for $q=0$ it is $+1$ at $1$. Accumulating $\bigl|\sum_{s\le w}\Delta(s)\bigr|$: for $p>q+1$, $q+(p-q-2)+p=2p-2$; for $p=q+2$, $q+(q+2)=2p-2$; for $p\le q$, $(p-1)+(q-p)+(q+1)=2q$; for $q=0,p\ge2$, $(p-2)+p=2p-2$; and for $p=1,q\ge1$, $(q-1)+(q+1)=2q$. These combine to $2\max(p-1,q)$, completing (A).

\textbf{The obstruction.} By (A), every genuine step satisfies
\begin{equation*}
\frac{q^*_{\rm step}}{\delta_{1,\rm step}}=\frac{(p-1)+q}{2\max(p-1,q)}\in\Bigl[\tfrac12,1\Bigr),
\end{equation*}
the right endpoint excluded because attaining $1$ requires $p-1=q$, the degenerate step. No path is step-wise geodesic for the vertex-mass metric, so the accumulated stretch must be controlled globally, which is why (B) is proved in two stages.

\subsection{Rank routing and the block-energy metric}\label{sec:3.2}

The second model reads a partition not through the degrees of its vertices but through the edge counts of its blocks. Recall $g(x)=\binom x2$ and put
\begin{equation*}
e(\lambda)=\bigl(g(\lambda_1),\dots,g(\lambda_{k(\lambda)}),0,\dots,0\bigr)\in\mathbb Z_{\ge0}^{\,n},\qquad B(\lambda,\mu)=\lVert e(\lambda)-e(\mu)\rVert_1=\sum_{i\ge1}\bigl|g(\lambda_i)-g(\mu_i)\bigr| ,
\end{equation*}
both lists sorted non-increasingly and zero-padded to the common length $n$. We call $e(\lambda)$ the \textbf{block-energy vector}; note $E(\lambda)=\lVert e(\lambda)\rVert_1$, so $e$ resolves the total edge count into its per-block contributions. Where $\delta_1$ matches the two partitions vertex by vertex, $B$ matches them block by block in order of size; the following proposition identifies that matching and records three equivalent forms of $B$.

\begin{proposition}\label{prop:3.4}
\emph{(a)} $e$ is injective on $\mathcal K_n/\!\cong$, and $B$ is a metric on $\mathcal K_n/\!\cong$.

\emph{(b)} For any $K\ge\max\bigl(k(\lambda),k(\mu)\bigr)$,
\begin{equation*}
B(\lambda,\mu)=W_1\bigl(\widehat e_\lambda^{\,K},\widehat e_\mu^{\,K}\bigr),\qquad \widehat e_\lambda^{\,K}=\sum_{i}\delta_{g(\lambda_i)}+\bigl(K-k(\lambda)\bigr)\delta_0 ,
\end{equation*}
the optimal-transport cost on $\mathbb R$ with ground cost $|x-y|$ between the two spectra of block energies.

\emph{(c)} $B(\lambda,\mu)=\sum_{s\ge1}s\,\bigl|N_\lambda(s)-N_\mu(s)\bigr|$.
\end{proposition}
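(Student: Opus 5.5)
Part (a) comes first. The injectivity claim is immediate: $g(x)=\binom x2$ is strictly increasing on $\mathbb Z_{\ge1}$ and $g(1)=0$. The trouble is blocks of size $1$, since they contribute $g(1)=0$, which is the same as padding. To get around this, recover $\lambda$ from $e(\lambda)$ by reading off the parts $\ge2$ from the nonzero entries, and then restore the singletons from $n-\sum_{\lambda_i\ge2}\lambda_i$. Since $n$ is fixed, the number of singletons is determined. The metric property then follows because $B$ is the pullback of the $\ell_1$ norm under the injective map $e$.

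For (b), I would use the classical fact that optimal transport on $\mathbb R$ between two measures with $K$ unit atoms each is achieved by the monotone (sorted) coupling. This is the same rearrangement inequality proved in Lemma~\ref{lem:2.5}(ii): any crossing pair can be uncrossed without increasing cost. Hence $W_1(\widehat e_\lambda^{\,K},\widehat e_\mu^{\,K})=\sum_{i\le K}|a_i-b_i|$, where $a,b$ are the non-increasingly sorted atom lists. These lists are exactly $e(\lambda),e(\mu)$ padded with zeros. Any further zero padding, to length $n$ or to length $K$, adds only terms $|0-0|$, so the value is independent of $K\ge\max(k(\lambda),k(\mu))$.

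For (c), use the CDF form of $W_1$ on the line, which was already cited for Lemma~\ref{lem:2.4}:
\[
W_1=\int_0^\infty\bigl|\#\{i:g(\lambda_i)>t\}-\#\{i:g(\mu_i)>t\}\bigr|\,dt .
\]
The counting function is a step function that changes only at the values $g(s)$. For $t\in[g(s),g(s+1))$ we have $\#\{i:g(\lambda_i)>t\}=\#\{i:\lambda_i>s\}=N_\lambda(s)$, and the interval has length $g(s+1)-g(s)=s$. Summing over $s\ge1$ gives $\sum_s s\,|N_\lambda(s)-N_\mu(s)|$. The range $t<0$ contributes nothing, and the padding atoms at $0$ never exceed any $t\ge0$, so they drop out of the counts.

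The only point needing care is this atom bookkeeping at $0$: singleton blocks and padding both sit at energy $0$, so one must check they are never counted in $N_\lambda(s)$ for $s\ge1$. They are not, since singletons have $\lambda_i=1\not>s$. I expect no real obstacle beyond stating the monotone-coupling lemma cleanly, and I would simply invoke Lemma~\ref{lem:2.5}(ii) for it.
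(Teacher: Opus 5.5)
Your proof is correct. Parts (a) and (b) follow the paper's argument essentially verbatim: parts $\ge2$ are read off the positive entries, singletons are recovered from $n$, and the sorted matching is shown optimal by the uncrossing exchange.

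For (c) you take a different route. You derive it from (b) through the distribution-function form of $W_1$ on the line, slicing in the energy variable $t$. On each interval $[g(s),g(s+1))$ the survival counts are $N_\lambda(s)$ and $N_\mu(s)$, and the interval has length $s$.

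The paper instead proves (c) directly from the rank-paired definition of $B$, without using (b). It writes $g(x)=\sum_{s\ge0}s\,\mathbb 1[s<x]$. For each fixed $i$, the differences $\mathbb 1[\lambda_i>s]-\mathbb 1[\mu_i>s]$ have constant sign in $s$, so the absolute value can be moved inside the $s$-sum. Since both lists are sorted, $\{i:\lambda_i>s\}$ and $\{i:\mu_i>s\}$ are initial segments, and their symmetric difference has size $|N_\lambda(s)-N_\mu(s)|$.

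The two computations are the same layer-cake decomposition of $g$, summed in the opposite order.

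Your version is shorter once (b) is available, and it shows (c) to be the CDF form of the same transport cost, in parallel with Lemma~\ref{lem:2.4}. The cost is that it depends on (b) and on the external CDF identity. You apply that identity to measures of total mass $K$ rather than to probability measures, which is harmless: survival functions and CDFs differ by the same constant $K$ on both sides.

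The paper's version is elementary and self-contained. Combined with your CDF computation, it would even give (b) with no coupling argument at all. That would sidestep the small elision, shared by your proposal and the paper, of passing from permutation matchings (what Lemma~\ref{lem:2.5}(ii) handles) to general couplings, which needs Birkhoff--von Neumann.
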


\begin{proof}
(a) $g$ is strictly increasing on $\{1,2,\dots\}$ with $g(1)=0$, so the multiset of parts $\ge2$ is recovered from the positive entries of $e(\lambda)$, and the number of parts equal to $1$ is $n$ minus their sum; hence $e$ separates points, and $B$ inherits symmetry and the triangle inequality from $\ell_1$.

(b) Both measures have total mass $K$. For equal-size multisets on the line the sorted matching is optimal: if $a_1\le a_2$ and $b_1\le b_2$ then $|a_1-b_1|+|a_2-b_2|\le|a_1-b_2|+|a_2-b_1|$, checked on the three interleavings, and any matching is turned into the sorted one by such exchanges without increasing cost \cite{ref31}. Sorting both multisets descending lists $g(\lambda_1)\ge\cdots\ge g(\lambda_{k(\lambda)})\ge0=\cdots=0$ against the analogous list for $\mu$, which is exactly the padded rank pairing defining $B$.

(c) Since $g(x)=\sum_{s\ge0}s\,\mathbb 1[s<x]$ and, for fixed $i$, the difference of the indicators has constant sign,
\begin{equation*}
\sum_i|g(\lambda_i)-g(\mu_i)|=\sum_{s\ge1}s\sum_i\bigl|\mathbb 1[\lambda_i>s]-\mathbb 1[\mu_i>s]\bigr| ,
\end{equation*}
and because both lists are sorted, $\{i:\lambda_i>s\}$ and $\{i:\mu_i>s\}$ are initial segments whose symmetric difference has size $|N_\lambda(s)-N_\mu(s)|$.
\end{proof}

Form (c) is the one that compares with $\delta_1$ in \S\ref{sec:3.3}, form (b) the one that yields the per-table bound in \S\ref{sec:3.4}.

\begin{lemma}[routing by rank]\label{lem:3.5}
For all $\lambda,\mu\vdash n$, $\ q^*(\lambda,\mu)\le B(\lambda,\mu)$.
\end{lemma}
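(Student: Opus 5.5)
The plan is to exhibit one explicit table $X\in T(\lambda,\mu)$ whose cost, as defined in Theorem~\ref{thm:2.3}, is at most $B(\lambda,\mu)$. Since $q^*$ is the minimum of $\operatorname{cost}$ over all tables, this gives $q^*\le B$. The table is obtained by matching blocks by rank, as in the padded pairing that defines $B$ in Proposition~\ref{prop:3.4}(b), and then rerouting the leftover mass.

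\textbf{Construction.} Pad both partitions with zeros to a common length $K\ge\max(k(\lambda),k(\mu))$, both sorted non-increasingly. Index rows and columns by $1,\dots,K$. A zero row or column contributes nothing and can be deleted afterwards without changing margins or cost.
\begin{enumerate}
\item Put $x_{ii}=\min(\lambda_i,\mu_i)$ on the diagonal.
\item Define the donor set $D=\{i:\lambda_i>\mu_i\}$ with excess $a_i=\lambda_i-\mu_i$. Define the receiver set $R=\{j:\mu_j>\lambda_j\}$ with deficit $b_j=\mu_j-\lambda_j$.
\item Since $\sum_i\lambda_i=\sum_j\mu_j=n$, we have $\sum_{i\in D}a_i=\sum_{j\in R}b_j$. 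Any non-negative integer plan with these margins then exists, for instance by the northwest-corner rule. Place it in the cells $D\times R$.
\end{enumerate}
These cells are off the diagonal, because $D\cap R=\emptyset$. Row $i$ then sums to $\min(\lambda_i,\mu_i)+a_i=\lambda_i$, and column $j$ sums to $\mu_j$, so $X\in T(\lambda,\mu)$.

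\textbf{Cost computation.} Write
\begin{equation*}
\operatorname{cost}(X)=\sum_i\bigl[g(\lambda_i)+g(\mu_i)-2g(\min(\lambda_i,\mu_i))\bigr]-2\sum_{i\ne j}g(x_{ij}).
\end{equation*}
The $i$-th bracket equals $g(\max(\lambda_i,\mu_i))-g(\min(\lambda_i,\mu_i))=|g(\lambda_i)-g(\mu_i)|$. Summing over $i$ gives exactly $B(\lambda,\mu)$. The off-diagonal correction is non-positive because $g\ge0$ on $\mathbb Z_{\ge0}$. Hence $q^*\le\operatorname{cost}(X)\le B$.

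The first term equals $B$ only because both lists are sorted. Without sorting it would be the cost of an arbitrary block pairing, which is at least $B$ by the optimality of the sorted matching in Proposition~\ref{prop:3.4}(b). So the rank pairing is the one choice that makes the bound come out as $B$.

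\textbf{Expected obstacle.} Once the table is chosen there is no real difficulty; the only points to check are bookkeeping:
\begin{itemize}
\item the padded zero blocks must be handled correctly, both as an empty row of $\lambda$ receiving mass and as an empty column of $\mu$;
\item the donor-to-receiver plan must be integral, so that $X$ is a genuine alignment.
\end{itemize}
For the algorithmic use of this lemma in Theorem~\ref{thm:1.3}(iii), one would also observe that the construction runs in $O(n\log n)$: the bottleneck is sorting, and the resulting alignment is read off cell by cell as in the proof of Theorem~\ref{thm:2.3}.

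An alternative is a path of unit transfers priced by Theorem~\ref{thm:3.1}(A) and summed via the triangle inequality of Proposition~\ref{prop:2.1}. I would not use it, because the direct table avoids tracking intermediate partitions.
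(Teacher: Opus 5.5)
Your proof is correct, and it takes a genuinely different route from the paper's. The paper never writes down a table. It runs a path of single-vertex moves between frozen rank-paired slots, charges each move the cost $(p-1)+q$ from Theorem~\ref{thm:3.1}(A), and sums these to $B$ donor by donor and receiver by receiver. It then concludes with the triangle inequality of Proposition~\ref{prop:2.1}, which requires the slot bookkeeping and the no-deadlock check. You instead evaluate Theorem~\ref{thm:2.3} on one explicit table, so you need nothing beyond the transportation representation.

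The two arguments actually bound the same alignments. The end state of the paper's schedule has exactly your shape: diagonal $\min(\lambda_i,\mu_i)$ plus some plan on $D\times R$. The difference is that you compute its cost exactly, $\operatorname{cost}(X)=B-2\sum_{i\ne j}g(x_{ij})$, whereas the path argument only gets $\le B$ through a sum of step costs (compare the union of symmetric differences in Corollary~\ref{cor:3.12}). Your identity shows where the slack lies. It is the edges inside a parcel that travels as a group from one donor to one receiver, which the path deletes and then re-creates. So a more concentrated donor-to-receiver plan can only lower the cost, and this alignment costs exactly $B$ iff every transported cell is at most $1$.

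The paper's route, in exchange, reuses the exact unit-transfer cost of \S\ref{sec:3.1} and ties the bound to the classical partition graph weighted by that cost. Its schedule is also the one executed by the algorithm of Corollary~\ref{cor:3.12}.

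A minor wording point: in your table a padded zero row of $\lambda$ never receives mass. What receives is a column $j$ with $\lambda_j=0<\mu_j$, whose diagonal cell is $0$, and your construction already handles that case correctly.
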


\begin{proof}
Pair the two lists by rank: the $i$-th block of $\lambda$ is responsible for becoming the $i$-th block of $\mu$. Put $D=\{i:\lambda_i>\mu_i\}$ and $R=\{i:\lambda_i<\mu_i\}$, disjoint, with $\sum_i(\lambda_i-\mu_i)=0$ so that total surplus equals total deficit.

\emph{The pairing is frozen.} A partition is a multiset and the sorted order is only a normal form, so the bookkeeping is done on $n$ labelled \textbf{slots}, each carrying its current size. Along the path the normal form of the intermediate partition is reshuffled, but the identity of a slot and of its target is not, and by Theorem~\ref{thm:3.1}(A) the cost of a step depends only on the current sizes of the two slots involved, not on their rank. This is why the total is independent of the interleaving.

\emph{Schedule.} While an unfinished donor remains, pick any unfinished donor and any unfinished receiver and move one vertex. Source and target lie in disjoint index sets, so they are distinct. A donor with $\mu_i=0$ makes its last move out of a block of size $1$, which is then deleted; a receiver with $\lambda_j=0$ makes its first move with $q=0$. Every intermediate state is a partition of $n$, and conservation of mass guarantees that an unfinished receiver implies an unfinished donor, so the schedule does not deadlock.

\emph{Summation.} Each donor's size decreases monotonically and each receiver's increases monotonically, so the ``size after removal'' seen by donor $i$ runs exactly once through $\lambda_i-1,\dots,\mu_i$ and the ``size before insertion'' seen by receiver $j$ runs exactly once through $\lambda_j,\dots,\mu_j-1$. With the single-step bound of \S\ref{sec:3.1},
\begin{equation*}
\sum_t q^*(\pi^{t-1},\pi^t)\ \le\ \sum_{i\in D}\sum_{s=\mu_i}^{\lambda_i-1}s+\sum_{j\in R}\sum_{s=\lambda_j}^{\mu_j-1}s=\sum_{i}\bigl|g(\lambda_i)-g(\mu_i)\bigr|=B(\lambda,\mu) .
\end{equation*}
A degenerate step has true cost $0$, still at most the $(p-1)+q$ charged to it. By Proposition~\ref{prop:2.1}, $q^*(\lambda,\mu)$ is at most the right-hand side.
\end{proof}

The partition graph with elementary unit transfers as edges is classical \cite{ref15,ref16,ref17,ref18,ref19}, where it is studied unweighted; the argument above weights it by the exact $q^*$ cost of a transfer computed in \S\ref{sec:3.1}.

\subsection{Comparing the two models, and sharpness}\label{sec:3.3}

$B$ counts parts above a threshold and $\delta_1$ counts mass above a threshold. The next lemma writes both over the same difference sequence, after which the comparison is term by term.

\begin{lemma}[staircase transform]\label{lem:3.6}
$B(\lambda,\mu)\le\dfrac{3S-4}{2S}\,\delta_1(\lambda,\mu)<\dfrac32\,\delta_1(\lambda,\mu)$.
\end{lemma}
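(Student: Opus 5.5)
The plan is to express both metrics in terms of a single difference sequence, the count-staircase difference, and then invert the linear relation linking it to the mass staircase $v=u_\lambda-u_\mu$. That turns $B$ into a weighted sum of $|v(r)|$ with explicit coefficients.

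\textbf{Step 1: common coordinates.} Put $C_\lambda(t)=\#\{i:\lambda_i\ge t\}=N_\lambda(t-1)$ and $w(t)=C_\lambda(t)-C_\mu(t)$ for $t\ge1$. By Proposition~\ref{prop:3.4}(c), and reindexing $t=s+1$,
\begin{equation*}
B(\lambda,\mu)=\sum_{t\ge2}(t-1)\,|w(t)| .
\end{equation*}
On the other side, write each part $\lambda_i\ge s$ as $s$ plus its excess over $s$. This gives $u_\lambda(s)=s\,C_\lambda(s)+\sum_{t>s}C_\lambda(t)$. With $W(s)=\sum_{t>s}w(t)$ we therefore get $v(s)=s\,w(s)+W(s)$, which can also be written as
\begin{equation*}
v(s)=s\,W(s-1)-(s-1)\,W(s) .
\end{equation*}
Lemma~\ref{lem:2.4} gives $\delta_1=\sum_{s}|v(s)|$.

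\textbf{Step 2: inversion.} For $s\ge2$, divide the last identity by $s(s-1)$ to get a telescoping relation:
\begin{equation*}
\frac{v(s)}{s(s-1)}=\frac{W(s-1)}{s-1}-\frac{W(s)}{s} .
\end{equation*}
Both $W$ and $v$ vanish beyond $S$, so summing gives $W(s)/s=\sum_{r>s}v(r)/(r(r-1))$. Consequently, for $t\ge2$,
\begin{equation*}
w(t)=W(t-1)-W(t)=\frac{v(t)}{t}-\sum_{r>t}\frac{v(r)}{r(r-1)} .
\end{equation*}
This is the one step I would check carefully by direct substitution back into $v(s)=s\,w(s)+W(s)$, since every constant below depends on it. Only $t\ge2$ enters $B$, so the value $v(1)=0$ is never needed.

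\textbf{Step 3: term-by-term comparison.} Apply the triangle inequality to the formula for $w(t)$ and insert it into $B$:
\begin{equation*}
B\le\sum_{t\ge2}(t-1)\Bigl[\frac{|v(t)|}{t}+\sum_{r>t}\frac{|v(r)|}{r(r-1)}\Bigr]=\sum_{r\ge2}c_r\,|v(r)| .
\end{equation*}
The coefficient of $|v(r)|$ is
\begin{equation*}
c_r=\frac{r-1}{r}+\frac1{r(r-1)}\sum_{t=2}^{r-1}(t-1)=\frac{r-1}r+\frac{r-2}{2r}=\frac{3r-4}{2r} .
\end{equation*}
The coefficient $c_r$ is increasing in $r$, and $v(r)=0$ for $r>S$. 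Hence $c_r\le(3S-4)/(2S)$ on the support of $v$, which gives $B\le\frac{3S-4}{2S}\sum_r|v(r)|=\frac{3S-4}{2S}\,\delta_1$.

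\textbf{Step 4: strictness.} For $\lambda\ne\mu$ we have $\delta_1>0$, because $\mathrm{sd}$ is injective. Since $(3S-4)/(2S)<\frac32$, the strict inequality follows. Together with Lemma~\ref{lem:3.5} this yields Theorem~\ref{thm:3.1}(B).

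For the equality case of (D), it should suffice to observe that on the adjacent uniform pair $v$ is supported at the single point $r=S$. There both the triangle inequality in Step 3 and the bound $c_r\le c_S$ are tight.
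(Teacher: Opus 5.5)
Your proof is correct and is essentially the paper's argument: the paper likewise writes $v(s)=s\,y_{s-1}-(s-1)\,y_s$ with $y_s$ equal to your $W(s)$, divides by $s(s-1)$ to get the telescoping variable $d_{s-1}=v(s)/(s(s-1))$, and applies the triangle inequality to $t_s=s\,d_s-\sum_{j>s}d_j$. It arrives at the coefficient ratio $(3s-1)/(2(s+1))$, which is your $c_r=(3r-4)/(2r)$ at $r=s+1$; the only difference is cosmetic, in that you keep the weights on $|v(r)|$ directly instead of passing through the auxiliary sequence $d_j$.
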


\begin{proof}
For $s\ge1$,
\begin{equation*}
u_\lambda(s)=s\,N_\lambda(s-1)+\sum_{w\ge s}N_\lambda(w),
\end{equation*}
since $N_\lambda(s-1)=\#\{i:\lambda_i\ge s\}$ and $\sum_{w\ge s}N_\lambda(w)=\sum_{i:\lambda_i\ge s}(\lambda_i-s)$. Put $t_s=N_\lambda(s)-N_\mu(s)$ for $s\ge0$, of finite support, and $y_s=\sum_{w\ge s}t_w$; then
\begin{equation*}
v(s)=u_\lambda(s)-u_\mu(s)=s\,t_{s-1}+y_s=s\,y_{s-1}-(s-1)\,y_s .
\end{equation*}
At $s=1$ the left side is $n-n=0$ and the right side is $y_0$, so $y_0=0$; finite support gives $y_s=0$ for $s\ge S$.

For $s\ge1$ put $\eta_s=y_s/s$, so $\eta_s=0$ for $s\ge S$. For $s\ge2$, $s\,y_{s-1}-(s-1)y_s=s(s-1)(\eta_{s-1}-\eta_s)$, and the $s=1$ term vanishes. Let $d_j=\eta_j-\eta_{j+1}$ for $j\ge1$, supported in $[1,S-1]$, so $\eta_s=\sum_{j\ge s}d_j$. By Lemma~\ref{lem:2.4},
\begin{equation*}
\delta_1=\sum_{s\ge1}|v(s)|=\sum_{j\ge1}j(j+1)\,|d_j| .
\end{equation*}
From $t_s=y_s-y_{s+1}=s\,d_s-\sum_{j>s}d_j$ and Proposition~\ref{prop:3.4}(c),
\begin{equation*}
B=\sum_{s\ge1}s\Bigl|s\,d_s-\sum_{j>s}d_j\Bigr|\ \le\ \sum_{s\ge1}s^2|d_s|+\sum_{s\ge1}s\sum_{j>s}|d_j|=\sum_{s\ge1}|d_s|\,\frac{s(3s-1)}2 ,
\end{equation*}
the second sum reorganized by $\sum_{j\ge2}|d_j|\sum_{s=1}^{j-1}s=\sum_j|d_j|\frac{j(j-1)}2$. The ratio of the $s$-th coefficients is
\begin{equation*}
\frac{s(3s-1)/2}{s(s+1)}=\frac{3s-1}{2(s+1)},
\end{equation*}
strictly increasing in $s$ with supremum $\frac32$. Since $d$ is supported in $[1,S-1]$, taking $s=S-1$ gives $B\le\frac{3S-4}{2S}\delta_1<\frac32\delta_1$.
\end{proof}

The loss is entirely in that coefficient ratio, which approaches $\frac32$ only as $s\to\infty$, that is, only for partitions with large blocks. Combining with Lemma~\ref{lem:3.5} gives the chain
\begin{equation*}
q^*\ \le\ B\ \le\ \tfrac{3S-4}{2S}\,\delta_1\ <\ \tfrac32\,\delta_1 ,
\end{equation*}
which is the upper half of Theorem~\ref{thm:3.1}(B). Sharpness needs one exact evaluation.

\begin{lemma}[optimal table for an adjacent uniform pair]\label{lem:3.7}
Let $a=b+1$ with $b\ge1$ and $b(b+1)\mid n$, and put $\lambda=(a^{\,n/a})$, $\mu=(b^{\,n/b})$. Then
\begin{equation*}
\max_{X\in T(\lambda,\mu)}\lVert X\rVert_F^2=\frac n{b+1}\bigl(b^2+1\bigr),\qquad q^*(\lambda,\mu)=\frac{n(3b-1)}{2(b+1)} .
\end{equation*}
\end{lemma}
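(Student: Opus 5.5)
The plan is to prove the two formulas by matching an upper bound on $\lVert X\rVert_F^2$ proved row by row with an explicit table that attains it. The value of $q^*$ then follows from Theorem~\ref{thm:2.3} by arithmetic.

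\textbf{Upper bound, row by row.} Every row of $X\in T(\lambda,\mu)$ has margin $a=b+1$. Every entry satisfies $x_{ij}\le\mu_j=b$, because it is bounded by its column margin. So each row is a vector of non-negative integers summing to $b+1$ with all coordinates at most $b$. I claim its sum of squares is at most $b^2+1$. The clean route is majorization: the vector $(b,1,0,\dots,0)$ majorizes every such vector, and $x\mapsto x^2$ is Schur-convex. For a more elementary argument, if a row has two entries $1\le x\le y<b$, replacing them by $x-1,y+1$ strictly increases $x^2+y^2$ and keeps the row feasible. Iterating this exchange ends at the pattern $(b,1)$, whose sum of squares is $b^2+1$. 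Summing over the $n/a=n/(b+1)$ rows gives
\begin{equation*}
\lVert X\rVert_F^2\le\frac n{b+1}\bigl(b^2+1\bigr).
\end{equation*}
This is the only step with any content; the rest is bookkeeping.

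\textbf{Attaining table.} Since $b(b+1)\mid n$, split the rows into $n/(b(b+1))$ groups of $b$ rows each, and the columns into the same number of groups of $b+1$ columns each. Each row group carries mass $b(b+1)$, and so does each column group. Within a group, let row $i$ (for $1\le i\le b$) put $b$ units in column $i$ and $1$ unit in column $b+1$. Then columns $1,\dots,b$ receive $b$ each, and column $b+1$ receives $b$ ones, so $b$ in total. All margins are correct, and each row has sum of squares $b^2+1$. Hence the bound is attained, and the displayed maximum holds.

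\textbf{Distance.} By Theorem~\ref{thm:2.3}, with $\lVert\lambda\rVert_2^2=n(b+1)$ and $\lVert\mu\rVert_2^2=nb$,
\begin{equation*}
q^*=\frac{n(2b+1)}2-\frac{n(b^2+1)}{b+1}=\frac{n\bigl[(2b+1)(b+1)-2b^2-2\bigr]}{2(b+1)}=\frac{n(3b-1)}{2(b+1)} .
\end{equation*}
This matches Theorem~\ref{thm:3.1}(D).
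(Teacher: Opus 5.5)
Your proof is correct and follows the paper's route. The paper also bounds each row using $x_{ij}\le b$, also builds a table in which every row puts $b$ in one column and $1$ in a shared remainder column (your block grouping is an explicit layout of the paper's placement), and also finishes by substituting into Theorem~\ref{thm:2.3}. The one difference is how you bound a row: you use an exchange/majorization argument on integer rows, whereas the paper maximizes a convex function at a vertex of $\{y\ge0,\ \sum_jy_j=b+1,\ y_j\le b\}$. Your version also handles $b=1$ cleanly, since there the terminal pattern $(b,1)$ is $(1,1)$.
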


\begin{proof}
\emph{Upper bound.} All row margins are $a=b+1$ and all column margins are $b$, so every cell is at most $b$ and each row lies in $\{y\ge0,\sum_j y_j=a,\ y_j\le b\}$. The function $\sum_j y_j^2$ is convex, so its maximum is at a vertex, and every vertex of that polytope has all coordinates in $\{0,b\}$ except at most one, that is, is ``$c'$ entries $b$ plus a remainder $t\in[0,b)$''. From $c'b+t=b+1$ we get $c'=1$, $t=1$, so the row maximum is $b^2+1$; with $n/a$ rows, $\max_X\lVert X\rVert_F^2\le\frac n{b+1}(b^2+1)$.

\emph{Attainability.} Let $m=n/(b+1)$ and $m'=n/b$. Let each row fill one entire column with $b$ and deposit its remaining $1$ in some remainder column. This uses $m$ full columns, leaving
\begin{equation*}
m'-m=\frac nb-\frac n{b+1}=\frac n{b(b+1)}=\frac mb
\end{equation*}
columns of total capacity $\frac mb\cdot b=m$, exactly the total mass of the unit cells, which may be placed in any column. The hypothesis $b(b+1)\mid n$ is used here, to make $m/b$ an integer.

\emph{Substitution.} From $\lVert\lambda\rVert_2^2=n(b+1)$ and $\lVert\mu\rVert_2^2=nb$, Theorem~\ref{thm:2.3} gives
\begin{equation*}
q^*=\frac{n(b+1)+nb}2-\frac{n(b^2+1)}{b+1}=\frac{n\bigl[(2b+1)(b+1)-2(b^2+1)\bigr]}{2(b+1)}=\frac{n(3b-1)}{2(b+1)} .
\qedhere
\end{equation*}
\end{proof}

\textbf{The residue hypothesis is not a formality.} Our family has $a=b+1$, and attainability of the row-wise bound genuinely depends on $a\equiv1\pmod b$: for $a\bmod b\notin\{0,1\}$ the bound is in general not attained. Take $(a,b,n)=(5,3,15)$, that is three rows of margin $5$ and five columns of margin $3$, every cell at most $3$. Row-wise convexity allows $3^2+2^2=13$ per row, hence $39$. Attaining it requires every row to consist of one $3$ and one $2$, so three cells of value $3$ and three of value $2$; the $3$-cells occupy three full columns, leaving two columns of total capacity $6$ to be filled by three $2$-cells, and a column of capacity $3$ cannot be composed of $2$'s. The true maximum is $37$. The obstruction is that a unit cell can be split among columns while a cell of size $r\ge2$ cannot.

\begin{proof}[Proof of Theorem~\ref{thm:3.1}]
(A) is \S\ref{sec:3.1} and (B) is Lemmas~\ref{lem:3.5} and~\ref{lem:3.6}.

For (D), equality on the $B$ side is a computation: $N_\lambda(s)=n/(b+1)$ for $s\le b$ and $N_\mu(s)=n/b$ for $s\le b-1$, so $t_s=-\frac n{b(b+1)}$ for $1\le s\le b-1$, $t_b=\frac n{b+1}$, and $t_s=0$ for $s\ge b+1$; hence $\eta_s$ is constant on $s\le b$ and $d$ is non-zero only at $s=b=S-1$, exactly where the term-by-term bound of Lemma~\ref{lem:3.6} is an equality. Directly,
\begin{equation*}
B=\sum_{s=1}^{b-1}\frac{sn}{b(b+1)}+\frac{bn}{b+1}=\frac{n(3b-1)}{2(b+1)},\qquad\delta_1=n .
\end{equation*}
Equality on the $q^*$ side is Lemma~\ref{lem:3.7}. In (C) the upper estimate is (B) and the lower one is this family with $b\to\infty$.

For (E), the bound is Lemma~\ref{lem:2.5}. It is attained: for $n\ge2$, $\lambda=(2,1^{\,n-2})$ and $\mu=(1^{\,n})$ differ by one edge, so $q^*=1$, while $\mathrm{sd}(\lambda)=(1,1,0^{\,n-2})$ and $\mathrm{sd}(\mu)=(0^{\,n})$ give $\delta_1=2$.

For (F), write $R=q^*/\delta_1$; the distortion of the identity map is $\rho_n=(\sup_{\lambda\ne\mu\vdash n}R)/(\inf_{\lambda\ne\mu\vdash n}R)$. By (E) the denominator is $\frac12$ for every $n\ge2$; by (B) the numerator is at most $\frac{3S-4}{2S}<\frac32$ since $S\le n$; so $\rho_n<3$ for each fixed $n$. Letting $b\to\infty$ in (D) drives $\sup_n R\to\frac32$, so $\sup_{n\ge2}\rho_n=3$.
\end{proof}

Lemma~\ref{lem:3.5} alone yields only $q^*\le B$; equality along the family of (D) comes from computing that family's optimal table directly. The asymmetry of the two ends is likewise structural: the left end is realized by the smallest non-trivial difference, a single edge, whereas the right end requires block sizes tending to infinity.

Two remarks fix the status of the constant $3$. First, since $\delta_1$ is by construction the metric induced by $\ell_1^n$ on $p(n)$ points, Theorem~\ref{thm:3.1}(B) and (E) give $c_1(\mathcal K_n)\le3$. Corollary~\ref{cor:3.13} below improves this to $2$ through the other model. So $3$ should be read as the \emph{optimal distortion of the sorted-degree map}, not as the best available bound on $c_1$. Second, the constant $\frac32$ in (B) is an artefact neither of Lemma~\ref{lem:3.5} nor of Lemma~\ref{lem:3.6} alone: both are equalities on the family of (D).

\subsection{The splitting--merging identity and the per-table bound}\label{sec:3.4}

So far $B$ has been an upper bound for $q^*$ obtained from one particular alignment. We now show it is also a lower bound up to the factor $2$. In fact we prove something stronger and more informative: an inequality valid for \emph{every} table, whose slack is an explicit non-negative quantity --- the number of vertices that fail, row-wise and column-wise, to stay in a single cell --- vanishing exactly on the tables that are bijections between the blocks of $\lambda$ and those of $\mu$. Write $k_\lambda=k(\lambda)$ and $k_\mu=k(\mu)$.

\begin{lemma}[splitting--merging identity]\label{lem:3.8}
For non-negative integers $a_1,\dots,a_t$,
\begin{equation*}
g\Bigl(\sum_j a_j\Bigr)=\sum_j g(a_j)+\sum_{j<j'}a_ja_{j'} .
\end{equation*}
Consequently, for every $X\in T(\lambda,\mu)$,
\begin{equation*}
\operatorname{cost}(X)=\sum_{i=1}^{k_\lambda}s_i+\sum_{j=1}^{k_\mu}m_j,\qquad s_i:=\sum_{j<j'}x_{ij}x_{ij'},\quad m_j:=\sum_{i<i'}x_{ij}x_{i'j} ,
\end{equation*}
and for each $i$, $\ g(\lambda_i)=s_i+\sum_jg(x_{ij})$, symmetrically for columns.
\end{lemma}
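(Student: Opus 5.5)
The identity for $g$ comes first, and I would prove it by expanding squares. Put $A=\sum_ja_j$. Then $A^2=\sum_ja_j^2+2\sum_{j<j'}a_ja_{j'}$, so
\begin{equation*}
g(A)=\frac{A^2-A}2=\sum_j\frac{a_j^2-a_j}2+\sum_{j<j'}a_ja_{j'}=\sum_jg(a_j)+\sum_{j<j'}a_ja_{j'} .
\end{equation*}
A combinatorial reading gives the same thing. Among the $\binom A2$ pairs drawn from a set split into parts of sizes $a_j$, the pairs inside a single part number $\sum_j g(a_j)$, and the pairs straddling two parts number $\sum_{j<j'}a_ja_{j'}$. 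Nothing beyond this is needed for the first claim.

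Next I would apply the identity to row $i$ of $X$, whose entries $x_{i1},\dots,x_{ik_\mu}$ sum to $\lambda_i$. This gives $g(\lambda_i)=s_i+\sum_jg(x_{ij})$ directly. The same argument on column $j$, whose entries sum to $\mu_j$, gives $g(\mu_j)=m_j+\sum_ig(x_{ij})$.

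Finally I would sum the row identities over $i$ and the column identities over $j$. This yields $E(\lambda)=\sum_is_i+\sum_{ij}g(x_{ij})$ and $E(\mu)=\sum_jm_j+\sum_{ij}g(x_{ij})$. Adding these and substituting into the definition $\operatorname{cost}(X)=E(\lambda)+E(\mu)-2\sum_{ij}g(x_{ij})$ from Theorem~\ref{thm:2.3}, the two copies of $\sum_{ij}g(x_{ij})$ cancel against the subtracted term, leaving $\operatorname{cost}(X)=\sum_is_i+\sum_jm_j$.

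There is no real obstacle here. The only points to check are bookkeeping. Zero cells contribute nothing to either $g$ or the cross products, so padding rows and columns with zeros is harmless. And the margins are exactly $\lambda_i$ and $\mu_j$, which is the definition of $T(\lambda,\mu)$.
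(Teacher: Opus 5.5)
Your proof is correct and follows the paper's own argument: apply the identity to each row and each column of $X$, sum to obtain $E(\lambda)$ and $E(\mu)$, and cancel the $2\sum_{ij}g(x_{ij})$ term in $\operatorname{cost}(X)$. The one difference is minor: you get the identity for $g$ by expanding $(\sum_j a_j)^2$ directly, whereas the paper uses $g(a+b)-g(a)-g(b)=ab$ and inducts on $t$.
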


\begin{proof}
$g(a+b)-g(a)-g(b)=ab$, and induction on $t$ gives the first identity. Applying it to row $i$, whose cells sum to $\lambda_i$, gives $g(\lambda_i)=\sum_jg(x_{ij})+s_i$; summing over rows, $E(\lambda)=\sum_{ij}g(x_{ij})+\sum_is_i$, and symmetrically $E(\mu)=\sum_{ij}g(x_{ij})+\sum_jm_j$. Adding these and subtracting $2\sum_{ij}g(x_{ij})$ gives $\operatorname{cost}(X)=\sum_is_i+\sum_jm_j$ by the definition in Theorem~\ref{thm:2.3}.
\end{proof}

The reading is exact: $s_i$ is the \textbf{splitting cost} of breaking row block $i$ into its cells, $m_j$ the \textbf{merging cost} of assembling column block $j$ from its cells, and every alignment pays splitting plus merging, nothing else. This is the second of the two accountings promised in \S\ref{sec:2.2}; the first, by vertices, produced Lemma~\ref{lem:2.5}.

For $X\in T(\lambda,\mu)$ define the \textbf{row and column deficiencies}
\begin{equation*}
\operatorname{rowdef}(X)=\sum_{i=1}^{k_\lambda}\Bigl(\lambda_i-\max_j x_{ij}\Bigr)\ \ge0,\qquad \operatorname{coldef}(X)=\sum_{j=1}^{k_\mu}\Bigl(\mu_j-\max_i x_{ij}\Bigr)\ \ge0 ,
\end{equation*}
which vanish exactly when every row, respectively every column, keeps all of its mass in a single cell.

\begin{theorem}[per-table energy bound]\label{thm:3.9}
For every $X\in T(\lambda,\mu)$,
\begin{equation*}
B(\lambda,\mu)\ \le\ 2\operatorname{cost}(X)\ -\ \operatorname{rowdef}(X)\ -\ \operatorname{coldef}(X) .
\end{equation*}
\end{theorem}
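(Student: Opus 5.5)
The plan is to prove Theorem~\ref{thm:3.9} by the triangle inequality for $W_1$ on the line, using the transport form of $B$ in Proposition~\ref{prop:3.4}(b), with the spectrum of the table as the intermediate point. Put $K=k_\lambda k_\mu$, which is at least $\max(k_\lambda,k_\mu)$, and define
\begin{equation*}
\widehat e_X=\sum_{i,j}\delta_{g(x_{ij})},
\end{equation*}
a measure with exactly $K$ atoms; cells with $x_{ij}\in\{0,1\}$ contribute atoms at $0$. Proposition~\ref{prop:3.4}(b) then gives
\begin{equation*}
B(\lambda,\mu)=W_1\bigl(\widehat e_\lambda^{\,K},\widehat e_\mu^{\,K}\bigr)\le W_1\bigl(\widehat e_\lambda^{\,K},\widehat e_X\bigr)+W_1\bigl(\widehat e_X,\widehat e_\mu^{\,K}\bigr).
\end{equation*}
It therefore suffices to prove the row bound
\begin{equation*}
W_1\bigl(\widehat e_\lambda^{\,K},\widehat e_X\bigr)\le 2\sum_is_i-\operatorname{rowdef}(X)
\end{equation*}
and the symmetric column bound. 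Adding the two and applying Lemma~\ref{lem:3.8}, $\sum_is_i+\sum_jm_j=\operatorname{cost}(X)$, yields the theorem.

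For the row bound I would exhibit an explicit, not necessarily optimal, coupling, built row by row. Fix a row $i$ and let $j^*$ be an index with $x_{ij^*}=M_i:=\max_jx_{ij}$. The atom $g(\lambda_i)$ of $\widehat e_\lambda^{\,K}$ is matched to the cell atom $g(M_i)$. Each remaining cell atom $g(x_{ij})$ of that row, $j\ne j^*$, is matched to one of the $K-k_\lambda$ padding atoms at $0$ in $\widehat e_\lambda^{\,K}$.

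The counts balance. Row $i$ uses one genuine atom together with $k_\mu-1$ padding atoms, and summing over rows gives $k_\lambda+k_\lambda(k_\mu-1)=K$ atoms on each side, so this is a legitimate coupling. Since $g$ is monotone, the cost charged to row $i$ is
\begin{equation*}
g(\lambda_i)-g(M_i)+\sum_{j\ne j^*}g(x_{ij})=s_i+2\sum_{j\ne j^*}g(x_{ij}).
\end{equation*}
The equality follows from Lemma~\ref{lem:3.8}, which gives $g(\lambda_i)=s_i+\sum_jg(x_{ij})$.

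The key pointwise step is then to show
\begin{equation*}
s_i+2\sum_{j\ne j^*}g(x_{ij})\le 2s_i-(\lambda_i-M_i).
\end{equation*}
Since $\lambda_i-M_i=\sum_{j\ne j^*}x_{ij}$ and $2g(x)+x=x^2$, this is equivalent to $s_i\ge\sum_{j\ne j^*}x_{ij}^2$. That in turn holds because $s_i=\sum_{j<j'}x_{ij}x_{ij'}$ contains the pairs $\{j^*,j\}$, so
\begin{equation*}
s_i\ge\sum_{j\ne j^*}M_ix_{ij}\ge\sum_{j\ne j^*}x_{ij}^2,
\end{equation*}
the last step using $M_i\ge x_{ij}$. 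Summing over $i$ gives the row bound, and the column bound is the same argument applied to $X^{\mathsf T}$, with $m_j$ and $\operatorname{coldef}$ in place of $s_i$ and $\operatorname{rowdef}$.

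I expect no real obstacle beyond bookkeeping. The points to check are that the padding to a common atom count $K$ is consistent in both $W_1$ terms, so that the triangle inequality is applied between measures of equal mass; that ties in the choice of $j^*$ are harmless; and that an empty or single-cell row contributes zero on both sides. The step most likely to need care is the identification of the charged cost with $s_i+2\sum_{j\ne j^*}g(x_{ij})$, since it is the only place where Lemma~\ref{lem:3.8} enters.
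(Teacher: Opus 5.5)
Your proposal is correct and follows essentially the same route as the paper: triangle inequality for $W_1$ through the intermediate spectrum $\widehat e_X$ with $K=k_\lambda k_\mu$ atoms, a row-by-row coupling sending $g(\lambda_i)$ to the largest cell and the other cells to padding atoms, and the bound $\sum_{j\ne j^*}x_{ij}^2\le\sum_{j\ne j^*}M_ix_{ij}\le s_i$. Your reduction of the pointwise step through $2g(x)+x=x^2$ is the same computation as the paper's, arranged slightly differently.
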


\begin{proof}
Set $K=k_\lambda k_\mu$, let $\widehat e_\lambda,\widehat e_\mu$ be as in Proposition~\ref{prop:3.4}(b) with this $K$, and let $\widehat e_X=\sum_{i,j}\delta_{g(x_{ij})}$, which also has $K$ atoms, a zero cell contributing an atom at $0$. By Proposition~\ref{prop:3.4}(b) and the triangle inequality for $W_1$,
\begin{equation*}
B=W_1(\widehat e_\lambda,\widehat e_\mu)\ \le\ W_1(\widehat e_\lambda,\widehat e_X)+W_1(\widehat e_X,\widehat e_\mu) .
\end{equation*}
We bound the first term by exhibiting one coupling; the second is symmetric under transposition.

For each row $i$ pick $j_i\in\arg\max_jx_{ij}$. Couple the atom $g(\lambda_i)$ of $\widehat e_\lambda$ with the atom $g(x_{ij_i})$ of $\widehat e_X$, and for each $j\ne j_i$ couple one padding atom $\delta_0$ of $\widehat e_\lambda$ with the atom $g(x_{ij})$. This uses $\sum_i(k_\mu-1)=K-k_\lambda$ padding atoms, exactly the supply. Since $x_{ij_i}\le\lambda_i$ gives $g(x_{ij_i})\le g(\lambda_i)$, the cost of the coupling is
\begin{equation*}
\sum_i\Bigl[\bigl(g(\lambda_i)-g(x_{ij_i})\bigr)+\sum_{j\ne j_i}g(x_{ij})\Bigr] =\sum_i\Bigl[s_i+2\sum_{j\ne j_i}g(x_{ij})\Bigr],
\end{equation*}
using $g(\lambda_i)=s_i+\sum_jg(x_{ij})$ from Lemma~\ref{lem:3.8}. Now
\begin{equation*}
2\sum_{j\ne j_i}g(x_{ij})=\sum_{j\ne j_i}x_{ij}(x_{ij}-1)=\sum_{j\ne j_i}x_{ij}^2-\bigl(\lambda_i-x_{ij_i}\bigr),
\end{equation*}
and since $x_{ij}\le x_{ij_i}$ for every $j$,
\begin{equation*}
\sum_{j\ne j_i}x_{ij}^2\ \le\ \sum_{j\ne j_i}x_{ij}x_{ij_i}\ \le\ \sum_{j<j'}x_{ij}x_{ij'}\ =\ s_i ,
\end{equation*}
the middle step because the unordered pairs $\{j,j_i\}$ with $j\ne j_i$ are distinct. Hence the contribution of row $i$ is at most $2s_i-\bigl(\lambda_i-\max_jx_{ij}\bigr)$, and summing over rows,
\begin{equation*}
W_1(\widehat e_\lambda,\widehat e_X)\le2\sum_is_i-\operatorname{rowdef}(X),\qquad W_1(\widehat e_X,\widehat e_\mu)\le2\sum_jm_j-\operatorname{coldef}(X).
\end{equation*}
Adding and applying Lemma~\ref{lem:3.8} completes the proof.
\end{proof}

\begin{corollary}[energy sandwich, with automatic strictness]\label{cor:3.10}
For all $\lambda\ne\mu\vdash n$,
\begin{equation*}
q^*(\lambda,\mu)\ \le\ B(\lambda,\mu)\ \le\ 2\,q^*(\lambda,\mu)-1 .
\end{equation*}
In particular $\tfrac12B\le q^*\le B$ and the ratio $B/q^*$ is strictly below $2$ on every instance.
\end{corollary}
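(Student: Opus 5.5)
The lower inequality $q^*\le B$ is Lemma~\ref{lem:3.5}, so only $B\le 2q^*-1$ needs proof. I will apply Theorem~\ref{thm:3.9} to an optimal table. By Theorem~\ref{thm:2.3} choose $X\in T(\lambda,\mu)$ with $\operatorname{cost}(X)=q^*(\lambda,\mu)$. Theorem~\ref{thm:3.9} then gives
\begin{equation*}
B(\lambda,\mu)\le 2q^*(\lambda,\mu)-\operatorname{rowdef}(X)-\operatorname{coldef}(X).
\end{equation*}
Both deficiencies are non-negative integers. It therefore suffices to show that $\operatorname{rowdef}(X)+\operatorname{coldef}(X)\ge1$ whenever $\lambda\ne\mu$.

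I argue by contraposition. Suppose $\operatorname{rowdef}(X)=\operatorname{coldef}(X)=0$. Every row index corresponds to a positive part $\lambda_i$, since $T(\lambda,\mu)$ has exactly $k(\lambda)$ rows. Vanishing row deficiency then means each row has exactly one non-zero cell, and that cell equals $\lambda_i$. By the same reasoning, each column has exactly one non-zero cell, equal to $\mu_j$.

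So the support of $X$ meets every row exactly once and every column exactly once. It is therefore the graph of a bijection $\phi$ from rows to columns. This forces $k(\lambda)=k(\mu)$ and $\lambda_i=x_{i\phi(i)}=\mu_{\phi(i)}$ for every $i$. Hence the multisets of parts coincide, so $\lambda=\mu$. This is the contrapositive of what we need.

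For $\lambda\ne\mu$ we therefore have $\operatorname{rowdef}(X)+\operatorname{coldef}(X)\ge1$, and so $B\le2q^*-1$. Combining this with Lemma~\ref{lem:3.5} gives $\tfrac12B\le q^*\le B$. Since $q^*\ge1$ when $\lambda\ne\mu$ (Proposition~\ref{prop:2.1}), we also get $B/q^*\le 2-1/q^*<2$.

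The only point needing care is that rows and columns of $T(\lambda,\mu)$ are indexed by positive parts only. This is what excludes empty rows or columns and makes the support a genuine bijection; beyond that, no step is an obstacle.
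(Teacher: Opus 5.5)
Your proof is correct and follows the paper's argument essentially verbatim: Lemma~\ref{lem:3.5} gives the left inequality, Theorem~\ref{thm:3.9} is applied to an optimal table, and vanishing row and column deficiencies force the support of $X$ to be a bijection between blocks, hence $\lambda=\mu$. Your explicit remark that rows and columns are indexed only by positive parts is exactly the detail that makes the support a genuine bijection, which the paper leaves implicit.
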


\begin{proof}
The left inequality is Lemma~\ref{lem:3.5}. For the right, apply Theorem~\ref{thm:3.9} to a table $X$ attaining $\operatorname{cost}(X)=q^*$ (Theorem~\ref{thm:2.3}); it suffices to show $\operatorname{rowdef}(X)+\operatorname{coldef}(X)\ge1$ whenever $\lambda\ne\mu$. Suppose both vanish. Then each row $i$ has all its mass in one cell, $x_{ij_i}=\lambda_i$, and each column $j$ has a cell equal to $\mu_j$. A column's full cell lies in a row all of whose mass is that cell, so the occupied cells define a bijection $\sigma$ between rows and columns with $\lambda_i=\mu_{\sigma(i)}$, that is, $\lambda=\mu$ as partitions, a contradiction. Since $B$ and $\operatorname{cost}$ are integers, $B\le2q^*-1$.
\end{proof}

\begin{proposition}[the constant $2$ is optimal and unattained]\label{prop:3.11}
For $k\ge1$ let $\lambda=(2k)$ and $\mu=(k,k)$ as partitions of $n=2k$. Then $T(\lambda,\mu)$ has the single element $X=(k\ \ k)$, and
\begin{equation*}
q^*=k^2,\qquad B=g(2k)=2k^2-k,\qquad \frac Bq^*=2-\frac1k\ \nearrow\ 2 ,
\end{equation*}
with Theorem~\ref{thm:3.9} an equality: $\operatorname{rowdef}(X)=k$, $\operatorname{coldef}(X)=0$, and $2\operatorname{cost}(X)-\operatorname{rowdef}(X)-\operatorname{coldef}(X)=2k^2-k=B$.
\end{proposition}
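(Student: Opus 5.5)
The computation is direct, so I will verify each quantity in turn and then check the equality case of Theorem~\ref{thm:3.9}.

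\textbf{The table.} A table in $T(\lambda,\mu)$ has one row, of margin $2k$, and two columns, of margins $k$ and $k$. The column constraints force $x_{11}=k$ and $x_{12}=k$. Hence $T(\lambda,\mu)=\{(k\ \ k)\}$.

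\textbf{$q^*$.} I will evaluate $q^*$ through Theorem~\ref{thm:2.3}. Since the table is unique, the maximum of $\lVert X\rVert_F^2$ is $2k^2$. Also $\lVert\lambda\rVert_2^2=4k^2$ and $\lVert\mu\rVert_2^2=2k^2$. So
\[
q^*=\tfrac12(4k^2+2k^2)-2k^2=k^2 .
\]
As a cross-check, $\operatorname{cost}(X)=E(\lambda)+E(\mu)-2\sum g(x_{ij})=g(2k)=2k^2-k$ would be wrong here. The correct reading is $E(\lambda)=g(2k)$, $E(\mu)=2g(k)$ and $\sum g(x_{ij})=2g(k)$, which gives
\[
\operatorname{cost}(X)=g(2k)-2g(k)=k^2 .
\]
This agrees with the splitting identity of Lemma~\ref{lem:3.8}: $s_1=x_{11}x_{12}=k^2$, and each $m_j=0$.

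\textbf{$B$.} The padded sorted energy vectors are $e(\lambda)=(g(2k),0,\dots)$ and $e(\mu)=(g(k),g(k),0,\dots)$. Therefore
\[
B=\bigl(g(2k)-g(k)\bigr)+g(k)=g(2k)=2k^2-k .
\]
The ratio is then $B/q^*=(2k^2-k)/k^2=2-1/k$. This is strictly increasing in $k$ and tends to $2$.

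\textbf{Equality in Theorem~\ref{thm:3.9}.} For the single row, $\max_j x_{1j}=k$, so $\operatorname{rowdef}=2k-k=k$. Each column has its full mass $k$ in one cell, so $\operatorname{coldef}=0$. Hence
\[
2\operatorname{cost}(X)-\operatorname{rowdef}(X)-\operatorname{coldef}(X)=2k^2-k=B ,
\]
as claimed.

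\textbf{Optimality and non-attainment.} The constant $2$ is optimal because no constant $c<2$ can satisfy $B\le c\,q^*$ once $2-1/k>c$. It is never attained because Corollary~\ref{cor:3.10} gives $B\le 2q^*-1<2q^*$ on every instance. The only step needing any care is getting $\operatorname{cost}(X)$ right, since both $\lambda$ and $\mu$ contribute edges. There is no real obstacle.
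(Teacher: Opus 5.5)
Your proposal is correct and follows essentially the paper's route: the forced $1\times2$ table, $q^*=\operatorname{cost}(X)=s_1=k^2$ (you reach it through the Frobenius form of Theorem~\ref{thm:2.3} and cross-check it with Lemma~\ref{lem:3.8}, which is what the paper uses directly), the direct evaluation $B=g(2k)$, and the deficiencies $k$ and $0$. The only blemish is the garbled ``would be wrong here'' aside, which you should delete, since the corrected computation $g(2k)-2g(k)=k^2$ is all that is needed.
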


\begin{proof}
A $1\times2$ table with row sum $2k$ and column sums $k,k$ is forced, so $q^*=\operatorname{cost}(X)=s_1=k\cdot k$ by Lemma~\ref{lem:3.8}. For $B$: $e(\lambda)=(g(2k),0,\dots)$ and $e(\mu)=(g(k),g(k),0,\dots)$, so $B=\bigl(g(2k)-g(k)\bigr)+g(k)=g(2k)$, which equals $2k^2-k$. The deficiencies are $2k-k=k$ and $(k-k)+(k-k)=0$.
\end{proof}

Thus Theorem~\ref{thm:3.9} is exactly tight along the extremal direction, and its slack $\operatorname{rowdef}+\operatorname{coldef}$ vanishes exactly on block-bijective alignments. The slack is a correction term and not a criterion for equality: by the proof of Corollary~\ref{cor:3.10} the two deficiencies vanish only when $\lambda=\mu$, whereas $B=q^*$ already on non-trivial pairs, the smallest being $\lambda=(2)$ against $\mu=(1,1)$, where $B=q^*=1$ while $\operatorname{rowdef}=1$. We do not characterize the pairs on which the two models agree. Padding both sides of Proposition~\ref{prop:3.11} with a common part $1$ extends the family to odd $n$, and exhaustive enumeration for $2\le n\le18$ (Appendix B) finds
\begin{equation*}
\max_{\lambda\ne\mu\vdash n}\frac{B}{q^*}=2-\frac1{\lfloor n/2\rfloor}
\end{equation*}
throughout that range, always attained by this family. We record the pattern as an observation only; nothing below uses it.

\subsection{Algorithmic and metric consequences}\label{sec:3.5}

Since $q^*$ is a minimization problem, ``approximation'' needs to be said precisely, and the two statements below carry different guarantees. Running times are in the number of vertices $n$, not the input length; see \S\ref{sec:2.4}.

\begin{corollary}[a strict $2$-approximation with a two-sided certificate]\label{cor:3.12}
There is an algorithm which, given $\lambda,\mu\vdash n$ in representation (P1), outputs a \textbf{feasible alignment} (a labelled graph $\pi^T$ on a fixed vertex set $V$ isomorphic to $G_\mu$) together with its cost and the certificate $B(\lambda,\mu)$, such that
\begin{equation*}
q^*(\lambda,\mu)\ \le\ \bigl|E(G_\lambda)\triangle E(\pi^T)\bigr|\ \le\ B(\lambda,\mu)\ \le\ 2\,q^*(\lambda,\mu)-1 ,
\end{equation*}
in $O(n\log n)$ time plus output time: $O(n\log n)$ for a compact alignment, $O(n^2)$ if the aligned edge set or adjacency matrix is written explicitly. The certificate localizes the optimum two-sidedly,
\begin{equation*}
q^*(\lambda,\mu)\ \in\ \Bigl[\bigl\lceil\tfrac{B+1}2\bigr\rceil,\ B\Bigr] ,
\end{equation*}
an interval of multiplicative width below $2$. If $\lambda=\mu$ the algorithm returns the identity alignment and all quantities are $0$.
\end{corollary}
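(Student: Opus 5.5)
The algorithm is the rank-routing schedule of Lemma~\ref{lem:3.5}, made explicit on labelled vertices. Everything is then bounded by the chain already proved: Lemma~\ref{lem:3.5} for the middle inequality and Corollary~\ref{cor:3.10} for the last one.

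Sort both part lists non-increasingly in $O(n\log n)$ time. If they coincide, output the identity alignment. Otherwise pad both lists with zeros to a common length, which is at most $n$. Fix the vertex set $V=[n]$ and let $G_\lambda$ have its $i$-th block on the consecutive labels $P_i$. To build the aligned copy $\pi^T$ of $G_\mu$, define target blocks $Q_i$ by rank. If $\mu_i\le\lambda_i$, the slot keeps $\mu_i$ of its own vertices and releases $\lambda_i-\mu_i$ of them to a common pool. If $\mu_i>\lambda_i$, the slot keeps all of $P_i$ and draws $\mu_i-\lambda_i$ vertices from the pool. Total surplus equals total deficit, so the pool empties exactly. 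This is the same endpoint that the donor/receiver schedule of Lemma~\ref{lem:3.5} reaches. The blocks $Q_i$ have sizes $\mu_i$, so $\pi^T\cong G_\mu$.

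The cost of this alignment is $\operatorname{cost}(X)$ for the table $X$ it induces (Theorem~\ref{thm:2.3}). By the proof of Lemma~\ref{lem:3.5} and the triangle inequality for the symmetric difference on labelled edge sets, this cost is at most the sum of the charged single-step costs, which is $B$. The one check needed is that the single-step bound there is the labelled bound $(p-1)+q$ under the identity bijection, not merely a bound on $q^*$. It is: \S\ref{sec:3.1} derives that bound exactly by counting edges under the identity map. So the same telescoping applies to the labelled edit sets along the path, and the final labelled graph is $\pi^T$.

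The alignment cost itself is computed in $O(n)$ from the table. $X$ has at most $2n$ non-zero cells: each row has its kept cell plus pool cells, and the pool is allocated greedily, so the pieces form a staircase with at most $k_\lambda+k_\mu$ cells. Then $\operatorname{cost}=E(\lambda)+E(\mu)-2\sum g(x_{ij})$. The certificate $B$ is $\sum_i|g(\lambda_i)-g(\mu_i)|$, computed in $O(n)$. The localization interval follows from $B\le 2q^*-1$, which gives $q^*\ge (B+1)/2$, together with integrality of $q^*$ and $q^*\le B$.

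For output time, a compact alignment stores the block label of each vertex, $O(n)$ words. Writing the edge set or adjacency matrix explicitly costs $O(n^2)$.

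The main obstacle is the second paragraph. One must justify that the frozen-slot path yields a single labelled graph whose cost is bounded by $B$, rather than only bounding $q^*$ through the metric triangle inequality. I would handle it by carrying the labelled step bounds through the path exactly as above. Alternatively, one can compute $\operatorname{cost}(X)$ directly for the staircase table and compare it with $B$ slot by slot, using Lemma~\ref{lem:3.8}.
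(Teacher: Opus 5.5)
Your proposal is correct and is essentially the paper's proof. The paper likewise runs the donor/receiver schedule of Lemma~\ref{lem:3.5} on a fixed labelled vertex set, bounds $E(G_\lambda)\triangle E(\pi^T)$ by the union of the per-step labelled symmetric differences (each of size exactly $(p-1)+q$, totalling $B$), and then gets the interval from Corollary~\ref{cor:3.10} and integrality; your pool description of the endpoint and your separate table-based computation of the cost (which can be strictly below $B$) are only presentational variants of the paper's schedule and its $O(n)$ accumulation step.
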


\begin{proof}
Sort both lists, zero-pad to a common length, pair by rank, and set $D,R$ as in Lemma~\ref{lem:3.5}. On a fixed vertex set $V$ of size $n$, realize $G_\lambda$, then run the schedule of \S\ref{sec:3.2}, each step moving one vertex from an unfinished donor slot to an unfinished receiver slot; there are $T=\sum_i(\lambda_i-\mu_i)^+\le n$ steps and the result $\pi^T$ has block-size multiset $\mu$.

Labels never change, so all intermediate graphs live on the same $V$ and
\begin{equation*}
E(G_\lambda)\triangle E(\pi^T)\subseteq\bigcup_t\bigl(E(\pi^{t-1})\triangle E(\pi^t)\bigr) .
\end{equation*}
At fixed labels a step deletes exactly $p-1$ edges and adds exactly $q$, so its symmetric difference is $(p-1)+q$, and the summation of Lemma~\ref{lem:3.5} gives $\sum_t|E(\pi^{t-1})\triangle E(\pi^t)|=B$. Since $\pi^T\cong G_\mu$, the left-hand side is the cost of a feasible alignment, so $q^*\le|E(G_\lambda)\triangle E(\pi^T)|\le B$, and Corollary~\ref{cor:3.10} supplies $B\le2q^*-1$ for $\lambda\ne\mu$. Inverting that inequality gives $q^*\ge(B+1)/2$, and $q^*$ is an integer, which is the stated interval. For $\lambda=\mu$ the sets $D$ and $R$ are empty, so $T=0$ and all quantities vanish.

For the running time: sorting costs $O(n\log n)$; the $T\le n$ steps maintain block sizes in $O(1)$ each; and both the cost of the alignment and the certificate $B=\sum_i|g(\lambda_i)-g(\mu_i)|$ are computable in $O(n)$ from the block-size lists, the cost by accumulating the contingency table of the fixed labelling vertex by vertex rather than expanding any edge set.
\end{proof}

The output is a feasible solution, its cost, and a verifiable certificate that is simultaneously an upper and a lower bound; no distance oracle is used, and the approximation ratio is strictly below $2$ on every instance rather than in the limit. The bound $O(n\log n)$ is in the number of vertices: the alignment itself has $n$ entries, so no algorithm producing one can run faster than $\Omega(n)$, however short the input encoding may be.

\begin{corollary}\label{cor:3.13}
$c_1(\mathcal K_n)\le2$.
\end{corollary}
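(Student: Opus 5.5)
The map to use is the block-energy vector $e:\mathcal K_n/\!\cong\,\to\ell_1^{\,n}$ of \S\ref{sec:3.2}, $\lambda\mapsto e(\lambda)$. By definition $\lVert e(\lambda)-e(\mu)\rVert_1=B(\lambda,\mu)$. Proposition~\ref{prop:3.4}(a) shows that $e$ is injective on $\mathcal K_n/\!\cong$, so it is an admissible injection in the sense of \S\ref{sec:2.5}. The distortion bound should then follow from Corollary~\ref{cor:3.10}.

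First I would bound the two Lipschitz constants. Corollary~\ref{cor:3.10} gives $q^*\le B\le 2q^*-1<2q^*$ for every $\lambda\ne\mu$. So the ratio $B(\lambda,\mu)/q^*(\lambda,\mu)$ lies in $[1,2)$ on every pair, which gives
\begin{equation*}
L^-(e)\ \ge\ 1,\qquad L^+(e)\ <\ 2 .
\end{equation*}
The upper bound is strict because $\mathcal K_n/\!\cong$ has only $p(n)$ points, so the supremum is a maximum over finitely many pairs. It follows that $\rho(e)=L^+(e)/L^-(e)<2$, and hence $c_1(\mathcal K_n)\le\rho(e)<2$, which is slightly stronger than the statement.

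Two bookkeeping points need a sentence each. For $n=1$ the space has a single point, so the claim is vacuous or conventional. The target is $\ell_1^{\,n}\subset\ell_1$, and distortion does not depend on this inclusion. I see no substantive obstacle: all the content is in Theorem~\ref{thm:3.9} and Lemma~\ref{lem:3.5}, which are already established. The one point to check is that the product-distortion convention of \S\ref{sec:2.5} takes the infimum over injections into $\ell_1$, and $e$ is such an injection by Proposition~\ref{prop:3.4}(a).
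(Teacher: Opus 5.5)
Your proposal is correct and is essentially the paper's proof: take the block-energy map $e$, injective by Proposition~\ref{prop:3.4}(a), and read off $L^-\ge1$ and $L^+\le2$ from Corollary~\ref{cor:3.10}. Your strengthening to $\rho(e)<2$ for each fixed $n$ is also valid, since $B\le2q^*-1$ holds on each of the finitely many pairs; this matches the paper's remark that $B/q^*$ is strictly below $2$ on every instance.
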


\begin{proof}
The map $\lambda\mapsto e(\lambda)\in\ell_1^{\,n}$ is injective (Proposition~\ref{prop:3.4}(a)) and $\ell_1^n$ embeds isometrically into $\ell_1$. By Corollary~\ref{cor:3.10} its two Lipschitz constants against $q^*$ satisfy $L^+\le2$ and $L^-\ge1$, so $\rho\le2$.
\end{proof}

This is an upper bound only, and it is not claimed optimal for the class: what Proposition~\ref{prop:3.11} shows is that $2$ is optimal \emph{for the block-energy map}. The corresponding constant for the sorted-degree map is $3$ (Theorem~\ref{thm:3.1}(F)), so the two models are genuinely different embeddings and not reparametrizations of one another. The value of $c_1(\mathcal K_n)$ remains open (\S\ref{sec:9}).

\emph{A numerical estimate, not an alignment.} One may also discard the alignment and keep only a number: $\widehat q=B/\sqrt2$, computable in $O(n\log n)$, satisfies $\max\{\widehat q/q^*,\,q^*/\widehat q\}\le\sqrt2$, since $B/q^*\in[1,2)$ and $1/\sqrt2$ is the geometric midpoint of that interval. Note that $\widehat q$ is the cost of no alignment and in general is not an integer: it is a two-sided numerical estimate of multiplicative error $\sqrt2$, not a $\sqrt2$-approximation algorithm.

\section{The optimal lower bound for Hilbert embeddings}\label{sec:4}
\begin{theorem}\label{thm:4.1}
$c_2(\mathcal K_n)=\Theta\bigl(n^{1/4}\bigr)$.
\end{theorem}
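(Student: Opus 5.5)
The plan is to prove the two halves of Theorem~\ref{thm:4.1} separately. The lower half comes from an explicit cube of partitions combined with Enflo's theorem, with all distances computed through the vertex-mass proxy. The upper half is delegated to the explicit map $F_n$: $\rho(F_n)=O(n^{1/4})$ by Theorem~\ref{thm:5.11}, and Theorem~\ref{thm:1.2} supplies its analytic input. As a non-constructive cross-check, one can also apply Chang--Naor--Ren to the $\ell_1$ model of Corollary~\ref{cor:3.13} with $\log p(n)=\Theta(\sqrt n)$ (Remark~\ref{rem:4.3}). So the work here is the lower bound, i.e.\ Lemma~\ref{lem:4.2}.

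\textbf{Cube construction.} Fix $m\asymp\sqrt n$ and choose $d\asymp m$ sizes $m\le b_1<b_2<\dots<b_d\le 2m$ with consecutive gaps at least $2$. For $x\in\{0,1\}^d$ let $\lambda_x$ consist of one block of size $b_j+x_j$ for each $j$, padded with singletons to total $n$. This is possible because $\sum_j(b_j+1)=O(m^2)\le n$ once the implied constant is chosen small enough. The gaps of $2$ ensure that every $b_j+x_j$ is distinct and that distinct $x$ give distinct multisets.

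\textbf{Distance computation via the staircase.} By Lemma~\ref{lem:2.4} and Theorem~\ref{thm:3.1}(B),(E), $q^*\asymp\delta_1=\lVert v\rVert_1$ within the factor $3$, so it suffices to show $\lVert v_{\lambda_x,\lambda_y}\rVert_1\asymp m\,|x\triangle y|$.

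Changing coordinate $j$ from $0$ to $1$ (and removing one singleton) changes $u$ in two ways. It adds $+1$ on the range $2\le s\le b_j$. It also creates a spike of height $b_j+1$ at the single position $s=b_j+1$.

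For the upper bound, the triangle inequality gives $\lVert v\rVert_1\le\sum_{j\in x\triangle y}(b_j-1+b_j+1)\le 4m\,|x\triangle y|$.

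For the lower bound, the low-range contributions may cancel between coordinates flipped in opposite directions. However, the spikes sit at the positions $b_j+1$, which are pairwise distinct because of the gaps. None of these positions lies in a low range $[2,b_{j'}]$ in a way that could cancel a spike of size $\ge m$. The only possible overlap is with $\pm1$ terms, at most $d\le m/2$ in number. Hence $|v(b_j+1)|\ge b_j+1-d\ge m/2$ for each differing $j$, giving $\lVert v\rVert_1\ge\frac m2|x\triangle y|$.

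Combining these bounds with the factor-$3$ equivalence to $q^*$ puts the cube in $(\mathcal K_n/\!\cong,q^*)$ with distortion bounded by an absolute constant. Careful bookkeeping of the constants should give the claimed $8$.

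\textbf{Conclusion and main obstacle.} Enflo gives $c_2(\{0,1\}^d,\ell_1)=\sqrt d$. Distortion composes multiplicatively, so $c_2(\mathcal K_n)\ge\sqrt d/O(1)=\Omega(n^{1/4})$, and together with Theorem~\ref{thm:5.11} this yields $\Theta(n^{1/4})$. I expect the main obstacle to be the lower estimate on $\lVert v\rVert_1$, since it must hold uniformly over sign patterns of $x-y$. The choice of the band $[m,2m]$ with $d\le m/2$ is exactly what lets the spikes dominate the $\pm1$ drift. If the bookkeeping there is tighter than expected, I would shrink $d$ to $m/4$, which costs only a constant factor in the final exponent bound.
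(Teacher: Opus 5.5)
Your proof is correct and follows the paper's route: the same cube of gap-$2$ block sizes in a band of width $\Theta(\sqrt n)$ padded with singletons, the same lower bound read off the staircase difference at the spike positions $b_j+1$ (the paper's thresholds $s_i=A_i+1$) via $q^*\ge\frac12\delta_1$, then Enflo's theorem, with Theorem~\ref{thm:5.11} supplying the upper half. The only divergence is that you bound $q^*$ from above through $q^*<\frac32\delta_1$, which gives cube distortion below $24$ rather than $8$; the paper reaches $8$ with a direct common-refinement alignment giving $q^*\le\sum_{i\in D}A_i$, but any absolute constant suffices for $\Theta(n^{1/4})$.
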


The two halves are proved independently. Section 4.1 embeds a large Hamming cube into the metric space, using no external input beyond Enflo's theorem on cubes. The upper half is Theorem~\ref{thm:5.11}, proved constructively in Section 5; Remark~\ref{rem:4.3} records the non-constructive route through the optimal Euclidean distortion of finite subsets of $\ell_1$, which the first version of this work took and which nothing here now needs. Section 4.2 locates the sorted-degree coordinate relative to the answer, and the gap it exhibits is what \S\ref{sec:5} sets out to close.

\subsection{A cube of partitions}\label{sec:4.1}

\begin{lemma}\label{lem:4.2}
Let $n\ge16$ and $d=\lfloor\sqrt n/2\rfloor\ (\ge2)$. There is an injection $\lambda:\{0,1\}^d\to\mathcal K_n/\!\cong$ such that for all $x,y\in\{0,1\}^d$, with $h=h(x,y)$ the Hamming distance,
\begin{equation*}
\frac d2\,h\ \le\ q^*\bigl(\lambda(x),\lambda(y)\bigr)\ \le\ 4d\,h .
\end{equation*}
Equivalently, the $d$-dimensional $\ell_1$ Hamming cube embeds into $(\mathcal K_n/\!\cong,q^*)$ with distortion at most $8$, and $d=\Theta(\sqrt n)$.
\end{lemma}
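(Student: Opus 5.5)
The plan is to build the cube out of one long increasing ladder of blocks plus a reservoir of singletons, and to read off both bounds from the block-energy sandwich of Corollary~\ref{cor:3.10}. No contingency tables need to be enumerated.

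\textbf{Construction.} Put $a_i=d+2i$ for $1\le i\le d$. These sizes are at least $d+2\ge4$, at most $3d$, and pairwise separated by gaps of $2$. For $x\in\{0,1\}^d$ let $\lambda(x)$ have one block of size $a_i+x_i$ for each $i$, with all remaining vertices as singletons. The number of singletons is $n-\sum_i a_i-|x|$. Since
\begin{equation*}
\sum_i a_i+d=2d^2+2d\le\tfrac n2+\sqrt n\le n
\end{equation*}
for $n\ge16$ (using $d\le\sqrt n/2$), the singleton count is non-negative for every $x$, so $\lambda(x)\vdash n$. Because the intervals $\{a_i,a_i+1\}$ are pairwise disjoint and contain no $1$, the multiset of non-singleton parts determines $x$. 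Hence $\lambda$ is injective.

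\textbf{Computing $B$.} The gap of $2$ is used here, so that each coordinate keeps its own threshold. For every $x$, the sorted list of $\lambda(x)$ is
\begin{equation*}
a_d+x_d,\ \dots,\ a_1+x_1,\ \text{then singletons, then zero padding.}
\end{equation*}
So under the rank pairing defining $B$, the block of index $i$ in $\lambda(x)$ always faces the block of index $i$ in $\lambda(y)$. All remaining positions carry $g(1)=g(0)=0$ on both sides and contribute nothing. Therefore
\begin{equation*}
B\bigl(\lambda(x),\lambda(y)\bigr)=\sum_{i:\,x_i\ne y_i}\bigl(g(a_i+1)-g(a_i)\bigr)=\sum_{i:\,x_i\ne y_i}a_i ,
\end{equation*}
which lies in $[(d+2)h,\,3dh]$.

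\textbf{Conclusion.} Corollary~\ref{cor:3.10} gives $q^*\le B\le3dh\le4dh$. For $x\ne y$ it also gives $q^*\ge(B+1)/2\ge dh/2$, and both bounds hold trivially when $h=0$. The distortion is then at most $4d/(d/2)=8$, and $d=\lfloor\sqrt n/2\rfloor=\Theta(\sqrt n)$.

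The only delicate point is the rank-pairing step: one must check that the sorted order of the parts of $\lambda(x)$ does not depend on $x$. The disjoint, separated ranges $[a_i,a_i+1]$, together with $a_1>1$, guarantee this. With that in hand, the rest is the arithmetic of $g(a+1)-g(a)=a$.
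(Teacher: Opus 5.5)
Your proof is correct, and it takes a different route from the paper's.

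\textbf{The paper's route.} The paper uses the larger ladder $A_i=2d+2i$ and proves the two halves separately.
\begin{itemize}
\item The upper bound comes from an explicit common refinement $\nu$ of $\lambda(x)$ and $\lambda(y)$. One splits a unit vertex off each block indexed by $D$, on whichever side that block has size $A_i+1$, and this gives $q^*\le\sum_{i\in D}A_i\le4dh$.
\item The lower bound goes through the vertex-mass model: $q^*\ge\frac12\delta_1$ by Lemma~\ref{lem:2.5}, combined with the threshold form of Lemma~\ref{lem:2.4} at $s_i=2d+2i+1$.
\item At that threshold the staircase difference carries a cross term $\sum_{j>i}(x_j-y_j)$ coming from the larger blocks, of size up to $d-i$. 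The offset $2d$ is chosen precisely to dominate it.
\end{itemize}

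\textbf{Your route.} You read both bounds off the block-energy sandwich of Corollary~\ref{cor:3.10}.
\begin{itemize}
\item The metric $B$ pairs blocks by rank, and the rank order of your ladder does not depend on $x$. So $B$ separates by coordinate and equals $\sum_{i\in D}a_i$ exactly, with no cancellation to control.
\item This is why the smaller offset $d$ suffices and the blocks stay at or below $3d$.
\item It also gives a slightly better constant than you claim. From $q^*\le3dh$ and $q^*\ge(B+1)/2>(d+2)h/2$, the distortion is at most $6d/(d+2)<6$, not merely $8$.
\item Your upper bound $q^*\le B$ is Lemma~\ref{lem:3.5}. On this instance its rank routing amounts to the same one-vertex-per-differing-coordinate alignment that the paper writes down as a common refinement.
\end{itemize}

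\textbf{What each route buys.} The paper's route is economical in its inputs. Its lower bound rests only on the elementary degree-counting inequality of Lemma~\ref{lem:2.5}, which fits the paper's design in which the vertex-mass model carries \S\ref{sec:4.1}. Your route rests on the heavier per-table Theorem~\ref{thm:3.9} and the $W_1$ triangle inequality behind it, but in exchange it gives an exact formula for the proxy. The dependence is not circular, since nothing in \S\ref{sec:3} uses Lemma~\ref{lem:4.2}.
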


\begin{proof}
\textbf{Encoding.} Put
\begin{equation*}
A_i=2d+2i\ (i=1,\dots,d),\qquad \Sigma=\sum_{i=1}^d A_i=3d^2+d,\qquad F=n-\Sigma,
\end{equation*}
\begin{equation*}
\lambda(x)=\bigl(A_1+x_1,\dots,A_d+x_d,\ 1^{\,F-|x|}\bigr),\qquad |x|=\textstyle\sum_i x_i .
\end{equation*}
The sizes total $\Sigma+|x|+(F-|x|)=n$. For the tail to be legal for every $x$ it suffices that $F\ge d$, that is $n\ge3d^2+2d$; and since $d\le\sqrt n/2$,
\begin{equation*}
3d^2+2d\le\frac{3n}4+\sqrt n\le n\iff\sqrt n\le\frac n4\iff n\ge16 .
\end{equation*}
Each $A_i+x_i$ lies in $\{2d+2i,2d+2i+1\}$; these $d$ pairs are disjoint and all exceed $1$, so the multiset of block sizes determines every bit $x_i$, giving injectivity. (For $n<16$ no encoding is needed: Theorem~\ref{thm:4.1} is asymptotic and $n_0=16$ suffices.)

\textbf{Upper bound.} Let $D=\{i:x_i\ne y_i\}$, $D_+=\{i\in D:x_i=1\}$, $D_-=\{i\in D:y_i=1\}$, so $D=D_+\sqcup D_-$ and $|D|=h$. Put
\begin{equation*}
\nu=\bigl(\{A_i+x_i\}_{i\notin D},\ \{A_i\}_{i\in D},\ 1^{\,c}\bigr),\qquad c=F-\sum_{i\notin D}x_i ,
\end{equation*}
which is symmetric in $x$ and $y$ since $x_i=y_i$ off $D$; from $\sum_{i\notin D}x_i\le d\le F$ we get $c\ge0$, and the sizes total $n$.

The partition $\nu$ refines $\lambda(x)$: blocks agree off $D$; for $i\in D_-$ we have $x_i=0$ and the blocks coincide; for $i\in D_+$ the block $A_i+1$ splits into $A_i$ plus a unit block. The number of unit blocks required is then
\begin{equation*}
(F-|x|)+|D_+|=F-\sum_{i\notin D}x_i=c ,
\end{equation*}
exactly the number $\nu$ has, and symmetrically for $\lambda(y)$. So a single $\nu$ refines both. Partitioning $V$ into the blocks of $\nu$ and merging in two ways gives a bijection $\sigma$ with $E(G_\nu)\subseteq E(G_{\lambda(x)})\cap\sigma E(G_{\lambda(y)})$, whence
\begin{equation*}
q^*\le\bigl(E(\lambda(x))-E(\nu)\bigr)+\bigl(E(\lambda(y))-E(\nu)\bigr).
\end{equation*}
Only the blocks indexed by $D_+$ and $D_-$ change and unit blocks carry no edges, so
\begin{equation*}
E(\lambda(x))-E(\nu)=\sum_{i\in D_+}\Bigl[\tbinom{A_i+1}2-\tbinom{A_i}2\Bigr]=\sum_{i\in D_+}A_i ,
\end{equation*}
and symmetrically; adding, $q^*\le\sum_{i\in D}A_i\le h\max_i A_i=4d\,h$.

\textbf{Lower bound.} By Lemma~\ref{lem:2.5}, $q^*\ge\frac12\delta_1$ for every pair, with no reference to any alignment; and by Lemma~\ref{lem:2.4}, $\delta_1=\sum_{s\ge1}|u_{\lambda(x)}(s)-u_{\lambda(y)}(s)|$. Take the $d$ distinct thresholds $s_i=2d+2i+1$. A block $2d+2j+x_j$ satisfies $2d+2j+x_j\ge s_i$ iff $2j+x_j\ge2i+1$: always for $j>i$, iff $x_i=1$ for $j=i$, never for $j<i$; unit blocks never count. Hence
\begin{equation*}
u_{\lambda(x)}(s_i)=\sum_{j>i}(2d+2j+x_j)+x_i(2d+2i+1),
\end{equation*}
and subtracting, the terms $2d+2j$ cancelling,
\begin{equation*}
\Delta_i=\sum_{j>i}(x_j-y_j)+(x_i-y_i)(2d+2i+1).
\end{equation*}
For $i\in D$ we have $|x_i-y_i|=1$ and $\bigl|\sum_{j>i}(x_j-y_j)\bigr|\le d-i$, so $|\Delta_i|\ge(2d+2i+1)-(d-i)=d+3i+1>d$. All terms of the sum being non-negative, keeping only $i\in D$ gives $\delta_1\ge d\,h$ and hence $q^*\ge\frac d2h$.

\textbf{Composition.} The two bounds give ratios in $[\frac12,4]$ against the metric $d\cdot h$, so the distortion is at most $8$, and $d=\lfloor\sqrt n/2\rfloor=\Theta(\sqrt n)$.
\end{proof}

Two features of the encoding are forced. If all $A_i$ were equal then $\lambda(x)$ would depend only on $|x|$ and the cube would collapse to a one-dimensional family; the gap of $2$ buys injectivity and gives each coordinate its own threshold in the lower bound. And the weights satisfy $A_i\in[2d+2,4d]$, so $n\ge\sum_i A_i=\Theta(d^2)$: this construction cannot give more than $d=O(\sqrt n)$, hence not more than $n^{1/4}$ below.

\textbf{Transfer to $c_2$.} Enflo \cite{ref10} gives $c_2(\{0,1\}^d,\ell_1)=\sqrt d$, and distortion is scale-invariant, so rescaling the Hamming metric by $d$ changes nothing. Distortion is also monotone under passage to submetrics. By Lemma~\ref{lem:4.2} the image of $\lambda$ is a submetric of $(\mathcal K_n/\!\cong,q^*)$ at distortion at most $8$ from $(\{0,1\}^d,d\cdot h)$, so for $n\ge16$,
\begin{equation*}
c_2(\mathcal K_n)\ \ge\ \frac{\sqrt d}8\ \ge\ \frac18\sqrt{\frac{\sqrt n}2-1}\ =\ \Omega\bigl(n^{1/4}\bigr).
\end{equation*}
This constrains every embedding, not a named map.

\begin{remark}[general upper bounds for finite subsets of $\ell_1$]\label{rem:4.3}
An upper bound of the same order is also available non-constructively, and it was the route taken in the first version of this work. By Corollary~\ref{cor:3.10} the map $\lambda\mapsto e(\lambda)$ carries $q^*$ to the metric $B$ at distortion at most $2$, and $(\mathcal K_n/\!\cong,B)$ is by definition an $M$-point subset of $\ell_1^n\subseteq\ell_1$ with $M=p(n)$. Any theorem bounding the Euclidean distortion of finite subsets of $\ell_1$ then applies: the negative-type theorem of Arora--Lee--Naor \cite{ref9} gives $O(\sqrt{\log M}\,\log\log M)=O(n^{1/4}\log n)$, and Chang--Naor--Ren \cite{ref22} give the optimal $O(\sqrt{\log M})=O(n^{1/4})$, using $\log p(n)=\Theta(\sqrt n)$ \cite{ref11}. Two things should be said about this route. It is genuinely non-constructive: it supplies no coordinate map evaluable on a single input in $\mathrm{poly}(n)$ time, and the general procedure that computes $c_2$ of an explicitly listed finite metric space is a semidefinite program \cite{ref13,ref29} of size $\mathrm{poly}(p(n))=\exp(O(\sqrt n))$, which moreover needs a distance table whose entries are strongly NP-hard to produce (Theorem~\ref{thm:8.1}). And it is no longer needed: Theorem~\ref{thm:5.11} supplies the upper half of Theorem~\ref{thm:4.1} constructively, so nothing in this paper depends on \cite{ref22}, and the determination of $c_2(\mathcal K_n)$ rests on Enflo's theorem alone.
\end{remark}

\subsection{The sorted-degree coordinate}\label{sec:4.2}

Theorem~\ref{thm:4.1} concerns the class. We now measure the most natural named coordinate against it.

\begin{proposition}\label{prop:4.4}
With $\Phi_n$ as in \S\ref{sec:2.5}, put
\begin{equation*}
L^+(n)=\max_{\lambda\ne\mu\vdash n}\frac{\lVert\Delta\Phi_n\rVert_2}{d_{\rm edit}},\quad L^-(n)=\min_{\lambda\ne\mu\vdash n}\frac{\lVert\Delta\Phi_n\rVert_2}{d_{\rm edit}},\quad \rho(n)=\frac{L^+(n)}{L^-(n)} .
\end{equation*}
Then:

\textbf{(A)} $\dfrac{\lVert\Delta\Phi_n\rVert_2}{d_{\rm edit}}=\dfrac{\lVert z\rVert_2}{q^*}$ for all $n\ge2$ and all $\lambda\ne\mu\vdash n$.

\textbf{(B)} $L^+(n)=\sqrt2$ for every $n\ge2$, attained at $\lambda=(2,1^{n-2})$, $\mu=(1^n)$.

\textbf{(C)} $L^-(n)\ge\dfrac2{3\sqrt n}$.

\textbf{(D)} $\rho(n)\le\dfrac{3\sqrt2}2\sqrt n$ for all $n\ge2$, strictly for every finite $n$.

\textbf{(E)} With $b=b(n)=\max\{j\ge1:j(j+1)\le n\}$, for every $n\ge6$,
\begin{equation*}
\rho(n)\ \ge\ \frac{\sqrt2\,b(3b-1)}{2\sqrt{b(b+1)}} .
\end{equation*}

\textbf{(F)} For every $n\ge6$,
\begin{equation*}
\frac{3\sqrt2}2\Bigl(\sqrt n-\frac73\Bigr)\ <\ \rho(n)\ <\ \frac{3\sqrt2}2\sqrt n ,
\end{equation*}
hence $\rho(n)=\frac{3\sqrt2}2\sqrt n\,(1+o(1))=\Theta(\sqrt n)$ and $L^-(n)=\frac2{3\sqrt n}(1+o(1))$.
\end{proposition}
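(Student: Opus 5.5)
For part (A), the plan is to unwind the normalizations. Since $\Delta\Phi_n=\frac2{n^2}z$ and $d_{\rm edit}=\frac2{n^2}q^*$, the factor $2/n^2$ cancels. Every later part then concerns the ratio $\lVert z\rVert_2/q^*$.

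For (B), I would rerun the skeleton of Lemma~\ref{lem:2.5} in squared form. Fix an optimal $\sigma$, let $F$ be the flipped edge set with $|F|=q^*$, and let $a_w$ and $t_w$ be as in that proof. The squared form of the rearrangement step (ii), namely that sorting contracts $\ell_2$ distance, gives
\begin{equation*}
\lVert z\rVert_2^2\le\sum_w a_w^2\le\sum_w t_w^2 .
\end{equation*}
The last sum is the sum of squared degrees of the graph $F$, which is at most $\max_w t_w\cdot\sum_w t_w\le|F|\cdot 2|F|=2q^{*2}$. Hence $\lVert z\rVert_2\le\sqrt2\,q^*$. For attainment, the pair $\lambda=(2,1^{n-2})$, $\mu=(1^n)$ has $z=(1,1,0,\dots)$ and $q^*=1$, so the ratio is exactly $\sqrt2$.

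For (C), apply Cauchy--Schwarz in $\mathbb R^n$ and then Theorem~\ref{thm:3.1}(B):
\begin{equation*}
\lVert z\rVert_2\ge\lVert z\rVert_1/\sqrt n=\delta_1/\sqrt n>\tfrac23\,q^*/\sqrt n .
\end{equation*}
Part (D) is the quotient of (B) and (C). It is strict for each finite $n$ because (C) inherits the strict inequality of Theorem~\ref{thm:3.1}(B).

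For (E), I would pad the witness of Theorem~\ref{thm:3.1}(D) with unit blocks. Put $m=b(b+1)\le n$, $\lambda=((b+1)^{b},1^{n-m})$ and $\mu=(b^{b+1},1^{n-m})$. Applying Lemma~\ref{lem:3.2} to each common part $1$ strips the padding, so by Lemma~\ref{lem:3.7}
\begin{equation*}
q^*=\frac{m(3b-1)}{2(b+1)}=\frac{b(3b-1)}2 .
\end{equation*}
The padded vertices have degree $0$ on both sides. The remaining $m$ entries of the sorted degree sequences are $b$ for $\lambda$ and $b-1$ for $\mu$, so $z$ has exactly $m$ entries equal to $1$ and $\lVert z\rVert_2=\sqrt{b(b+1)}$. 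This yields
\begin{equation*}
L^-(n)\le\frac{2\sqrt{b(b+1)}}{b(3b-1)} .
\end{equation*}
Combining this with $L^+(n)=\sqrt2$ from (B) gives the stated lower bound on $\rho(n)$. The hypothesis $n\ge6$ ensures $b\ge2$, so the witness is a genuine non-trivial pair.

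For (F), the upper bound is (D). The lower bound is elementary algebra on (E). From the maximality of $b$ we have $(b+1)(b+2)>n$, hence $b>\sqrt{n+\frac14}-\frac32>\sqrt n-\frac32$. I would write the expression in (E) as
\begin{equation*}
\frac{3\sqrt2}2\sqrt b\cdot\frac{b-\frac13}{\sqrt{b+1}}
\end{equation*}
and show it exceeds $\frac{3\sqrt2}2\bigl(b-\frac56\bigr)$. After squaring, this reduces to a polynomial inequality in $b$ that should hold for $b\ge2$. Then $b-\frac56>\sqrt n-\frac73$, which gives the claim, and the asymptotic statements follow immediately. I expect this last step to be the main obstacle: it is not deep, but the constant $\frac73$ leaves little room, and the polynomial inequality has to be checked carefully at small $b$. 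If it fails at $b=2$, I would handle the few values $6\le n<12$ directly.
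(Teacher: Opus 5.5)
Your proposal is correct and follows the paper's proof essentially step for step: the $2/n^2$ factor cancels, the squared-degree bound on the flipped-edge graph plus rearrangement gives $L^+=\sqrt2$, Cauchy--Schwarz with Theorem~\ref{thm:3.1}(B) gives $L^-$, and the unit-padded adjacent uniform pair (Lemmas~\ref{lem:3.2} and~\ref{lem:3.7}) gives (E). In (F) the paper uses $\sqrt{1+1/b}\le1+\frac1{2b}$ instead of squaring, but your squared inequality $b\bigl(b-\frac13\bigr)^2>\bigl(b-\frac56\bigr)^2(b+1)$ reduces exactly to $39b>25$, so it holds for every $b\ge1$ and no small cases need separate treatment.
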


\begin{proof}[Proof of (A)]
$\Phi_n$ is a linear rescaling, so $\lVert\Delta\Phi_n\rVert_2=\frac2{n^2}\lVert z\rVert_2$ while $d_{\rm edit}=\frac2{n^2}q^*$, and the factor cancels.
\end{proof}

In particular every statement below applies verbatim to the unnormalized sorted-degree vector.

\begin{lemma}\label{lem:4.5}
For all $\lambda\ne\mu\vdash n$, $\lVert z\rVert_2\le\sqrt2\,q^*$, with equality for the single-edge pair. Hence (B).
\end{lemma}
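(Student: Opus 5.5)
We follow the skeleton of Lemma~\ref{lem:2.5}, replacing the $\ell_1$ rearrangement step by its squared form. Fix $\sigma$ attaining $q^*$, put $H=\sigma G_\mu$, $F=E(G_\lambda)\triangle E(H)$, and let $a_w=\deg_{G_\lambda}(w)-\deg_H(w)$. As there, $|a_w|\le t_w$, where $t_w$ is the degree of $w$ in the graph $(V,F)$. Rearrangement contracts $\ell_2$ distance just as it contracts $\ell_1$: for $\alpha\le\alpha'$ and $\beta\le\beta'$, the inequality $(\alpha-\beta)^2+(\alpha'-\beta')^2\le(\alpha-\beta')^2+(\alpha'-\beta)^2$ reduces to $(\alpha'-\alpha)(\beta'-\beta)\ge0$. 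Hence $\lVert z\rVert_2^2\le\sum_w a_w^2\le\sum_w t_w^2$, and the whole problem becomes bounding the sum of squared degrees of the edit graph $(V,F)$ by $2|F|^2$.

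The naive bound $\sum_w t_w^2\le(\max_w t_w)\cdot 2|F|$ is not enough in general, so we use the structure of $F$. The standard estimate $\sum_w t_w^2\le\sum_{uw\in F}(t_u+t_w)$, together with $t_u+t_w\le |F|+1$ for every edge $uw\in F$ (the edges at $u$ or at $w$ are distinct except for $uw$ itself), gives $\sum_w t_w^2\le|F|(|F|+1)\le 2|F|^2$ whenever $|F|\ge1$. That is exactly $\lVert z\rVert_2\le\sqrt2\,q^*$. Note that the integrality $|F|\ge1$ is essential to absorb the $+1$: for $\lambda\ne\mu$ we have $q^*\ge1$, and equality in $|F|+1\le 2|F|$ forces $|F|=1$.

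For equality, take the single-edge pair $\lambda=(2,1^{n-2})$, $\mu=(1^n)$. Here $q^*=1$ and $z=(1,1,0,\dots,0)$, so $\lVert z\rVert_2=\sqrt2=\sqrt2\,q^*$. For (B), Proposition~\ref{prop:4.4}(A) converts the ratio $\lVert\Delta\Phi_n\rVert_2/d_{\rm edit}$ into $\lVert z\rVert_2/q^*$, whose maximum is therefore exactly $\sqrt2$, attained at this pair for every $n\ge2$.

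The main obstacle I expect is the middle step, namely obtaining the factor $2$ rather than a dimension-dependent factor. The edge-sum argument $\sum_w t_w^2=\sum_{uw\in F}(t_u+t_w)$ is the key; I would double-check that it uses nothing about cluster graphs, since it holds for any graph $F$.
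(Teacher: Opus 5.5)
Your proof is correct and follows the paper's skeleton: optimal alignment, $|a_w|\le t_w$, then the squared rearrangement inequality to pass from $a$ to $z$. It differs only in how you bound $\sum_w t_w^2$.

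The paper uses exactly what you call the naive bound, $\lVert a\rVert_2^2\le(\max_w|a_w|)\sum_w|a_w|\le q^*\cdot 2q^*$, and that bound is \emph{not} insufficient. A vertex is incident to at most $|F|$ edges of $F$, so $\max_w t_w\le|F|=q^*$, and the bound gives precisely the required $2(q^*)^2$. Your remark to the contrary is mistaken.

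The error is harmless, because your replacement step is also valid. The identity $\sum_w t_w^2=\sum_{uw\in F}(t_u+t_w)$, together with $t_u+t_w\le|F|+1$, yields the slightly sharper $\lVert z\rVert_2^2\le q^*(q^*+1)$. This shows at once that equality $\lVert z\rVert_2=\sqrt2\,q^*$ forces $q^*=1$; the paper's route reaches the same bound in one line but leaves this implicit.

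Your integrality comment is also unnecessary: $|F|(|F|+1)\le2|F|^2$ holds for every non-negative integer $|F|$.

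Neither argument uses anything about $F$ beyond its being a simple graph. Your equality case and your deduction of (B) match the paper.
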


\begin{proof}
Fix an optimal alignment, let $F$ be the flipped edges, $|F|=q^*$, let $t_w$ be the number of edges of $F$ at $w$, and let $a_w$ be the net degree difference at $w$. Additions and deletions at a vertex cancel in part, so $|a_w|\le t_w$; also $\sum_w t_w=2q^*$ and $\max_w t_w\le|F|=q^*$. Therefore
\begin{equation*}
\lVert a\rVert_2^2\ \le\ \Bigl(\max_w|a_w|\Bigr)\sum_w|a_w|\ \le\ q^*\cdot2q^*=2(q^*)^2 .
\end{equation*}
Here $a$ is indexed by aligned labels while $z$ is the difference of separately sorted sequences; by the Hardy--Littlewood--Pólya rearrangement inequality the same-order pairing minimizes squared distance, so $\lVert z\rVert_2\le\lVert a\rVert_2\le\sqrt2q^*$. For equality take $\lambda=(2,1^{n-2})$, $\mu=(1^n)$: the graphs differ by one edge so $q^*=1$, and $\lVert z\rVert_2=\sqrt2$.
\end{proof}

\begin{lemma}\label{lem:4.6}
$L^-(n)\ge\frac2{3\sqrt n}$, giving (C), and $\rho(n)\le\frac{3\sqrt2}2\sqrt n$, giving (D).
\end{lemma}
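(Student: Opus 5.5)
The lower bound on $L^-(n)$ will come from the upper bound of Theorem~\ref{thm:3.1}(B) combined with Cauchy--Schwarz, and the bound on $\rho(n)$ then follows by dividing into Lemma~\ref{lem:4.5}. By Proposition~\ref{prop:4.4}(A) it suffices to work with the unnormalized ratio $\lVert z\rVert_2/q^*$, so the normalization $d_{\rm edit}$ plays no further role.

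\textbf{Step 1.} Fix $\lambda\ne\mu\vdash n$. Since $z\in\mathbb Z^n$ has $n$ coordinates, Cauchy--Schwarz gives
\begin{equation*}
\lVert z\rVert_1\le\sqrt n\,\lVert z\rVert_2 ,
\end{equation*}
that is, $\lVert z\rVert_2\ge\delta_1/\sqrt n$.

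\textbf{Step 2.} Theorem~\ref{thm:3.1}(B) gives $q^*<\frac32\delta_1$, so $\delta_1>\frac23q^*$. Combining this with Step 1,
\begin{equation*}
\frac{\lVert z\rVert_2}{q^*}>\frac2{3\sqrt n}.
\end{equation*}
The minimum over the finitely many pairs therefore satisfies $L^-(n)>\frac2{3\sqrt n}$, which proves (C), even strictly.

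\textbf{Step 3.} Lemma~\ref{lem:4.5} gives $L^+(n)=\sqrt2$. Dividing,
\begin{equation*}
\rho(n)=\frac{L^+(n)}{L^-(n)}<\sqrt2\cdot\frac{3\sqrt n}2=\frac{3\sqrt2}2\sqrt n ,
\end{equation*}
which proves (D). Strictness holds for every finite $n$, because the minimum $L^-(n)$ is attained at a specific pair, and at that pair the inequality $q^*<\frac32\delta_1$ of Theorem~\ref{thm:3.1}(B) is strict.

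\textbf{Main obstacle.} There is essentially none. The only care needed is to use the strict form $(3S-4)/(2S)<\frac32$ of Theorem~\ref{thm:3.1}(B), so that strictness survives the passage from the pairwise bound to the minimum over pairs.
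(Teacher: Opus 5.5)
Your proposal is correct and is the paper's argument. Cauchy--Schwarz gives $\lVert z\rVert_1\le\sqrt n\,\lVert z\rVert_2$, and composing it with the strict bound $q^*<\frac32\delta_1$ of Theorem~\ref{thm:3.1}(B) gives $\lVert z\rVert_2/q^*>\frac2{3\sqrt n}$ for each pair, hence strictly for the minimum over the finitely many pairs. Dividing $L^+(n)=\sqrt2$ from Lemma~\ref{lem:4.5} by this bound yields (D).
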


\begin{proof}
Theorem~\ref{thm:3.1}(B) gives $q^*<\frac32\lVert z\rVert_1$. Pairing $|z|$ with the all-ones vector by Cauchy--Schwarz over $n$ coordinates,
\begin{equation*}
\lVert z\rVert_1\le\sqrt n\,\lVert z\rVert_2 .
\end{equation*}
This is the usable direction: the reverse inequality $\lVert z\rVert_2\le\lVert z\rVert_1$ places $\lVert z\rVert_2$ on the small side and cannot be composed with the first bound. Composing, $q^*<\frac32\sqrt n\lVert z\rVert_2$, so $\lVert z\rVert_2/q^*>\frac2{3\sqrt n}$ for each pair, and the minimum over the finitely many pairs gives (C), strictly. Dividing Lemma~\ref{lem:4.5} by this yields (D).
\end{proof}

The matching family is the adjacent uniform pair of Theorem~\ref{thm:3.1}(D), padded with isolated vertices so as to exist for every $n$. Padding is harmless, but not for a trivial reason: by Theorem~\ref{thm:8.1} the quantity $\max_X\lVert X\rVert_F^2$ is itself strongly NP-hard to compute, so one cannot argue that isolated vertices are irrelevant because they carry no edges.

\begin{lemma}[padding invariance]\label{lem:4.7}
Let $\lambda_0,\mu_0\vdash n_0$, let $r\ge0$, and put $\lambda=(\lambda_0,1^r)$, $\mu=(\mu_0,1^r)$. Then $q^*(\lambda,\mu)=q^*(\lambda_0,\mu_0)$ and $\lVert z(\lambda,\mu)\rVert_2=\lVert z(\lambda_0,\mu_0)\rVert_2$.
\end{lemma}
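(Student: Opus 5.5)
The plan is to treat the two claims separately: a two-sided comparison for $q^*$, and a direct computation for the degree vectors.

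\textbf{Distance, upper bound.} Take any table $X_0\in T(\lambda_0,\mu_0)$ and enlarge it by an $r\times r$ identity block. The $r$ new unit rows are matched to the $r$ new unit columns, with zeros elsewhere. The enlarged table lies in $T(\lambda,\mu)$. Its cost, by Theorem~\ref{thm:2.3}, is $E(\lambda)+E(\mu)-2\sum g(x_{ij})$, and adjoining unit parts or unit cells adds $g(1)=0$ everywhere. So its cost equals $\operatorname{cost}(X_0)$, and minimizing over $X_0$ gives $q^*(\lambda,\mu)\le q^*(\lambda_0,\mu_0)$.

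\textbf{Distance, lower bound.} Here I would not reduce tables, but work on the graph side instead. Let $\sigma$ be an optimal alignment of $G_\lambda$ and $G_\mu$. Call a vertex \emph{isolated on a side} if it lies in a unit block on that side. If some vertex $w$ is isolated in $G_\lambda$ and lies in a nontrivial block of $\sigma G_\mu$, I would try a swap (exchange argument) that never increases the cost and increases the number of vertices isolated on both sides. Concretely, pick $w'$ isolated in $\sigma G_\mu$ but not in $G_\lambda$, exchange the roles of $w$ and $w'$ in $\sigma$, and count the edges at $w$ and $w'$ before and after.

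An alternative that avoids the combinatorics is induction on $r$ using Lemma~\ref{lem:3.2}. It applies whenever $1$ is a common part value, which it is here since $r\ge1$. Each application strips one unit row and column and subtracts $1^2$ from $\max\lVert X\rVert_F^2$. Each stripped $1$ also reduces $\tfrac12(\lVert\lambda\rVert_2^2+\lVert\mu\rVert_2^2)$ by exactly $1$. By the formula of Theorem~\ref{thm:2.3}, $q^*$ is therefore unchanged. This is the route I would actually take: $r$ applications of Lemma~\ref{lem:3.2} give equality immediately, with no exchange argument needed. One point to check is that the removed value $a=1$ appears in both $\lambda$ and $\mu$ at every stage, which holds since $1^r$ is adjoined to both.

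\textbf{Degree vectors.} Adjoining $1^r$ to both partitions appends $r$ zeros to each degree multiset. The sorted sequences $\mathrm{sd}(\lambda)$ and $\mathrm{sd}(\mu)$ are $\mathrm{sd}(\lambda_0)$ and $\mathrm{sd}(\mu_0)$ each followed by $r$ zeros, because all degrees are non-negative and sorting is non-increasing. So $z(\lambda,\mu)=(z(\lambda_0,\mu_0),0^r)$ and the $\ell_2$ norms coincide.

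\textbf{Main obstacle.} The only delicate point is the one flagged in the text: the claim is not trivial because isolated vertices could in principle be used by an optimal alignment to absorb mass. Lemma~\ref{lem:3.2}, whose real-relaxation proof handles arbitrary tables, is exactly what rules this out, so I expect no remaining difficulty.
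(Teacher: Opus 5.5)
Your proof is correct and, in the route you say you would actually take, is the paper's own: $r$ applications of Lemma~\ref{lem:3.2} with $a=1$, each lowering both $\max_X\lVert X\rVert_F^2$ and $\tfrac12(\lVert\lambda\rVert_2^2+\lVert\mu\rVert_2^2)$ by exactly $1$ so that Theorem~\ref{thm:2.3} leaves $q^*$ unchanged, together with the observation that $z$ merely gains $r$ trailing zeros. The identity-block upper bound and the sketched exchange argument are redundant once you use Lemma~\ref{lem:3.2}, since it already gives equality.
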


\begin{proof}
Isolated vertices have degree $0$, so each side gains $r$ zero coordinates, which after sorting occupy the same tail; $z$ merely gains $r$ zeros. For the metric, apply Lemma~\ref{lem:3.2} with $a=1$: removing one common unit block decreases $\max_X\lVert X\rVert_F^2$ by $1$, while $\lVert\lambda\rVert_2^2$ and $\lVert\mu\rVert_2^2$ each decrease by $1$, so half their sum decreases by $1$ as well. By Theorem~\ref{thm:2.3} the two decrements cancel; iterate over the $r$ common unit blocks.
\end{proof}

\begin{proof}[Proof of (E)]
Let $b=b(n)\ge2$, put $n_0=b(b+1)$ and $r=n-n_0\ge0$, and take $\lambda=((b+1)^{\,b},1^{\,r})$, $\mu=(b^{\,b+1},1^{\,r})$. The core pair is the family of Theorem~\ref{thm:3.1}(D) at parameter $b$ on $n_0$ vertices, both $b\ge2$ and $b(b+1)\mid n_0$ holding, so Lemma~\ref{lem:3.7} gives $q^*(\lambda_0,\mu_0)=\frac{b(3b-1)}2$. Every vertex of $G_{\lambda_0}$ has degree $b$ and every vertex of $G_{\mu_0}$ degree $b-1$, so $z=(1^{n_0})$ and $\lVert z\rVert_2=\sqrt{b(b+1)}$. By Lemma~\ref{lem:4.7} neither quantity changes under padding, so
\begin{equation*}
L^-(n)\le\frac{\lVert z\rVert_2}{q^*}=\frac{2\sqrt{b(b+1)}}{b(3b-1)},\qquad \rho(n)=\frac{\sqrt2}{L^-(n)}\ \ge\ \frac{\sqrt2\,b(3b-1)}{2\sqrt{b(b+1)}} ,
\end{equation*}
the numerator being the exact value $L^+(n)=\sqrt2$ of Lemma~\ref{lem:4.5} rather than an estimate.
\end{proof}

\begin{proof}[Proof of (F)]
By definition $b(b+1)\le n<(b+1)(b+2)$, so $b<\sqrt n$ from the left and $n<b^2+3b+2<(b+\frac32)^2$ from the right, giving $\sqrt n-\frac32<b<\sqrt n$. From $\sqrt{1+1/b}\le1+\frac1{2b}$,
\begin{equation*}
\frac{b(3b-1)}{\sqrt{b(b+1)}}=\frac{3b-1}{\sqrt{1+1/b}}\ \ge\ \frac{(3b-1)2b}{2b+1}=3b-\frac{5b}{2b+1}\ \ge\ 3b-\frac52 ,
\end{equation*}
so (E) gives $\rho(n)\ge\frac{3\sqrt2}2(b-\frac56)>\frac{3\sqrt2}2(\sqrt n-\frac73)$. With (D) this is (F), of relative error $O(n^{-1/2})$; and $L^-=\sqrt2/\rho$ gives the last claim.
\end{proof}

\textbf{Why the leading constant is pinned.} Three inequalities saturate on the same family: the routing bound of Lemma~\ref{lem:3.5}, the term-by-term comparison of Lemma~\ref{lem:3.6}, and the Cauchy--Schwarz step of Lemma~\ref{lem:4.6}. On the core pair $z=(1^{n_0})$ is an all-ones vector in $n_0$ coordinates, so Cauchy--Schwarz is exact in dimension $n_0$; it was applied in dimension $n$, and after padding $\lVert z\rVert_1/(\sqrt n\lVert z\rVert_2)=\sqrt{1-r/n}<1$ for $r>0$. So all three are simultaneously exact only on the subsequence $n=b(b+1)$, and for general $n$ the third saturates at rate $1-O(n^{-1/2})$, since $r=O(\sqrt n)$; this is why (F) is a two-sided sandwich rather than an identity. The two ends are witnessed by different families: $L^+=\sqrt2$ exactly, by a single edge against the empty graph, and $L^-$ asymptotically, by the padded adjacent uniform family.

\begin{corollary}\label{cor:4.8}
$\dfrac{\rho(\Phi_n)}{c_2(\mathcal K_n)}=\dfrac{\Theta(\sqrt n)}{\Theta(n^{1/4})}=\Theta\bigl(n^{1/4}\bigr)$.
\end{corollary}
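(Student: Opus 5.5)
The corollary is the quotient of two results already established, so the plan is to combine them without new estimates. For the numerator, Proposition~\ref{prop:4.4}(F) gives, for every $n\ge6$,
\begin{equation*}
\frac{3\sqrt2}2\Bigl(\sqrt n-\frac73\Bigr)<\rho(\Phi_n)<\frac{3\sqrt2}2\sqrt n ,
\end{equation*}
hence $\rho(\Phi_n)=\Theta(\sqrt n)$. Here $\rho(\Phi_n)$ is computed against $d_{\rm edit}$, but by Proposition~\ref{prop:4.4}(A) the normalization cancels, so this is also the product distortion of the unnormalized sorted-degree map against $q^*$.

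For the denominator, Theorem~\ref{thm:4.1} gives $c_2(\mathcal K_n)=\Theta(n^{1/4})$. Its lower half is the cube of Lemma~\ref{lem:4.2} combined with Enflo's theorem, giving $c_2(\mathcal K_n)\ge\frac18\sqrt{\sqrt n/2-1}$ for $n\ge16$. Its upper half is Theorem~\ref{thm:5.11}, or equally the non-constructive route of Remark~\ref{rem:4.3}.

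Dividing, there are constants $0<a\le A$ and a threshold $n_0$ (take $n_0=16$) such that for $n\ge n_0$,
\begin{equation*}
a\,n^{1/4}\le\rho(\Phi_n)/c_2(\mathcal K_n)\le A\,n^{1/4}.
\end{equation*}
The upper constant uses the upper bound on $\rho$ and the lower bound on $c_2$; the lower constant uses the lower bound on $\rho$ and the upper bound on $c_2$.

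The only point needing care is logical dependency. The lower bound on the ratio requires the upper half of Theorem~\ref{thm:4.1}, which is proved only later, in Section~5. To keep the corollary unconditional where it stands, I would note that the direction $\rho(\Phi_n)/c_2(\mathcal K_n)=O(n^{1/4})$ follows from Section~4 alone. The matching $\Omega(n^{1/4})$ direction then holds once Theorem~\ref{thm:5.11} is available, or immediately via Remark~\ref{rem:4.3}. No genuine obstacle is expected.
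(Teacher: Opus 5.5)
Your proposal is correct and is essentially the paper's proof: the corollary is the quotient of Proposition~\ref{prop:4.4}(F) for the numerator and Theorem~\ref{thm:4.1} for the denominator. Both are product distortions against the same metric, once the normalization cancels by Proposition~\ref{prop:4.4}(A). Your account of which half of each estimate feeds which direction of the ratio is accurate, namely that the $\Omega(n^{1/4})$ direction needs the upper bound of Theorem~\ref{thm:5.11}; it spells out what the paper's one-line proof leaves implicit.
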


\begin{proof}
Both are product distortions, so they are directly comparable; the numerator is Proposition~\ref{prop:4.4} and the denominator Theorem~\ref{thm:4.1}.
\end{proof}

This measures the gap for one coordinate; it does not improve $\Phi_n$, and it produces no map attaining $c_2$. The source of the loss is visible in the proof of Lemma~\ref{lem:4.6}: the passage from $\lVert z\rVert_1$ to $\lVert z\rVert_2$ over $n$ coordinates, saturated by a difference vector that is constant on its support. Section 5 replaces the single sorted-degree coordinate by a family of multiscale coordinates for which that passage is no longer the bottleneck.

\section{The critical dyadic embedding}\label{sec:5}
The lower bound of Theorem~\ref{thm:4.1} constrains every embedding, and the most natural named coordinate is a factor $\Theta(n^{1/4})$ away from it (Corollary~\ref{cor:4.8}). This section constructs a coordinate that is not. The construction is a one-parameter family, introduced first and in full so that no member of it appears later as an improvised repair; the map $F_n$ of Theorem~\ref{thm:1.1} is the member $\gamma=\frac14$, and the exponent is forced by two failures that \S\ref{sec:5.3} exhibits.
\subsection{The weighted dyadic family and the critical member}\label{sec:5.1}
Keep $u_\lambda$ and $v=v_{\lambda,\mu}=u_\lambda-u_\mu$ from \S\ref{sec:2.3}. Let $J=\lceil\log_2n\rceil$ and $N=2^J$, so $n\le N<2n$ and $J\le\log_2n+1$. (The letter $L$ is reserved for input length, as in \S\ref{sec:2.4}.) Regard
\begin{equation*}
u_\lambda(s)\qquad(1\le s\le N)
\end{equation*}
as a function on $[N]=\{1,\dots,N\}$, extended by zero for $s>\lambda_1$, which is well defined since $\lambda_1\le n\le N$. Let
\begin{equation*}
\mathcal D_N=\bigl\{\,\{(j-1)2^\ell+1,\dots,j2^\ell\}\ :\ 0\le\ell\le J,\ 1\le j\le N/2^\ell\,\bigr\}
\end{equation*}
be the \textbf{dyadic intervals} of $[N]$, write $\ell(I)$ for the level of $I$, and note $|\mathcal D_N|=\sum_{\ell=0}^JN/2^\ell=2N-1$. The only property of this family that any proof below uses is that the intervals of each level $\ell$ partition $[N]$. For $\gamma\ge0$ define
\begin{equation*}
F^{(\gamma)}(\lambda)=\Bigl(2^{-\gamma\ell(I)}\sum_{s\in I}u_\lambda(s)\Bigr)_{I\in\mathcal D_N}\ \in\ \mathbb R^{2N-1},
\end{equation*}
a diagonal rescaling of the plain dyadic-sum map $F^{(0)}$. Put
\begin{equation*}
E_\ell(w)=\sum_{I\in\mathcal D_N,\ \ell(I)=\ell}\Bigl(\sum_{s\in I}w(s)\Bigr)^2\qquad(w\in\mathbb R^N),
\end{equation*}
the \textbf{level-$\ell$ energy}, so that by linearity of $u\mapsto F^{(\gamma)}$,
\begin{equation}\label{eq:5.1}
\bigl\lVert F^{(\gamma)}(\lambda)-F^{(\gamma)}(\mu)\bigr\rVert_2^2=\sum_{\ell=0}^J2^{-2\gamma\ell}E_\ell(v) .
\end{equation}
Finally, partition $[N]$ into the maximal constant runs of $v$ and let $k_0(v)$ be the number of runs on which $v$ is non-zero. Figure~\ref{fig:staircase} carries one pair through the whole of this construction, from the two staircases to the level energies and the weights \eqref{eq:5.1} applies to them.

\begin{figure}[tb]
\centering
\includegraphics[width=\textwidth]{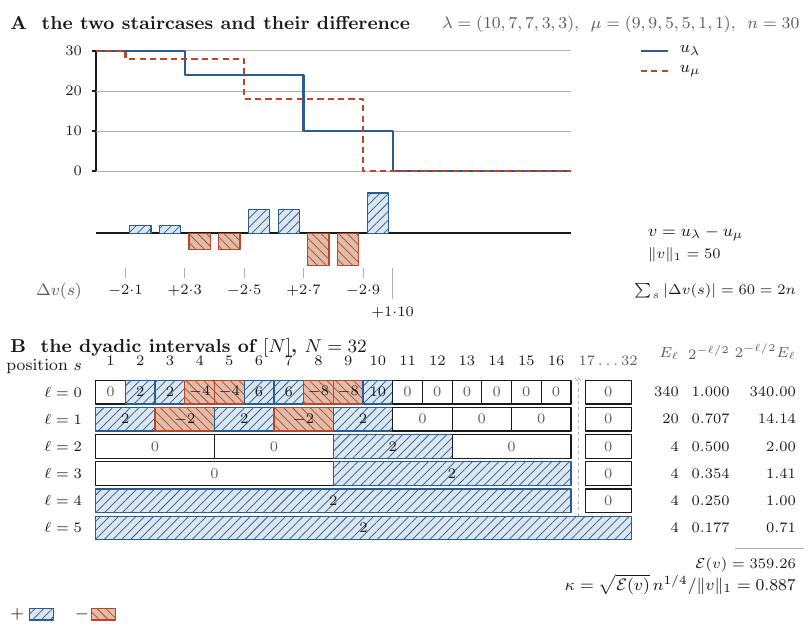}
\caption{\textbf{From the Ferrers staircase to the weighted dyadic
coordinates.} The running example is $\lambda=(10,7,7,3,3)$ against
$\mu=(9,9,5,5,1,1)$ at $n=30$, the extremal chirp pair of
\S\ref{sec:7}. \textbf{A.} The staircases
$u_\lambda(s)=\sum_{i:\lambda_i\ge s}\lambda_i$ and $u_\mu$, drawn on one
vertical scale with their difference $v=u_\lambda-u_\mu$ beneath, so that the
gap between the two steps is the height of the bar. Each jump is printed as a
signed product: every interior jump is $2s$, twice the minimum
Lemma~\ref{lem:2.6} permits at $s$, while the terminal jump at $s=m=10$ is
$1\cdot s$, the minimum itself. Their total, $60=2n$, saturates the lemma's
budget exactly. \textbf{B.} The dyadic intervals of $[N]$, each carrying its
coordinate $\sum_{s\in I}v(s)$; positions $17,\dots,32$ are folded away, $v$
being zero there. The ledger weights the level energies $E_\ell$ by
$2^{-\ell/2}$, which is the weight \eqref{eq:5.1} applies at $\gamma=\frac14$,
and totals them to the critical energy
$\mathcal E(v)=\sum_\ell2^{-\ell/2}E_\ell(v)=359.26$ and to the normalized
ratio $\kappa=\sqrt{\mathcal E(v)}\,n^{1/4}/\lVert v\rVert_1=0.887$ of
\S\ref{sec:7}, the record low of Appendix~\ref{sec:B}.
Legend as in Fig.~\ref{fig:transportation}.}
\label{fig:staircase}
\end{figure}

\begin{proposition}[the family is explicit and cheap]\label{prop:5.1}
For every $n\ge2$ and every $\gamma\ge0$:

\textbf{(A)} $F^{(\gamma)}$ depends only on $n$ and $\gamma$, acts on a single partition, calls no distance oracle, and is injective on $\mathcal K_n/\!\cong$.

\textbf{(B)} The dimension is $2N-1<4n$. At $\gamma=0$ each coordinate is an integer in $[0,2n^2)$, of bit width $O(\log n)$, so the image occupies $O(n\log n)$ bits and $\Theta(n\log n)$ in the worst case under a dense fixed-width encoding.

\textbf{(C)} Given the input in either partition representation (P1) or (P2), $F^{(\gamma)}(\lambda)$ is computable in $O(n)$ arithmetic operations and $O(n)$ machine words; given it in representation (G), the cost is $\Theta(n^2)$, since reading the input alone is $\Theta(n^2)$ bits.
\end{proposition}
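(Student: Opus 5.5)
The plan handles the three parts in order, since each is essentially a bookkeeping check.

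For (A), $F^{(\gamma)}(\lambda)$ is defined by a formula that involves only $n$ (through $N=2^{\lceil\log_2 n\rceil}$ and $\mathcal D_N$), $\gamma$, and the staircase $u_\lambda$. Hence it depends on $n$ and $\gamma$ alone, acts on one partition, and consults no distance.

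For injectivity I use the level-$0$ coordinates. At $\ell=0$ the weight is $2^{0}=1$, and the singletons $\{s\}$ return $u_\lambda(s)$ for every $s\in[N]$. So $F^{(\gamma)}(\lambda)$ determines the whole staircase. By Lemma~\ref{lem:2.6}, $u_\lambda(s)-u_\lambda(s+1)=s\,m_s(\lambda)$, which recovers every multiplicity $m_s(\lambda)$ and hence $\lambda$. Equivalently, if $\lambda\ne\mu$ then $v\ne0$ by Lemma~\ref{lem:2.4}, since $\lVert v\rVert_1=\delta_1>0$, and then $E_0(v)>0$ in \eqref{eq:5.1}.

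For (B), the dimension is $|\mathcal D_N|=2N-1$, and $N<2n$ gives $2N-1<4n$. At $\gamma=0$ each coordinate is a sum of non-negative integers $u_\lambda(s)\le n$. The support of $u_\lambda$ is $s\le\lambda_1\le n$, so every coordinate is at most $\sum_s u_\lambda(s)\le n\cdot n<2n^2$. Each coordinate therefore has $O(\log n)$ bits, and with $O(n)$ coordinates the total is $O(n\log n)$.

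For the worst case $\Theta(n\log n)$, I would take $\lambda=(n)$. Then $u_\lambda\equiv n$ on $[n]$, and every interval of level $\ell$ inside $[n]$ carries the value $n2^\ell$. Already the $\Omega(n)$ singleton coordinates equal $n$, which needs $\Omega(\log n)$ bits each under any fixed-width encoding able to hold values up to $2n^2$.

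For (C), in representation (P2) I iterate over the $r$ distinct sizes in decreasing order and fill $u_\lambda(s)$ for $s=N,\dots,1$ by the recursion $u(s)=u(s+1)+s\,m_s$. This takes $O(N)=O(n)$ operations. In (P1) I first bucket the parts by size into an array of length $n$. This costs $O(n)$ time, since $k(\lambda)\le n$ and each part is at most $n$; then I proceed as in (P2).

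The dyadic sums are computed bottom-up: level $0$ is $u$ itself, and each level-$(\ell+1)$ sum is the sum of its two children. That is $2N-1$ additions in total. Each coordinate is then multiplied by the weight $2^{-\gamma\ell}$, precomputing the $J+1$ weights. Everything uses $O(n)$ operations and words.

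In representation (G), I compute the degree of each vertex as a row sum, in $\Theta(n^2)$. From the degrees I recover $m_s=\#\{w:\deg w=s-1\}/s$ and proceed as above. The lower bound $\Omega(n^2)$ holds simply because the input has $\binom n2$ bits that must all be read, since any one edge bit can change the class.

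The only mildly delicate point is the worst-case bit count in (B), which needs the explicit witness together with a precise statement of the encoding convention. The rest is direct from Lemmas~\ref{lem:2.4} and~\ref{lem:2.6}.
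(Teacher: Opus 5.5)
Your proposal is correct and follows the paper's proof essentially step for step: the level-$0$ singletons, carrying weight $1$, return $u_\lambda$, whose differences $s\,m_s(\lambda)$ (with $u_\lambda(N+1)=0$) recover $\lambda$; the count $|\mathcal D_N|=2N-1$ together with a crude bound on the interval sums gives (B); and one suffix sum followed by bottom-up pairwise addition gives the $O(n)$ bound in (C). Your witness $\lambda=(n)$ and your adversary argument for (G) go slightly beyond what the paper writes; for the adversary argument, note that the claim that every bit matters holds at the empty graph, where each single-edge flip stays within the promise, not at every cluster graph.
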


\begin{proof}
Uniformity and pointwise action are immediate. For injectivity, level $0$ consists of the singletons and carries weight $2^0=1$, so $F^{(\gamma)}(\lambda)$ returns all the values $u_\lambda(s)$; since $u_\lambda(s)-u_\lambda(s+1)=s\,m_s(\lambda)$ with $u_\lambda(N+1)=0$, every multiplicity $m_s(\lambda)$ is recovered. For (B), $|\mathcal D_N|=2N-1<4n$ and $0\le\sum_{s\in I}u_\lambda(s)\le N\max_su_\lambda(s)=Nn<2n^2$. For (C), computing $u_\lambda$ from the multiplicities is one suffix sum, each level is formed bottom-up by adding pairs from the level below, and the weights are applied coordinatewise, for total work $\sum_\ell N/2^\ell=O(n)$; in representation (G) the block sizes are recoverable within the single $\Theta(n^2)$-bit scan.
\end{proof}

\emph{The arithmetic model in (C).} The count is of arithmetic operations on integers of $O(\log n)$ bits, followed by one multiplication per coordinate by a constant of the field $\mathbb Q(2^{1/4})$; it is not a count of real-number operations. All of the work is done at $\gamma=0$, where every intermediate quantity is an integer below $2n^2$; the weights enter only in the final coordinatewise pass, and at $\gamma=\frac14$ the weight $2^{-\ell/4}$ is $2^{-\lfloor\ell/4\rfloor}\cdot2^{-(\ell\bmod4)/4}$, so the four values $2^{-j/4}$, $0\le j\le3$, are the only irrational constants involved and each coordinate is exactly an integer multiple of $2^{-j/4}$ for a known $j$. The output is therefore exact in $O(n)$ words; rounding it to floating point is a separate step that we do not need, since $\lVert F^{(\gamma)}(\lambda)-F^{(\gamma)}(\mu)\rVert_2^2$ is the rational combination \eqref{eq:5.1} of integer level energies.

The first thing the weight buys is that the upper Lipschitz constant stops growing.

\begin{proposition}[constant upper Lipschitz for every $\gamma>0$]\label{prop:5.2}
For every $\gamma>0$ and all $\lambda\ne\mu\vdash n$,
\begin{equation*}
\bigl\lVert F^{(\gamma)}(\lambda)-F^{(\gamma)}(\mu)\bigr\rVert_2\ \le\ \bigl(1-2^{-2\gamma}\bigr)^{-1/2}\lVert v\rVert_1\ \le\ 2\bigl(1-2^{-2\gamma}\bigr)^{-1/2}q^*(\lambda,\mu) .
\end{equation*}
At $\gamma=\frac14$ this reads $\lVert F^{(1/4)}(\lambda)-F^{(1/4)}(\mu)\rVert_2\le\sqrt{2+\sqrt2}\,\lVert v\rVert_1<1.848\,\lVert v\rVert_1$, so $L^+\bigl(F^{(1/4)}\bigr)\le2\sqrt{2+\sqrt2}<3.696$.
\end{proposition}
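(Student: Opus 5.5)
The plan is to bound each level energy $E_\ell(v)$ by $\lVert v\rVert_1^2$ and then sum the geometric series in \eqref{eq:5.1}. The second inequality in the statement is just Lemma~\ref{lem:2.5} combined with Lemma~\ref{lem:2.4}: $\lVert v\rVert_1=\delta_1\le2q^*$. So all the work lies in the first inequality.

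First fix a level $\ell$. The intervals of level $\ell$ partition $[N]$, as noted in \S\ref{sec:5.1}. For each such $I$ we have $\bigl|\sum_{s\in I}v(s)\bigr|\le\sum_{s\in I}|v(s)|=:a_I$, and $\sum_I a_I=\lVert v\rVert_1$ because $v$ vanishes beyond $N$ by convention. Hence
\begin{equation*}
E_\ell(v)\le\sum_I a_I^2\le\Bigl(\sum_I a_I\Bigr)^2=\lVert v\rVert_1^2 ,
\end{equation*}
where the middle step uses only that the $a_I$ are non-negative. This crude bound is uniform in $\ell$. It is also sharp in the regime that matters, namely a single spike, which is exactly what the weight is designed to absorb.

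Next, substitute into \eqref{eq:5.1}:
\begin{equation*}
\bigl\lVert F^{(\gamma)}(\lambda)-F^{(\gamma)}(\mu)\bigr\rVert_2^2\le\lVert v\rVert_1^2\sum_{\ell=0}^{J}2^{-2\gamma\ell}\le\frac{\lVert v\rVert_1^2}{1-2^{-2\gamma}} .
\end{equation*}
Extending the finite sum to the full geometric series is valid for every $\gamma>0$. Taking square roots gives the first inequality with a constant independent of $n$.

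Finally, specialize to $\gamma=\frac14$. Here $2^{-2\gamma}=2^{-1/2}$, so $(1-2^{-1/2})^{-1}=\frac{\sqrt2}{\sqrt2-1}=\sqrt2(\sqrt2+1)=2+\sqrt2$, and therefore $(1-2^{-1/2})^{-1/2}=\sqrt{2+\sqrt2}\approx1.8478<1.848$. Doubling through $\lVert v\rVert_1\le2q^*$ gives $L^+(F^{(1/4)})\le2\sqrt{2+\sqrt2}<3.696$.

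There is no real obstacle. The only points needing care are that the level-$\ell$ intervals are disjoint (otherwise the per-level bound would pick up a multiplicity factor), and the arithmetic identity $(1-2^{-1/2})^{-1}=2+\sqrt2$.
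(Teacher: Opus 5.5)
Your proof is correct and follows the paper's argument: the per-level bound $E_\ell(v)\le\lVert v\rVert_1^2$ (because the level-$\ell$ intervals partition $[N]$), summing the geometric series in \eqref{eq:5.1}, and then Lemma~\ref{lem:2.5}, via Lemma~\ref{lem:2.4}, for $\lVert v\rVert_1\le2q^*$. Your arithmetic $(1-2^{-1/2})^{-1}=2+\sqrt2$ at $\gamma=\frac14$ is also right.
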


\begin{proof}
At each level the intervals partition $[N]$, so $\sum_{\ell(I)=\ell}\bigl|\sum_{s\in I}v(s)\bigr|\le\lVert v\rVert_1$ and hence $E_\ell(v)\le\lVert v\rVert_1^2$. Summing the geometric series in \eqref{eq:5.1} gives the first inequality; Lemma~\ref{lem:2.5} gives the second.
\end{proof}

By contrast the unweighted member has $L^+$ of order $\sqrt{J+1}$, and \S\ref{sec:5.2} shows that this is not an artefact of the estimate.

\subsection{The unweighted member is pinned}\label{sec:5.2}

Two facts about partitions carry the upper bound for $F^{(0)}$: the difference of two Ferrers staircases has few pieces, and a dyadic interval can capture a constant fraction of any one of them.

\begin{lemma}[breakpoints and runs]\label{lem:5.3}
For every $\lambda\vdash n$ the number of distinct part sizes satisfies $r(\lambda)\le\frac{\sqrt{8n+1}-1}2=O(\sqrt n)$, and the set of interior breakpoints of $u_\lambda$ on $[N]$, the $s<N$ with $u_\lambda(s)\ne u_\lambda(s+1)$, is contained in the set of distinct part sizes of $\lambda$ and has cardinality at most $r(\lambda)$. Consequently, for all $\lambda,\mu\vdash n$,
\begin{equation*}
k_0(v)\ \le\ r(\lambda)+r(\mu)+1\ \le\ \sqrt{8n+1} .
\end{equation*}
\end{lemma}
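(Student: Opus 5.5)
The plan is to prove the three claims in order: a counting bound on $r(\lambda)$, then the location of the breakpoints of $u_\lambda$, then the run count for $v$ obtained by combining the two.

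\textbf{Bound on $r(\lambda)$.} If $\lambda$ has $r$ distinct part sizes $s_1<\dots<s_r$, then $s_j\ge j$ for each $j$. Hence
\begin{equation*}
n\ \ge\ \sum_{j=1}^r s_j\ \ge\ \frac{r(r+1)}2 .
\end{equation*}
Solving $r^2+r-2n\le0$ gives $r\le\frac{\sqrt{8n+1}-1}2$.

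\textbf{Breakpoints of $u_\lambda$.} By the identity in the proof of Lemma~\ref{lem:2.6}, $u_\lambda(s)-u_\lambda(s+1)=s\,m_s(\lambda)$ for $1\le s<N$. With the zero extension above $\lambda_1$, this is still valid for $s\ge\lambda_1$, since there $m_s(\lambda)=0$ for $s>\lambda_1$. So $u_\lambda(s)\ne u_\lambda(s+1)$ holds iff $m_s(\lambda)>0$, that is, iff $s$ is a part size. The interior breakpoints are therefore contained in the distinct part sizes, and there are at most $r(\lambda)$ of them.

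\textbf{Runs of $v$.} A breakpoint of $v=u_\lambda-u_\mu$ at $s$ forces a breakpoint of $u_\lambda$ or of $u_\mu$ at $s$. So $v$ has at most $r(\lambda)+r(\mu)$ interior breakpoints, and $[N]$ splits into at most $r(\lambda)+r(\mu)+1$ maximal constant runs. The number $k_0(v)$ of non-zero runs is at most this.

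\textbf{Final constant.} It remains to check $r(\lambda)+r(\mu)+1\le\sqrt{8n+1}$. Using the first bound for both partitions,
\begin{equation*}
r(\lambda)+r(\mu)+1\ \le\ 2\cdot\frac{\sqrt{8n+1}-1}2+1\ =\ \sqrt{8n+1}.
\end{equation*}
There is no real obstacle. The only care point is the boundary convention: the zero extension of $u_\lambda$ beyond $\lambda_1$ must agree with the telescoping identity, so that the drop at $s=\lambda_1$ counts as a breakpoint at a part size and no spurious breakpoint arises elsewhere. This is consistent with the $v(N+1)=0$ convention used in Lemma~\ref{lem:2.6}.
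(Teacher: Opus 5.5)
Your proof is correct and follows the paper's own argument step for step. You use $s_j\ge j$ to get $n\ge r(r+1)/2$, the telescoping identity $u_\lambda(s)-u_\lambda(s+1)=s\,m_s(\lambda)$ to place every interior breakpoint at a part size, and the union of the two breakpoint sets to bound the number of constant runs of $v$ by $r(\lambda)+r(\mu)+1$.
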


\begin{proof}
Let the distinct part sizes of $\lambda$ be $s_1<\dots<s_r$, $r=r(\lambda)$. All are positive integers, so $s_j\ge j$ and $n=\sum_i\lambda_i\ge\sum_js_j\ge\frac{r(r+1)}2$, giving the bound on $r$. Since $u_\lambda(s)-u_\lambda(s+1)=\sum_{i:\lambda_i=s}\lambda_i=s\,m_s(\lambda)$ and $s\ge1$, this difference vanishes iff $m_s(\lambda)=0$; hence the interior breakpoint set is contained in the set of distinct part sizes. Containment rather than equality, since if $\lambda_1=n=N$ then $s=N$ is a change of the zero extension but is not interior to $[N]$; in particular the zero-extended range contains no interior breakpoint. If $v(s)\ne v(s+1)$ then $u_\lambda$ or $u_\mu$ has a breakpoint at $s$, so the breakpoint set of $v$ has size at most $r(\lambda)+r(\mu)$ and the number of maximal constant runs is at most one more. Three degeneracies leave this intact: the range above $\max(\lambda_1,\mu_1)$ is a single constant run of value zero, absorbed by the $+1$ and excluded from $k_0$; merging of adjacent equal runs is implied by maximality and only decreases the count; and $v(1)=n-n=0$, so there is a zero run at the left end too.
\end{proof}

\begin{lemma}[dyadic capture]\label{lem:5.4}
Every integer interval $R\subseteq[N]$ of length $W\ge1$ contains some $I\in\mathcal D_N$ with $|I|>W/4$.
\end{lemma}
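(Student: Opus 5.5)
The plan is to choose the dyadic scale directly from $W$ and then use the fact that the intervals of a given level tile $[N]$ with no gaps. Write $R=\{a,\dots,a+W-1\}\subseteq[N]$. If $W\le3$, any singleton $\{s\}\subseteq R$ is a level-$0$ dyadic interval, and $1>W/4$; so we may assume $W\ge4$. Then let $\ell\ge0$ be the unique integer with $2^{\ell+1}\le W<2^{\ell+2}$, so that $2^\ell>W/4$ holds by the right-hand inequality. Since $R\subseteq[N]$ forces $W\le N=2^J$, we have $2^{\ell+1}\le N$, hence $\ell\le J-1$, and level $\ell$ is present in $\mathcal D_N$.

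The key step is a counting argument at this level. The level-$\ell$ intervals are the blocks $\{(j-1)2^\ell+1,\dots,j2^\ell\}$, and they partition $[N]$ into consecutive runs of length $2^\ell$. Let $j$ be the least index with $(j-1)2^\ell+1\ge a$. The left end of that block then satisfies $(j-1)2^\ell+1\le a+2^\ell-1$, so its right end is at most $a+2^{\ell+1}-2\le a+W-2<a+W-1$, using $W\ge2^{\ell+1}$. The block therefore lies inside $R$.

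It remains to check that this block actually exists, that is, $j\le N/2^\ell$. This follows because its right end is at most $a+W-1\le N$. The block is then the required $I$, with $|I|=2^\ell>W/4$.

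There is no real obstacle here. The only points needing care are the boundary bookkeeping (integer endpoints, and the requirement that the chosen level satisfy $\ell\le J$) and the small cases $W<4$, where level $0$ is used.
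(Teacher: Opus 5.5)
Your proof is correct and is essentially the paper's argument: the same level $\ell=\lfloor\log_2W\rfloor-1$ and the same choice of the least level-$\ell$ left endpoint $\ge a$, placed within $2^\ell-1$ of $a$. The differences are cosmetic. You treat $W\le3$ directly by singletons, where the paper writes $\ell=\max(0,\lfloor\log_2W\rfloor-1)$. You also check explicitly that the chosen block has index at most $N/2^\ell$, which the paper leaves implicit.
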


\begin{proof}
Take $\ell=\max(0,\lfloor\log_2W\rfloor-1)$. If $W=1$ then $R$ is a point, which lies at level $0$, and $1>\frac14$. If $W\ge2$ then
\begin{equation*}
2^\ell\le\frac W2,\qquad 2^{\ell+1}=2^{\lfloor\log_2W\rfloor}>\frac W2\ \Longrightarrow\ 2^\ell>\frac W4 ,
\end{equation*}
and $\ell\le\log_2W-1\le J-1$, so level $\ell$ exists. Writing $R=\{a,\dots,a+W-1\}$ and letting $a'\ge a$ be the least level-$\ell$ left endpoint, we have $a'-a\le2^\ell-1$ and $a'+2^\ell-1\le a+2^{\ell+1}-2<a+W-1$, so that interval lies inside $R$.
\end{proof}

\begin{theorem}[upper bound for the unweighted map]\label{thm:5.5}
For all $\lambda\ne\mu\vdash n$:

\textbf{(A)} $\displaystyle\frac{\lVert v\rVert_1}{4\sqrt{k_0(v)}}\ \le\ \bigl\lVert F^{(0)}(\lambda)-F^{(0)}(\mu)\bigr\rVert_2\ \le\ \sqrt{J+1}\,\lVert v\rVert_1 .$

\textbf{(B)} With $k_{\max}(n)=\max_{\lambda\ne\mu\vdash n}k_0(v)$,
\begin{equation*}
\rho\bigl(F^{(0)}\bigr)\ \le\ 12\,\sqrt{J+1}\,\sqrt{k_{\max}(n)}\ \le\ 12\,(8n+1)^{1/4}\sqrt{\log_2n+2}\ =\ O\bigl(n^{1/4}\sqrt{\log n}\bigr).
\end{equation*}

\textbf{(C)} If the constant weights of $F^{(0)}$ are replaced by arbitrary positive weights $(\alpha_\ell)_{\ell=0}^J$, running the two steps of (A) verbatim yields a guarantee containing the factor $\bigl(\sum_\ell\alpha_\ell^2\bigr)^{1/2}/\min_\ell\alpha_\ell\ge\sqrt{J+1}$, with equality iff $\alpha$ is constant.
\end{theorem}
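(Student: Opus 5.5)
The proof of (A) has two halves, each a one-line estimate on the level energies $E_\ell(v)$ through \eqref{eq:5.1} with $\gamma=0$, so that
\begin{equation*}
\lVert F^{(0)}(\lambda)-F^{(0)}(\mu)\rVert_2^2=\sum_{\ell=0}^JE_\ell(v).
\end{equation*}

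For the upper half, note that at each level the intervals partition $[N]$, so
\begin{equation*}
E_\ell(v)\le\Bigl(\sum_{\ell(I)=\ell}\Bigl|\sum_{s\in I}v(s)\Bigr|\Bigr)^2\le\lVert v\rVert_1^2 ,
\end{equation*}
exactly as in Proposition~\ref{prop:5.2}. Summing over the $J+1$ levels gives $\sqrt{J+1}\,\lVert v\rVert_1$.

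For the lower half, I use the run structure. Split $[N]$ into the maximal constant runs of $v$. Among the $k_0=k_0(v)$ non-zero runs, pick one run $R$ of width $W$ and value $c$ maximizing the mass $|c|W$, so that $|c|W\ge\lVert v\rVert_1/k_0$. By Lemma~\ref{lem:5.4} there is a dyadic $I\subseteq R$ with $|I|>W/4$. On $I$ the sequence $v$ is constant, hence
\begin{equation*}
\Bigl|\sum_{s\in I}v(s)\Bigr|=|c||I|>|c|W/4 .
\end{equation*}
A single coordinate therefore already gives $\lVert F^{(0)}(\lambda)-F^{(0)}(\mu)\rVert_2\ge\lVert v\rVert_1/(4k_0)$. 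This has $k_0$ rather than $\sqrt{k_0}$, so I need to improve it.

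To get $\sqrt{k_0}$, I use every run rather than one. Each non-zero run $R_j$ contains a dyadic $I_j$ capturing more than a quarter of its mass, and distinct runs give distinct intervals. I then need to combine their squares. Grouping by level does not help, since the $I_j$ at one level are disjoint coordinates whose squares simply add. So the total energy is at least
\begin{equation*}
\sum_j\Bigl(\tfrac{|c_j|W_j}{4}\Bigr)^2\ge\frac{1}{16k_0}\Bigl(\sum_j|c_j|W_j\Bigr)^2=\frac{\lVert v\rVert_1^2}{16k_0}
\end{equation*}
by Cauchy--Schwarz. Here I must check that the $I_j$ are distinct coordinates. They are, because the runs are disjoint and each $I_j$ lies inside its own run. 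This is the only point needing care, and it is the main (mild) obstacle: making sure zero runs are excluded and that the capture lemma applies inside $[N]$. Taking square roots yields $\lVert v\rVert_1/(4\sqrt{k_0})$.

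For (B), I combine (A) with Lemma~\ref{lem:2.5} and Theorem~\ref{thm:3.1}(B), which give $\frac12\delta_1\le q^*<\frac32\delta_1$ with $\delta_1=\lVert v\rVert_1$ (Lemma~\ref{lem:2.4}). This yields
\begin{equation*}
L^+\le2\sqrt{J+1},\qquad L^-\ge\frac{2}{3}\cdot\frac{1}{4\sqrt{k_{\max}}} ,
\end{equation*}
so $\rho\le12\sqrt{J+1}\sqrt{k_{\max}}$. Inserting $k_{\max}\le\sqrt{8n+1}$ from Lemma~\ref{lem:5.3} and $J+1\le\log_2n+2$ gives the stated bound.

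For (C), the same two steps with weights $\alpha_\ell$ behave as follows:
\begin{itemize}
\item the upper step gives $\bigl(\sum_\ell\alpha_\ell^2\bigr)^{1/2}\lVert v\rVert_1$;
\item the lower step can only guarantee $\min_\ell\alpha_\ell$ times the old bound, because the capturing levels are uncontrolled.
\end{itemize}
The ratio therefore contains $\bigl(\sum_\ell\alpha_\ell^2\bigr)^{1/2}/\min_\ell\alpha_\ell$. This is at least $\sqrt{J+1}$, since $\sum_\ell\alpha_\ell^2\ge(J+1)\min_\ell\alpha_\ell^2$, with equality iff all $\alpha_\ell$ are equal.
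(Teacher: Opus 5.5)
Your proof is correct and follows the paper's argument essentially verbatim. You bound each level's energy by $\lVert v\rVert_1^2$ for the upper estimate, and for the lower estimate you apply Lemma~\ref{lem:5.4} inside each non-zero run to get $k_0$ distinct captured coordinates, then use Cauchy--Schwarz. Part (B) then follows from Lemma~\ref{lem:2.5}, Theorem~\ref{thm:3.1}(B) and Lemma~\ref{lem:5.3}, and part (C) reruns the two steps with weights $\alpha_\ell$. Your initial single-run estimate, with $k_0$ in place of $\sqrt{k_0}$, is unnecessary but harmless, since you replace it.
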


\begin{proof}
\textbf{Upper estimate in (A).} Fix a level $\ell$. Its intervals partition $[N]$, so $\sum_j|\langle v,\mathbf 1_{I_{\ell,j}}\rangle|\le\lVert v\rVert_1$, and since $\sum_jx_j^2\le(\sum_jx_j)^2$ for non-negative $x_j$, that level contributes at most $\lVert v\rVert_1^2$ to the squared norm. Summing over $\ell=0,\dots,J$ gives the bound; each coordinate $s$ lies in one interval per level and is counted $J+1$ times, and signed cancellation only decreases $|\langle v,\mathbf 1_I\rangle|$.

\textbf{Lower estimate in (A).} Let $R_1,\dots,R_{k_0}$ be the runs on which $v$ is non-zero, $v\equiv a_j\ne0$ on $R_j$ with $W_j=|R_j|$. They are disjoint and zero runs contribute nothing, so $\lVert v\rVert_1=\sum_j|a_j|W_j$. For each $j$ take $I_j\subseteq R_j$ from Lemma~\ref{lem:5.4} with $|I_j|>W_j/4$; then $|\langle v,\mathbf 1_{I_j}\rangle|=|a_j||I_j|>\frac{|a_j|W_j}4$. The $I_j$ are $k_0$ distinct coordinates, their distinctness following from the disjointness of the runs alone. Keeping only these and using $(\sum_jx_j)^2\le k_0\sum_jx_j^2$,
\begin{equation*}
\bigl\lVert F^{(0)}(\lambda)-F^{(0)}(\mu)\bigr\rVert_2^2\ \ge\ \sum_j\Bigl(\frac{|a_j|W_j}4\Bigr)^2\ \ge\ \frac1{k_0}\Bigl(\sum_j\frac{|a_j|W_j}4\Bigr)^2=\frac{\lVert v\rVert_1^2}{16\,k_0} .
\end{equation*}

\textbf{(B).} Lemma~\ref{lem:2.5} gives $\lVert v\rVert_1\le2q^*$ and Theorem~\ref{thm:3.1}(B) gives $\lVert v\rVert_1>\frac23q^*$; substituting into (A),
\begin{equation*}
L^+\bigl(F^{(0)}\bigr)\le2\sqrt{J+1},\qquad \frac1{L^-(F^{(0)})}\le6\sqrt{k_{\max}(n)} ,
\end{equation*}
and multiplying gives the first inequality; Lemma~\ref{lem:5.3} and $J+1\le\log_2n+2$ give the rest.

\textbf{(C).} Rerun the two estimates: the upper end becomes $(\sum_\ell\alpha_\ell^2)^{1/2}\lVert v\rVert_1$ and the lower end $\frac{\min_\ell\alpha_\ell}{4\sqrt{k_0}}\lVert v\rVert_1$, and
\begin{equation*}
\frac{(\sum_{\ell=0}^J\alpha_\ell^2)^{1/2}}{\min_\ell\alpha_\ell}\ \ge\ \frac{((J+1)\min_\ell\alpha_\ell^2)^{1/2}}{\min_\ell\alpha_\ell}=\sqrt{J+1},
\end{equation*}
with equality iff all $\alpha_\ell$ coincide.
\end{proof}

Part (C) says that no reweighting can improve the \emph{guarantee that this argument yields}; \S\ref{sec:6} obtains a better bound for $\gamma=\frac14$ by an argument of a different kind, namely one that uses the arithmetic of realizable staircases rather than the dyadic system alone, and \S\ref{sec:6.1} explains why there is no contradiction. First we show that for $\gamma=0$ the bound in (B) is the truth, not merely a guarantee: the two Lipschitz constants are saturated by two families that are present in $\mathcal K_n/\!\cong$ simultaneously.

\begin{lemma}[spike pair: the expansion witness]\label{lem:5.6}
Let $m=\lfloor n/2\rfloor$ and take $\lambda^{\mathrm{sp}}=(m,1^{\,n-m})$, $\mu^{\mathrm{sp}}=(m-1,1^{\,n-m+1})$. Then $w:=v_{\lambda^{\mathrm{sp}},\mu^{\mathrm{sp}}}$ satisfies $w(s)=1$ for $2\le s\le m-1$, $w(m)=m$, $\lVert w\rVert_1=2m-2$, and
\begin{equation*}
\frac{\bigl\lVert F^{(0)}(\lambda^{\mathrm{sp}})-F^{(0)}(\mu^{\mathrm{sp}})\bigr\rVert_2}{q^*(\lambda^{\mathrm{sp}},\mu^{\mathrm{sp}})}\ \ge\ \frac{\sqrt{J+1}\;m}{\tfrac32(2m-2)}\ >\ \frac{\sqrt{J+1}}3 .
\end{equation*}
\end{lemma}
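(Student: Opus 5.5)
The argument splits into three steps: compute $w$ explicitly, bound $\lVert F^{(0)}(\lambda^{\mathrm{sp}})-F^{(0)}(\mu^{\mathrm{sp}})\rVert_2$ from below using the spike at $s=m$, and bound $q^*$ from above by Theorem~\ref{thm:3.1}(B).

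\textbf{Computing $w$.} Write $u_\lambda(s)=\sum_{i:\lambda_i\ge s}\lambda_i$. For $\lambda^{\mathrm{sp}}$ this gives $u=n$ at $s=1$, $u=m$ for $2\le s\le m$, and $0$ beyond. For $\mu^{\mathrm{sp}}$ it gives $u=n$ at $s=1$, $u=m-1$ for $2\le s\le m-1$, and $0$ for $s\ge m$. Subtracting, $w(1)=0$, $w(s)=1$ for $2\le s\le m-1$, $w(m)=m$, and $w$ vanishes above $m$. Hence $\lVert w\rVert_1=(m-2)+m=2m-2$. This needs $m\ge2$, which holds for $n\ge4$; I would state that restriction, or treat smaller $n$ separately. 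The jumps $-(m-1)\cdot 1$ at $s=m-1$ and $m\cdot 1$ at $s=m$ agree with Lemma~\ref{lem:2.6}, which serves as a check.

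\textbf{Lower bound on the numerator.} By \eqref{eq:5.1} at $\gamma=0$ the squared distance is $\sum_{\ell=0}^J E_\ell(w)$. Since $w\ge0$ everywhere, there is no signed cancellation. At each level $\ell$, the dyadic interval $I_\ell\ni m$ has $\sum_{s\in I_\ell}w(s)\ge w(m)=m$, so $E_\ell(w)\ge m^2$. Levels $0,\dots,J$ all exist because $m\le n\le N$. Summing over the $J+1$ levels gives $\lVert F^{(0)}(\lambda^{\mathrm{sp}})-F^{(0)}(\mu^{\mathrm{sp}})\rVert_2\ge\sqrt{J+1}\,m$. This is the coherence of the spike across scales.

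\textbf{Upper bound on $q^*$ and conclusion.} Theorem~\ref{thm:3.1}(B) gives $q^*<\frac32\delta_1=\frac32\lVert w\rVert_1=\frac32(2m-2)$, using Lemma~\ref{lem:2.4}. Alternatively, Theorem~\ref{thm:3.1}(A) computes $q^*$ exactly: the move takes one vertex from the block of size $m$ to a new block, so $q^*=m-1$. Either bound suffices. Dividing gives the displayed middle bound, and $\frac{m}{\frac32(2m-2)}=\frac{m}{3(m-1)}>\frac13$ gives the final strict inequality. I expect no real obstacle here. The only care needed is the boundary cases, small $m$ and $m=N$, for the value of $w$ and the existence of every level.
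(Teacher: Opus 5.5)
Your proposal is correct and follows the paper's proof: compute the two staircases directly, observe that at each of the $J+1$ levels the dyadic interval containing $m$ has a non-negative coefficient of at least $m$ (so $\lVert F^{(0)}w\rVert_2^2\ge(J+1)m^2$), and close with $q^*<\frac32\lVert w\rVert_1$ from Theorem~\ref{thm:3.1}(B). Two of your side remarks are correct and go slightly beyond what the paper writes: the statement needs $m\ge2$, i.e.\ $n\ge4$, and Theorem~\ref{thm:3.1}(A) in fact gives $q^*=m-1$ exactly.
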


\begin{proof}
The formula for $w$ is a direct computation of the two staircases: for $2\le s\le m-1$ the large block contributes $m$ on one side and $m-1$ on the other, and at $s=m$ only $\lambda^{\mathrm{sp}}$ contributes. At each level exactly one dyadic interval contains the position $m$, and its coefficient is $m$ plus a non-negative contribution from the plateau; hence $\lVert F^{(0)}w\rVert_2^2\ge(J+1)m^2$. Conclude with $q^*<\frac32\lVert w\rVert_1$ (Theorem~\ref{thm:3.1}(B)).
\end{proof}

The spike is \emph{coherent across all $J+1$ scales}: a single tall, narrow feature is seen, at full height, by one interval at every level, so the unweighted functional charges it $\sqrt{J+1}$ times what the $\ell_1$ mass warrants. The opposite extreme is a feature that cancels above the finest scale.

\begin{lemma}[breathing pairs: the contraction witness]\label{lem:5.7}
Let $n\ge2^{10}$. Let $h_0$ be the least even integer $\ge2\sqrt n$, let $k=\lfloor\sqrt n/16\rfloor\ge2$, and put $b_i=h_0+4i$ for $1\le i\le k$, all even. Define
\begin{equation*}
\lambda^{\mathrm{br}}=\bigl\{b_i,\ b_i+2:1\le i\le k\bigr\}\cup\{1^{\,f}\},\qquad \mu^{\mathrm{br}}=\bigl\{b_i+1,\ b_i+1:1\le i\le k\bigr\}\cup\{1^{\,f}\},
\end{equation*}
with $f=n-\sum_i(2b_i+2)$ common unit blocks. Then both are partitions of $n$, and with $v=v_{\lambda^{\mathrm{br}},\mu^{\mathrm{br}}}$,
\begin{equation*}
v=\sum_{i=1}^k\Bigl(-b_i\,\mathbf 1_{\{b_i+1\}}+(b_i+2)\,\mathbf 1_{\{b_i+2\}}\Bigr),\qquad \lVert v\rVert_1=\sum_i(2b_i+2)\in\Bigl[\frac n8,\frac n2\Bigr],
\end{equation*}
\begin{equation*}
E_0(v)\in\Bigl[\frac{n^{3/2}}4,\ n^{3/2}\Bigr],\qquad \sum_{\ell=1}^JE_\ell(v)\le12k^2J\le\frac3{64}\,nJ ,
\end{equation*}
and consequently
\begin{equation*}
\frac{q^*(\lambda^{\mathrm{br}},\mu^{\mathrm{br}})}{\bigl\lVert F^{(0)}(\lambda^{\mathrm{br}})-F^{(0)}(\mu^{\mathrm{br}})\bigr\rVert_2}\ \ge\ \frac{n^{1/4}}{18} .
\end{equation*}
\end{lemma}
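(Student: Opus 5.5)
The proof is direct computation. First compute $v$ from the jump identity of Lemma~\ref{lem:2.6}, then read off the three energy estimates, then combine them with Lemma~\ref{lem:2.5}. The common unit blocks cancel in every multiplicity difference, so the only non-zero jumps come from the $k$ triples. Since $m_{b_i}(\lambda)-m_{b_i}(\mu)=1$, $m_{b_i+1}(\lambda)-m_{b_i+1}(\mu)=-2$ and $m_{b_i+2}(\lambda)-m_{b_i+2}(\mu)=1$,
\begin{equation*}
\Delta v(b_i)=b_i,\qquad \Delta v(b_i+1)=-2(b_i+1),\qquad \Delta v(b_i+2)=b_i+2 .
\end{equation*}
Each triple sums to $0$. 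Moreover consecutive $b_i$ differ by $4$, so the triples do not interleave. Summing the jumps from the right therefore gives $v(b_i+1)=-b_i$, $v(b_i+2)=b_i+2$, and $v=0$ everywhere else, which is the displayed formula. It follows that $\lVert v\rVert_1=\sum_i(2b_i+2)$.

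Next comes legality together with the bracket $[\frac n8,\frac n2]$. Use $2\sqrt n\le h_0\le2\sqrt n+2$, $b_i\le h_0+4k$, and $\frac{\sqrt n}{16}-1<k\le\frac{\sqrt n}{16}$. The upper bound $\sum_i(2b_i+2)\le 2k(2\sqrt n+2+\frac{\sqrt n}4)+2k$ is then below $n/2$, which gives $f\ge0$. The lower bound is $4k\sqrt n\ge n/8$, once the floor loss is absorbed using $n\ge2^{10}$. The energy $E_0(v)=\sum_i\bigl(b_i^2+(b_i+2)^2\bigr)$ is handled by the same elementary estimates: it is at least $2k\cdot4n$ and at most $2k(3\sqrt n)^2$ up to lower-order terms, so it lies in $[\frac{n^{3/2}}4,n^{3/2}]$. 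I expect this bookkeeping with floors and the ceiling in $h_0$ to be the only fiddly step. It is where the threshold $n\ge2^{10}$ gets used, and I would check the extreme case $n=2^{10}$, $k=2$ by hand.

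The key structural observation concerns the higher levels. Because $b_i$ is even, $b_i+1$ is odd, so the pair $\{b_i+1,b_i+2\}$ is exactly a level-$1$ dyadic interval $\{2j-1,2j\}$. By nesting, at every level $\ell\ge1$ each pair lies inside a single interval. The sum of $v$ over an interval $I$ of level $\ell\ge1$ is therefore $2c_I$, where $c_I$ is the number of pairs inside $I$, and $\sum_I c_I=k$. Hence $E_\ell(v)\le(2k)^2=4k^2$ for each $\ell\ge1$. Summing over the $J$ levels gives at most $4k^2J\le12k^2J$, and $k^2\le n/256$ yields $\frac3{64}nJ$.

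Finally, by \eqref{eq:5.1} at $\gamma=0$ we have $\lVert F^{(0)}(\lambda)-F^{(0)}(\mu)\rVert_2^2=E_0+\sum_{\ell\ge1}E_\ell$. Since $J\le\log_2n+1\le\sqrt n$ for $n\ge2^{10}$, this is at most $n^{3/2}(1+\frac3{64})$. On the other side, Lemma~\ref{lem:2.5} gives $q^*\ge\frac12\lVert v\rVert_1\ge\frac n{16}$. The ratio is therefore at least $n^{1/4}/\bigl(16\sqrt{67/64}\bigr)\ge n^{1/4}/18$.
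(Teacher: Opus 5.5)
Your proof follows the paper's in structure: the dipole form of $v$, the observation that each dipole $\{b_i+1,b_i+2\}$ is a level-$1$ dyadic interval, and $q^*\ge\frac12\lVert v\rVert_1$ from Lemma~\ref{lem:2.5}. Everything holds except one estimate.

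\textbf{The upper bound on $E_0$.} As written it does not follow. With $b_i+2\le3\sqrt n$ and $2k\le\sqrt n/8$ you get only $E_0\le2k(3\sqrt n)^2\le\frac98n^{3/2}$. The excess sits in the leading term, not in a lower-order one, so the claimed $E_0\le n^{3/2}$ is not established.

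The fix is to keep the true leading constant. Using $h_0\le2\sqrt n+2$ and $4k\le\sqrt n/4$,
$b_i+2\le b_k+2=h_0+4k+2\le\frac94\sqrt n+4\le\frac{19}8\sqrt n$ for $n\ge2^{10}$. Hence $E_0\le2k(b_k+2)^2\le\frac{\sqrt n}8\cdot\frac{361}{64}n<0.71\,n^{3/2}$.

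The final ratio would survive even the loose bound, since $16\sqrt{75/64}<18$. But the bracket for $E_0$ stated in the lemma needs the sharper estimate.

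\textbf{Where you differ from the paper: the levels $\ell\ge1$.} Here your route is better. From $E_\ell=4\sum_Ic_I^2\le4\bigl(\sum_Ic_I\bigr)^2=4k^2$ you get $\sum_{\ell\ge1}E_\ell\le4k^2J$ in one line. The paper instead uses the spacing of the points $b_i+1$ to prove the finer per-level bound $E_\ell\le4k\min(k,2^{\ell-2}+1)$. It then splits the sum at $\ell^*$ and ends with the weaker $12k^2J$. That refinement buys nothing for the stated inequality, so your estimate is simpler and a factor $3$ sharper.

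Your last step, using $J\le\sqrt n$ to get the factor $\sqrt{67/64}$, is cruder than the paper's factor $1.008$ but amply sufficient.
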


\begin{proof}
\emph{Feasibility.} Since $k\le\sqrt n/16$ and $h_0\le2\sqrt n+2$,
\begin{equation*}
\sum_i(2b_i+2)=2kh_0+4k(k+1)+2k\ \le\ \frac n4+\frac{\sqrt n}4+\frac n{64}+\frac{\sqrt n}4+\frac{\sqrt n}8\ <\ \frac n2
\end{equation*}
for $n\ge2^{10}$, so $f>0$; and the largest part is $b_k+2\le2\sqrt n+\frac{\sqrt n}4+4<N$.

\emph{Dipole form.} $u_\lambda$ is additive over the multiset of parts, and the $f$ unit blocks contribute identically on both sides, so $v$ is the sum over $i$ of the difference produced by the single exchange $\{b,b+2\}\to\{b+1,b+1\}$ with $b=b_i$. For that exchange both sides contribute $2b+2$ for $s\le b$; at $s=b+1$ the left contributes $b+2$ and the right $2b+2$; at $s=b+2$ only the left contributes $b+2$; nothing above. This is the displayed dipole, and the $b_i$ being spaced by $4$ the supports are disjoint, whence $\lVert v\rVert_1=\sum_i(b_i+(b_i+2))$. The upper bound was just shown; for the lower, $\lVert v\rVert_1\ge2kh_0\ge2\bigl(\frac{\sqrt n}{16}-1\bigr)2\sqrt n=\frac n4-4\sqrt n\ge\frac n8$ for $n\ge2^{10}$.

\emph{Level $0$.} $E_0(v)=\sum_i\bigl(b_i^2+(b_i+2)^2\bigr)$. Below, $E_0\ge2kh_0^2\ge2\bigl(\frac{\sqrt n}{16}-1\bigr)4n\ge\frac{n^{3/2}}2-8n\ge\frac{n^{3/2}}4$. Above, $E_0\le2k(b_k+2)^2\le\frac{\sqrt n}8\bigl(2.25\sqrt n+4\bigr)^2\le n^{3/2}$ for $n\ge2^{10}$.

\emph{Levels $\ell\ge1$.} Each $b_i$ is even, so the dipole support $\{b_i+1,b_i+2\}$ is itself a level-$1$ dyadic interval and therefore lies inside exactly one interval at every level $\ell\ge1$: no dipole straddles a boundary. Its contribution to the coefficient of the containing interval is $-b_i+(b_i+2)=2$. Hence at level $\ell$ the coefficient of $I$ is $2t_I$ with $t_I=\#\{i:b_i+1\in I\}$, and $E_\ell=4\sum_It_I^2$. The points $b_i+1$ are spaced exactly $4$ apart, so $t_I\le\min(k,2^{\ell-2}+1)$, while $\sum_It_I=k$; hence $E_\ell\le4k\min(k,2^{\ell-2}+1)$. Splitting the sum at the least $\ell^*$ with $2^{\ell^*-2}\ge k$, and using $2^{\ell^*-2}<2k$,
\begin{equation*}
\sum_{\ell=1}^JE_\ell\ \le\ 4k\Bigl[\sum_{\ell<\ell^*}\bigl(2^{\ell-2}+1\bigr)+\sum_{\ell\ge\ell^*}k\Bigr]\ \le\ 4k\bigl[2k+J+kJ\bigr]\ \le\ 12k^2J
\end{equation*}
for $k\ge2$ and $J\ge10$; finally $k^2\le n/256$.

\emph{Conclusion.} By Lemma~\ref{lem:2.5}, $q^*\ge\frac12\lVert v\rVert_1\ge\frac n{16}$, while by \eqref{eq:5.1} at $\gamma=0$,
\begin{equation*}
\bigl\lVert F^{(0)}v\bigr\rVert_2\le\sqrt{n^{3/2}+\tfrac3{64}nJ}\ \le\ n^{3/4}\sqrt{1+\tfrac{3J}{64\sqrt n}}\ \le\ 1.008\,n^{3/4}
\end{equation*}
for $n\ge2^{10}$, the correction being decreasing in $n$. The quotient is at least $n^{1/4}/(16\cdot1.008)>n^{1/4}/18$.
\end{proof}

A breathing train is the exact opposite of a spike: each dipole has height $\Theta(\sqrt n)$ and width $2$ but \textbf{net sum $2$}, so it is nearly mean-free and is therefore invisible to every level above the first. All of its energy sits at level $0$, where the unweighted functional charges it far less than its $\ell_1$ mass. The two witnesses combine.

\begin{theorem}[the distortion of the unweighted map is pinned]\label{thm:5.8}
For $n\ge2^{10}$,
\begin{equation*}
\rho\bigl(F^{(0)}\bigr)\ \ge\ \frac{\sqrt{J+1}}3\cdot\frac{n^{1/4}}{18}\ \ge\ \frac{n^{1/4}\sqrt{\log_2n}}{54} ,
\end{equation*}
and hence, with Theorem~\ref{thm:5.5}(B),
\begin{equation*}
\rho\bigl(F^{(0)}\bigr)=\Theta\bigl(n^{1/4}\sqrt{\log n}\bigr).
\end{equation*}
\end{theorem}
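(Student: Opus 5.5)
The plan is to combine the two witnesses of Lemmas~\ref{lem:5.6} and~\ref{lem:5.7}, which live in the same metric space $\mathcal K_n/\!\cong$ for each $n\ge2^{10}$. By definition $\rho(F^{(0)})=L^+(F^{(0)})/L^-(F^{(0)})$, where both Lipschitz constants are taken against $q^*$ over all pairs $\lambda\ne\mu\vdash n$. Any single pair therefore bounds $L^+$ from below, and any single pair bounds $L^-$ from above.

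First, the spike pair $(\lambda^{\mathrm{sp}},\mu^{\mathrm{sp}})$ of Lemma~\ref{lem:5.6} gives
\begin{equation*}
L^+\bigl(F^{(0)}\bigr)\ \ge\ \frac{\lVert F^{(0)}(\lambda^{\mathrm{sp}})-F^{(0)}(\mu^{\mathrm{sp}})\rVert_2}{q^*(\lambda^{\mathrm{sp}},\mu^{\mathrm{sp}})}\ >\ \frac{\sqrt{J+1}}3 .
\end{equation*}
One should check that the lemma's hypotheses hold here: $m=\lfloor n/2\rfloor\ge3$, so that the plateau and spike are as described. Second, the breathing pair of Lemma~\ref{lem:5.7}, valid exactly for $n\ge2^{10}$, gives
\begin{equation*}
L^-\bigl(F^{(0)}\bigr)\ \le\ \frac{\lVert F^{(0)}(\lambda^{\mathrm{br}})-F^{(0)}(\mu^{\mathrm{br}})\rVert_2}{q^*(\lambda^{\mathrm{br}},\mu^{\mathrm{br}})}\ \le\ \frac{18}{n^{1/4}} .
\end{equation*}
Dividing the two bounds yields $\rho(F^{(0)})\ge\frac{\sqrt{J+1}}3\cdot\frac{n^{1/4}}{18}$, which is the first inequality.

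For the second inequality, use $J=\lceil\log_2 n\rceil\ge\log_2 n$. This gives $\sqrt{J+1}\ge\sqrt{\log_2 n}$, and hence the bound $n^{1/4}\sqrt{\log_2 n}/54$.

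Finally, the $\Theta$ statement follows by pairing this lower bound with the matching upper bound of Theorem~\ref{thm:5.5}(B), $\rho(F^{(0)})\le 12(8n+1)^{1/4}\sqrt{\log_2 n+2}=O(n^{1/4}\sqrt{\log n})$. For $n\ge2^{10}$ the upper and lower bounds agree up to absolute constants.

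There is no real obstacle, since all the substance sits in the two lemmas. The only point needing care is the observation that product distortion lets the expansion and contraction be witnessed by different pairs, so no single pair has to exhibit both.
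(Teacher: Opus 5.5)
Your proposal is correct and follows the paper's proof: the spike pair of Lemma~\ref{lem:5.6} bounds $L^+(F^{(0)})$ from below, the breathing pair of Lemma~\ref{lem:5.7} bounds $1/L^-(F^{(0)})$ from below, and you multiply. The remaining steps, $J\ge\log_2 n$ for the second inequality and Theorem~\ref{thm:5.5}(B) for the $\Theta$, are the routine completions that the paper's one-line proof leaves implicit.
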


\begin{proof}
$\rho=L^+/L^-$, and $L^+$ is at least the ratio realized by the spike pair (Lemma~\ref{lem:5.6}) while $1/L^-$ is at least the reciprocal ratio realized by the breathing pair (Lemma~\ref{lem:5.7}). Multiply.
\end{proof}

\begin{figure}[tb]
\centering
\includegraphics[width=\textwidth]{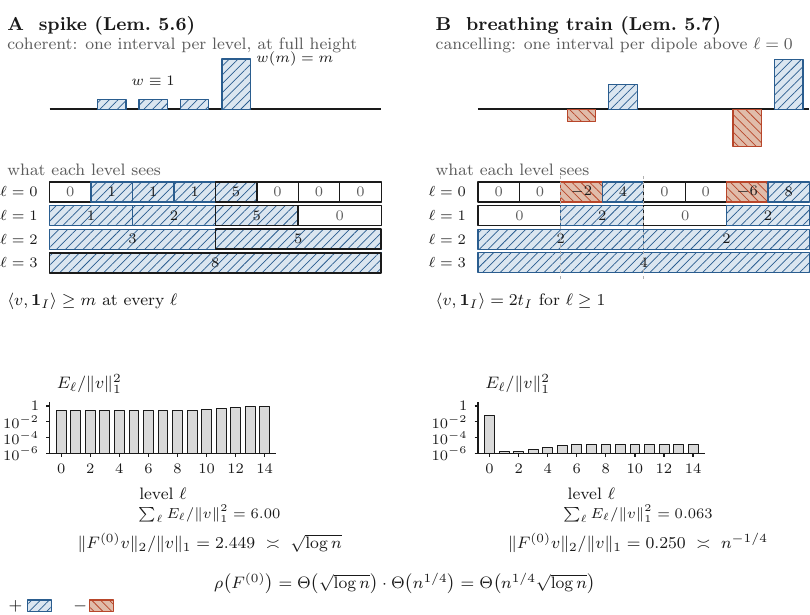}
\caption{\textbf{Two witnesses, one conflict of levels.}
\textbf{A.} The spike pair of Lemma~\ref{lem:5.6} at $n=10$, $m=5$. Exactly
one dyadic interval per level contains the spike, and every one of them sees
it at full height (heavy outline), so no level is negligible.
\textbf{B.} A dipole train of the kind used in Lemma~\ref{lem:5.7}, drawn with
two dipoles. At level $0$ the two halves of a dipole are counted apart; at
every coarser level a dipole lies inside a single interval (dashed guides) and
contributes only its net~$2$. Both mini-instances are exact difference vectors
of partition pairs, not sketches.
Below, the level energies of the two witness families at $n=2^{14}$, each
divided by $\lVert v\rVert_1^2$ so that the two panels share one ruler: the
spike holds every level within a factor $4$ of its largest, the breathing
train puts $99.8\%$ at level~$0$. The first ratio grows like $\sqrt{\log n}$
and the second decays like $n^{-1/4}$; their quotient is the distortion of
$F^{(0)}$ (Theorem~\ref{thm:5.8}). Legend as in Fig.~\ref{fig:transportation}.}
\label{fig:levelconflict}
\end{figure}

So no sharper analysis of $F^{(0)}$ can reach the class-optimal order $n^{1/4}$: the loss is not in the estimates but in the map. The obstruction is conceptual, and it is a conflict of scales. A spike is coherent across all $J+1$ levels and an unweighted multiscale $\ell_2$ functional counts it $J+1$ times; a breathing train is incoherent above level $1$ and the same functional counts it once. Figure~\ref{fig:levelconflict} shows the two witnesses and their level spectra on a common ruler, the quotient of the two being exactly the distortion just computed. No uniform weight can price both correctly, and Theorem~\ref{thm:5.5}(C) says that within the two-step argument no choice of weights can either. Both statements point to the same repair: discount the coarse levels, and prove the lower bound by a method that uses more about $v$ than its $\ell_1$ norm.

\subsection{Why the critical exponent is $\gamma=\frac14$}\label{sec:5.3}
Discounting coarse levels cannot be carried too far, and the two flanks pin the exponent. On one side, Proposition~\ref{prop:5.2} already shows that any $\gamma>0$ removes the $\sqrt{\log n}$ from $L^+$, so the spike is repaired immediately. On the other side, over-discounting destroys the coarse levels that carry \emph{coherent} mass, and the witness is the most classical pair in the space.

\begin{proposition}[super-critical weights fail on the classical pair]\label{prop:5.9}
Let $\frac14<\gamma<\frac12$ and take $\lambda=(n)$, $\mu=(1^{\,n})$, so that $v=n\cdot\mathbf 1_{[2,n]}$. Then
\begin{equation*}
\bigl\lVert F^{(\gamma)}(\lambda)-F^{(\gamma)}(\mu)\bigr\rVert_2=\Theta_\gamma\bigl(n\cdot n^{1-\gamma}\bigr),\qquad \frac{q^*}{\lVert F^{(\gamma)}v\rVert_2}=\Theta_\gamma\bigl(n^{\gamma}\bigr) ,
\end{equation*}
and since the spike pair of Lemma~\ref{lem:5.6} gives $\lVert F^{(\gamma)}w\rVert_2/q^*\ge\frac13$ for every $\gamma\ge0$,
\begin{equation*}
\rho\bigl(F^{(\gamma)}\bigr)=\Omega_\gamma\bigl(n^{\gamma}\bigr)\ \gg\ n^{1/4}\qquad\Bigl(\gamma>\tfrac14\Bigr).
\end{equation*}
\end{proposition}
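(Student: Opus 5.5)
The plan is to compute everything explicitly for the pair $\lambda=(n)$, $\mu=(1^{\,n})$, and then combine with the spike pair exactly as in Theorem~\ref{thm:5.8}.

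\textbf{Step 1: the difference vector and $q^*$.} We have $u_\lambda(s)=n$ for $1\le s\le n$. On the other side, $u_\mu(1)=n$ and $u_\mu(s)=0$ for $s\ge2$. Hence $v=n\cdot\mathbf 1_{[2,n]}$, as stated, and $\lVert v\rVert_1=n(n-1)$. The graphs are $K_n$ and the empty graph, so every alignment flips every edge, and $q^*=g(n)=n(n-1)/2=\Theta(n^2)$. (This is also immediate from Theorem~\ref{thm:2.3}, since the only table is the single row of ones.)

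\textbf{Step 2: two-sided bound on the weighted energy.} Use \eqref{eq:5.1}. For the upper bound, note that at level $\ell$ each interval's coefficient is $n\,|I\cap[2,n]|\le n2^\ell$. There are $N/2^\ell$ intervals at that level, so $E_\ell(v)\le n^2N2^\ell$. Therefore
\begin{equation*}
\sum_\ell2^{-2\gamma\ell}E_\ell(v)\le n^2N\sum_{\ell=0}^J2^{(1-2\gamma)\ell}\le C_\gamma\,n^2N\,N^{1-2\gamma}=O_\gamma\bigl(n^{4-2\gamma}\bigr).
\end{equation*}
The series is geometric with ratio $2^{1-2\gamma}>1$ because $\gamma<\frac12$, so it is dominated by its top term, with $C_\gamma=(1-2^{-(1-2\gamma)})^{-1}$; we also use $N<2n$. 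For the lower bound, keep only the top level $\ell=J$. Its single interval $[N]$ has coefficient $\lVert v\rVert_1=n(n-1)$, so $2^{-2\gamma J}E_J\ge(2n)^{-2\gamma}n^2(n-1)^2=\Omega(n^{4-2\gamma})$. Taking square roots gives $\lVert F^{(\gamma)}v\rVert_2=\Theta_\gamma(n^{2-\gamma})=\Theta_\gamma(n\cdot n^{1-\gamma})$. Dividing $q^*=\Theta(n^2)$ by this gives $q^*/\lVert F^{(\gamma)}v\rVert_2=\Theta_\gamma(n^\gamma)$, and hence $1/L^-(F^{(\gamma)})=\Omega_\gamma(n^\gamma)$.

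\textbf{Step 3: the spike side and conclusion.} For the spike pair of Lemma~\ref{lem:5.6}, the level-$0$ coordinate at position $m$ has weight $2^0=1$ and value $w(m)=m$. Therefore $\lVert F^{(\gamma)}w\rVert_2\ge m$ for every $\gamma\ge0$. Theorem~\ref{thm:3.1}(B) gives $q^*<\frac32(2m-2)$, so the ratio exceeds $m/(3m-3)>\frac13$, and thus $L^+(F^{(\gamma)})\ge\frac13$. Multiplying the two bounds gives $\rho(F^{(\gamma)})=L^+/L^-\ge\frac13\,\Omega_\gamma(n^\gamma)$. This is $\gg n^{1/4}$ since $\gamma>\frac14$.

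The only step needing care is Step 2: verifying that the geometric sum is top-heavy precisely because $\gamma<\frac12$, and handling the padding $n\le N<2n$ so that the constants depend on $\gamma$ alone. Everything else is direct computation.
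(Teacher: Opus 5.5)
Your proposal is correct and follows the paper's route: explicit computation on $K_n$ against the empty graph, a geometric series in $2^{(1-2\gamma)\ell}$ that is top-heavy for $\gamma<\frac12$, and the unweighted level-$0$ term of the spike pair. Your bookkeeping differs only in detail, and is somewhat cleaner. You bound each level crudely by its interval count times the maximal coefficient, and take the single top-level interval for the lower bound, where the paper estimates each level through interior and boundary intervals. You also use the exact value $q^*=\binom n2$ rather than only the one-sided bound of Lemma~\ref{lem:2.5}.
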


\begin{proof}
Write $T=n-1$ and $h=n$. For $2^\ell\le T$ the level-$\ell$ intervals lying entirely inside $[2,n]$ number $\Theta(T2^{-\ell})$ and each has coefficient $h2^\ell$, so $E_\ell=\Theta(h^2T2^\ell)$; the at most two boundary intervals per level add $O(h^24^\ell)$, which is absorbed for $\gamma<\frac12$. Hence
\begin{equation*}
\sum_\ell2^{-2\gamma\ell}E_\ell=\Theta\Bigl(h^2T\sum_{2^\ell\le T}2^{(1-2\gamma)\ell}\Bigr)=\Theta\bigl(h^2T^{2-2\gamma}\bigr)
\end{equation*}
for $\gamma<\frac12$. Since $\lVert v\rVert_1=hT$ and $q^*\ge\frac12\lVert v\rVert_1$ by Lemma~\ref{lem:2.5}, the displayed ratio is $\Theta(T^\gamma)=\Theta(n^\gamma)$. The expansion bound for the spike pair follows from the level-$0$ term of Lemma~\ref{lem:5.6} alone, which carries no weight.
\end{proof}

The witness is $K_n$ against the empty graph, and it is \emph{sign-coherent}: $v\ge0$ pointwise, all of its mass a single wide plateau. Such mass is invisible at level $0$ relative to its $\ell_1$ norm, the level-$0$-only functional having $E_0\sqrt n/\lVert v\rVert_1^2=\sqrt n/T\to0$ on it, and is seen only by the coarse levels, which a super-critical weight suppresses. Collecting what is proved:

\begin{itemize}
\item $\gamma=0$ over-prices scale-coherent spikes, and loses exactly $\sqrt{\log n}$ (Theorem~\ref{thm:5.8});
\item $\gamma>\frac14$ under-prices coherent plateaus, and loses $n^{\gamma-1/4}$ (Proposition~\ref{prop:5.9});
\item the level-$0$-only functional, the formal limit $\gamma\to\infty$, fails outright on plateaus.
\end{itemize}

Thus $\gamma=\frac14$ is the largest exponent not excluded by the plateau family. It is \emph{not} the smallest exponent at which $L^+$ is bounded: by Proposition~\ref{prop:5.2} every $\gamma>0$ has an absolute upper Lipschitz constant, so the lower flank excludes only $\gamma=0$ itself, and it is the plateau family, acting from above, that selects $\frac14$. We do not claim more than this. In particular we have no proof that $0<\gamma<\frac14$ is worse than $\gamma=\frac14$; the behaviour of the family on that range is not classified here, and we deliberately state no phase diagram.

\begin{remark}[a scaling heuristic for the exponent, not a proof]\label{rem:5.10}
The following order-of-magnitude computation explains why $\frac14$ is the natural balancing exponent for the two failure modes, and we record it as motivation only. Replace the width-$2$ dipoles of Lemma~\ref{lem:5.7} by dipoles of width $W$, a power of two: take blocks $\{b_i,b_i+2W\}$ against $\{b_i+W,b_i+W\}$ with $b_i\equiv0\bmod2W$ and spaced $4W$, so that $v$ takes the value $-b_i$ on an interval of width $W$ and $+(b_i+2W)$ on the next. The mass constraint $\sum_i2(b_i+W)\le n$ together with the packing requirement $b_i\gtrsim Wk$ caps the number of dipoles at $k\lesssim\sqrt{n/W}$, so that $k\sqrt W\lesssim\sqrt n$ \emph{independently of $W$}. Tracking the two surviving contributions to the ratio $\kappa^2=\lVert F^{(1/4)}v\rVert_2^2\sqrt n/\lVert v\rVert_1^2$ of \S\ref{sec:7} gives $\kappa^2\asymp(k\sqrt W)^{-1}+W^{3/2}k^{-1/2}b^{-2}$, bounded below uniformly in $W$. The exponent $\frac12$ in the energy weight $2^{-\ell/2}$ of \eqref{eq:5.1} at $\gamma=\frac14$ is the unique geometric weight under which thin and fat trains are priced alike; that invariance is the structural reason for criticality. Numerically, $\kappa$ over dipole trains of width $W\in\{2,4,\dots,256\}$ at $n=2^{16}$ lies in $[1.98,2.83]$ and is flat in $n$ at fixed $W$ (Appendix B). None of \S\ref{sec:6} depends on this remark, whose statements we do not claim as theorems.
\end{remark}

\subsection{The optimal upper bound}\label{sec:5.4}

Everything is now in place. Proposition~\ref{prop:5.2} bounds the upper Lipschitz constant of $F_n=F^{(1/4)}$ by an absolute constant; \S\ref{sec:5.2} shows that no unweighted or over-discounted member of the family can do better in order; and the lower Lipschitz constant is exactly the inverse-energy question, which Section 6 answers for the whole lattice of closed quantized sequences. We record the consequence here and prove the analytic input separately, so that Section 6 can be read on its own.

\begin{theorem}[the critical map has optimal order]\label{thm:5.11}
For all $n\ge2$,
\begin{equation*}
\rho\bigl(F_n\bigr)\ \le\ 1662\,n^{1/4}\ =\ O\bigl(n^{1/4}\bigr) .
\end{equation*}
\end{theorem}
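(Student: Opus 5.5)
The plan is to combine the upper Lipschitz bound of Proposition~\ref{prop:5.2} with the quantized inverse-energy theorem (Theorem~\ref{thm:1.2}, proved as Theorem~\ref{thm:6.2}), which we take as given here. By \eqref{eq:5.1} at $\gamma=\frac14$ we have $\lVert F_n(\lambda)-F_n(\mu)\rVert_2^2=\sum_\ell2^{-\ell/2}E_\ell(v)=\mathcal E(v)$ with $v=v_{\lambda,\mu}$. The upper constant is already in hand: $L^+(F_n)\le2\sqrt{2+\sqrt2}$. It therefore remains to bound $L^-(F_n)$ from below by $c\,n^{-1/4}$ and to track the constant.

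\textbf{Step 1: $v$ satisfies the hypotheses of Theorem~\ref{thm:1.2}.} The sequence $v$ lives on $\{1,\dots,N+1\}$ with $N=2^J$ a power of two. We have $v(1)=n-n=0$, and $v(N+1)=0$ by the zero extension, since $\lambda_1,\mu_1\le n\le N$. Lemma~\ref{lem:2.6} gives $v(s)-v(s+1)=s(m_s(\lambda)-m_s(\mu))\in s\mathbb Z$, and $v\ne0$ for $\lambda\ne\mu$ by injectivity of $\mathrm{sd}$ and Lemma~\ref{lem:2.4}. Theorem~\ref{thm:1.2} then yields
\begin{equation*}
\mathcal E(v)\ \ge\ \frac{\lVert v\rVert_1^2}{63504\sqrt{\mathrm{TV}(v)}}\ \ge\ \frac{\lVert v\rVert_1^2}{63504\sqrt{2n}},
\end{equation*}
where the second inequality uses the budget $\mathrm{TV}(v)\le2n$ of Lemma~\ref{lem:2.6}.

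\textbf{Step 2: convert to $q^*$.} Theorem~\ref{thm:3.1}(B) gives $q^*<\frac32\lVert v\rVert_1$, so $\lVert v\rVert_1>\frac23q^*$. Hence
\begin{equation*}
\lVert F_n(\lambda)-F_n(\mu)\rVert_2=\sqrt{\mathcal E(v)}\ \ge\ \frac{\lVert v\rVert_1}{252\,(2n)^{1/4}}\ >\ \frac{2q^*}{3\cdot252\cdot2^{1/4}\,n^{1/4}},
\end{equation*}
using $\sqrt{63504}=252$. This gives $1/L^-(F_n)\le378\cdot2^{1/4}n^{1/4}$.

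\textbf{Step 3: multiply.} Combining the two Lipschitz bounds,
\begin{equation*}
\rho(F_n)\le2\sqrt{2+\sqrt2}\cdot378\cdot2^{1/4}\,n^{1/4}.
\end{equation*}
Numerically this is about $3.696\cdot378\cdot1.1892\approx1661.5$, which is within $1662$. Injectivity, which is needed for $\rho$ to be defined, is Proposition~\ref{prop:5.1}(A).

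The only substantive obstacle is Theorem~\ref{thm:1.2} itself: the Whitney tiling, stopping-time, island and capital modules of \S\ref{sec:6}. Granting it, the remaining work is verifying that realizable differences are genuinely closed quantized sequences on the padded range $[N+1]$, and keeping the constants exact, since the claimed $1662$ has very little slack.
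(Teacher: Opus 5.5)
Your proposal is correct and follows the paper's own proof essentially step for step. Both combine (5.1) with Theorem~\ref{thm:6.2} and the budget $\mathrm{TV}(v)\le2n$ to bound $L^-$, pass to $q^*$ through Theorem~\ref{thm:3.1}(B), and take $L^+$ from Proposition~\ref{prop:5.2}, which yields the same constant $2\sqrt{2+\sqrt2}\cdot378\cdot2^{1/4}=3\sqrt{2+\sqrt2}\cdot252\cdot2^{1/4}\approx1661.2<1662$; your explicit check that $v(N+1)=0$ and $v\ne0$ simply spells out what the paper attributes to Lemma~\ref{lem:2.6}.
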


\begin{proof}
Let $\lambda\ne\mu\vdash n$ and $v=v_{\lambda,\mu}$. By Lemma~\ref{lem:2.6}, $v$ is a closed quantized sequence of length $N$ with $\mathrm{TV}(v)\le2n$, so Theorem~\ref{thm:6.2} and \eqref{eq:5.1} give
\begin{equation*}
\bigl\lVert F_n(\lambda)-F_n(\mu)\bigr\rVert_2^2=\mathcal E(v)\ \ge\ \frac{\lVert v\rVert_1^{\,2}}{63504\sqrt{2n}},\qquad\text{that is}\qquad \bigl\lVert F_n(\lambda)-F_n(\mu)\bigr\rVert_2\ \ge\ \frac{\lVert v\rVert_1}{252\,(2n)^{1/4}} .
\end{equation*}
By Theorem~\ref{thm:3.1}, $q^*<\frac32\lVert v\rVert_1$, so $L^-(F_n)\ge\frac2{3\cdot252\,(2n)^{1/4}}$. By Proposition~\ref{prop:5.2}, $L^+(F_n)\le2\sqrt{2+\sqrt2}$. Multiplying,
\begin{equation*}
\rho(F_n)\ \le\ 3\sqrt{2+\sqrt2}\cdot252\cdot2^{1/4}\,n^{1/4}\ <\ 1662\,n^{1/4} . 
\qedhere
\end{equation*}
\end{proof}

\begin{corollary}[the class and the map are both pinned]\label{cor:5.12}
For all $n\ge2$,
\begin{equation*}
c_2(\mathcal K_n)=\Theta\bigl(n^{1/4}\bigr)\qquad\text{and}\qquad \rho\bigl(F_n\bigr)=\Theta\bigl(n^{1/4}\bigr) .
\end{equation*}
Thus $F_n$ is an explicit, deterministic, uniform embedding of $(\mathcal K_n/\!\cong,q^*)$ into $\ell_2^{\,2N-1}$ with $2N-1<4n$, computable in $O(n)$ time from a single partition, whose distortion attains the optimal order for the space.
\end{corollary}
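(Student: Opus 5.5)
The corollary is assembled from two independent halves, and the plan is simply to put them side by side for each claim. For $c_2(\mathcal K_n)$ I take the lower bound from \S\ref{sec:4.1}. Lemma~\ref{lem:4.2} places the Hamming cube $(\{0,1\}^d,d\cdot h)$, with $d=\lfloor\sqrt n/2\rfloor$, inside $(\mathcal K_n/\!\cong,q^*)$ at distortion at most $8$. Enflo's theorem, together with monotonicity of distortion under passage to submetrics, then gives $c_2(\mathcal K_n)\ge\sqrt d/8=\Omega(n^{1/4})$ for $n\ge16$. For the upper bound I use the inequality $c_2(\mathcal K_n)\le\rho(F_n)$, which holds because $F_n$ is one admissible injection into $\ell_2^{2N-1}\subseteq\ell_2$. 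Proposition~\ref{prop:5.1}(A) supplies injectivity, and Theorem~\ref{thm:5.11} supplies $\rho(F_n)\le1662\,n^{1/4}$. Together these give $c_2(\mathcal K_n)=\Theta(n^{1/4})$.

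For $\rho(F_n)$ itself, the upper bound is again Theorem~\ref{thm:5.11}. The lower bound uses the class-versus-map distinction of \S\ref{sec:2.5}: since $\rho(F_n)\ge c_2(\mathcal K_n)=\Omega(n^{1/4})$, the two halves close and $\rho(F_n)=\Theta(n^{1/4})$.

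The small values of $n$, namely $2\le n<16$, need separate treatment so that the statement holds ``for all $n\ge2$''. There are only finitely many such $n$, and for each one $p(n)\ge2$ and $\rho(F_n)\ge1$ is finite. The implied constants can therefore be adjusted to cover these cases; equivalently, $\Theta$ is read as an asymptotic statement, as the paper does for Theorem~\ref{thm:4.1}.

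The descriptive clauses follow from results already stated. The map is explicit, deterministic and uniform, has dimension $2N-1<4n$, and is computable in $O(n)$ time from a single partition; all of this is Proposition~\ref{prop:5.1}(A)--(C). That the distortion attains the optimal order for the space is exactly the matching of $\rho(F_n)$ with $c_2(\mathcal K_n)$ above.

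There is no real obstacle here, since the analytic content sits in Theorem~\ref{thm:6.2}, which feeds Theorem~\ref{thm:5.11}. The only point that needs care is that the lower bound constrains every embedding, so it transfers to the named map $F_n$ for free. No separate witness pair is needed to show that $F_n$ is not better than order $n^{1/4}$.
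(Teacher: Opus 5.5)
Your proposal is correct and matches the paper's proof: the upper bound $c_2(\mathcal K_n)\le\rho(F_n)=O(n^{1/4})$ comes from Theorem~\ref{thm:5.11}, because $c_2$ is an infimum over embeddings, and the lower bound comes from Lemma~\ref{lem:4.2} with Enflo's theorem, which constrains every embedding and so $F_n$ as well. Your separate treatment of $2\le n<16$ (where $p(n)\ge2$ gives $c_2\ge1$ while $n^{1/4}<2$) is a harmless bookkeeping addition that the paper leaves implicit.
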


\begin{proof}
The upper bound $c_2(\mathcal K_n)\le\rho(F_n)=O(n^{1/4})$ is Theorem~\ref{thm:5.11}, since $c_2$ is an infimum over embeddings. The lower bound $c_2(\mathcal K_n)=\Omega(n^{1/4})$ is Lemma~\ref{lem:4.2} with Enflo's theorem (\S\ref{sec:4.1}), and it applies to every embedding, in particular to $F_n$.
\end{proof}

This closes Theorem~\ref{thm:1.1} and, with \S\ref{sec:4.1}, Theorem~\ref{thm:4.1}. Three features of the resulting statement are worth separating, because they are usually not available together. The distortion order is optimal for the space, not merely the best proved for some family of maps. The map is \emph{pointwise}: it is evaluated on one partition without reference to the other, to the rest of the space, or to any distance oracle, which is what an embedding must be if it is to be used as a feature map. And it is cheap: linear dimension, linear time, and integer arithmetic of $O(\log n)$ bits at $\gamma=0$, the weights being applied once at the end.

The constant $1662$ is honest but crude, and Theorem~\ref{thm:6.2} is where all of it comes from; \S\ref{sec:6.7} accounts for it and \S\ref{sec:7} shows how far from the truth it is.

\section{A scale-free inverse theorem for quantized sequences}\label{sec:6}

Section 5 reduced the distortion of $F_n=F^{(1/4)}$ to a single inverse inequality: the critical energy $\mathcal E(v)$ of a realizable staircase difference must be bounded below by $\lVert v\rVert_1^2/\sqrt n$. This section proves it. The theorem obtained is strictly stronger than the application needs, in two ways. It holds for every closed integer sequence obeying the divisibility constraint $\Delta v(s)\in s\mathbb Z$, whether or not that sequence arises from a pair of partitions; and it is scale-free, normalized by the total variation of $v$ rather than by the worst-case budget $2n$ that Lemma~\ref{lem:2.6} supplies for realizable differences. Neither strengthening costs anything in the proof, and the first is what makes the statement reusable.

\subsection{Statement and basic geometry}\label{sec:6.1}

\begin{definition}\label{defn:6.1}
Let $N$ be a power of two. A \textbf{closed quantized sequence} of length $N$ is a map $v:\{1,\dots,N+1\}\to\mathbb Z$ with
\begin{equation*}
v(1)=v(N+1)=0,\qquad \Delta v(s):=v(s)-v(s+1)\in s\,\mathbb Z\quad(1\le s\le N) .
\end{equation*}
Write $\mathcal Q_N$ for the set of these. It is an additive subgroup of $\mathbb Z^{N+1}$, closed under integer scaling but not under scaling by arbitrary positive reals, so it is a lattice and not a cone. For $v\in\mathcal Q_N$ put
\begin{equation*}
L=L(v)=\lVert v\rVert_1=\sum_{s=1}^N|v(s)|,\qquad T=T(v)=\mathrm{TV}(v)=\sum_{s=1}^N\bigl|\Delta v(s)\bigr| .
\end{equation*}
The critical energy $\mathcal E(v)=\sum_{\ell=0}^J2^{-\ell/2}E_\ell(v)$, $J=\log_2N$, is as in \S\ref{sec:5.1}.
\end{definition}

By Lemma~\ref{lem:2.6} the difference vector $v_{\lambda,\mu}$ of any two partitions of $n$ lies in $\mathcal Q_N$ for $N$ the least power of two at least $n$, and satisfies $T\le2n$. The lattice $\mathcal Q_N$ is strictly larger: it imposes no realizability beyond the divisibility, and in particular does not require $v$ to be a difference of two monotone staircases.

\begin{theorem}[quantized inverse-energy theorem]\label{thm:6.2}
 For every power of two $N$ and every non-zero $v\in\mathcal Q_N$,
\begin{equation}\label{eq:6.1}
\mathcal E(v)\ \ge\ \frac1{63504}\cdot\frac{\lVert v\rVert_1^{\,2}}{\sqrt{\mathrm{TV}(v)}} .
\end{equation}
The constant is not optimized.
\end{theorem}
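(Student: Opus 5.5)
The plan is to convert $\mathcal E(v)$ into a pricing functional on disjoint families of intervals, and then to show that the divisibility hypothesis forces most of $\lVert v\rVert_1$ to sit on intervals that are cheap to price. The first step is a duality estimate. If $I_1,\dots,I_m\in\mathcal D_N$ are pairwise disjoint dyadic intervals of levels $\ell_j$, then keeping only their terms in $\mathcal E(v)$ and applying Cauchy--Schwarz gives
\begin{equation*}
\Bigl(\sum_j\Bigl|\sum_{s\in I_j}v(s)\Bigr|\Bigr)^2\ \le\ \mathcal E(v)\cdot\sum_j2^{\ell_j/2}.
\end{equation*}
So it suffices to capture a constant fraction of $L=\lVert v\rVert_1$ by signed sums over a disjoint dyadic family whose \emph{price} $P=\sum_j2^{\ell_j/2}$ is $O(\sqrt{T})$. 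Where the family cannot be cheap, the argument must instead recover the mass directly from the price paid.

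To recover full mass rather than a quarter of one interval per run (the loss in Lemma~\ref{lem:5.4}), I would tile each maximal sign-constant run $R$ of $v$ by its Whitney decomposition $\mathcal W(R)$. This is the set of maximal dyadic intervals contained in $R$, with at most two per level. On each tile the signed sum equals the absolute mass, so $\mu_R$ is recovered exactly, at price $\sum_{I\in\mathcal W(R)}2^{\ell(I)/2}=O(\sqrt{W_R})$.

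The problem is that $\sum_R\sqrt{W_R}$ need not be $O(\sqrt T)$: a sequence with many runs would make the family too expensive. Here the arithmetic enters. A run ending at an interior position $s$ is followed by a nonzero jump, and Lemma~\ref{lem:2.6} (in the $\mathcal Q_N$ form) forces that jump to have magnitude at least $s$. Hence runs starting beyond position $t$ number at most $T/t$. I would organise the runs by octaves $[2^i,2^{i+1})$ with localized total variation $\tau_i$ and use the bound $\sum_i\sqrt{\tau_i}$-type control to keep the price of the high-position runs at $O(\sqrt T)$.

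Next I would run a Carleson stopping time. Select the maximal dyadic intervals on which the normalized mass $|\sum_Iv|/2^{\ell(I)/4}$ exceeds a threshold $\theta\sqrt{\mathcal E}$. The selected intervals are disjoint, so their squared normalized masses sum to at most $\mathcal E$, and a layer-cake count bounds the mass they carry. Everything left over is a \emph{clean} remainder, every ancestor interval of which is cheap.

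For the clean remainder I would split positions by the half-line $|v(s)|=\frac{s-1}2$. Above the line, a position $s$ carries height at least comparable to $s$, and each excursion to such a height costs variation at least $s$. An island estimate should then bound this high clean mass by $O(P\sqrt{\mathrm{TV}(v)})$, in the spirit of the paper's constant $14P\sqrt T$.

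Below the line I expect the main obstacle, because there the divisibility does more than supply a budget. A jump at $s$ of magnitude at least $s$ exceeds twice the current height, so on the low region $v$ cannot change at all without leaving the region or changing sign. Thus the low region is a union of exactly flat plateaus. I would prove a ``folded-remainder capital'' lemma: the total plateau mass is at most a constant (target $8$) times the high mass. The idea is to assign each plateau to the jump that ends it and to Abel-sum plateau heights against a residue that can only decrease when a payment at least $s$ is made. Because this route is delicate, my first attempt would be the crude charging of each plateau of height $a$ and width $W$ to its terminating jump of size at least $s\ge 2a$, followed by Abel summation across consecutive plateaus.

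Finally, combining the Carleson part, the island part and the plateau part gives $L\le C\sqrt{\mathcal E}\,T^{1/4}$. Tracking the constants through these modules should give $C^2=63504=252^2$.
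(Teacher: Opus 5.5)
Your outline follows the architecture of the paper's proof: Whitney pricing of subintervals of runs, a stopping time, the island estimate above the half-line, and the capital lemma below it. But the stopping time is mis-normalized, and that is the step on which the theorem turns.

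Your rule selects a dyadic $I$ when $2^{-\ell(I)/4}\bigl|\sum_{s\in I}v(s)\bigr|>\theta\sqrt{\mathcal E(v)}$. This asks a single summand of $\mathcal E(v)$ to exceed $\theta^2\mathcal E(v)$, so fewer than $\theta^{-2}$ intervals are ever selected, and none if $\theta\ge1$. The resulting cleanliness, $\mu(Q)\lesssim\sqrt{\mathcal E(v)}\,|Q|^{1/4}$, carries no information: Lemma~\ref{lem:6.5} already gives $\mu(Q)\le\sqrt{7\mathcal E(v)}\,|Q|^{1/4}$ for every interval $Q$ inside a run. Fed into the island count it yields only $\delta(s)\lesssim(\sqrt{\mathcal E}/u)^{4/3}$, hence $M^{\rm clean}_{\rm cap}\lesssim\mathcal E^{2/5}T^{3/5}$. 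The chain then ends at $\mathcal E(v)\gtrsim L^{5/2}/T^{3/2}$, weaker than \eqref{eq:6.1} by the unbounded factor $T/\sqrt L$, which is of order $m$ on the chirp pairs of \S\ref{sec:7}. Your selected intervals may also straddle sign changes, so their signed sums do not control the mass inside them.

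The missing idea is a run-internal rule at the $\sqrt{|I|}$-price, with a threshold tied to $L$ and $T$.
In the paper one stops at the first node $I$ (the run itself included) of a balanced tree inside each run with $p(I)=\mu(I)/\sqrt{|I|}>P$.
Proposition~\ref{prop:6.6} then charges each stopped interval energy $\mu(I)^2/\sqrt{|I|}>P\,\mu(I)$, linear in its mass, so the stopped mass is below $14\mathcal E(v)/P$.
Cleanliness now means $\mu(Q)\le P\sqrt{|Q|}$, which gives $\delta(s)<2(P/u)^2$ and $M^{\rm clean}_{\rm cap}\le14P\sqrt T$ for $P\ge\sqrt T$.
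With $M_{\rm low}\le8M_{\rm cap}$ one gets $L\le126\,\mathcal E(v)/P+126\,P\sqrt T$.
The choice $P=L/(252\sqrt T)$ turns the last term into $L/2$; that is the source of $63504=2\cdot126\cdot252$.
The constraint $P\ge\sqrt T$ is why the regime $L<252T$ is handled separately, by Lemma~\ref{lem:6.13}.
In your text $P$ is first the price of a dyadic family and later reappears, undefined, as a threshold; no threshold is ever tied to $L$ and $T$.

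Two other steps would also fail as written.

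First, octave control does not keep the Whitney price of the runs at $O(\sqrt T)$. Lemma~\ref{lem:C.1} gives $\sum_R\sqrt{W_R}\le\sqrt2\sum_i\sqrt{\tau_i}$, which is of order $\sqrt{(J+1)T}$ on a flat octave spectrum. That is exactly the logarithmic loss of Proposition~\ref{prop:6.14}, and the flat spectrum is precisely what the stopping time is there to handle.

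Second, the crude fallback for the low region cannot work, because the size bound \eqref{eq:6.2} alone permits cheap descents. Take $p$ a large multiple of $4$ and impose only $|\Delta v(s)|\ge s$. Then $v(p/2-1)=-(p/2-2)$, $v\equiv p/4-1$ on $[p/2,p]$, $v(p+1)=5p/4-1$, and $v=0$ elsewhere is admissible. Its low mass is about $p^2/8$ against a high mass of about $7p/4$.

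What excludes this is the full congruence $v(t+1)\equiv v(t)\pmod t$, used through the folded remainder $R_t=\operatorname{dist}\bigl(v(t),(t-1)\mathbb Z\bigr)$. By Lemma~\ref{lem:6.11}, $R_t$ falls by at most $|v(t)|/t+\frac12$ per step. Combined with Abel summation over height drops rather than plateau masses, this gives the bridge bound $H\ge\frac18x\Delta$ of Lemma~\ref{lem:6.12}, and then $M_{\rm low}\le8M_{\rm cap}$. Your main description points at this mechanism, but this is the part that needs proof.
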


\textbf{Realizability is not a technical hypothesis.} No inequality of the form (6.1) can hold for arbitrary integer sequences. Take $N\ge2$ and $w=(0,-1,+1,-1,\dots)$ alternating on $[2,N]$. Every dyadic interval of length at least two that meets $[2,N-1]$ has sum $0$ or $\pm1$, so $E_0(w)=N-1$ and $E_\ell(w)=O(1)$ for $\ell\ge1$, giving $\mathcal E(w)=\Theta(N)$; while $\lVert w\rVert_1=N-1$ and $\mathrm{TV}(w)=\Theta(N)$, so the right side of (6.1) is $\Theta(N^{3/2})$. The divisibility constraint forbids exactly this: an alternating jump of magnitude $2$ at position $s$ would require $2\in s\mathbb Z$. Every step below uses the constraint, and the argument has no analogue for unconstrained vectors. This is also why the two-step estimate of Theorem~\ref{thm:5.5}, which quantifies over arbitrary $w\in\mathbb R^N$, cannot be repaired by any reweighting of the levels: its limitation is real, and Theorem~\ref{thm:6.2} escapes it by leaving that class of arguments entirely.

\textbf{Runs and prices.} A \textbf{run} of $v$ is a maximal integer interval $R=[a,b]$ on which $v$ is non-zero and of constant sign. Runs are pairwise disjoint; two of them may be adjacent, with no zero between, when $v$ changes sign in one step. Put
\begin{equation*}
W_R=b-a+1,\qquad \mu_R=\sum_{s\in R}|v(s)| ,
\end{equation*}
and for an integer interval $I$,
\begin{equation*}
\mu(I)=\sum_{s\in I}|v(s)|,\qquad p(I)=\frac{\mu(I)}{\sqrt{|I|}} ,
\end{equation*}
the \textbf{price} of $I$. Since $v(1)=0$, every run satisfies $a\ge2$.

Two consequences of the divisibility are used repeatedly and we record them once.

\begin{lemma}[position costs amplitude]\label{lem:6.3}
Let $v\in\mathcal Q_N$ be non-zero. Then
\begin{equation}\label{eq:6.2}
\Delta v(s)\ne0\ \Longrightarrow\ |\Delta v(s)|\ge s ,
\end{equation}
and $T\ge2\max_s|v(s)|\ge2$.
\end{lemma}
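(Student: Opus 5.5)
The first claim, \eqref{eq:6.2}, is just a restatement of the defining divisibility. By Definition~\ref{defn:6.1}, $\Delta v(s)\in s\mathbb Z$. So if $\Delta v(s)\ne0$, then $\Delta v(s)=s\,c$ for some non-zero integer $c$, and hence $|\Delta v(s)|=s|c|\ge s$. Nothing more is needed for this half.

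For the total-variation bound, I would pick a position $s_0$ at which $|v(s_0)|=\max_s|v(s)|=:M$. Since $v$ is non-zero, $M\ge1$, and since $v(1)=v(N+1)=0$, we have $2\le s_0\le N$. I would then split the telescoping sum at $s_0$. On the left,
\begin{equation*}
v(s_0)=v(s_0)-v(1)=-\sum_{s=1}^{s_0-1}\Delta v(s),
\end{equation*}
so the triangle inequality gives $\sum_{s<s_0}|\Delta v(s)|\ge M$. On the right,
\begin{equation*}
v(s_0)=v(s_0)-v(N+1)=\sum_{s=s_0}^{N}\Delta v(s),
\end{equation*}
so likewise $\sum_{s\ge s_0}|\Delta v(s)|\ge M$. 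The two index ranges $[1,s_0-1]$ and $[s_0,N]$ are disjoint and together cover $[1,N]$. Adding the two bounds therefore gives $T=\mathrm{TV}(v)\ge 2M$. Finally, $M\ge1$ because $v$ is integer-valued and not identically zero, so $T\ge2$.

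There is no real obstacle here. The only point that needs care is that both endpoint conditions $v(1)=0$ and $v(N+1)=0$ are used, one on each side of $s_0$, so that both half-sums of $|\Delta v|$ are each bounded below by $M$. The bound $T\ge2$ uses only integrality and closedness, not the divisibility condition.
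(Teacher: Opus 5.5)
Your proposal is correct and follows the paper's proof: the first claim is read off from $\Delta v(s)\in s\mathbb Z$, and $T\ge2\max_s|v(s)|$ comes from splitting the telescoping sum at a maximizer into the disjoint jump ranges $[1,s_0-1]$ and $[s_0,N]$, each of which accumulates variation at least $|v(s_0)|$. Your observation that $T\ge2$ needs only integrality and the two boundary zeros, not the divisibility, is also right.
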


\begin{proof}
The implication is the definition of $\mathcal Q_N$ together with $\Delta v(s)\in s\mathbb Z$. For the second claim fix $s$ and follow $v$ from $v(1)=0$ up to $v(s)$ and back down to $v(N+1)=0$: the two passages use disjoint sets of jump positions and each accumulates total variation at least $|v(s)|$.
\end{proof}

The implication (6.2) is the whole of the arithmetic input. Read contrapositively it says that $v$ may oscillate cheaply only near the origin: an oscillation at position $s$ costs amplitude at least $s$, so wide low-amplitude wobble is confined to small $s$, and at large $s$ every change of direction is spike-scale. Nothing in the ambient dyadic analysis sees this.

\textbf{Architecture of the proof.} Four modules, independent of one another:

\begin{enumerate}
\item a Whitney energy estimate (\S\ref{sec:6.2}), which converts $\mathcal E$ into a supply of price bounds on every family of disjoint subintervals of runs;
\item a Carleson stopping time (\S\ref{sec:6.3}), which spends that supply to isolate the positions where $v$ is \emph{locally} large, leaving a clean remainder on which every ancestor interval is cheap;
\item an island estimate (\S\ref{sec:6.4}), which bounds the clean mass sitting above the half-line $|v(s)|=\frac{s-1}2$;
\item the folded-remainder capital lemma (\S\ref{sec:6.5}), which bounds \emph{all} the mass below that half-line by $8$ times the mass above it.
\end{enumerate}

Modules 1--3 are analytic and use the divisibility only through the total-variation budget of the island count. Module 4 is arithmetic and is where the divisibility does its real work. Section 6.6 assembles them and disposes of the complementary regime $L<252\,T$ by a separate and much shorter argument.

\subsection{Dyadic--Whitney energy}\label{sec:6.2}

\begin{lemma}[Whitney tiling]\label{lem:6.4}
Let $I\subseteq[N]$ be an integer interval and let $\mathcal W(I)$ be the family of maximal dyadic intervals contained in $I$. Then $\mathcal W(I)$ is a partition of $I$, every member has length at most $|I|$, at most two members have any given length, and
\begin{equation}\label{eq:6.3}
\sum_{K\in\mathcal W(I)}\sqrt{|K|}\ <\ 7\sqrt{|I|} .
\end{equation}
\end{lemma}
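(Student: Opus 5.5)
The plan is the standard discrete Whitney decomposition argument, carried out on the binary structure of dyadic intervals. Write $I=[a,b]$. First, $\mathcal W(I)$ is a partition of $I$. Every point $s\in I$ lies in the level-$0$ singleton $\{s\}\subseteq I$. Any two dyadic intervals are either nested or disjoint, so the dyadic intervals containing $s$ and contained in $I$ form a chain, and this chain has a unique maximal element. Distinct maximal elements cannot overlap, because nestedness would contradict maximality. Every member of $\mathcal W(I)$ lies inside $I$, so each has length at most $|I|$.

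For the count per length, I would show that at most one maximal interval of each level lies in the left portion and at most one in the right portion. Fix a level $\ell$. Suppose $K,K'\in\mathcal W(I)$ have level $\ell$ and lie strictly between two others of that level, or more simply suppose there are three members $K_1<K_2<K_3$ of level $\ell$, in left-to-right order. Then $K_2$ has its sibling $K_2^\pm$ (the other half of its parent) lying between $K_1$ and $K_3$, hence inside $I$ since $I$ is an interval. So the parent of $K_2$ is a dyadic interval contained in $I$, contradicting the maximality of $K_2$. Hence at most two members have any given length.

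For \eqref{eq:6.3}, let $\ell_{\max}$ be the largest level occurring, so that $2^{\ell_{\max}}\le|I|$. Since at most two members have each length $2^\ell$ with $\ell\le\ell_{\max}$,
\begin{equation*}
\sum_{K\in\mathcal W(I)}\sqrt{|K|}\ \le\ 2\sum_{\ell=0}^{\ell_{\max}}2^{\ell/2}\ <\ 2\cdot\frac{2^{\ell_{\max}/2}}{1-2^{-1/2}}\ \le\ \frac{2}{1-2^{-1/2}}\sqrt{|I|}.
\end{equation*}
The constant here is $2/(1-2^{-1/2})=2(2+\sqrt2)\approx6.83<7$, so the bound closes with strict inequality. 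I do not expect a genuine obstacle. The only delicate point is the three-in-a-row sibling argument, and there I must check that the sibling lies between $K_1$ and $K_3$ rather than outside them. This holds because the parent of $K_2$ has length $2^{\ell+1}$, and its two halves are the only level-$\ell$ intervals inside it, so the sibling is adjacent to $K_2$ on one side, within the convex hull of $K_1\cup K_3$.
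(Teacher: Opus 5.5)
Your proof is correct and follows the paper's argument. The partition comes from the nested-or-disjoint property, maximality caps each level at two tiles, and the geometric sum gives the constant $2/(1-2^{-1/2})=4+2\sqrt2<6.83$. The only variation is in the per-level count: the paper notes that each tile's parent must contain an endpoint of $I$, while you show that the middle of three same-level tiles would have its parent inside $I$, and these are equivalent.
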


\begin{proof}
Two dyadic intervals are nested or disjoint, so maximal ones are pairwise disjoint; every point of $I$ lies in a dyadic interval contained in $I$, namely its own singleton, hence in a maximal one. So $\mathcal W(I)$ partitions $I$, and $|K|\le|I|$ for every $K\in\mathcal W(I)$.

Fix a length $2^\ell$ and let $K\in\mathcal W(I)$ have that length. The dyadic parent of $K$ is not contained in $I$, by maximality, and it meets $I$; hence it contains a point outside $I$ on one side of $K$, so it crosses an endpoint of $I$. Distinct level-$\ell$ intervals have disjoint parents, and each of the two endpoints of $I$ lies in at most one level-$\ell$ parent, so at most two members of $\mathcal W(I)$ have length $2^\ell$. Summing over the levels present,
\begin{equation*}
\sum_{K\in\mathcal W(I)}\sqrt{|K|}\ \le\ 2\sum_{2^\ell\le|I|}2^{\ell/2}\ \le\ \frac{2\sqrt2}{\sqrt2-1}\sqrt{|I|}\ <\ 6.829\,\sqrt{|I|} . 
\qedhere
\end{equation*}
\end{proof}

Figure~\ref{fig:whitney} shows the tiling of one run, with the dyadic parent of each tile drawn escaping the run.

\begin{figure}[tb]
\centering
\includegraphics[width=0.7\textwidth]{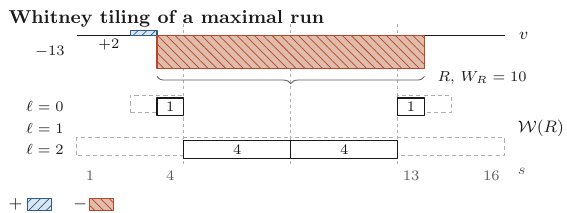}
\caption{\textbf{Whitney tiling of a sign-constant run.}
For $\lambda=(3,3,3,3,3)$ against $\mu=(13,2)$ at $n=15$, $N=16$, the
difference $v=u_\lambda-u_\mu$ is $+2$ at $s=3$ and $-13$ on $R=[4,13]$, so
$R$ is maximal, closed on the left by a change of sign and on the right by
$v=0$. The maximal dyadic intervals contained in $R$ partition it, and each is
drawn at the height of its own level $\ell$, so the lengths $1,4,4,1$ form a
pyramid with at most two boxes on any horizontal line: that is the second
claim of Lemma~\ref{lem:6.4}. Above each tile, dashed, is its dyadic parent;
every parent leaves $R$, which is exactly why the tile is maximal.
Lemma~\ref{lem:6.5} and Proposition~\ref{prop:6.6} use both facts, the partition to rule out cancellation
between the tiles and the two-per-level cap to bound
$\sum_{I\in\mathcal W(R)}2^{\ell(I)/2}$ by $7\sqrt{W_R}$.
Legend as in Fig.~\ref{fig:transportation}.}
\label{fig:whitney}
\end{figure}

\begin{lemma}[single-interval estimate]\label{lem:6.5}
Let $I$ be an integer interval contained in a run of $v$. Then
\begin{equation}\label{eq:6.4}
\frac{\mu(I)^2}{\sqrt{|I|}}\ \le\ 7\sum_{K\in\mathcal W(I)}2^{-\ell(K)/2}\bigl\langle v,\mathbf 1_K\bigr\rangle^2 .
\end{equation}
\end{lemma}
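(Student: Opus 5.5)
Since $I$ lies inside a run, $v$ has constant sign on $I$, so for every tile $K\in\mathcal W(I)$ there is no cancellation:
\begin{equation*}
|\langle v,\mathbf 1_K\rangle|=\mu(K),\qquad \mu(I)=\sum_{K\in\mathcal W(I)}\mu(K),
\end{equation*}
the second identity because $\mathcal W(I)$ partitions $I$ (Lemma~\ref{lem:6.4}). The plan is to apply Cauchy--Schwarz to this decomposition with weights $2^{\ell(K)/4}=|K|^{1/4}$. Writing $\mu(K)=|K|^{1/4}\cdot\bigl(|K|^{-1/4}\mu(K)\bigr)$ gives
\begin{equation*}
\mu(I)^2\ \le\ \Bigl(\sum_{K\in\mathcal W(I)}\sqrt{|K|}\Bigr)\Bigl(\sum_{K\in\mathcal W(I)}\frac{\mu(K)^2}{\sqrt{|K|}}\Bigr).
\end{equation*}

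Next I would invoke the Whitney bound (6.3), $\sum_{K}\sqrt{|K|}<7\sqrt{|I|}$. I would also use $\sqrt{|K|}=2^{\ell(K)/2}$ and $\mu(K)^2=\langle v,\mathbf 1_K\rangle^2$. Together these give
\begin{equation*}
\mu(I)^2\ \le\ 7\sqrt{|I|}\sum_{K\in\mathcal W(I)}2^{-\ell(K)/2}\langle v,\mathbf 1_K\rangle^2 .
\end{equation*}
Dividing by $\sqrt{|I|}$ yields (6.4).

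There is no real obstacle here. The only points to check are, first, that the tiles are genuine members of $\mathcal D_N$, so that their level $\ell(K)$ is defined and $|K|=2^{\ell(K)}$; and second, that sign constancy on the run is inherited by every subinterval $I$. This second fact is what makes $|\langle v,\mathbf 1_K\rangle|=\mu(K)$ exact rather than merely an upper bound.
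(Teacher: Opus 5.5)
Your proposal is correct and is essentially the paper's own proof. It uses sign constancy on the run to get $\sum_{K\in\mathcal W(I)}|\langle v,\mathbf 1_K\rangle|=\mu(I)$, applies Cauchy--Schwarz with the same splitting $|K|^{1/4}\cdot|K|^{-1/4}|\langle v,\mathbf 1_K\rangle|$, bounds $\sum_K\sqrt{|K|}$ by $7\sqrt{|I|}$ via (6.3), and divides by $\sqrt{|I|}$.
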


\begin{proof}
On $I$ the sign of $v$ is constant and $\mathcal W(I)$ partitions $I$, so the interval sums $\langle v,\mathbf 1_K\rangle$, $K\in\mathcal W(I)$, all have that sign and $\sum_K|\langle v,\mathbf 1_K\rangle|=\mu(I)$. Cauchy--Schwarz with the splitting $|\langle v,\mathbf 1_K\rangle|=|K|^{1/4}\cdot|K|^{-1/4}|\langle v,\mathbf 1_K\rangle|$ gives
\begin{equation*}
\mu(I)^2\ \le\ \Bigl(\sum_K\sqrt{|K|}\Bigr)\Bigl(\sum_K|K|^{-1/2}\langle v,\mathbf 1_K\rangle^2\Bigr) ,
\end{equation*}
and (6.3) bounds the first factor by $7\sqrt{|I|}$. Divide by $\sqrt{|I|}$ and note $|K|^{-1/2}=2^{-\ell(K)/2}$.
\end{proof}

\begin{proposition}[disjoint families]\label{prop:6.6}
Let $\{I_j\}$ be pairwise disjoint integer intervals, each contained in a run of $v$. Then
\begin{equation}\label{eq:6.5}
\sum_j\frac{\mu(I_j)^2}{\sqrt{|I_j|}}\ \le\ 7\,\mathcal E(v) .
\end{equation}
\end{proposition}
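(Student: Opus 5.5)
The plan is to apply Lemma~\ref{lem:6.5} to each $I_j$ separately and sum, so
\begin{equation*}
\sum_j\frac{\mu(I_j)^2}{\sqrt{|I_j|}}\ \le\ 7\sum_j\sum_{K\in\mathcal W(I_j)}2^{-\ell(K)/2}\langle v,\mathbf 1_K\rangle^2 .
\end{equation*}
The remaining task is to bound the double sum on the right by $\mathcal E(v)=\sum_\ell2^{-\ell/2}\sum_{\ell(K)=\ell}\langle v,\mathbf 1_K\rangle^2$. This in turn reduces to checking that every dyadic interval $K$ occurs at most once in the union $\bigcup_j\mathcal W(I_j)$. Once that is established, the right-hand side is a sub-sum of the non-negative series defining $\mathcal E(v)$, and the bound follows.

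For the distinctness, note first that each $K\in\mathcal W(I_j)$ is a subset of $I_j$ and is non-empty. Now suppose some $K$ belongs to both $\mathcal W(I_j)$ and $\mathcal W(I_{j'})$ with $j\ne j'$. Then $K\subseteq I_j\cap I_{j'}$, which contradicts the disjointness of $I_j$ and $I_{j'}$. Within a single index $j$, the members of $\mathcal W(I_j)$ are pairwise distinct, because by Lemma~\ref{lem:6.4} they partition $I_j$. Hence the map $(j,K)\mapsto K$ is injective on the index set of the double sum, and each term $2^{-\ell(K)/2}\langle v,\mathbf 1_K\rangle^2$ of $\mathcal E(v)$ is counted at most once.

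There is no real obstacle here. The only point that needs care is that the intervals are taken to be disjoint as sets of positions. Nesting would not be enough, since two nested intervals could share a Whitney tile and that tile would then be counted twice. The hypothesis that each $I_j$ lies inside a run is used only through Lemma~\ref{lem:6.5}, where the constant sign of $v$ rules out cancellation between the tiles. It plays no further role in the summation step.
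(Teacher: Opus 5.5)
Your proposal is correct and follows essentially the same route as the paper: sum Lemma~\ref{lem:6.5} over $j$, then use that disjointness of the $I_j$ forces the Whitney tiles $\bigcup_j\mathcal W(I_j)$ to be pairwise distinct, so the right-hand side is a sub-sum of the non-negative series defining $\mathcal E(v)$. Your explicit justification of distinctness, via non-empty tiles inside disjoint intervals plus the partition property within each $\mathcal W(I_j)$, is just a spelled-out version of the paper's one-line argument.
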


\begin{proof}
The $I_j$ are disjoint, so the dyadic intervals appearing in the families $\mathcal W(I_j)$ are pairwise distinct. Summing (6.4) over $j$ therefore produces a subsum of $\sum_{K\in\mathcal D_N}2^{-\ell(K)/2}\langle v,\mathbf 1_K\rangle^2=\mathcal E(v)$, all of whose terms are non-negative.
\end{proof}

Taking $\{I_j\}$ to be the runs themselves gives the per-run bound $\mu_R^2/\sqrt{W_R}\le7\mathcal E(v)$, which is all that the argument behind Theorem~\ref{thm:5.5} could use. The strength of (6.5) over that special case is that the family is arbitrary: the stopping time of \S\ref{sec:6.3} produces disjoint subintervals of runs that are not runs, and (6.5) prices them all at once.

\subsection{Stopping-time decomposition}\label{sec:6.3}

Fix a threshold $P>0$. On each run $R=[a,b]$ build the \textbf{balanced interval tree}: the root is $R$; a node $[c,d]$ with $c<d$ has the two children $[c,\lfloor\frac{c+d}2\rfloor]$ and $[\lfloor\frac{c+d}2\rfloor+1,d]$, whose lengths differ by at most one; a node $[c,c]$ is a leaf. Every node is an integer subinterval of $R$, and the parent of a node $Q$ has length at most $2|Q|+1$.

Sort the runs and their nodes as follows.

* If $p(R)>P$, call $R$ \textbf{root-expensive} and place its whole mass $\mu_R$ in the ledger $A$; do not descend into it.
* Otherwise traverse the tree from the root. On reaching a node $I$ with $p(I)>P$, \textbf{stop} at $I$: place $\mu(I)$ in the ledger $B$ and do not visit the descendants of $I$.
* A position $s$ lying in no root-expensive run and in no stopped node is \textbf{clean}.

\begin{lemma}[the two ledgers]\label{lem:6.7}
With $A$ and $B$ the totals just defined,
\begin{equation}\label{eq:6.6}
A+B\ \le\ \frac{14\,\mathcal E(v)}P ,
\end{equation}
and every clean position satisfies $|v(s)|\le P$.
\end{lemma}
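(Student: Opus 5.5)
The plan is to feed the intervals collected in the two ledgers into Proposition~\ref{prop:6.6}, after converting each price exceedance into a lower bound on $\mu(I)^2/\sqrt{|I|}$. If $p(I)>P$ then $\mu(I)>P\sqrt{|I|}$, hence
\begin{equation*}
\mu(I)\ =\ \frac{\mu(I)^2}{\mu(I)}\ <\ \frac{\mu(I)^2}{P\sqrt{|I|}} .
\end{equation*}
Each unit of ledger mass is therefore paid for by energy at rate $1/P$, and it only remains to organise the intervals into disjoint families.

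\textbf{Disjointness.} The root-expensive runs are distinct runs, hence pairwise disjoint. The stopped nodes inside one non-expensive run are pairwise disjoint, since the traversal never descends below a stopped node and two nodes of the balanced tree are nested or disjoint. Stopped nodes in different runs, or in a run versus a root-expensive run, are disjoint because runs are. So the combined family (root-expensive runs together with all stopped nodes) is a disjoint family of subintervals of runs. Proposition~\ref{prop:6.6} then gives
\begin{equation*}
A+B\ <\ \frac1P\sum\frac{\mu(I)^2}{\sqrt{|I|}}\ \le\ \frac{7\,\mathcal E(v)}P .
\end{equation*}
This is already better than the stated $14$. The factor $14$ gives slack, presumably because the authors apply \eqref{eq:6.5} separately to ledgers $A$ and $B$ as two families ($7+7$). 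Either route yields \eqref{eq:6.6}; I would present the split version to match the constant.

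\textbf{Clean positions.} Let $s$ be clean. If $v(s)=0$ there is nothing to prove. Otherwise $s$ lies in a run $R$ that is not root-expensive, and every node on the tree path from $R$ down to the leaf $\{s\}$ was visited without stopping, since a stop on that path would have swallowed $s$. In particular the leaf $[s,s]$ was reached and not stopped, so $p([s,s])=|v(s)|/\sqrt1\le P$.

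I do not expect a real obstacle. The only points needing care are that the leaf is genuinely visited (the traversal continues until it stops or reaches leaves), and that the tree nodes are arbitrary integer intervals rather than dyadic ones. The latter is harmless, because Proposition~\ref{prop:6.6} accepts arbitrary disjoint subintervals of runs.
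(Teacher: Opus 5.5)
Your proof is correct and follows the paper's argument. Price exceedance converts each ledger interval's mass into energy at rate $1/P$, disjointness lets Proposition~\ref{prop:6.6} apply, and cleanness forces the leaf $\{s\}$ to be visited and not stopped. Your remark that the root-expensive runs and the stopped nodes together form a single disjoint family is also right and gives the sharper constant $7$; the paper applies \eqref{eq:6.5} to the two ledgers separately, which is exactly where its $14$ comes from.
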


\begin{proof}
The root-expensive runs are pairwise disjoint and each is a run, so (6.5) applies to them; and $p(R)>P$ gives $P\mu_R<\mu_R^2/\sqrt{W_R}$, whence $PA<\sum_{p(R)>P}\mu_R^2/\sqrt{W_R}\le7\mathcal E(v)$. The stopped nodes are pairwise disjoint --- within one tree no stopped node is a descendant of another, by construction, and nodes of different trees lie in different runs --- and each is contained in a run, so (6.5) applies again and the same computation gives $PB<7\mathcal E(v)$. Adding yields (6.6).

If $s$ is clean then its run is not root-expensive, so the traversal begins, and no ancestor of the leaf $\{s\}$ is stopped, so the traversal reaches that leaf and does not stop there. Hence $p(\{s\})=|v(s)|\le P$.
\end{proof}

\subsection{The clean-cap island estimate}\label{sec:6.4}

Split the support of $v$ by the \textbf{half-line} $|v(s)|=\frac{s-1}2$:
\begin{equation*}
\mathcal L=\Bigl\{s:0<|v(s)|<\tfrac{s-1}2\Bigr\},\qquad \mathcal H=\Bigl\{s:v(s)\ne0,\ |v(s)|\ge\tfrac{s-1}2\Bigr\} ,
\end{equation*}
and write $M_{\rm low}$ and $M_{\rm cap}$ for the corresponding masses $\sum|v(s)|$. Note that $s\in\mathcal H$ is the same as $s\le2|v(s)|+1$: a position lies above the half-line exactly when its own index is at most twice its height. Put
\begin{equation*}
M_{\rm cap}^{\rm clean}=\sum_{\substack{s\ \text{clean}\\ s\in\mathcal H}}|v(s)| .
\end{equation*}

\begin{figure}[tb]
\centering
\includegraphics[width=\textwidth]{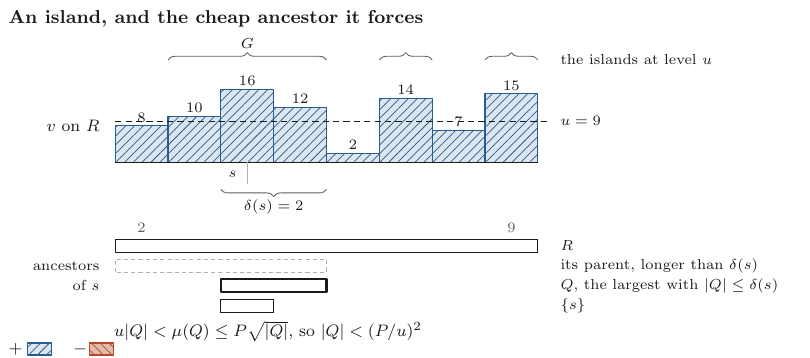}
\caption{\textbf{An island, and the cheap ancestor it forces.}
The run $R=[2,9]$ of a closed quantized sequence, on which $v$ is positive with
the values shown, at threshold $P=30$. Every node of the interval tree of $R$
is cheap, so every position of $R$ is clean. At level $u=9$ the set
$\{|v|>u\}$ has three islands, bracketed above; the first is $G=[3,5]$. The
position $s=4$ lies in $G$ at distance $\delta(s)=2$ from the nearer end. Its
ancestor chain is $\{s\}\subset Q\subset[2,5]\subset R$, and $Q=[4,5]$ is the
largest member of that chain of length at most $\delta(s)$: the next one up is
longer than $\delta(s)$, which is what forces $\delta(s)\le2|Q|$, and $Q$ lies
inside $G$ because it is an interval through $s$ no longer than the distance
from $s$ to either end of $G$. Cheapness of $Q$ together with $Q\subseteq G$
then gives $18=u|Q|<\mu(Q)=28\le P\sqrt{|Q|}=42.4$, hence $|Q|<(P/u)^2$, which
is \eqref{eq:6.10}. That is the step converting one level of the island count
into a bound on how many positions can be both clean and high.
Legend as in Fig.~\ref{fig:transportation}.}
\label{fig:island}
\end{figure}

\begin{lemma}[clean-cap island estimate]\label{lem:6.8}
If $P\ge\sqrt T$ then
\begin{equation}\label{eq:6.7}
M_{\rm cap}^{\rm clean}\ \le\ 14\,P\sqrt T .
\end{equation}
\end{lemma}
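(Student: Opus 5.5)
The plan is a layer-cake argument over the height $u$, combined with an island count at each level. Write
\begin{equation*}
M_{\rm cap}^{\rm clean}=\int_0^{P}\#\{s\ \text{clean},\ s\in\mathcal H,\ |v(s)|>u\}\,du ,
\end{equation*}
where the upper limit is $P$ because clean positions satisfy $|v(s)|\le P$ by Lemma~\ref{lem:6.7}. For fixed $u$, the set $\{|v|>u\}$ splits into maximal integer intervals, the \emph{islands}. Each island lies inside one run, since $v$ cannot change sign without passing through $|v|\le u$ or making a jump that breaks the island. I then bound two things: the number of islands, and the number of clean positions an island can hold.

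\textbf{Number of islands.} An island $G=[c,d]$ sits between positions where $|v|\le u$; this includes $v(1)=v(N+1)=0$, so both ends are closed. Hence $|v|$ rises by more than $u$ entering $G$ and falls by more than $u$ leaving it. Islands are disjoint, so these crossings use disjoint jumps, and the number of islands at level $u$ is below $T/(2u)$.

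\textbf{Clean positions per island.} This is the mechanism of Figure~\ref{fig:island}. Take a clean $s\in G$ at distance $\delta(s)$ from the nearer end of $G$. Let $Q$ be the largest node in the balanced-tree ancestor chain of $\{s\}$ with $|Q|\le\delta(s)$. Because the parent of a node has length at most $2|Q|+1$, this choice forces $\delta(s)\le2|Q|+1$. Because $Q$ is an interval through $s$ of length at most $\delta(s)$, it lies inside $G$. The node $Q$ is not stopped, since $s$ is clean, so
\begin{equation*}
u|Q|<\mu(Q)\le P\sqrt{|Q|},
\end{equation*}
that is, $|Q|<(P/u)^2$. Hence every clean position of $G$ lies within $O((P/u)^2)$ of an end of $G$. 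The number of clean positions in $G$ is therefore at most $\min\bigl(|G|,\,4(P/u)^2+2\bigr)$.

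\textbf{Integration, high levels.} For $u\ge\sqrt T$, multiply the island count by the per-island cap and integrate:
\begin{equation*}
\int_{\sqrt T}^{P}\frac{T}{2u}\Bigl(\frac{4P^2}{u^2}+2\Bigr)du\ \lesssim\ P^2+T\log\frac{P}{\sqrt T} .
\end{equation*}
Because $P\ge\sqrt T$, I expect this piece to be $O(P\sqrt T)$ only after refining the "$+2$" term. Specifically, I would use the $\mathcal H$ constraint $s\le2|v(s)|+1$ and Lemma~\ref{lem:6.3} to show that a clean island above the half-line has an entry jump at a position at most $2P+1$, of magnitude at least its position.

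\textbf{Main obstacle: low levels.} The step I expect to be hardest is $u<\sqrt T$. There the island-count bound $T/u$ is too weak, and the crude per-island cap blows up. I would first try to replace counting by mass: by Cauchy--Schwarz, the clean mass near each end of an island is at most $P\sqrt{|Q|}$. I would also use that a position in $\mathcal H$ at height $|v(s)|\le u$ has $s\le2u+1$. That confines the relevant islands to $[1,2u+1]$, so their total width is at most $2u+1$, and the low-level contribution is at most
\begin{equation*}
\int_0^{\sqrt T}(2u+1)\,du=O(T)\le O(P\sqrt T).
\end{equation*}
What makes this work is a monotonicity: positions counted at level $u$ with $|v|$ close to $u$ lie near the origin. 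If this splitting does not close, I would fall back on a dyadic decomposition in $u$, balancing the $T/u$ island bound against the width bound $2|v|+1$ level by level. Tracking constants should give the stated $14$.
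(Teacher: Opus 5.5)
Your local mechanism matches the paper's: the largest ancestor $Q$ of $\{s\}$ with $|Q|\le\delta(s)$ lies in the island, and since it is not stopped, $u|Q|<\mu(Q)\le P\sqrt{|Q|}$. The bookkeeping around that mechanism, however, has four gaps, and the proof does not close as written.

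\textbf{1. Islands must lie inside runs, and the root case is missing.} A maximal interval of $\{|v|>u\}$ need not lie in one run, because $v$ can change sign in a single step. Take the chirp of Proposition~\ref{prop:7.1} with $P=\sqrt2\,m$ (so $P\ge\sqrt T$) and $u=m/2$. Then $\{|v|>u\}$ is one interval of about $m/2$ positions, all clean and in $\mathcal H$, while your per-island cap is $4(P/u)^2+2=34$. The step that fails is the case you did not treat, where $Q$ is the root of its tree (here a run of width $2$); there no parent controls $\delta(s)$. If islands are taken inside runs, the root case forces $|G|=1$.

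\textbf{2. The pointwise island count is false.} The bound $I(u)<T/(2u)$ fails because entering an island, $|v|$ passes from a value $\le u$ to one $>u$, and the rise can be much smaller than $u$. Define $v(2j)=u-j+1$ and $v(2j+1)=u+j+1$ for $1\le j\le k$, then hold constant up to position $u+k+1$ and drop to $0$ there. Every jump is legal, there are $k$ islands at level $u$, and $T=O(u+k^2)$; with $k\approx\sqrt u$ this gives $I(u)\approx\sqrt u$ against $T/(2u)=O(1)$. What does hold is the integrated bound $\int_0^\infty I(u)\,du\le T$. To get it, charge the entry height $|v(a)|$ of a run $[a,b]$ to the jump at $a-1$, and each up-step $(|v(s)|-|v(s-1)|)_+$ inside the run to the jump at $s-1$. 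Charging the left end at $a-1$ rather than $a$ is what keeps adjacent runs from colliding.

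\textbf{3. The ordinary layer cake cannot bound the low levels.} In $\int_0^P\#\{s:|v(s)|>u\}\,du$, a position counted at level $u$ may have $|v(s)|\gg u$, so $s\le2|v(s)|+1$ gives no bound in terms of $u$. Your estimate $\int_0^{\sqrt T}(2u+1)\,du$ therefore fails. For example, with $v=n\mathbf 1_{[2,n]}$ and $P\ge n^{3/2}$, all $n-1$ positions are clean, in $\mathcal H$, and counted at every $u<n$. The missing idea is to count each $s$ only on the top half of its range:
$M_{\rm cap}^{\rm clean}=2\int_0^\infty\#S(u)\,du$ with $S(u)=\{s\ \text{clean}:\ s\in\mathcal H,\ |v(s)|/2\le u<|v(s)|\}$.
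Then $s\le2|v(s)|+1\le4u+1$, so $\#S(u)\le4u+1$ at every level, while the island argument gives $\#S(u)\le4(P/u)^2I(u)$.

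\textbf{4. The cut at $u=\sqrt T$ is in the wrong place.} The obstruction is the main term, not the ``$+2$''. Above a cut $U$, the island part is bounded only by $8P^2T/U^2$ (your computation gives $P^2$ at $U=\sqrt T$), and this exceeds $P\sqrt T$ as soon as $P\gg\sqrt T$. Balancing it against $2\int_0^U(4u+1)\,du$ gives $U=(2P^2T)^{1/4}$. This $U$ lies above $\sqrt T$ precisely because $P\ge\sqrt T$. Then
$4U^2+2U+8P^2T/U^2\le(8\sqrt2+2\cdot2^{1/4})P\sqrt T<14\,P\sqrt T$,
using $T\ge1$ to absorb the linear term $2U$.
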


\begin{proof}
Write $f(s)=|v(s)|$. Since $\int_0^\infty\mathbf 1\{u\in[f(s)/2,f(s))\}\,du=f(s)/2$,
\begin{equation}\label{eq:6.8}
M_{\rm cap}^{\rm clean}=2\int_0^\infty\#S(u)\,du,\qquad S(u)=\Bigl\{s:\ s\ \text{clean},\ s\in\mathcal H,\ \tfrac{f(s)}2\le u<f(s)\Bigr\} .
\end{equation}
We bound $\#S(u)$ in two ways and interpolate. Figure~\ref{fig:island} draws the second of them on one run.

\textbf{Counting by position.} If $s\in S(u)$ then $f(s)\le2u$ and $s\le2f(s)+1\le4u+1$, so
\begin{equation}\label{eq:6.9}
\#S(u)\le4u+1 .
\end{equation}

\textbf{Counting by islands.} Fix $u>0$. Inside each run, call a maximal integer interval on which $f>u$ a \textbf{level-$u$ island}, and let $I(u)$ be the total number of islands. Every $s\in S(u)$ has $f(s)>u$, so it lies in an island $G=[l,r]$; put
\begin{equation*}
\delta(s)=\min(s-l+1,\ r-s+1) ,
\end{equation*}
its distance to the nearer end of its island. Let $Q$ be the largest node on the ancestor chain of $\{s\}$ with $|Q|\le\delta(s)$; it exists because the leaf $\{s\}$ has length $1\le\delta(s)$.

\emph{The node $Q$ lies inside $G$.} Indeed $Q$ is an interval containing $s$ of length at most $\delta(s)$, so $Q\subseteq[s-|Q|+1,\ s+|Q|-1]$, and
\begin{equation*}
s-|Q|+1\ \ge\ s-\delta(s)+1\ \ge\ l,\qquad s+|Q|-1\ \le\ s+\delta(s)-1\ \le\ r
\end{equation*}
by the two branches of the minimum defining $\delta(s)$.

\emph{The node $Q$ is not much shorter than $\delta(s)$.} If $Q$ is not the root, its parent has length exceeding $\delta(s)$, by maximality of $Q$, and at most $2|Q|+1$; so $\delta(s)<2|Q|+1$ and hence $\delta(s)\le2|Q|$ by integrality. If $Q$ is the root of its tree, then the whole run $R$ satisfies $|R|\le\delta(s)$; since $G\subseteq R$ and $\delta(s)\le\lceil|G|/2\rceil\le|G|\le|R|\le\delta(s)$, all these are equalities, which forces $|G|=1$ and $\delta(s)=1\le2|Q|$. So $\delta(s)\le2|Q|$ in every case.

Since $s$ is clean, $Q$ is a visited node that was not stopped, so $\mu(Q)\le P\sqrt{|Q|}$; and $Q\subseteq G$ gives $\mu(Q)>u|Q|$. Therefore $u\sqrt{|Q|}<P$ and
\begin{equation}\label{eq:6.10}
\delta(s)\ \le\ 2|Q|\ <\ 2\Bigl(\frac Pu\Bigr)^2 .
\end{equation}
An island contains at most $2D$ positions with $\delta\le D$, so
\begin{equation}\label{eq:6.11}
\#S(u)\ \le\ 4\Bigl(\frac Pu\Bigr)^2I(u) .
\end{equation}

\textbf{The island count integrates to $T$.} Let $R=[a,b]$ be a run and $I_R(u)$ its number of level-$u$ islands. The islands of $R$ are the maximal intervals of $\{s\in R:f(s)>u\}$, so $I_R(u)=\mathbf 1\{f(a)>u\}+\#\{s\in(a,b]:f(s-1)\le u<f(s)\}$, and integrating in $u$,
\begin{equation}\label{eq:6.12}
\int_0^\infty I_R(u)\,du=f(a)+\sum_{s=a+1}^b\bigl(f(s)-f(s-1)\bigr)_+ .
\end{equation}
Each term of (6.12) is charged to a jump position, as follows. Since $a\ge2$ and $R$ is maximal, $v(a-1)$ is zero or of the sign opposite to $v(a)$, so $|\Delta v(a-1)|\ge|v(a)|=f(a)$; charge the first term to position $a-1$. For $s\in(a,b]$ the values $v(s-1)$ and $v(s)$ share a sign, so $(f(s)-f(s-1))_+\le|\Delta v(s-1)|$; charge that term to position $s-1$. The positions used by $R$ are therefore $a-1,a,\dots,b-1$, exactly $W_R$ of them.

\emph{These charges do not collide.} Let $R'=[a',b']$ be the next run, so $a'\ge b+1$; it uses the positions $a'-1,\dots,b'-1$, which begin at $a'-1\ge b$, strictly after the last position $b-1$ used by $R$. This is tight when $a'=b+1$, that is when two runs are adjacent because $v$ changes sign in a single step, and it is the reason the charge for the left end of a run is taken at $a-1$ rather than at $a$. Summing (6.12) over the runs,
\begin{equation}\label{eq:6.13}
\int_0^\infty I(u)\,du\ \le\ \sum_{s=1}^N|\Delta v(s)|=T .
\end{equation}

\textbf{Interpolation.} For any $U>0$, by (6.8), (6.9), (6.11) and (6.13),
\begin{equation*}
M_{\rm cap}^{\rm clean}\ \le\ 2\int_0^U(4u+1)\,du+8P^2\int_U^\infty u^{-2}I(u)\,du\ \le\ 4U^2+2U+\frac{8P^2T}{U^2} .
\end{equation*}
Take $U=(2P^2T)^{1/4}$. The outer two terms become $4\sqrt2\,P\sqrt T$ each, totalling $8\sqrt2\,P\sqrt T$. For the middle term, $T\ge1$ and $P\ge\sqrt T$ give $\sqrt P\,T^{1/4}\le P\sqrt T$, so $2U=2\cdot2^{1/4}\sqrt P\,T^{1/4}\le2\cdot2^{1/4}P\sqrt T$. Since $8\sqrt2+2\cdot2^{1/4}<13.7$, (6.7) follows.
\end{proof}

\subsection{The folded-remainder capital lemma}\label{sec:6.5}

Figure~\ref{fig:halfline} draws the half-line on one closed quantized sequence, with its two low plateaus and the bridge between them.

The estimates so far see only the mass above the half-line. Nothing yet prevents $v$ from carrying almost all of its $\ell_1$ mass in a wide, low plateau, at heights far below $\frac{s-1}2$, where the clean-cap bound says nothing. The following lemma removes that possibility outright, and it is the one place where the divisibility constraint is used for more than a total-variation budget.

\begin{figure}[tb]
\centering
\includegraphics[width=\textwidth]{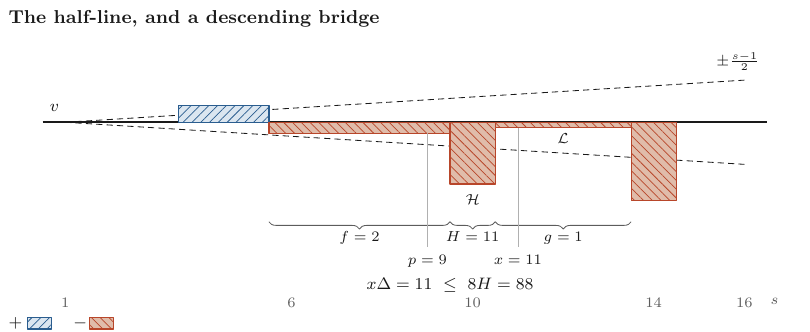}
\caption{\textbf{The half-line, and a descending bridge.}
A closed quantized sequence on $[16]$: $v(1)=v(17)=0$ and every non-zero jump
is $\pm s$ at its position $s$, the smallest the divisibility allows, at
$s=3,5,9,10,13,14$. The dashed envelope is $\pm\frac{s-1}2$. Positions strictly
inside it form $\mathcal L$ and, by Lemma~\ref{lem:6.10}, are exactly flat
there: two plateaus, of heights $f=2$ on $[6,9]$ and $g=1$ on $[11,13]$.
Positions outside it form $\mathcal H$, here $s=4,5,10,14$. Between the two
plateaus sits the bridge $(9,11)$, lying wholly in $\mathcal H$ and carrying
mass $H=11$; Lemma~\ref{lem:6.12} asks only for $x\Delta=11\le8H=88$. Summed
over both plateaus this is Proposition~\ref{prop:6.9}, here
$M_{\mathrm{low}}=11$ against $8M_{\mathrm{cap}}=248$. Note that $v$ changes
sign between $s=5$ and $s=6$ with no zero in between, so the two runs $[4,5]$
and $[6,14]$ are adjacent; the proof of \eqref{eq:6.12} charges the left end of
a run to the jump before it precisely so that adjacent runs cannot collide.
Legend as in Fig.~\ref{fig:transportation}.}
\label{fig:halfline}
\end{figure}

\begin{proposition}[folded-remainder capital lemma]\label{prop:6.9}
 For every $v\in\mathcal Q_N$,
\begin{equation}\label{eq:6.14}
M_{\rm low}\ \le\ 8\,M_{\rm cap} .
\end{equation}
\end{proposition}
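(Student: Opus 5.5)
The plan is to prove \eqref{eq:6.14} by cutting the low set $\mathcal L$ into \emph{exactly flat plateaus}. Each unit of plateau mass will be charged to mass in $\mathcal H$ that $v$ is forced to spend when the plateau's height is eventually brought down to the value $v(N+1)=0$. Divisibility enters through \eqref{eq:6.2} only, that is, through Lemma~\ref{lem:6.3}.

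\textbf{Step 1: flatness (the content I expect of Lemma~\ref{lem:6.10}).} Let $s\in\mathcal L$, so $f=|v(s)|<\frac{s-1}2$.
\begin{itemize}
\item If $\Delta v(s)\ne0$, then by \eqref{eq:6.2} we get $|v(s+1)|\ge s-f>\frac{s+1}2$, so $s+1\in\mathcal H$.
\item If $\Delta v(s-1)\ne0$, then $|v(s-1)|\ge s-1-f>\frac{s-1}2\ge\frac{s-2}2$, so $s-1\in\mathcal H$.
\item A jump from height $f$ straight to $0$ is impossible, because $0<f<s$ and the jump must lie in $s\mathbb Z$.
\end{itemize}
Hence $\mathcal L$ is a disjoint union of maximal intervals $[c_j,d_j]$ on which $v$ is constant, say with height $f_j$ in absolute value. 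Each plateau is flanked on both sides by positions of $\mathcal H$ whose height is at least about their own index. In particular $|v(d_j+1)|\ge d_j-f_j>\frac{d_j+1}2$. I also record the crude bounds $w_j:=d_j-c_j+1\le d_j$ and $f_j<\frac{d_j-1}2$.

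\textbf{Step 2: bridges and Abel summation.} I order the plateaus left to right. For each one I follow $v$ to the right until the next plateau of strictly smaller height, or until the terminal zero at $N+1$. This defines a descending chain $f_{j_1}>f_{j_2}>\dots>f_{j_m}>0=:f_{j_{m+1}}$, with drops $\Delta_k=f_{j_k}-f_{j_{k+1}}>0$. Plateaus of larger height are handled by attaching them to the chain of the lower plateau that follows them, by a running-minimum construction.

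Between consecutive chain members lies a \emph{bridge} $B_k$. It is the stretch between the end $x_k$ of plateau $j_k$ and the start of plateau $j_{k+1}$, and it lies wholly in $\mathcal H$ apart from intermediate higher plateaus, which are charged separately. Write $H_k$ for its $\mathcal H$-mass. Since the terminal height is $0$, we have $f_j=\sum_{k\ge j}\Delta_k$. Since all plateaus of a chain preceding $x_k$ lie in $[1,x_k]$, we have $\sum_{j\le k}w_j\le x_k$. Abel summation then gives
\begin{equation*}
M_{\rm low}=\sum_jf_jw_j=\sum_k\Delta_k\sum_{j\le k}w_j\le\sum_k x_k\Delta_k .
\end{equation*}
Because the bridges are pairwise disjoint, \eqref{eq:6.14} follows from the single-bridge bound
\begin{equation*}
x_k\Delta_k\le 8H_k .
\end{equation*}
This bound is the step I expect of Lemma~\ref{lem:6.12}.

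\textbf{Step 3: the bridge bound, the main obstacle.} The flank alone is not enough here. It gives $H_k\gtrsim x_k/2$, whereas $x_k\Delta_k$ can be as large as $x_k^2/2$, so the argument must use how the height is lost. Across the bridge the jumps $\Delta v(s)=a_s s$, with $a_s\in\mathbb Z$ and $s\ge x_k$, must sum to $\pm f_{j_k}\mp f_{j_{k+1}}$ (up to the sign change allowed by absolute values). That target is smaller than $x_k/2$, while every individual jump has size at least $x_k$.

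My first attempt is a residue argument. Track $r(s)=|v(s)|$ along the bridge. Its first jump lifts it to at least $x_k-f_{j_k}>x_k/2$. To land on height $f_{j_{k+1}}<x_k/2$ at a later position $y$, the intermediate values must be congruent to the landing height modulo each successive position. This should force the descent to proceed through a sequence of heights that decrease by at most about the gap between consecutive jump positions. As a result, the bridge must either be long, with length at least of order $\Delta_k$ at heights at least about $x_k/2$, or reach heights comparable to its own positions. In either case the bridge mass is at least $x_k\Delta_k/8$.

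Concretely, I would prove by induction on the number of jumps in the bridge that the bridge mass is at least $\frac{x_k}{2}\cdot\frac{\Delta_k}{4}$. The case of two jumps $\pm s,\mp s'$ with $s'-s=\Delta_k$ is the extremal configuration, as in Figure~\ref{fig:halfline}. If this induction fails, the fallback is to weaken the constant $8$, which would only change the final constant in Theorem~\ref{thm:6.2}.
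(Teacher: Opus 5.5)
Your Step 1 is correct and is essentially Lemma~\ref{lem:6.10}. Step 3, however, carries the whole quantitative content and is not proved.

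Taken literally, the mechanism you sketch is false. Divisibility links only consecutive values, $v(t+1)\equiv v(t)\pmod t$, and gives no congruence between an intermediate value and the landing height. Your dichotomy also does not close. Every position of $\mathcal H$ already has height at least $\frac{t-1}2$, so ``reaching heights comparable to its positions'' gives only $H\ge\frac p2(x-p-1)$ on a bridge from $p$ to $x$. That is far below $p\Delta/8$ when the bridge is short and $\Delta$ is of order $p$. The induction on the number of jumps has no inductive step. The fallback of a worse constant presupposes an inequality of this shape, which is exactly what is missing.

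The missing idea is a potential that reads off the plateau height and can drop only at a controlled rate: the folded remainder $R_t=\operatorname{dist}\bigl(v(t),(t-1)\mathbb Z\bigr)$.
\begin{itemize}
\item Because $v(t+1)\equiv v(t)\pmod t$, one has $R_{t+1}=\operatorname{dist}(v(t),t\mathbb Z)$.
\item Replacing the nearest multiple $kt$ by $k(t-1)$ gives $R_t-R_{t+1}\le\frac{|v(t)|}t+\frac12$.
\item At the right end $p$ of a plateau of height $f<\frac{p-1}2$ the potential equals $f$, and at the landing point it equals $g$.
\item Telescoping over the bridge gives $\Delta\le\frac Hp+\frac{x-p+1}2$.
\end{itemize}
Hence either $H\ge\frac12p\Delta$, or $x-p\ge\Delta$ and the bound $H\ge\frac p2(x-p-1)$ finishes. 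The exception $\Delta=1$, $x=p+1$ is ruled out by maximality of plateaus in the non-terminal case. In the terminal case it is ruled out by \eqref{eq:6.2}, since it would need a non-zero jump of size $f<p$ at $p$. This is Lemma~\ref{lem:6.12}, and it even gives $x\Delta\le8H$, stronger than the $p\Delta\le8H$ you need.

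Step 2 needs repair as well. As written, there are two possibilities for the plateaus lying between $j_k$ and $j_{k+1}$, and neither works:
\begin{itemize}
\item If they receive bridges of their own, those bridges are nested inside $B_k$. The bridges are then not pairwise disjoint, and $\sum_k8H_k$ double-counts $\mathcal H$-mass.
\item If they receive no bridges, they are left uncharged.
\end{itemize}
Your $B_k$ may also contain such plateaus and zeros of $v$. So it does not lie in $\mathcal H$, which the length alternative above uses.

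No running minimum is needed. Work on one maximal interval where $v\ne0$ at a time. Take the consecutive plateaus $F_j=[a_j,b_j]$ with heights $f_j$, put $x_0=a_1$ and $x_j=a_{j+1}$ for $j<m$, let $x_m$ be the first zero after the interval, and set $f_{m+1}=0$. Abel summation then gives
\begin{equation*}
\sum_jf_j|F_j|\ \le\ \sum_jf_j\bigl(x_j-x_{j-1}\bigr)\ \le\ \sum_jx_j\bigl(f_j-f_{j+1}\bigr)_+ ,
\end{equation*}
and the ascending terms are simply discarded. Every bridge $(b_j,x_j)$ then lies in $\mathcal H$, the bridges are disjoint, and only the descending ones are charged.
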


The proof occupies the rest of this subsection. Its shape is: the low region is a union of exactly flat plateaus (Lemma~\ref{lem:6.10}); a plateau can only lose height by paying, and the currency is a residue that decreases at a controlled rate (Lemma~\ref{lem:6.11}); each drop is paid for by the mass above the half-line lying between the two plateaus (Lemma~\ref{lem:6.12}); and an Abel summation converts the total plateau mass into a sum of such drops.

\begin{lemma}[the low region is flat]\label{lem:6.10}
If $s$ and $s+1$ both lie in $\mathcal L$ then $v(s)=v(s+1)$. Consequently every maximal contiguous component of $\mathcal L$ is an interval $F=[a,b]$ on which $v$ is constant, of some height $f=|v(a)|$.
\end{lemma}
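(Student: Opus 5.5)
The plan is to argue directly from the divisibility constraint \eqref{eq:6.2} of Lemma~\ref{lem:6.3}, with no analytic input. Suppose $s,s+1\in\mathcal L$. Then $0<|v(s)|<\frac{s-1}2$ and $0<|v(s+1)|<\frac s2$, so by the triangle inequality
\begin{equation*}
|\Delta v(s)|\ \le\ |v(s)|+|v(s+1)|\ <\ \frac{s-1}2+\frac s2\ =\ s-\tfrac12\ <\ s .
\end{equation*}
Since $\Delta v(s)\in s\mathbb Z$, the only multiple of $s$ of absolute value below $s$ is $0$, so $\Delta v(s)=0$ and $v(s)=v(s+1)$. The inequality is strict because the strict defining inequalities of $\mathcal L$ enter, and no sign case analysis is needed: the bound controls the jump whether or not $v(s)$ and $v(s+1)$ share a sign. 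A sign change is in any case ruled out, since the jump vanishes.

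For the consequence, I would take a maximal contiguous component $F=[a,b]$ of $\mathcal L$, meaning a maximal integer interval all of whose points lie in $\mathcal L$. Applying the first part to each consecutive pair $s,s+1$ with $a\le s<b$ and chaining the equalities gives $v(s)=v(a)$ for all $s\in F$. So $v$ is constant on $F$, non-zero because $a\in\mathcal L$, and has height $f=|v(a)|$.

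There is essentially no obstacle. The one point to check is that the $\mathcal L$ bound at $s+1$ is $\frac{(s+1)-1}2=\frac s2$, so the sum of the two bounds stays strictly below $s$ and the quantization applies.
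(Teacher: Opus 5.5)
Your proof is correct and is essentially the paper's argument: the two $\mathcal L$ conditions bound $|\Delta v(s)|$ strictly below $s$, and the quantization \eqref{eq:6.2} then forces $\Delta v(s)=0$, after which constancy on each component follows by chaining. The paper splits into same-sign and opposite-sign cases, but your single triangle-inequality bound $|v(s)|+|v(s+1)|<s-\tfrac12$ already covers both cases, so the split is not needed.
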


\begin{proof}
Suppose $s,s+1\in\mathcal L$, so $|v(s)|<\frac{s-1}2$ and $|v(s+1)|<\frac s2$. If the two values have the same sign then $|\Delta v(s)|\le\max(|v(s)|,|v(s+1)|)<\frac s2$; if they have opposite signs then $|\Delta v(s)|=|v(s)|+|v(s+1)|<\frac{s-1}2+\frac s2<s$. Either way $|\Delta v(s)|<s$, so $\Delta v(s)=0$ by (6.2), giving $v(s)=v(s+1)$; and the opposite-sign case is thereby excluded.
\end{proof}

Since a plateau of height $f\ge1$ at position $a$ requires $f<\frac{a-1}2$, every plateau satisfies $a\ge2f+2\ge4$. In particular no plateau meets $\{1,2,3\}$.

\begin{lemma}[folded remainder, one step]\label{lem:6.11}
For $z\in\mathbb Z$ and $m\ge1$ put $\rho_m(z)=\operatorname{dist}(z,m\mathbb Z)$, and for $2\le t\le N+1$ put $R_t=\rho_{t-1}(v(t))$, the value at $t=N+1$ being available because $v$ is closed. Then $R_{t+1}=\rho_t(v(t))$ for $2\le t\le N$, and
\begin{equation}\label{eq:6.15}
R_t-R_{t+1}\ \le\ \frac{|v(t)|}t+\frac12 .
\end{equation}
\end{lemma}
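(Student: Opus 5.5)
The plan is to prove the identity first and then reduce the inequality to a pure statement about distances to lattices, $\rho_{t-1}(z)-\rho_t(z)\le |z|/t+\frac12$ for every integer $z$ and every $t\ge2$.

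\textbf{The identity.} The function $\rho_m(z)=\operatorname{dist}(z,m\mathbb Z)$ depends only on the residue of $z$ modulo $m$. By Definition~\ref{defn:6.1}, $\Delta v(t)=v(t)-v(t+1)\in t\mathbb Z$ for $1\le t\le N$. So $v(t+1)\equiv v(t)\pmod t$, and hence
\begin{equation*}
R_{t+1}=\rho_t\bigl(v(t+1)\bigr)=\rho_t\bigl(v(t)\bigr),\qquad 2\le t\le N .
\end{equation*}
This is the only place where the divisibility (the arithmetic input of Lemma~\ref{lem:6.3}) enters. The value $R_{N+1}=\rho_N(v(N+1))=0$ is defined by closedness, which is why the range reaches $t=N+1$.

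\textbf{The inequality.} With the identity, \eqref{eq:6.15} becomes the claim above with $z=v(t)$:
\begin{equation*}
\rho_{t-1}(z)-\rho_t(z)\ \le\ \frac{|z|}t+\frac12 .
\end{equation*}
I would prove it by a single division. Write $z=qt+r$ with $q\in\mathbb Z$ and $|r|=\rho_t(z)\le t/2$, choosing $q$ as a nearest multiple. Then $z=q(t-1)+(q+r)$, and $q(t-1)\in(t-1)\mathbb Z$ is one competitor in the distance, so
\begin{equation*}
\rho_{t-1}(z)\le|q+r|\le|q|+\rho_t(z).
\end{equation*}
It remains to bound $|q|$. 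From $qt=z-r$ we get
\begin{equation*}
|q|\le\frac{|z|+|r|}t\le\frac{|z|}t+\frac12 ,
\end{equation*}
using $|r|\le t/2$. Combining the two displays gives \eqref{eq:6.15}.

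\textbf{Main obstacle.} There is essentially no obstacle. The only care needed is to keep the moduli straight: $R_t$ uses modulus $t-1$ at position $t$, whereas $R_{t+1}$ is rewritten at position $t$ with modulus $t$. It is this shift, not the division estimate, that uses the quantization. The $\frac12$ in the bound is exactly the rounding slack $|r|/t\le\frac12$, so no finer case analysis on the sign of $r$ is needed.
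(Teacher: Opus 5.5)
Your proof is correct and is essentially the paper's argument. Both derive the identity from $v(t+1)\equiv v(t)\pmod t$, then take the nearest quotient $q$ of $v(t)$ by $t$ and use $q(t-1)$ as the competitor for $\rho_{t-1}$, bounding $|q|\le |v(t)|/t+\tfrac12$ via $\rho_t(v(t))\le t/2$.
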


\begin{proof}
Divisibility gives $v(t+1)\equiv v(t)\pmod t$, whence $R_{t+1}=\rho_t(v(t+1))=\rho_t(v(t))$. Choose $k\in\mathbb Z$ nearest to $v(t)/t$, so that $|v(t)-kt|=\rho_t(v(t))=R_{t+1}$ and $|k|\le\frac{|v(t)|}t+\frac12$. Then $k(t-1)\in(t-1)\mathbb Z$ is a competitor for $R_t$:
\begin{equation*}
R_t\ \le\ |v(t)-k(t-1)|\ \le\ |v(t)-kt|+|k|\ \le\ R_{t+1}+\frac{|v(t)|}t+\frac12 . 
\qedhere
\end{equation*}
\end{proof}

The quantity $R_t$ is the \textbf{folded remainder}: it measures how far $v(t)$ is from being a legal value at the modulus available one step to the left. On a low plateau of height $f$ ending at $p$ we have $f<\frac{p-1}2$, so $0$ is the nearest multiple of $p-1$ and $R_p=f$ exactly; the folded remainder reads off the plateau height. Inequality (6.15) says that this reading can decrease by at most $\frac{|v(t)|}t+\frac12$ per step. A plateau therefore cannot lose height without either passing through positions of large $|v|$, which is mass above the half-line, or spending many steps, which by (6.2) again forces large $|v|$. Lemma~\ref{lem:6.12} makes the dichotomy quantitative.

\begin{lemma}[descending bridge]\label{lem:6.12}
Let $F=[a,p]$ be a low plateau of height $f$, and let $x>p$ be the least position after $p$ that is either the first position of the next low plateau, of height $g$, or the first zero of $v$ after $p$, in which case $g=0$. Suppose $\Delta:=f-g>0$ and put
\begin{equation*}
H=\sum_{t=p+1}^{x-1}|v(t)| .
\end{equation*}
Then $H\ge\frac18x\Delta$, and every position counted in $H$ lies in $\mathcal H$.
\end{lemma}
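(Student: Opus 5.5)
The plan is to read the plateau height off the folded remainder of Lemma~\ref{lem:6.11} at both ends of the bridge, telescope (6.15) across the bridge, and then convert the resulting sum of $|v(t)|/t$ into the mass $H$. This conversion uses either the index $p$ or the half-line lower bound on each bridge position, whichever is the stronger of the two.

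\emph{Bridge positions lie in $\mathcal H$.} Let $p<t<x$. By minimality of $x$, the value $v(t)$ is non-zero. Suppose $t\in\mathcal L$. The position $p+1$ is not in $\mathcal L$, by maximality of the plateau $F$, so the component of $\mathcal L$ containing $t$ begins at some point of $(p,t]$. That is the first position of a low plateau lying strictly between $p$ and $x$, contradicting the choice of $x$. Hence $t\in\mathcal H$, and in particular $|v(t)|\ge\frac{t-1}2$.

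\emph{The bridge is non-empty and heavy.} Suppose $x=p+1$. Then $|\Delta v(p)|$ equals either $|f-g|=\Delta\le f$ or $f+g$; the latter is an opposite-sign jump, which Lemma~\ref{lem:6.10} excludes when both ends are low, and when $g=0$ the jump is just $f$. In every case $0<|\Delta v(p)|<\frac{p-1}2+\frac p2<p$, contradicting (6.2). So $x\ge p+2$. The same argument shows that the jump at $p$ is non-zero with $|\Delta v(p)|\ge p$, hence $|v(p+1)|\ge p-f>\frac p2$, which gives $H>\frac p2$.

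\emph{Telescoping.} Since $f<\frac{p-1}2$, the nearest multiple of $p-1$ to $v(p)$ is $0$, so $R_p=f$. Likewise $R_x=g$: if $x$ starts a low plateau then $g<\frac{x-1}2$, and otherwise $v(x)=0$. Summing (6.15) from $t=p$ to $x-1$ gives
\begin{equation*}
\Delta=R_p-R_x\ \le\ \frac fp+\frac12+\sum_{t=p+1}^{x-1}\Bigl(\frac{|v(t)|}t+\frac12\Bigr).
\end{equation*}
Because each bridge position satisfies $\frac12\le\frac{|v(t)|}{t-1}$, every bridge term is at most $\frac{3|v(t)|}{t-1}\le\frac{3|v(t)|}p$. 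The boundary term $\frac fp+\frac12<1$ is absorbed into $\frac{2H}p$, using $H>\frac p2$. Hence $\Delta\le\frac{5H}p$, that is, $H\ge\frac{p\Delta}5$. I would tighten these constants when writing the proof out.

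\emph{The case split on $x$ versus $p$, which I expect to be the only delicate step.} The telescoping bound gives $H$ in terms of $p$, not $x$.
\begin{itemize}
\item If $x\le2p$, then $H\ge\frac{p\Delta}5\ge\frac{x\Delta}{10}$. This is not yet $\frac{x\Delta}8$, so the constant $3$ above must be sharpened, for instance by using $t\ge p+1$ and splitting off the first bridge term, which alone carries at least $\frac p2$.
\item If $x>2p$, discard the telescoping altogether and use the half-line directly:
\begin{equation*}
H\ \ge\ \sum_{t=p+1}^{x-1}\frac{t-1}2\ \approx\ \frac{x^2-p^2}4\ \ge\ \frac{3x^2}{16}.
\end{equation*}
Since $\Delta\le f<\frac p2<\frac x4$, this yields $H\ge\frac{3x}4\Delta\ge\frac{x\Delta}8$, with room to spare for the off-by-one corrections at small $x$.
\end{itemize}
The main obstacle is therefore bookkeeping the constants in the regime $x\le2p$ so that they reach $\frac18$, together with the small-index boundary cases. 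Recall that $p\ge a\ge2f+2\ge4$, so small indices are limited.
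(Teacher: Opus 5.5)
Your route is correct and differs from the paper's, but as written it stops one step short, because the coefficient $3$ is loose.

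\textbf{Closing the $x\le2p$ case.} For a bridge position the half-line gives $\frac12\le\frac{|v(t)|}{t-1}$. Hence
\[
\frac{|v(t)|}t+\frac12\ \le\ \frac{2|v(t)|}{t-1}\ \le\ \frac{2|v(t)|}p ,
\]
so the telescoped inequality reads $\Delta<1+\frac{2H}p$. Since $p+1\in\mathcal H$, you have $H\ge|v(p+1)|\ge\frac p2$. This gives $\Delta<\frac{4H}p$, that is $H>\frac{p\Delta}4$. For $x\le2p$ this is already $H>\frac{x\Delta}8$. Your proposed repair, splitting off the first bridge term, is therefore unnecessary; keeping the coefficient $3$ on the other terms would need further casework. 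Your $x>2p$ case is fine. Using $p\le\frac{x-1}2$ one gets $H\ge\frac{(x-1)(3x-5)}{16}$, against $\frac{x\Delta}8<\frac{x(x-3)}{32}$.

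\textbf{Comparison with the paper.} The paper also telescopes (6.15), and it also uses the half-line sum (6.16) for $x\ge2p$. It keeps the half-units separate, however, obtaining $\Delta\le\frac Hp+\frac{x-p+1}2$. It then splits the case $x<2p$ according to whether $H\ge\frac12p\Delta$. In the remaining case it deduces $x-p\ge\Delta$ and bounds $H$ by counting bridge positions. That needs a separate sub-case $\Delta=1$, where $x\ge p+2$ has to be proved, using (6.2) in the terminal case.

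You instead charge each half-unit to the height of its own bridge position via $|v(t)|\ge\frac{t-1}2$. This merges the two contributions into one term $\frac{2H}p$. The only leftover is the boundary term at $t=p$, which non-emptiness of the bridge absorbs, and you prove that non-emptiness once, up front. Your version removes the $H$-versus-$p\Delta$dichotomy and the $\Delta=1$ versus $\Delta\ge2$ sub-cases, at no cost in the final constant $\frac18$.
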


\begin{proof}
By construction no position of $(p,x)$ lies in $\mathcal L$ and none is a zero of $v$, so $(p,x)\subseteq\mathcal H$; in particular
\begin{equation}\label{eq:6.16}
H\ \ge\ \frac12\sum_{t=p+1}^{x-1}(t-1) .
\end{equation}
Both endpoints lie below the half-line, so $R_p=f$ and $R_x=g$, the latter also in the terminal case since $\rho_{x-1}(0)=0$. Summing (6.15) over $t=p,\dots,x-1$ and telescoping,
\begin{equation}\label{eq:6.17}
\Delta=R_p-R_x\ \le\ \sum_{t=p}^{x-1}\frac{|v(t)|}t+\frac{x-p}2\ \le\ \frac fp+\frac Hp+\frac{x-p}2\ \le\ \frac Hp+\frac{x-p+1}2 ,
\end{equation}
using $t\ge p$ in the denominators and $\frac fp<\frac12$. Note $p\ge4$, so (6.15) is available at every $t$ in the range. Write $d=x-p$.

\emph{Case $x\ge2p$.} Then $p\le\frac x2$, and (6.16) gives $H\ge\frac12\sum_{u=p}^{x-2}u\ge\frac12\sum_{u=\lceil x/2\rceil}^{x-2}u\ge\frac{x^2}{16}$ for $x\ge8$, which holds since $x\ge2p\ge8$. On the other hand $\Delta<f<\frac p2\le\frac x4$, so $\frac18x\Delta<\frac{x^2}{32}<H$.

\emph{Case $x<2p$, and $H\ge\frac12p\Delta$.} Then $H\ge\frac12\cdot\frac x2\cdot\Delta=\frac14x\Delta$.

\emph{Case $x<2p$, and $H<\frac12p\Delta$.} Then (6.17) gives $\Delta<\frac\Delta2+\frac{d+1}2$, so $\Delta<d+1$ and hence $d\ge\Delta$.

Suppose first $\Delta\ge2$. The range $(p,x)$ has $d-1$ positions, each with $t-1\ge p$, so (6.16) gives $H\ge\frac12p(d-1)\ge\frac12p\cdot\frac\Delta2=\frac14p\Delta>\frac18x\Delta$, where $d-1\ge\Delta-1\ge\frac\Delta2$ was used.

Suppose next $\Delta=1$. We claim $d\ge2$. If $F$ is followed by another low plateau then $x\ge p+2$ already, since the components of $\mathcal L$ are maximal and so $p+1\notin\mathcal L$. In the terminal case $x$ is the first zero after $p$; if $d=1$ then $\Delta v(p)=v(p)-0=v(p)$ has magnitude $f<\frac{p-1}2<p$ and is non-zero, contradicting (6.2). So $d\ge2$ in both cases, and $H\ge\frac12p(d-1)\ge\frac p2>\frac x4=\frac14x\Delta$.

In every case $H\ge\frac18x\Delta$.
\end{proof}

The case $\Delta=1$, $d=1$ is the only point at which the terminal and non-terminal bridges behave differently, and it is worth isolating why: a plateau of height $1$ can fall to $0$ in a single step only by a jump of magnitude $1$, and (6.2) forbids that at any position beyond the first. The divisibility constraint is doing exactly the work that the naive estimate cannot.

\begin{proof}[Proof of Proposition~\ref{prop:6.9}]
It suffices to prove (6.14) with both sides restricted to one maximal interval of positions on which $v$ is non-zero; summing over these components then gives the proposition, since each side is additive over them and the plateaus and bridges of one component lie inside it.

Fix such a component and let $F_1=[a_1,b_1],\dots,F_m=[a_m,b_m]$ be the low plateaus it contains, in order, of heights $f_1,\dots,f_m$; if $m=0$ there is nothing to prove. Let $z$ be the first zero of $v$ after the component, put
\begin{equation*}
x_0=a_1,\qquad x_j=a_{j+1}\ (1\le j<m),\qquad x_m=z,\qquad f_{m+1}=0 .
\end{equation*}
Since $b_j<a_{j+1}$ for $j<m$ and $b_m<z$, we have $|F_j|\le x_j-x_{j-1}$ for every $j$, so
\begin{equation}\label{eq:6.18}
\sum_{j=1}^mf_j|F_j|\ \le\ \sum_{j=1}^mf_j(x_j-x_{j-1})\ =\ -f_1x_0+\sum_{j=1}^mx_j\bigl(f_j-f_{j+1}\bigr)\ \le\ \sum_{j=1}^mx_j\bigl(f_j-f_{j+1}\bigr)_+ ,
\end{equation}
the middle step being Abel summation and the last using $f_1x_0\ge0$. This is the \textbf{capital identity}: the mass of the low plateaus, which is the left side, is at most the total capital $x_j(f_j-f_{j+1})_+$ released by their descents, each drop weighted by the position at which it occurs.

Each positive term of the right side of (6.18) is a descending bridge in the sense of Lemma~\ref{lem:6.12}, with $p=b_j$, $x=x_j$ and $\Delta=f_j-f_{j+1}$; Lemma~\ref{lem:6.12} applies to $j=m$ as well, where $x_m=z$ is the terminal zero and $f_{m+1}=0$. So $x_j(f_j-f_{j+1})_+\le8H_j$ with $H_j$ the mass of $(b_j,x_j)$. The open intervals $(b_j,x_j)$ are pairwise disjoint and contained in $\mathcal H$, so $\sum_jH_j$ is at most the mass of $\mathcal H$ in this component. Combining with (6.18) gives $M_{\rm low}\le8M_{\rm cap}$ on the component.
\end{proof}

\subsection{The thick and thin regimes}\label{sec:6.6}

\begin{lemma}[thin regime]\label{lem:6.13}
For every non-zero $v\in\mathcal Q_N$,
\begin{equation}\label{eq:6.19}
\mathcal E(v)\ \ge\ \frac{L^{3/2}}{35} .
\end{equation}
\end{lemma}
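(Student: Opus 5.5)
The plan is to work at level $0$ alone, where $\mathcal E(v)\ge E_0(v)=\sum_s v(s)^2$, and to let Proposition~\ref{prop:6.9} carry the low region. By \eqref{eq:6.14} we have $L=M_{\rm low}+M_{\rm cap}\le9\,M_{\rm cap}$. So it suffices to prove a purely "cap" inequality of the form
\begin{equation*}
\sum_{s\in\mathcal H}v(s)^2\ \ge\ c_0\,M_{\rm cap}^{3/2},
\end{equation*}
with $c_0$ large enough that $c_0\cdot9^{-3/2}\ge\frac1{35}$, that is, $c_0\ge\frac{27}{35}\approx0.77$. The point is that a cap position is forced to be tall relative to its index. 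Since $s\in\mathcal H$ means $|v(s)|\ge\frac{s-1}2$ with $s\ge2$, many cap positions force large values, and mass spread thinly cannot stay in $\mathcal H$.

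First I would check the cap inequality with a crude counting argument. Let $K=\#\mathcal H$ and $f=|v|$ on $\mathcal H$. Cauchy--Schwarz gives $\sum_{\mathcal H}f^2\ge M_{\rm cap}^2/K$. Since the $K$ positions are distinct integers $\ge2$ with $f(s)\ge\frac{s-1}2$, we also get $\sum_{\mathcal H}f^2\ge\sum_{j=1}^K(j/2)^2\ge K^3/12$. Taking the weighted geometric mean $\bigl(M^2/K\bigr)^{3/4}\bigl(K^3/12\bigr)^{1/4}$ eliminates $K$ and yields $\sum f^2\ge12^{-1/4}M_{\rm cap}^{3/2}\approx0.54\,M_{\rm cap}^{3/2}$. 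This already proves \eqref{eq:6.19} with constant about $50$ instead of $35$, so the structure of the proof is settled and only the constant needs sharpening.

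To reach $35$ I would solve the extremal problem exactly: minimize $\sum f(s)^2$ subject to $\sum f(s)=M$ and $f(s)\ge\frac{s-1}2$ on the support. A Lagrange/rearrangement argument shows the minimizer fills an initial segment $s=2,\dots,K+1$ at a common height $c\approx K/2$. For that configuration $M\approx2c^2$ and $\sum f^2\approx2c^3=M^{3/2}/\sqrt2$, so the sharp value is $c_0\approx0.707$. This is slightly below the needed $0.77$, so the cap bound alone falls short, and I expect this balance to be the main obstacle. There are two remedies I would try, in order.

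\emph{(a) Spend the low region instead of discarding it.} Keep $\sum_{\mathcal L}v^2$ in $E_0$. Use \eqref{eq:6.14} only to bound $M_{\rm low}$, and optimize the split between $M_{\rm low}$ and $M_{\rm cap}$ jointly rather than through the crude $L\le9M_{\rm cap}$.

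\emph{(b) Add the level-$1$ term.} Include $2^{-1/2}E_1(v)$. On the extremal flat configuration, adjacent equal-sign values double at level $1$, so this term supplies a further fixed multiple of $M^{3/2}$.

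Either correction should close the gap from the cap-only constant $9^{3/2}\sqrt2\approx38$ down to $35$. If neither does, I would state the lemma with $38$ and carry that constant into \S\ref{sec:6.7}.
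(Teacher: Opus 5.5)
Your crude argument is correct as far as it goes. The chain $\mathcal E\ge E_0\ge\sum_{\mathcal H}v^2\ge12^{-1/4}M_{\rm cap}^{3/2}$, combined with $L\le9M_{\rm cap}$ from Proposition~\ref{prop:6.9}, gives $\mathcal E(v)\ge L^{3/2}/(27\cdot12^{1/4})\approx L^{3/2}/50.3$. But that is not \eqref{eq:6.19}, and the rest of the plan does not close the gap.

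Your chain needs a cap inequality with $c_0\ge\frac{27}{35}\approx0.771$. The relaxation you set up cannot deliver it, and its extremal analysis is wrong, because the flat profile is not optimal. Already the ramp $f=\frac j2$ at positions $s=j+1$, $1\le j\le K$, has ratio $\frac{2K+1}{3\sqrt{K(K+1)}}\to\frac23<\frac1{\sqrt2}$. For support $\{2,\dots,K+1\}$ the water-filling optimum is $f=\max(\frac j2,c)$: a plateau followed by a ramp along the half-line. In the continuum limit, with $M=c^2+\frac{K^2}4$ and $c^2=tM$, its cost is $\bigl(\frac43t^{3/2}+\frac23(1-t)^{3/2}\bigr)M^{3/2}$. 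This is minimized at $t=\frac15$, with value $\frac4{3\sqrt5}\approx0.596$. So any argument that uses only $|v(s)|\ge\frac{s-1}2$ on $\mathcal H$ together with $L\le9M_{\rm cap}$ is capped near $27\cdot\frac{3\sqrt5}4\approx45$. Even your fallback constant $38$ is out of reach.

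Neither remedy repairs this.

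For (a), the inequalities you have on the low region are $\sum_{\mathcal L}v^2\ge M_{\rm low}$ (plateaus of height $1$ do occur) and $M_{\rm low}\le8M_{\rm cap}$. Their joint worst case is $c_0(L/9)^{3/2}+\frac89L$, which is $c_0/27$ again at order $L^{3/2}$.

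For (b), level-$1$ sums need not reinforce, since a dyadic pair $\{2t-1,2t\}$ can straddle a sign change. On the chirp of Proposition~\ref{prop:7.1}, $E_1=2m$ against $E_0\sim\frac{m^3}3$. A relaxation that keeps only $|v(s)|\ge\frac{s-1}2$ on $\mathcal H$ leaves the signs free. Alternating them makes every level above $0$ negligible, so the infimum stays $\frac4{3\sqrt5}$ even for the full $\mathcal E$. Beating it would require divisibility on $\mathcal H$ itself, which your plan never uses.

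The missing idea is to avoid the $9^{3/2}=27$ loss of Proposition~\ref{prop:6.9} altogether, and to harvest the coarse levels where they cannot cancel, namely on sign-constant runs. That is what the paper does:
\begin{itemize}
\item By \eqref{eq:6.2} at the right end $b_R$ of each run, $R\cup\{b_R+1\}$ contains a position of height at least $\frac{b_R}2$. Each position is charged at most twice, so $\mathcal E\ge E_0\ge\frac18F$ with $F=\sum_Rb_R^2$.
\item Grouping runs by the octave of $b_R$ gives $\sum_R\sqrt{W_R}\le8F^{1/3}$.
\item Proposition~\ref{prop:6.6} on the family of runs, with Cauchy--Schwarz, gives $L^2\le7\,\mathcal E\sum_R\sqrt{W_R}\le56\,\mathcal E F^{1/3}\le112\,\mathcal E^{4/3}$, and $112^{3/4}<35$.
\end{itemize}
Your constant $51$ would still suffice for the thin case of Theorem~\ref{thm:6.2}, since $51\sqrt{252}<63504$. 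But it does not prove the lemma as stated.
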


\begin{proof}
\textbf{A position tax on each run.} Let $R=[a_R,b_R]$ be a run and $A_R=\max_{s\in R}|v(s)|$. Assign to $R$ a position carrying height more than $\frac{b_R}2$, as follows. If $A_R\ge\frac{b_R}2$, assign a position of $R$ at which $|v|=A_R$. If $A_R<\frac{b_R}2$, then $\Delta v(b_R)\ne0$ by maximality of $R$, so $|\Delta v(b_R)|\ge b_R$ by (6.2); since $|v(b_R)|\le A_R<\frac{b_R}2$, this forces $v(b_R+1)\ne0$ and
\begin{equation*}
|v(b_R+1)|\ \ge\ |\Delta v(b_R)|-|v(b_R)|\ >\ b_R-\frac{b_R}2=\frac{b_R}2 ,
\end{equation*}
so assign $b_R+1$ to $R$. This second branch cannot occur when $b_R=N$: there $v(N+1)=0$, so $|\Delta v(N)|=|v(N)|\le A_R<\frac N2<N$ would be a non-zero jump of magnitude below $N$, contradicting (6.2). Hence the assigned position always lies in $[N]$.

A position is assigned at most twice, once as a peak of the run containing it and once as the successor of the run ending just before it. Since $\mathcal E(v)\ge E_0(v)=\sum_sv(s)^2$ and each assigned position contributes at least $\frac{b_R^2}4$,
\begin{equation}\label{eq:6.20}
\mathcal E(v)\ \ge\ \frac18\sum_Rb_R^2\ =:\ \frac F8 .
\end{equation}

\textbf{Width is controlled by the tax.} Put $S=\sum_R\sqrt{W_R}$ and group the runs by the octave of their right endpoint: let $k_i$ be the number of runs with $b_R\in[2^i,2^{i+1})$ and $S_i$ their contribution to $S$. Those runs are pairwise disjoint and contained in $[1,2^{i+1})$, so their widths total at most $2^{i+1}$, and Cauchy--Schwarz gives $S_i\le\sqrt{k_i2^{i+1}}$; the same containment gives $k_i\le2^{i+1}$, since each run has width at least one. Put $x_i=4^ik_i$, so that
\begin{equation*}
S_i\le\sqrt{2x_i}\,2^{-i/2},\qquad \sum_ix_i\le\sum_Rb_R^2=F,\qquad x_i\le2^{3i+1} .
\end{equation*}
Let $i_0$ be the integer with $2^{i_0}\le F^{1/3}<2^{i_0+1}$. Using the capacity bound $x_i\le2^{3i+1}$ below $i_0$ and Cauchy--Schwarz above it,
\begin{equation*}
\sum_{i\le i_0}S_i\le\sqrt2\sum_{i\le i_0}2^{(3i+1)/2}2^{-i/2}=2\sum_{i\le i_0}2^i<2^{i_0+2}\le4F^{1/3} ,
\end{equation*}
\begin{equation*}
\sum_{i>i_0}S_i\le\sqrt2\Bigl(\sum_{i>i_0}x_i\Bigr)^{1/2}\Bigl(\sum_{i>i_0}2^{-i}\Bigr)^{1/2}\le\sqrt{2F}\,2^{-i_0/2}<2F^{1/3} ,
\end{equation*}
the last step by $2^{i_0}>\frac12F^{1/3}$. Hence $S\le6F^{1/3}\le8F^{1/3}$.

\textbf{Combination.} Apply (6.5) to the family of all runs and then Cauchy--Schwarz:
\begin{equation*}
L^2=\Bigl(\sum_R\mu_R\Bigr)^2\le\Bigl(\sum_R\frac{\mu_R^2}{\sqrt{W_R}}\Bigr)\Bigl(\sum_R\sqrt{W_R}\Bigr)\le7\,\mathcal E(v)\,S\le56\,\mathcal E(v)F^{1/3} .
\end{equation*}
By (6.20), $F\le8\mathcal E(v)$, so $L^2\le56\mathcal E(v)\bigl(8\mathcal E(v)\bigr)^{1/3}=112\,\mathcal E(v)^{4/3}$, that is $\mathcal E(v)\ge L^{3/2}/112^{3/4}$, and $112^{3/4}<34.5<35$.
\end{proof}

\begin{proof}[Proof of Theorem~\ref{thm:6.2}]
Let $v\in\mathcal Q_N$ be non-zero, so $T\ge1$.

\textbf{Thick regime: $L\ge252\,T$.} Take $P=\dfrac L{252\sqrt T}$, which satisfies $P\ge\sqrt T$ precisely because $L\ge252T$. Every non-zero position lies in $\mathcal L$ or in $\mathcal H$, and every position of $\mathcal H$ lies in a root-expensive run, in a stopped node, or is clean; so by Proposition~\ref{prop:6.9}, Lemma~\ref{lem:6.7} and Lemma~\ref{lem:6.8},
\begin{equation*}
L=M_{\rm low}+M_{\rm cap}\ \le\ 9M_{\rm cap}\ \le\ 9\bigl(A+B+M_{\rm cap}^{\rm clean}\bigr)\ \le\ \frac{126\,\mathcal E(v)}P+126\,P\sqrt T .
\end{equation*}
The choice of $P$ makes the last term equal to $\frac L2$, so $\frac L2\le\frac{126\,\mathcal E(v)}P=\frac{126\cdot252\sqrt T}L\mathcal E(v)$, that is
\begin{equation*}
\mathcal E(v)\ \ge\ \frac{L^2}{2\cdot126\cdot252\,\sqrt T}=\frac1{63504}\cdot\frac{L^2}{\sqrt T} .
\end{equation*}

\textbf{Thin regime: $L<252\,T$.} By Lemma~\ref{lem:6.13},
\begin{equation*}
\mathcal E(v)\ \ge\ \frac{L^{3/2}}{35}=\frac1{35}\cdot\frac{L^2}{\sqrt L}\ >\ \frac1{35\sqrt{252}}\cdot\frac{L^2}{\sqrt T}\ >\ \frac1{556}\cdot\frac{L^2}{\sqrt T} ,
\end{equation*}
which is stronger than required.
\end{proof}

\subsection{The constant, and a sharper bound off the flat-spectrum regime}\label{sec:6.7}

Theorem~\ref{thm:6.2} settles the order and nothing more. Its constant is far from what the extremal families suggest: \S\ref{sec:7} exhibits pairs with $\mathcal E(v)\sqrt n/\lVert v\rVert_1^2\to\frac23$, and the exhaustive minimum of that ratio over all pairs with $n\le30$ is $0.887^2=0.787$ (Appendix B), against the $\frac1{63504\sqrt2}<1.2\cdot10^{-5}$ that (6.1) guarantees after substituting $T\le2n$. The discrepancy is a factor of about $6\cdot10^4$, and it is entirely an artefact of the bookkeeping: the constant $9$ of Proposition~\ref{prop:6.9}, the two constants $14$ of Lemmas~\ref{lem:6.7} and 6.8, and the threshold $252$ separating the regimes all compound. We have made no attempt to reduce them, since the order is what the application needs.

That the loss is bookkeeping rather than substance can be seen from a second, much cheaper estimate, which is sharp in constant but blind in exactly one configuration. It is the bound the previous version of this work obtained, and we record it because it is the better inequality on every pair whose total variation is concentrated in few octaves.

\begin{proposition}[octave bound]\label{prop:6.14}
For $0\le i\le J$ put $\tau_i=\sum_{s\in[2^i,2^{i+1})}|\Delta v(s)|$, so $\sum_i\tau_i=T$. Then for every non-zero $v\in\mathcal Q_N$,
\begin{equation}\label{eq:6.21}
\mathcal E(v)\ \ge\ \frac{L^2}{14\sum_{i=0}^J\sqrt{\tau_i}} .
\end{equation}
In particular $\mathcal E(v)\ge L^2/\bigl(14\Gamma\sqrt T\bigr)$ whenever $\sum_i\sqrt{\tau_i}\le\Gamma\sqrt T$, and $\mathcal E(v)\ge L^2/\bigl(14\sqrt{T(J+1)}\bigr)$ always.
\end{proposition}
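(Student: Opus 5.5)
The plan is to run the per-run form of Proposition~\ref{prop:6.6} and control the total width of the runs octave by octave, using the jump quantization (6.2) once per run. Everything reduces to one combinatorial estimate on the runs of $v$; none of the stopping-time or capital machinery of \S\ref{sec:6.3}--\S\ref{sec:6.5} is needed.

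\textbf{Step 1: reduce to a width sum.} Apply (6.5) to the family of all runs $R$, which are pairwise disjoint and each contained in a run. Then apply Cauchy--Schwarz exactly as in the combination step of Lemma~\ref{lem:6.13}:
\begin{equation*}
L^2=\Bigl(\sum_R\mu_R\Bigr)^2\le\Bigl(\sum_R\frac{\mu_R^2}{\sqrt{W_R}}\Bigr)\Bigl(\sum_R\sqrt{W_R}\Bigr)\le7\,\mathcal E(v)\sum_R\sqrt{W_R}.
\end{equation*}
It therefore suffices to show $\sum_R\sqrt{W_R}\le2\sum_i\sqrt{\tau_i}$.

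\textbf{Step 2: each run pays a jump in its own octave.} Let $R=[a_R,b_R]$ be a run. By maximality, $v(b_R+1)$ is either zero or of the opposite sign, where $v(N+1)=0$ covers the case $b_R=N$. Hence $\Delta v(b_R)\ne0$, and by (6.2) we get $|\Delta v(b_R)|\ge b_R$.

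Group the runs by the octave $[2^i,2^{i+1})$ containing $b_R$, and let $k_i$ be the number of runs in octave $i$. The octave $i=J$ contains $N=2^J$, so every run is assigned to some octave $0\le i\le J$. Distinct runs have distinct right endpoints, so their jumps are counted separately in $\tau_i$. This gives
\begin{equation*}
k_i\,2^i\le\tau_i.
\end{equation*}

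\textbf{Step 3: Cauchy--Schwarz inside an octave.} The runs of octave $i$ are disjoint subintervals of $[1,2^{i+1})$, so their widths total at most $2^{i+1}$. Writing $S_i$ for their contribution to $\sum_R\sqrt{W_R}$, Cauchy--Schwarz and Step 2 give
\begin{equation*}
S_i\le\sqrt{k_i\,2^{i+1}}\le\sqrt{2\tau_i}.
\end{equation*}
Summing over $i$ yields $\sum_R\sqrt{W_R}\le\sqrt2\sum_i\sqrt{\tau_i}$. Combined with Step 1, this gives $\mathcal E(v)\ge L^2/(7\sqrt2\sum_i\sqrt{\tau_i})$, which is stronger than (6.21).

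\textbf{Step 4: the two corollaries.} The first follows immediately from the hypothesis $\sum_i\sqrt{\tau_i}\le\Gamma\sqrt T$. The second follows from Cauchy--Schwarz over the $J+1$ octaves together with $\sum_i\tau_i=T$:
\begin{equation*}
\sum_{i=0}^J\sqrt{\tau_i}\le\sqrt{(J+1)\,T}.
\end{equation*}

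\textbf{Main obstacle.} The only delicate point is Step 2. I must check that the terminating jump of a run lies at $b_R$ itself, so that it falls in the octave used to bound the widths. I must also check that adjacent runs, separated by a single sign change, are not charged to the same jump. This holds because the charge is always taken at the right endpoint, and distinct runs have distinct right endpoints.
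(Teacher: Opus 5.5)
Your proof is correct and is essentially the paper's own argument in Appendix~C.1. It uses the same counting facts: the right-endpoint jump is charged via (6.2), giving $k_i2^i\le\tau_i$; the runs of octave $i$ have total width at most $2^{i+1}$; hence $\sum\sqrt{W_R}\le\sqrt{2\tau_i}$ per octave. These are combined with Proposition~\ref{prop:6.6} on the family of all runs, giving the same constant $7\sqrt2<14$. The only difference is cosmetic: you apply Cauchy--Schwarz once over all runs, while the paper applies it within each octave and then across octaves, and the two routes give identical bounds.
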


\begin{proof}
In Appendix C. The argument assigns each run to the octave of its right endpoint, uses (6.2) to bound the number of runs an octave can hold by $\tau_i2^{-i}$, and then applies Cauchy--Schwarz once within each octave and once across octaves.
\end{proof}

Comparing (6.21) with (6.1): the two agree in order exactly when the octave spectrum $(\tau_i)$ is concentrated, and (6.21) degrades by $\sqrt{J+1}=\Theta(\sqrt{\log n})$ when the spectrum is flat, which is the configuration Theorem~\ref{thm:6.2} was proved to handle. Off that configuration (6.21) is better in the constant by a factor of about $63504/14\approx4500$. Neither statement supersedes the other, and both are unconditional.

\section{Sharpness and the chirp family}\label{sec:7}

Theorem~\ref{thm:6.2} is an order statement with a crude constant. This section calibrates it from the other side, with a family that is exactly solvable and, as far as we can determine, extremal. The family also explains the mechanism: a closed quantized sequence can be a near-perfect multiscale canceller, putting essentially all of its energy at the finest level, and yet the divisibility constraint stops it from cancelling too much, because a longer oscillating ladder would exceed the total-variation budget.

Exhaustive minimization of
\begin{equation*}
\kappa(\lambda,\mu)=\frac{\sqrt{\mathcal E(v)}\;n^{1/4}}{\lVert v\rVert_1}
\end{equation*}
over all pairs with $n\le30$ (Appendix B) finds the minimizers to be \textbf{chirps}: difference vectors that reverse sign at every second position, each jump being a small multiple of the minimum that Lemma~\ref{lem:2.6} permits there. It is the pair drawn in Fig.~\ref{fig:staircase}. The record low over that range is at $n=30$, $\lambda=(10,7,7,3,3)$ against $\mu=(9,9,5,5,1,1)$, with
\begin{equation*}
v=(0,\,2,\,2,\,-4,\,-4,\,6,\,6,\,-8,\,-8,\,10).
\end{equation*}

\textbf{Construction.} For $m\equiv2\pmod4$, write $m=4M+2$ and define the \textbf{chirp pair}
\begin{equation*}
\lambda^{\mathrm{ch}}_m=(m)\cup\bigl\{t,t\ :\ t\equiv3\!\!\pmod4,\ 3\le t\le m-3\bigr\},\qquad \mu^{\mathrm{ch}}_m=\bigl\{t,t\ :\ t\equiv1\!\!\pmod4,\ 1\le t\le m-1\bigr\},
\end{equation*}
an interleaved doubled odd ladder $m\mid(m-1)^2\mid(m-3)^2\mid\cdots\mid1^2$ with alternating ownership. Both have size $\frac{m(m+2)}4$, so common blocks of size $1$ embed the pair into $\mathcal K_n/\!\cong$ whenever $m(m+2)\le4n$, that is, up to $m\sim2\sqrt n$. The same ladder is available for every even $m$; the residue condition $m\equiv2\pmod4$ only fixes the ownership of the last rung $t=1$, which is what makes the computation below uniform in $m$.

\begin{proposition}[the chirp pair, exactly]\label{prop:7.1}
Let $v=v_m$ be the difference vector of the chirp pair, $m=4M+2$. Then:

\textbf{(a)} $v(1)=0$, $v(m)=m$, and $v\equiv(-1)^i(m-2i)$ on $\{m-2i,\ m-2i+1\}$ for $1\le i\le\frac{m-2}2$. Every non-zero jump below $s=m$ sits at an odd position $s$ and has magnitude exactly $2s$, twice the minimum permitted by Lemma~\ref{lem:2.6}, while the jump at $s=m$ has magnitude $m$, the minimum itself; consequently
\begin{equation*}
\mathrm{TV}(v)=2\!\!\sum_{s<m,\ s\ \text{odd}}\!\!s\ +\ m=\frac{m(m+2)}2 ,
\end{equation*}
which equals $2n$ exactly when the ladder fills $n$, so that the chirp meets the budget of Lemma~\ref{lem:2.6} with nothing to spare. Finally $\sum_sv(s)=\lVert\lambda^{\mathrm{ch}}_m\rVert_2^2-\lVert\mu^{\mathrm{ch}}_m\rVert_2^2=2$.

\textbf{(b)} With $U=\frac{m-2}2$,
\begin{gather*}
\lVert v\rVert_1=\frac{m^2}2,\qquad E_0(v)=m^2+\frac43U(U+1)(2U+1)=\frac{m^3}3+O(m^2),\\
E_1(v)=2m,\qquad E_\ell(v)=4\ \ (2\le\ell\le J).
\end{gather*}

\textbf{(c)} Consequently $\kappa^2=\dfrac{4\sqrt n}{3m}\bigl(1+O(m^{-1})\bigr)$, and choosing $m$ as large as feasibility permits, $m=2\lfloor\sqrt n\rfloor-O(1)$,
\begin{equation*}
\inf_{\lambda\ne\mu\vdash n}\kappa^2\ \le\ \frac23+O\bigl(n^{-1/2}\bigr).
\end{equation*}
In particular no lower bound of the form $\mathcal E(v)\ge c\lVert v\rVert_1^2/\sqrt n$ can hold with $c>\frac23$.
\end{proposition}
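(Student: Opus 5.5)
The plan is a direct computation. Everything will be read off the jump identity of Lemma~\ref{lem:2.6}, $\Delta v(s)=s\bigl(m_s(\lambda)-m_s(\mu)\bigr)$, integrated downward from $v(N+1)=0$. Common unit blocks change neither side, so padding is harmless.

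\textbf{(a): the profile of $v$.} The multiplicity differences are nonzero only at three kinds of position:
\begin{itemize}
\item $s=m$, where the difference is $+1$;
\item $s\equiv3\pmod4$ with $3\le s\le m-3$, where it is $+2$;
\item $s\equiv1\pmod4$ with $1\le s\le m-1$, where it is $-2$.
\end{itemize}
Starting at the top gives $v(m)=m$. Then $v(m-1)=m-2(m-1)=-(m-2)$ and $v(m-3)=-(m-2)+2(m-3)=m-4$. In general, an induction on $i$ shows that the odd position $m-2i+1$ carries a jump of $(-1)^i\,2(m-2i+1)$, taking the value from $(-1)^{i-1}(m-2i+2)$ to $(-1)^i(m-2i)$. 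There are no jumps at even positions below $m$, so $v$ is constant on $\{m-2i,m-2i+1\}$.

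At the bottom, $U=(m-2)/2=2M$ is even, so $v(2)=+2$. The final rung $t=1$ is owned by $\mu$ and gives $\Delta v(1)=-2$, hence $v(1)=0$. This is exactly where $m\equiv2\pmod4$ enters.

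The remaining claims in (a) follow quickly:
\begin{itemize}
\item The total variation is $\mathrm{TV}(v)=m+2\sum_{s<m,\ s\text{ odd}}s=m+2(m/2)^2=m(m+2)/2$.
\item The identity $\sum_s v(s)=\sum_i\lambda_i^2-\sum_j\mu_j^2$ follows by exchanging sums in $u_\lambda(s)=\sum_{\lambda_i\ge s}\lambda_i$.
\item Evaluating that identity on the ladder gives $2$; alternatively, sum the explicit profile.
\end{itemize}

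\textbf{(b): norms and level energies.} Here $\lVert v\rVert_1=m+2\sum_{i=1}^U(m-2i)$ and $E_0=m^2+2\sum_{j=1}^U(2j)^2$. Both are closed-form power sums giving the stated values.

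For $\ell\ge1$, note that the level-$1$ dyadic pairs are $\{2k-1,2k\}$. Each pair therefore straddles one sign reversal, pairing $(-1)^{i-1}(m-2i+2)$ with $(-1)^i(m-2i)$, so every pair sum is $\pm2$. The top pair $\{m-1,m\}$ sums to $2$, the bottom pair $\{1,2\}$ sums to $0+2=2$, and pairs beyond $m$ vanish. I will verify the signs carefully and check that the count of nonzero pairs gives $E_1=2m$ exactly, fixing the endpoint bookkeeping if needed.

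At levels $\ell\ge2$ the pair sums alternate in sign along the interior. Each dyadic interval is a union of an even number of consecutive pairs, so its sum telescopes to $0$, except for the single interval containing the unmatched boundary contribution. That gives $E_\ell=4$.

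\textbf{(c): the ratio $\kappa^2$.} Substituting into $\mathcal E=\sum_\ell2^{-\ell/2}E_\ell=m^3/3+O(m^2)$ gives
\[
\kappa^2=\frac{\mathcal E\sqrt n}{\lVert v\rVert_1^2}=\frac{4\sqrt n}{3m}\bigl(1+O(m^{-1})\bigr).
\]
Feasibility requires $m(m+2)\le4n$. Taking the largest admissible $m\equiv2\pmod4$, so $m=2\sqrt n-O(1)$, yields $\frac23+O(n^{-1/2})$.

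The main obstacle is the sign and boundary bookkeeping at levels $\ell\ge1$: confirming that the pair sums really alternate, and identifying exactly which intervals retain a residual $\pm2$, so that $E_1$ and $E_\ell$ come out exactly. This matters only for the exact identities, not for the asymptotic claim (c), since those levels contribute $O(m)$ against the $m^3/3$ coming from level $0$.
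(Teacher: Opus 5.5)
Your proposal is correct and follows essentially the paper's route. You read the profile off the jumps $\Delta v(s)=s(m_s(\lambda)-m_s(\mu))$, which is the same additivity the paper phrases as crossing weighted rungs; $E_1$ comes from the $m/2$ straddling pairs each summing to $\pm2$; and $E_\ell=4$ for $\ell\ge2$ comes from cancellation of consecutive pairs, where the paper uses blocks of four, i.e.\ pairs of pairs. The bookkeeping you flag as the main obstacle closes at once: the pair sums run $+2,-2,\dots,+2$ over an odd number $m/2=2M+1$ of pairs, so every level-$\ell\ge2$ interval cancels except the one containing $\{m-1,m\}$, whose sum is $2$.
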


\begin{proof}
(a) Order the rungs $r_0=m>r_1=m-1>r_2=m-3>\cdots$, with $r_i=m-(2i-1)$ for $i\ge1$, carrying weight $1$ at $r_0$ and weight $2$ below, owned by $\lambda^{\mathrm{ch}}_m$ for even $i$ and by $\mu^{\mathrm{ch}}_m$ for odd $i$. Since $u$ is additive over parts, $v$ is constant between consecutive rungs, and descending across a rung changes $v$ by $\pm(\text{weight})\cdot(\text{rung})$ according to ownership. Above $m$, $v=0$; crossing $r_0$ gives $v=m$ on $\{m\}$. Inductively, if $v=(-1)^i(m-2i)$ on run $i$ then crossing $r_{i+1}=m-2i-1$ gives $(-1)^{i+1}(m-2i-2)$: for even $i$, $(m-2i)-2(m-2i-1)=-(m-2i-2)$, and for odd $i$, $-(m-2i)+2(m-2i-1)=m-2i-2$. Run $i\ge1$ occupies $(r_{i+1},r_i]=\{m-2i,m-2i+1\}$. The last rung is $t=1$, owned by $\mu^{\mathrm{ch}}_m$, giving $v(1)=2-2=0$. At an odd position $s=m-2i-1$ the two adjacent run values are $(-1)^i(m-2i)$ and $(-1)^{i+1}(m-2i-2)$, of opposite sign, so the jump has magnitude $(m-2i)+(m-2i-2)=2s$; at an even position below $m$ the two adjacent values lie in one run and coincide. One jump remains, at $s=m$: there $v(m)=m$ and $v(m+1)=0$, so its magnitude is $m=1\cdot m$, which is Lemma~\ref{lem:2.6} at $s=m$, where $\lambda^{\mathrm{ch}}_m$ has one part of size $m$ and $\mu^{\mathrm{ch}}_m$ none. The odd positions carrying a jump are $s=1,3,\dots,m-1$, of sum $(m/2)^2$, so $\mathrm{TV}(v)=2(m/2)^2+m=\frac{m(m+2)}2$; since $\lambda^{\mathrm{ch}}_m$ has size $\frac{m(m+2)}4$, this is $2n$ when the ladder needs no padding and less otherwise. For the mean, $\sum_su_\lambda(s)=\sum_i\sum_{s\le\lambda_i}\lambda_i=\lVert\lambda\rVert_2^2$, so $\sum_sv(s)=\lVert\lambda\rVert_2^2-\lVert\mu\rVert_2^2$; substituting the two ladders,
\begin{equation*}
\lVert\lambda^{\mathrm{ch}}_m\rVert_2^2-\lVert\mu^{\mathrm{ch}}_m\rVert_2^2=(4M+2)^2+2\sum_{t=1}^M(4t-1)^2-2\sum_{t=0}^M(4t+1)^2=(4M+2)^2-32\sum_{t=1}^Mt-2=2 .
\end{equation*}

(b) $\lVert v\rVert_1=m+2\sum_{i=1}^{U}(m-2i)=m+\frac{m(m-2)}2=\frac{m^2}2$. Writing $m-2i=2w$ with $1\le w\le U$, $E_0=\lVert v\rVert_2^2=m^2+2\sum_{i}(m-2i)^2=m^2+8\sum_{w\le U}w^2=m^2+\frac43U(U+1)(2U+1)$. For $E_1$, level-$1$ intervals are the pairs $\{2t-1,2t\}$, and by (a) each such pair meeting the support has sum $\pm2$; there are $m/2$ of them, so $E_1=4\cdot\frac m2=2m$. For $\ell\ge2$, group positions into blocks $\{4j+1,\dots,4j+4\}$: by (a) each complete block below $m-2$ has signed sum $4j-(4j+2)-(4j+2)+(4j+4)=0$, so every prefix sum at a multiple of $4$ vanishes below $m-2$, while the positions above contribute the total $\sum_sv(s)=2$. Every level-$\ell$ endpoint with $\ell\ge2$ is a multiple of $4$, so each interval sum is $0$, or $2$ for the single interval containing $\{m-1,m\}$; hence $E_\ell=4$.

(c) By (b), $\mathcal E(v)=E_0+2^{-1/2}\cdot2m+4\sum_{\ell\ge2}2^{-\ell/2}=\frac{m^3}3+O(m^2)$, and $\kappa^2=\mathcal E\sqrt n/\lVert v\rVert_1^2=\frac{(m^3/3)\sqrt n}{m^4/4}(1+O(m^{-1}))$. Feasibility caps $m$ at $2\sqrt n-O(1)$, at which $\kappa^2\to\frac23$.
\end{proof}

The chirp is a perfect multiscale canceller: essentially all of its weighted energy sits at level $0$, and yet it cannot push $\kappa$ below $\sqrt{2/3}$, because the mass constraint stops the ladder at height $m\approx2\sqrt n$; equivalently, by (a) its total variation already exhausts the budget $2n$ of Lemma~\ref{lem:2.6}, leaving no room for a longer ladder. Numerically, $\kappa$ along the chirp descends to $0.81701$ against $\sqrt{2/3}=0.81650$ at $m=802$. The exhaustive minima for $n\le30$ lie in $[0.887,1.159]$, and while they do not decrease monotonically, their record lows fall at exactly the three values of $n$ at which a full ladder fits, $n=12,20,30$, each time at the corresponding ladder pair (Appendix B). This is a calibration, not evidence of a proof: it fixes the constant a proof would have to produce, and it says nothing about whether one exists.

A lower bound $\kappa\ge c$ is exactly a lower bound $\mathcal E(v)\ge c^2\lVert v\rVert_1^2/\sqrt n$, so Theorem~\ref{thm:6.2} with $\mathrm{TV}(v)\le2n$ says $\kappa^2\ge1/(63504\sqrt2)>1.1\cdot10^{-5}$, while Proposition~\ref{prop:7.1}(c) says $\inf\kappa^2\le\frac23+O(n^{-1/2})$. The gap is a factor of about $6\cdot10^4$ and it is entirely in our proof. What the truth is, we do not know.

\begin{conjecture}[sharp constant]\label{conj:7.2}
 Does
\begin{equation*}
\inf_{\lambda\ne\mu\vdash n}\ \frac{\mathcal E\bigl(v_{\lambda,\mu}\bigr)\sqrt n}{\lVert v_{\lambda,\mu}\rVert_1^{\,2}}\ \longrightarrow\ \frac23\ ?
\end{equation*}
Equivalently, is the chirp family asymptotically extremal?
\end{conjecture}

This is a question about a constant, and it should not be confused with the inverse-energy inequality itself, which is Theorem~\ref{thm:6.2} and is proved. Nothing in Theorem~\ref{thm:1.1}, Theorem~\ref{thm:5.11} or Corollary~\ref{cor:5.12} is conditional on Conjecture~\ref{conj:7.2}; a positive answer would give $\rho(F_n)\le\bigl(6.79+o(1)\bigr)n^{1/4}$, since $3\sqrt{2+\sqrt2}\,/\sqrt{2/3}<6.79$, and a negative answer would change none of them. Being a statement about a limit it would not by itself replace the constant $1662$, which holds at every $n$; a uniform replacement needs the bound at every finite $n$ as well.

\section{Computational complexity}\label{sec:8}
Sections 3--5 used the \emph{geometry} of the feasible tables: which of them are optimal, how their cost decomposes, and what that decomposition says about distances. This section uses the same tables and the same identity of Theorem~\ref{thm:2.3}, but reads instead the \emph{direction and degree} of the optimization defined on them, and that is what decides tractability. Nothing here depends on \S\ref{sec:3}--\S\ref{sec:5}, and nothing there depends on this section; both are consequences of Theorem~\ref{thm:2.3}.

Two decision problems are considered, both in the explicit part-list representation (P1), with the encoding named in each statement.

\begin{itemize}
\item \textbf{CED}: given $\lambda,\mu\vdash n$ and $Q\in\mathbb Z$, is $q^*(\lambda,\mu)\le Q$?
\item \textbf{CRE}: given $\lambda,\mu\vdash n$ and $K\in\mathbb Z$, is $\max_{X\in T(\lambda,\mu)}\lVert X\rVert_F^2\ge K$?
\end{itemize}

By Theorem~\ref{thm:2.3} the two are interconvertible under the affine substitution $Q\leftrightarrow\frac{\lVert\lambda\rVert_2^2+\lVert\mu\rVert_2^2}2-K$, which suffices for the equivalence of the decision problems. It does \emph{not} transport approximation guarantees, and \S\ref{sec:8.3} accordingly treats the two optimization problems separately.

Throughout we keep the two categories apart: \textbf{CED} and \textbf{CRE} are decision problems, and are shown NP-complete; the associated evaluation problems (computing $q^*(\lambda,\mu)$, respectively computing $\max_X\lVert X\rVert_F^2$) are optimization problems, and are correspondingly NP-hard.

\textbf{A note on thresholds.} With the objective written as $\sum_{ij}g(x_{ij})$ instead, $\sum_{ij}x_{ij}=n$ gives $\sum_{ij}g(x_{ij})=\frac{\lVert X\rVert_F^2-n}2$, so a threshold $K$ in the Frobenius form corresponds to $K'=\frac{K-n}2$. The two forms define the same decision problem but their thresholds differ by a shift of $\frac n2$. We use the Frobenius form throughout. In this section the letter $C$ denotes a bin or block capacity in a packing instance; the block-energy metric $B(\lambda,\mu)$ of \S\ref{sec:3.2} does not appear.

\subsection{Strong NP-completeness of the decision problem}\label{sec:8.1}

\begin{theorem}\label{thm:8.1}
In representation (P1), CED and CRE are NP-complete under binary encoding, and remain NP-complete when the input numbers are bounded by a polynomial in the input length; that is, they are strongly NP-complete. This holds already when all blocks of $\lambda$ have equal size. Consequently computing $q^*(\lambda,\mu)$, and computing $\max_{X\in T(\lambda,\mu)}\lVert X\rVert_F^2$, are strongly NP-hard.
\end{theorem}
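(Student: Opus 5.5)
The plan is a reduction from \textsc{3-Partition}, which is strongly NP-complete \cite{ref8}, to CRE, with the uniform side $\lambda$ playing the bins. The equivalence with CED then follows from the affine substitution of Theorem~\ref{thm:2.3}. Given an instance $a_1,\dots,a_{3k}$ with $\sum_ja_j=kC$ (and $C/4<a_j<C/2$, though this will not be needed), I will set
\begin{equation*}
\lambda=(C^{\,k}),\qquad \mu=\text{the }a_j\text{ sorted non-increasingly},\qquad n=kC,\qquad K=\sum_ja_j^2 .
\end{equation*}
In CED form the threshold is $Q=\frac{kC^2-\sum_ja_j^2}2$. Under unary encoding of the \textsc{3-Partition} instance, $n=kC$ and every number written is polynomial in the input length, so the reduction preserves strong hardness. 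By construction every block of $\lambda$ has the same size $C$.

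\emph{Correctness.} For any $X\in T(\lambda,\mu)$, column $j$ consists of non-negative integers summing to $a_j$. Hence $\sum_ix_{ij}^2\le a_j^2$, with equality if and only if the column has a single non-zero cell. Summing over the columns gives $\lVert X\rVert_F^2\le\sum_ja_j^2=K$ for every table, with equality exactly when every item lies wholly in one row.

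If a $3$-partition exists, put $x_{ij}=a_j$ when item $j$ is assigned to bin $i$ and $x_{ij}=0$ otherwise. The row sums are then $C$, so $X\in T(\lambda,\mu)$ and $\lVert X\rVert_F^2=K$. Conversely, if some table attains $\lVert X\rVert_F^2\ge K$, then every column is unsplit. The row margins force each row's items to sum to exactly $C$, which is a valid partition of the items into $k$ bins of sum $C$. The triple size of each bin then follows from the usual bounds $C/4<a_j<C/2$.

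Because the maximum is taken over integer tables, Lemma~\ref{lem:2.2} is not even needed here. I expect this equality-forcing step to be the only point requiring care, and it is a one-line convexity argument.

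\emph{NP membership.} This is where representation (P1) matters, as \S\ref{sec:2.4} anticipates. The certificate is a table $X\in T(\lambda,\mu)$. It has $k(\lambda)k(\mu)\le L^2$ cells, each an integer at most $n$ and therefore of $O(\log n)\le O(L)$ bits. Checking the margins and computing $\lVert X\rVert_F^2$ against $K$, or computing $\operatorname{cost}(X)$ against $Q$, takes polynomial time. By Theorem~\ref{thm:2.3} the minimum over tables equals $q^*$, so the certificate is sound and complete for CED and CRE alike.

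Strong NP-completeness combined with membership gives the claim. NP-hardness of the two evaluation problems follows immediately, since either evaluation answers the corresponding decision problem.
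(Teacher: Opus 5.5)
Your proposal is correct and matches the paper's proof: it uses the same reduction from 3-Partition with $\lambda=(C^{k})$, $\mu=(a_j)$ and $K=\sum_j a_j^2$, the same column-wise bound that forces equality, and the same contingency-table certificate for membership in (P1). The only difference is cosmetic: you derive strongness via unary encoding, whereas the paper uses 3-Partition instances whose integers are polynomially bounded in the number of items; the two formulations are equivalent. Your aside that the bounds $C/4<a_j<C/2$ are ``not needed'' sits oddly with invoking them later for the triple size, but this is harmless, since 3-Partition only asks for $k$ groups each summing to $C$.
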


\begin{proof}
\textbf{Membership.} A certificate for CRE is a table $X\in T(\lambda,\mu)$. It has $k(\lambda)k(\mu)$ entries, each at most $n$ and so of $O(\log n)$ bits; since (P1) writes out every part, $k(\lambda)+k(\mu)=O(L)$, and the certificate has length $O(L^2\log n)$, polynomial in the input. Verification checks the margins and $\lVert X\rVert_F^2\ge K$ in time polynomial in that length. A certificate need only meet the threshold and never has to establish optimality. CED lies in NP by the affine substitution above.

This is the step that requires (P1). Under the compressed representation (P2) the same certificate can be exponentially longer than the input, and we claim no membership in NP there (\S\ref{sec:2.4}).

\textbf{Reduction.} We reduce from \textbf{3-Partition}, strongly NP-complete (\cite{ref8}, SP15): given $3m$ positive integers $a_1,\dots,a_{3m}$ with $\sum_ia_i=mC$ and $C/4<a_i<C/2$, decide whether they split into $m$ groups each summing to $C$; the bounds on $a_i$ force each group to have three elements. Put
\begin{equation*}
\lambda=(C^{\,m}),\qquad \mu=(a_1,\dots,a_{3m}),\qquad K=\sum_ia_i^2 ,
\end{equation*}
so both sides total $mC=:n$, $\lambda$ is uniform, and the construction is polynomial.

\textbf{The threshold is exactly attainable.} For any $X\in T(\lambda,\mu)$, column by column,
\begin{equation*}
\lVert X\rVert_F^2=\sum_j\sum_ix_{ij}^2\ \le\ \sum_j\Bigl(\sum_ix_{ij}\Bigr)^2=\sum_j\mu_j^2=K ,
\end{equation*}
with equality iff every column has at most one non-zero entry, that is, iff each $\mu$-block lies entirely inside one $\lambda$-block. Splitting some $a_i$ into $a'+a''$ with $a',a''>0$ strictly lowers that column's contribution from $a_i^2$ to $a'^2+a''^2$.

\textbf{Correctness.} If 3-Partition has a solution, place the three $\mu$-blocks of each group inside one $\lambda$-block; the group sums to $C$ and fills that row exactly, so every column has one non-zero entry and $\lVert X\rVert_F^2=K$. Conversely, attaining $K$ forces each $\mu$-block into a single $\lambda$-block, and the row sums then say that the $a_i$ in a common row sum to $C$, exhibiting $m$ groups.

\textbf{Strongness.} 3-Partition remains NP-hard when the integers $a_i$ are bounded by a polynomial in the number of items \cite{ref8}; on such instances $C\le\mathrm{poly}(m)$, so both $n=mC$ and $K\le n^2$ are bounded by a polynomial in the length of the constructed (P1) instance. Hence CRE, and with it CED, is NP-hard already on instances whose numbers are polynomially bounded in the input length. The statements about the evaluation problems follow, since an oracle for either value decides the corresponding threshold question.
\end{proof}

As discussed in \S\ref{sec:1.4}, the optimization problem $\max_X\lVert X\rVert_F^2$ was shown strongly NP-hard in \cite{ref7}; Theorem~\ref{thm:8.1} adds the identification with a graph metric and NP membership for the integer-threshold formulation, hence completeness for the decision version. The restriction to a uniform side is already implicit in the 3-Partition construction underlying \cite{ref7} and is recorded here for use in Corollary~\ref{cor:8.2}.

\subsection{Parameterized hardness}\label{sec:8.2}

We parameterize by $k=k(\lambda)$, the number of parts on the uniform side. The two encodings of \S\ref{sec:2.4} behave differently under this parameter, and both statements are needed to describe it.

Under \textbf{unary} encoding, fixing $k$ yields a dynamic program over the vector of residual row capacities, running in time $L^{O(k)}$: the problem is in XP for that parameter. Corollary~\ref{cor:8.2} shows that it is nevertheless not in FPT.

Under \textbf{binary} encoding that dynamic program is only pseudo-polynomial, since the capacity may be exponential in $L$, and no XP claim is available. Indeed the parameterization is already para-NP-hard at $k=2$: taking $\lambda=(C,C)$ and $\mu=(a_1,\dots,a_t)$ with $\sum_ja_j=2C$ and $K=\sum_ja_j^2$, the equality condition in the proof of Theorem~\ref{thm:8.1} makes $\max_X\lVert X\rVert_F^2=K$ equivalent to each $a_j$ lying in a single row, that is, to splitting the $a_j$ into two groups of equal sum. That is PARTITION, which is NP-complete under binary encoding \cite{ref8}. The remainder of this subsection therefore works under unary encoding.

\begin{corollary}\label{cor:8.2}
Parameterize CRE, and hence CED, by $k(\lambda)$, the number of parts of the uniform side $\lambda=(C^{\,k(\lambda)})$. In representation (P1) under unary encoding the problem is $\mathrm W[1]$-hard, so unless $\mathrm{FPT}=\mathrm W[1]$ there is no exact algorithm running in time $f(k(\lambda))\cdot L^{O(1)}$, where $L$ is the input length. Under ETH there is no exact algorithm running in time $f(k(\lambda))\cdot L^{o(k(\lambda)/\log k(\lambda))}$.
\end{corollary}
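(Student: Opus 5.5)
The plan is a parameterized reduction from a $\mathrm W[1]$-hard packing problem whose parameter is the number of bins, reusing the equality mechanism from the proof of Theorem~\ref{thm:8.1}. The natural source is \textbf{Unary Bin Packing} parameterized by the number of bins $k$. Jansen, Kratsch, Marx and Schlotter showed that it is $\mathrm W[1]$-hard, and that under ETH it admits no algorithm running in time $f(k)\cdot|I|^{o(k/\log k)}$. Given items $a_1,\dots,a_t$ in unary and $k$ bins of capacity $C$, I will first normalize to the exact version with $\sum_ja_j=kC$. This is done by adding $kC-\sum_ja_j$ unit items, which keeps the instance unary and changes its length only polynomially (we may assume $\sum_j a_j\le kC$, else reject). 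Unit items can always fill the slack, so the packing question is unchanged. This normalization is standard and preserves both the parameter and the ETH bound.

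The instance is then mapped to $\lambda=(C^{\,k})$, $\mu=(a_1,\dots,a_t,1,\dots,1)$ and $K=\sum_j\mu_j^2$. The argument in Theorem~\ref{thm:8.1} shows that $\lVert X\rVert_F^2\le K$ for every table, with equality iff each column has a single non-zero entry. That condition says exactly that every item lies in one bin, with row sums equal to $C$. So $\max_X\lVert X\rVert_F^2\ge K$ iff an exact packing into $k$ bins exists. The new parameter $k(\lambda)=k$ equals the old one, and under unary encoding the output length $L$ is polynomial in the input length of the Bin Packing instance. CED follows through the affine substitution $Q=\frac{\lVert\lambda\rVert_2^2+\lVert\mu\rVert_2^2}{2}-K$, which changes neither the parameter nor the length beyond a polynomial.

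The transfer of lower bounds is then routine. A parameterized reduction with parameter $k'=k$ and polynomial blow-up carries $\mathrm W[1]$-hardness. For ETH, an algorithm for CRE running in time $f(k)L^{o(k/\log k)}$ would compose with the reduction into $f(k)|I|^{o(k/\log k)}$ for Bin Packing, because $L\le|I|^{c}$ and constant powers are absorbed into the $o(\cdot)$ in the exponent. The main obstacle is confirming that the cited hardness is stated for exactly this setting: unary items, parameter equal to the number of bins, and the $L^{o(k/\log k)}$ form of the ETH bound. It also has to survive the exact-fill normalization, which I will check does not disturb the parameter. If the source uses bins of capacity at most $C$ rather than exactly $C$, the unit padding handles it. The binary-encoding para-NP-hardness noted before the corollary is not needed here.
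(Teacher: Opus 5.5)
Your proposal is correct and follows the paper's proof: the same reduction from Unary Bin Packing parameterized by the number of bins, mapped to $\lambda=(C^{\,k})$, $\mu$ the item list and $K=\sum_j\mu_j^2$, with the column-wise equality condition from Theorem~\ref{thm:8.1} and the parameter preserved verbatim. The one difference is that the paper cites the statement of Jansen--Kratsch--Marx--Schlotter that their hardness results persist in the perfect-fill case, whereas you obtain perfect fill yourself by padding with unit items; this is harmless, since one may assume $k\le t$ and $C\le\sum_ja_j$, so the padding is polynomial, and unit columns satisfy the equality condition automatically.
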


\begin{proof}
Jansen--Kratsch--Marx--Schlotter \cite{ref23} prove that Unary Bin Packing parameterized by the number of bins is $\mathrm W[1]$-hard (Theorem 2) and admits no $f(\varkappa)L^{o(\varkappa/\log\varkappa)}$ algorithm under ETH for $\varkappa$ bins (Theorem 3), and state that both persist for the \textbf{perfect-fill} case, where total item size equals total bin capacity. Given such an instance, with $\varkappa$ bins of capacity $C$ and items $a_1,\dots,a_t>0$ satisfying $\sum_ja_j=\varkappa C$, put
\begin{equation*}
\lambda=(C^{\,\varkappa}),\qquad\mu=(a_1,\dots,a_t),\qquad K=\sum_ja_j^2,\qquad n=\varkappa C .
\end{equation*}
Under unary encoding this is a polynomial-time construction of polynomially bounded length, and the parameter is preserved verbatim: $k(\lambda)=\varkappa$. By the equality condition in the proof of Theorem~\ref{thm:8.1}, $\max_X\lVert X\rVert_F^2=K$ iff each $a_j$ lies entirely in one row, which with the row sums means each bin is filled exactly; the converse is the same computation read backwards.
\end{proof}

Here $L$ is the input length under unary encoding, not the number of vertices; under that encoding $L=\Theta(n)$, whereas under other representations of \S\ref{sec:2.4} the quantity in the exponent would change, so the statement names $L$.

\subsection{Approximation consequences}\label{sec:8.3}

Affine equivalence of exact values does not preserve multiplicative approximation, so the two optimization problems are treated separately. Write
\begin{equation*}
\mathrm{MIN}\text{-}\mathrm{CED}:\ \text{compute }q^*(\lambda,\mu),\qquad \mathrm{MAX}\text{-}\mathrm{CRE}:\ \text{compute }\max_{X\in T(\lambda,\mu)}\lVert X\rVert_F^2 .
\end{equation*}

Both problems ask for an optimum over vertex alignments, and a solution must be returned in a form that can be written in time polynomial in $L$. An explicit permutation $\sigma\in S_n$ is not such a form: under binary encoding $n$ may be exponential in $L$, so $\Omega(n)$ output entries are unaffordable for reasons that have nothing to do with approximation. We therefore take a \textbf{transportation table} $X\in T(\lambda,\mu)$ as the returned object throughout this subsection. By Theorem~\ref{thm:2.3} a table represents a vertex alignment and its induced edit cost is $\operatorname{cost}(X)$, and under (P1) it has $k(\lambda)k(\mu)=O(L^2)$ cells of $O(\log n)$ bits each, hence is of size polynomial in $L$. (No such issue arises in Corollary~\ref{cor:3.12}, whose running time is measured in $n$ and which may therefore emit an explicit alignment.)

\begin{corollary}\label{cor:8.3}
In representation (P1) under binary encoding, unless $\mathrm P=\mathrm{NP}$ neither MIN-CED nor MAX-CRE admits an FPTAS. For each problem this rules out both standard formulations: the \textbf{feasible-solution} formulation, which in $\mathrm{poly}(L,1/\varepsilon)$ time must output a table $X\in T(\lambda,\mu)$ whose value ($\operatorname{cost}(X)$ for MIN-CED, $\lVert X\rVert_F^2$ for MAX-CRE) is within a factor $1+\varepsilon$ of the optimum; and the \textbf{value} formulation, which outputs only a number, one-sidedly or two-sidedly accurate to a factor $1+\varepsilon$.
\end{corollary}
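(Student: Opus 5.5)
The plan is the Garey--Johnson mechanism: a strongly NP-hard problem whose optimum is integral and polynomially bounded by its largest number has no FPTAS. The instances are those of Theorem~\ref{thm:8.1}, restricted to 3-Partition inputs whose integers are polynomially bounded in $m$, so that $n=mC\le\mathrm{poly}(L)$. On these instances every table has $\lVert X\rVert_F^2\le K=\sum_ia_i^2\le n^2$ and $\operatorname{cost}(X)\le E(\lambda)+E(\mu)\le n^2$. Both objectives are integers, and the optimum is polynomially bounded in $L$. I would treat the two problems separately, since the affine substitution does not carry multiplicative guarantees.

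For MAX-CRE I set $\varepsilon=1/(K+1)$. A feasible table $X$ with $\lVert X\rVert_F^2\ge\mathrm{OPT}/(1+\varepsilon)>\mathrm{OPT}-1$ must then equal $\mathrm{OPT}$, by integrality. Comparing it with $K$ decides the yes-instance, where $\mathrm{OPT}=K$. The FPTAS runs in $\mathrm{poly}(L,K)=\mathrm{poly}(L)$ time, so $\mathrm P=\mathrm{NP}$. The value formulation works the same way, using the threshold $K$: on a no-instance $\mathrm{OPT}\le K-1$, so any value within factor $1+\varepsilon$ of $\mathrm{OPT}$ lies below $K\cdot\frac{K-1}{K}\cdot(1+\varepsilon)<K$. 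On a yes-instance it is at least $K/(1+\varepsilon)>K-1$. Rounding the value therefore decides, and the two-sided case is covered as well.

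For MIN-CED the gap to exploit is additive in $q^*$. On a yes-instance $q^*=Q_0:=\frac{\lVert\lambda\rVert_2^2+\lVert\mu\rVert_2^2}2-K$. On a no-instance $q^*\ge Q_0+1$, because $\max\lVert X\rVert_F^2\le K-1$. Since $Q_0\le n^2$, I choose $\varepsilon=1/(2(Q_0+1))$. A $(1+\varepsilon)$-approximate table on a yes-instance then has cost at most $Q_0(1+\varepsilon)<Q_0+1$, hence exactly $Q_0$ by integrality. On a no-instance every table costs at least $Q_0+1$. Comparing the returned cost with $Q_0$ therefore decides 3-Partition, and a returned number separates the cases in the same way under the value formulation.

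The main obstacle is only bookkeeping. I would check that $Q_0>0$, so that the multiplicative guarantee is meaningful: it holds because $\lambda\ne\mu$, as every $a_i<C$. I would also check that the output size of a table stays polynomial under (P1), which is the point already argued before the statement, and that $1/\varepsilon$ is polynomial in $L$ precisely because the reduction is strong.
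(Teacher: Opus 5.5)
Your route is the same Garey--Johnson mechanism as the paper's, run on the same polynomially bounded subfamily of the 3-Partition reduction. The structural difference is that the paper takes $\varepsilon=1/(2n^2)$ and recovers the exact optimum on every instance by integrality (floor or ceiling one-sidedly, nearest integer two-sidedly). You instead decide the 3-Partition instance from the gap between $\mathrm{OPT}=K$ and $\mathrm{OPT}\le K-1$. Your MIN-CED argument is correct in all formulations. With $\varepsilon=1/(2(Q_0+1))$ a yes-instance value is at most $Q_0(1+\varepsilon)<Q_0+\tfrac12$. A no-instance value is at least $(Q_0+1)/(1+\varepsilon)=2(Q_0+1)^2/(2Q_0+3)>Q_0+\tfrac12$. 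So even a two-sided number separates. Your MAX-CRE feasible-solution argument is also correct.

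The step that does not work is the MAX-CRE value formulation. You show that a no-instance value lies below $K$ and a yes-instance value exceeds $K-1$. These ranges overlap on $(K-1,K)$, so they decide nothing. Two-sidedly, using only $\mathrm{OPT}\le K-1$, a no-instance with $\mathrm{OPT}=K-1$ may legitimately report $(K-1)(1+\varepsilon)=K-\tfrac{2}{K+1}$. A yes-instance may report $K/(1+\varepsilon)=K-\tfrac{K}{K+2}$, which is smaller as soon as $K\ge3$, and $K\ge3m\ge3$ always. So at $\varepsilon=1/(K+1)$ no threshold or rounding rule separates the cases on the bounds you state, and your claim that the two-sided case is covered is unsupported. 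One-sidedly the conclusion holds, but for a different reason than the one you give: there $\widetilde A\le\mathrm{OPT}\le K-1$ on a no-instance, and that is what separates it from yes-values $>K-1$.

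One repair is to use the same factor $2$ you used for MIN-CED. With $\varepsilon=1/(2(K+1))$, no-values are at most $(K-1)+\tfrac{K-1}{2K+2}<K-\tfrac12$ and yes-values at least $K-\tfrac{K}{2K+3}>K-\tfrac12$. The threshold $K-\tfrac12$ then decides in both value formulations. Alternatively, keep your $\varepsilon$ and observe that $\lVert X\rVert_F^2\equiv\sum_{ij}x_{ij}=n\equiv K\pmod 2$ for every table. So a no-instance in fact has $\mathrm{OPT}\le K-2$, and two-sided no-values are at most $(K-2)(1+\varepsilon)=K-1-\tfrac{3}{K+1}<K-1$. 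Either way $1/\varepsilon$ stays polynomial in $L$ on the restricted family, and the rest of your proof goes through unchanged.
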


\begin{proof}
Both arguments use the same three ingredients: an integral objective, an optimum bounded by a polynomial in $n$, and NP-hardness on instances where $n$ itself is polynomially bounded in the input length. The bounds and the rounding directions differ.

\textbf{MIN-CED.} First, $q^*(\lambda,\mu)=0$ exactly when $\lambda=\mu$, which is decidable in time polynomial in $L$ by sorting and comparing the two part lists; relative error degenerates at $0$, so assume $\lambda\ne\mu$ and hence $q^*\ge1$. For any $\sigma$ we have $E(G_\lambda)\triangle\sigma E(G_\mu)\subseteq E(K_n)$, so
\begin{equation*}
1\ \le\ q^*(\lambda,\mu)\ \le\ \binom n2 ,
\end{equation*}
the right bound being tight at $\lambda=(n)$, $\mu=(1^{\,n})$, where the symmetric difference is all of $E(K_n)$ independently of $\sigma$. Take $\varepsilon=\frac1{2n^2}$; then $\varepsilon q^*\le\frac1{2n^2}\binom n2<\frac14$. In the feasible-solution formulation, $A=\operatorname{cost}(X)$ is an integer with $q^*\le A<q^*+1$, so $A=q^*$; in the one-sided value formulation $\lfloor\widetilde A\rfloor=q^*$; in the two-sided formulation, rounding to the nearest integer gives $q^*$. The rounding direction is fixed by the formulation: flooring is correct one-sidedly, whereas two-sidedly it would return $q^*-1$ whenever $\widetilde A<q^*$.

\textbf{MAX-CRE.} Here no screening is needed. Every $X\in T(\lambda,\mu)$ has non-negative integer entries summing to $n$, so $x_{ij}^2\ge x_{ij}$ gives $\lVert X\rVert_F^2\ge n$, while $\lVert X\rVert_F^2\le(\sum_{ij}x_{ij})^2=n^2$; writing $\mathrm{OPT}$ for the maximum,
\begin{equation*}
n\ \le\ \mathrm{OPT}\ \le\ n^2 ,
\end{equation*}
and $\mathrm{OPT}$ is an integer. Take $\varepsilon=\frac1{2n^2}$. In the feasible-solution formulation a table $X$ with $\lVert X\rVert_F^2\ge(1+\varepsilon)^{-1}\mathrm{OPT}$ satisfies
\begin{equation*}
\mathrm{OPT}-\lVert X\rVert_F^2\ \le\ \mathrm{OPT}\cdot\frac{\varepsilon}{1+\varepsilon}\ \le\ \varepsilon\,\mathrm{OPT}\ \le\ \frac{n^2}{2n^2}=\frac12\ <\ 1 ,
\end{equation*}
and both quantities are integers, so $\lVert X\rVert_F^2=\mathrm{OPT}$. In the one-sided value formulation, $(1-\varepsilon)\mathrm{OPT}\le\widetilde A\le\mathrm{OPT}$ with the same $\varepsilon$ gives $\lceil\widetilde A\rceil=\mathrm{OPT}$, the ceiling being the correct direction for a maximization problem, and two-sidedly, rounding to the nearest integer.

\textbf{Restriction to the hard instances.} In both cases the accuracy used is $\varepsilon=\frac1{2n^2}$, and $n=\sum_i\lambda_i$ is computable from the input, so $\varepsilon$ is available to the algorithm. It remains to see that $1/\varepsilon$ is polynomial in $L$ where it needs to be. An FPTAS is required to meet its guarantee on \emph{every} input, so it may be run on any subfamily; take the family produced by the reduction of Theorem~\ref{thm:8.1} from 3-Partition instances whose integers are bounded by a polynomial in the number of items. On that family $C\le\mathrm{poly}(m)$ and hence
\begin{equation*}
n=mC\le\mathrm{poly}(m)\le\mathrm{poly}(L),
\end{equation*}
since (P1) writes out all $m+3m$ parts. Therefore $1/\varepsilon=2n^2\le\mathrm{poly}(L)$, running the hypothetical FPTAS at this $\varepsilon$ takes $\mathrm{poly}(L,1/\varepsilon)=\mathrm{poly}(L)$ time, and by the rounding above it returns the exact optimum. That decides CED, respectively CRE, in polynomial time on a family on which Theorem~\ref{thm:8.1} shows it to be NP-hard, so $\mathrm P=\mathrm{NP}$.
\end{proof}

The last step is what strong NP-hardness buys. The obstruction to running the argument on an arbitrary binary instance is real: there $L$ may be as small as $O(\log n)$, making $1/\varepsilon=2n^2$ exponential in $L$. But the conclusion does not need arbitrary instances. Since the hypothesised FPTAS must work everywhere, it suffices to defeat it on the polynomially bounded subfamily, where $n\le\mathrm{poly}(L)$ and the accuracy is affordable. Corollary~\ref{cor:8.3} therefore applies to the standard binary encoding of (P1) and needs no unary hypothesis.

Corollary~\ref{cor:8.3} excludes an FPTAS only. It does not exclude a PTAS and gives no APX-hardness: the reduction of \S\ref{sec:8.1} has zero gap and supports no gap amplification. The constant-factor algorithm of Corollary~\ref{cor:3.12} is consistent with it, a constant-factor approximation not being an approximation scheme.

We also record a nearby tractable framework that does not apply. Integer maximization of $c(w_1x,\dots,w_dx)$ with $c$ convex and $d$ \textbf{fixed} is polynomially solvable by Graver-basis methods \cite{ref33}. Our objective $\sum_{ij}x_{ij}^2$ is convex on all cells, but its number of arguments grows with the instance, so Theorem~\ref{thm:8.1} is consistent with that line.

\subsection{The nearest/farthest dichotomy}\label{sec:8.4}

Reversing the optimization in the definition gives the \textbf{farthest alignment}
\begin{equation*}
q^{\max}(\lambda,\mu)=\max_{\sigma\in S_n}\bigl|E(G_\lambda)\triangle\sigma E(G_\mu)\bigr| .
\end{equation*}

\begin{proposition}\label{prop:8.4}
$q^{\max}$ is computable exactly in time polynomial in the binary input length:
\begin{equation*}
q^{\max}(\lambda,\mu)=\tfrac12\bigl(\lVert\lambda\rVert_2^2+\lVert\mu\rVert_2^2\bigr)-\min_{X\in T(\lambda,\mu)}\lVert X\rVert_F^2 ,
\end{equation*}
the minimum being a minimum-cost flow with separable convex arc costs on a network with $k_\lambda+k_\mu+2$ nodes, $k_\lambda k_\mu+k_\lambda+k_\mu$ arcs and every capacity at most $n$, where $k_\lambda=k(\lambda)$ and $k_\mu=k(\mu)$.
\end{proposition}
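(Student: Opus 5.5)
The plan is to rerun the proof of Theorem~\ref{thm:2.3} with the optimization reversed. That proof shows the bijections $\sigma$ and the tables $X\in T(\lambda,\mu)$ correspond in both directions. It also shows that the symmetric difference under the alignment encoded by $X$ equals $\operatorname{cost}(X)=E(\lambda)+E(\mu)-2\sum_{ij}g(x_{ij})=\frac{\lVert\lambda\rVert_2^2+\lVert\mu\rVert_2^2}2-\lVert X\rVert_F^2$. Maximizing over $\sigma$ is therefore the same as minimizing $\lVert X\rVert_F^2$ over $T(\lambda,\mu)$, which gives the displayed formula with no further work. It remains to compute $\min_{X\in T(\lambda,\mu)}\lVert X\rVert_F^2$ exactly in time polynomial in the binary length $L$.

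For the network, take a source $s$, a sink $t$, one node per row $i$ and one per column $j$. The arc $s\to i$ has lower and upper capacity $\lambda_i$, the arc $j\to t$ has capacity $\mu_j$, and the arc $i\to j$ has capacity $\min(\lambda_i,\mu_j)\le n$ and cost $x^2$. We ask for a flow of value $n$. This gives $k_\lambda+k_\mu+2$ nodes and $k_\lambda k_\mu+k_\lambda+k_\mu$ arcs, every capacity is at most $n$, and the integral feasible flows are exactly the tables in $T(\lambda,\mu)$. The cost $x\mapsto x^2$ is separable and convex. For convex costs on integer flows, the classical fact is that an integral optimum exists and agrees with the optimum over integers (for example via the proximity/scaling theory of Hochbaum--Shanthikumar or the capacity-scaling algorithm of Ahuja--Orlin for convex cost flow). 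Restricted to integers, each arc cost is piecewise linear with breakpoints at the integers, so integrality is inherited.

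The main obstacle is getting polynomiality in $L$ rather than in $n$, since $n$ may be exponential in $L$ under binary encoding. The naive approach of expanding each arc into $n$ unit-cost parallel arcs is only pseudo-polynomial. I would first use capacity scaling for convex separable costs. It runs $O(\log n)$ phases, and each phase solves a linear min-cost-flow problem with $O(k_\lambda k_\mu)$ augmenting steps of size $\Delta$, using the linearized marginal cost $(x+\Delta)^2-x^2$ on a network of polynomial size. This gives time polynomial in $k_\lambda k_\mu$ and $\log n$, hence in $L$, because under (P1) $k_\lambda+k_\mu=O(L)$.

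A possible fallback is the proximity theorem, which says that the real quadratic optimum is within $O(k_\lambda k_\mu)$ of an integral optimum per arc. The real relaxation is a convex QP with rational data, and I would solve it and then correct the result by a polynomially bounded local search. The scaling route is cleaner, and it is the one I would cite.
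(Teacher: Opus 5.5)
Your proposal is correct and follows the paper's proof. You reverse the optimization in the derivation of Theorem~\ref{thm:2.3} to obtain the identity, build the same bipartite transportation network with the same node, arc and capacity counts, and invoke a polynomial-time scaling algorithm for integer minimum-cost flow with separable convex arc costs; the paper cites Minoux, you cite capacity scaling in the style of Ahuja--Orlin, which is the same mechanism.

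Your integrality remark, via the piecewise-linear interpolation of $x^2$ at the integers, is a useful addition that the paper leaves implicit. The sentence just before it (``an integral optimum exists and agrees with the optimum over integers'') is garbled, however. Read literally as a claim about the real quadratic relaxation it is false: for $\lambda=\mu=(1,1)$ the real minimum of $\lVert X\rVert_F^2$ is $1$, attained at the all-$\tfrac12$ table, while the integer minimum is $2$. Your follow-up sentence about the piecewise-linear interpolation is the statement you actually need.
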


\begin{proof}
The identity is the derivation of Theorem~\ref{thm:2.3} with $\min_\sigma$ replaced by $\max_\sigma$. Now $T(\lambda,\mu)$ is the set of integral feasible flows on the bipartite transportation network with row sums $\lambda$ and column sums $\mu$, the flow on arc $(i,j)$ being $x_{ij}$, and the objective $\sum_{ij}x_{ij}^2$ is separable across arcs and convex on each. Minoux \cite{ref24} solves minimum-cost flow with separable convex arc costs on integer networks in time polynomial in the size of the network and in the logarithm of the largest capacity; here there are $k_\lambda+k_\mu+2$ nodes, $k_\lambda k_\mu+k_\lambda+k_\mu$ arcs, and every capacity is at most $n$, so the running time is polynomial in the input length under either encoding. Only polynomiality is used; we do not restate the operation count of \cite{ref24}, whose arithmetic model we have not re-derived.
\end{proof}

The transportation representation therefore classifies both directions at once:
\begin{equation*}
\begin{aligned}
\text{deciding whether }q^*(\lambda,\mu)\le Q&&&\textbf{strongly NP-complete},\\
\text{computing }q^*(\lambda,\mu)\ \ (\Leftrightarrow\ \max\nolimits_X\lVert X\rVert_F^2)&&&\textbf{strongly NP-hard},\\
\text{computing }q^{\max}(\lambda,\mu)\ \ (\Leftrightarrow\ \min\nolimits_X\lVert X\rVert_F^2)&&&\textbf{polynomial-time solvable}.
\end{aligned}
\end{equation*}
Proposition~\ref{prop:8.4} computes a \emph{minimum} of transportation energy, which on the graph side is a \emph{maximum} of edit distance; it is not an algorithm for $q^*$. The same applies to the other tractable cases of quadratic transportation: the strongly polynomial algorithm of Cosares--Hochbaum for a fixed number of sources \cite{ref25}, and the special non-separable model of Hochbaum--Shamir--Shanthikumar \cite{ref26}. Both are also on the minimization side.

\section{Conclusions and open problems}\label{sec:9}

For cluster graphs the Euclidean distortion of the class is determined up to universal constants, and it is attained by an explicit map. The picture closes on one diagram: the transportation representation of Theorem~\ref{thm:2.3} puts the metric within a factor $3$ of the $\ell_1$ norm of a Ferrers staircase difference, by Theorem~\ref{thm:3.1}; the difference is arithmetically rigid, every jump at position $s$ being a multiple of $s$; that rigidity forces a lower bound on the critical dyadic energy of the difference, valid on the whole lattice of closed quantized sequences; and the resulting map $F_n$, of dimension below $4n$ and computable in $O(n)$ time from a single partition, has distortion $\Theta(n^{1/4})$, which a Hamming cube of partitions and Enflo's theorem show to be optimal for every Hilbert embedding.

Read combinatorially rather than geometrically, the same representation produces two explicit $\ell_1$ models. The vertex-mass metric is equivalent to $q^*$ within the optimal factor $3$; the block-energy metric within the optimal factor $2$, whence $c_1(\mathcal K_n)\le2$ together with an $O(n\log n)$-time alignment of cost strictly below $2q^*$ carrying a two-sided certificate. Read computationally, it shows that the decision problem is strongly NP-complete, that evaluation is strongly NP-hard with no FPTAS, and that the farthest alignment is polynomial.

The mechanism behind the embedding is worth stating separately from the application, because it is not an analytic estimate. A realizable staircase difference can only jump at position $s$ by a multiple of $s$, so oscillation is priced by where it happens; cheap wide wobble is confined to small positions, and at large positions every change of direction is spike-scale. Theorem~\ref{thm:6.2} is the quantitative form of that sentence, and it holds for every closed quantized sequence, with no reference to partitions. Its most portable component is the folded-remainder capital lemma (Proposition~\ref{prop:6.9}), which compresses all of the mass below the half-line $|v(s)|=\frac{s-1}2$ onto the mass above it, at the cost of a factor $8$.

Three features of the picture could have come out otherwise. The leading constant in Proposition~\ref{prop:4.4} is pinned rather than estimated, because three separate inequalities saturate on one family. The two $\ell_1$ models are genuinely different maps with different optimal constants, $3$ and $2$, rather than reparametrizations of one another. And the direction dichotomy of \S\ref{sec:8.4} is easy to misread in the transportation literature: the known polynomial algorithms minimize transportation energy and therefore compute the farthest alignment, not the nearest.

\textbf{Open problems.}

\begin{enumerate}
\item \textbf{The sharp constant.} Is Conjecture~\ref{conj:7.2} true, that is, is the chirp family asymptotically extremal with $\inf\kappa^2\to\frac23$? What Theorem~\ref{thm:6.2} proves is $\inf\kappa^2\ge1/(63504\sqrt2)$, some $6\cdot10^4$ below. Reducing the constant to anything of moderate size would already make Theorem~\ref{thm:5.11} a usable quantitative statement rather than an order; the four modules of \S\ref{sec:6} each contribute, and none of them has been optimized.
\item \textbf{The value of $c_1(\mathcal K_n)$.} Corollary~\ref{cor:3.13} gives $c_1(\mathcal K_n)\le2$, and Proposition~\ref{prop:3.11} shows that $2$ is optimal for the block-energy map, as $3$ is for the sorted-degree map (Theorem~\ref{thm:3.1}(F)). What is the infimum over all embeddings into $\ell_1$? No lower bound above $1$ is known to us.
\item \textbf{PTAS or APX-hardness.} Corollary~\ref{cor:8.3} excludes an FPTAS but not a PTAS, and \S\ref{sec:8.1} has zero gap, so neither direction has evidence. The distance between the ratio $2$ of Corollary~\ref{cor:3.12} and the best possible ratio is also unknown.
\end{enumerate}

\textbf{Further directions.} Four questions are natural and we record them without elaboration. The proof of Theorem~\ref{thm:6.2} uses the divisibility only through the implication ``$\Delta v(s)\ne0\Rightarrow|\Delta v(s)|\ge s$''; the same four modules survive on any such lattice where jumps at position $s$ are constrained to $w(s)\mathbb Z$ for a non-decreasing weight $w$, of which the partition metric is the case $w(s)=s$, and identifying the resulting exponent as a function of $w$ would be a genuine generalization. Conjecture~\ref{conj:7.2} asks only for one constant; one may ask instead for the whole extremal profile, the least critical energy compatible with prescribed $\lVert v\rVert_1$ and $\mathrm{TV}(v)$, of which Proposition~\ref{prop:7.1} computes one point. Theorem~\ref{thm:5.8} rules out one specific coordinate map at the order $n^{1/4}\sqrt{\log n}$; whether \emph{every} map linear in $u_\lambda$ is stuck there, which would show that the weighting in $F_n$ is not merely convenient but necessary, we do not know. Finally, the corresponding questions for cographs and trees are open; the methods here rest on the correspondence between block structure and contingency tables, which those classes do not have.

\section*{Declarations}

\textbf{Data and code availability.} This paper reports no experimental data. The enumeration code used for the cross-checks of Appendix B is provided as supplementary material, with running instructions and version information; it is used nowhere in any proof.

\textbf{Funding.} This research received no external funding.

\textbf{Competing interests.} The authors declare that they have no competing interests.

\textbf{Declaration of generative AI in the writing process.} During the preparation of this work the authors used a large language model in order to reorganize the manuscript and to improve its English expression. After using this tool the authors reviewed and edited the content as needed and take full responsibility for the content of the publication. No mathematical statement, proof, constant, numerical value or reference in this paper was generated without author verification. Spelling and basic grammar checking alone would require no declaration; substantive assistance with organization or expression does.


\appendix
\section{An exact Haar identity for the dyadic energy}\label{sec:A}

This appendix identifies the functional $\lVert F^{(0)}(\lambda)-F^{(0)}(\mu)\rVert_2^2$ exactly in Haar coordinates. Nothing in the body depends on it; we include it because it explains, in one line, why a spike costs $\sqrt{\log n}$ and where the mean term of a realizable difference comes from.

For a dyadic interval $I$ of level $m\ge1$ with left and right halves $I_L,I_R$, let $h_I=\mathbf 1_{I_L}-\mathbf 1_{I_R}$ be the unnormalized Haar function, and put $H_m(w)=\sum_{\ell(I)=m}\langle w,h_I\rangle^2$.

\begin{proposition}\label{prop:A.1}
For every $w\in\mathbb R^N$ with $N=2^J$,
\begin{equation*}
\sum_{\ell=0}^JE_\ell(w)\;=\;\bigl(2-2^{-J}\bigr)\,\langle w,\mathbf 1\rangle^2\;+\;\sum_{m=1}^J\bigl(1-2^{-m}\bigr)\,H_m(w).
\end{equation*}
\end{proposition}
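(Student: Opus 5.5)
The identity is a statement about quadratic forms on $\mathbb R^N$, and I will prove it by writing the indicator of every dyadic interval in the orthogonal Haar basis and summing the resulting Gram matrices. The system consisting of $\mathbf 1$ together with $\{h_I:\ell(I)\ge1\}$ is orthogonal, with $\lVert\mathbf 1\rVert_2^2=N$ and $\lVert h_I\rVert_2^2=|I|=2^{\ell(I)}$. The plan is to compute, for each level $\ell$, the orthogonal projection
\begin{equation*}
P_\ell=\sum_{\ell(I)=\ell}|I|^{-1}\mathbf 1_I\mathbf 1_I^{\!\top}
\end{equation*}
onto functions constant on level-$\ell$ intervals. This projection is spanned by $\mathbf 1$ and the $h_K$ with $\ell(K)>\ell$. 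Since $E_\ell(w)=\sum_{\ell(I)=\ell}\langle w,\mathbf 1_I\rangle^2=2^\ell\langle w,P_\ell w\rangle$, everything reduces to an explicit formula for $\langle w,P_\ell w\rangle$.

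\textbf{Step 1: expand the projection.} Expanding $P_\ell$ in the orthogonal basis gives
\begin{equation*}
\langle w,P_\ell w\rangle=\frac{\langle w,\mathbf 1\rangle^2}{N}+\sum_{m=\ell+1}^{J}\frac{H_m(w)}{2^m}.
\end{equation*}
To justify this, I will check that the functions constant on level-$\ell$ intervals form a space of dimension $N/2^\ell$. This space contains $\mathbf 1$ and every $h_K$ with $\ell(K)\ge\ell+1$, since the halves of such a $K$ are unions of level-$\ell$ intervals. These vectors number $1+\sum_{m=\ell+1}^J N/2^m=N/2^\ell$, so they are an orthogonal basis of the space, and the formula follows from expanding $w$ in that basis.

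\textbf{Step 2: multiply and sum over levels.} Multiplying by $2^\ell$ gives
\begin{equation*}
E_\ell(w)=2^{\ell-J}\langle w,\mathbf 1\rangle^2+\sum_{m>\ell}2^{\ell-m}H_m(w).
\end{equation*}
Summing over $0\le\ell\le J$, the mean coefficient is $\sum_{\ell=0}^{J}2^{\ell-J}=2-2^{-J}$. Exchanging the order of summation, the coefficient of $H_m$ is $\sum_{\ell=0}^{m-1}2^{\ell-m}=1-2^{-m}$. These are exactly the two coefficients in the statement.

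\textbf{Main obstacle.} The only step needing care is the normalization bookkeeping. The Haar functions here are unnormalized, so $\lVert h_K\rVert_2^2=2^m$, and $E_\ell$ uses raw interval sums rather than averages. That is why the factor $2^\ell$ enters, and it must be tracked correctly. A quick sanity check: $w=\delta_s$ should give $J+1$ on the left side. On the right it gives $(2-2^{-J})+\sum_{m=1}^J(1-2^{-m})2^{m}\cdot 2^{-m}\cdot(\text{stuff})$, which I would verify numerically at $J=1$: the left side is $2$, and the right side is $\tfrac32\cdot1+\tfrac12\cdot1=2$.
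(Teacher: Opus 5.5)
Your proof is correct. It reaches exactly the paper's intermediate formula $E_\ell(w)=2^{\ell-J}\langle w,\mathbf 1\rangle^2+\sum_{m>\ell}2^{\ell-m}H_m(w)$, and from there your summation over $\ell$ is identical to the paper's. The difference lies only in how you obtain that formula.

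\textbf{The paper's route.} It works locally. The parallelogram identity on each level-$(\ell+1)$ interval gives the one-step recursion $E_\ell=\frac12E_{\ell+1}+\frac12H_{\ell+1}$. This is then unrolled downward from $E_J=\langle w,\mathbf 1\rangle^2$, so nothing beyond a two-term identity and an induction is needed.

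\textbf{Your route.} You work globally. You write $2^{-\ell}E_\ell(w)=\lVert P_\ell w\rVert_2^2$ for the averaging projection, and evaluate it in one stroke by Parseval in the orthogonal basis $\{\mathbf 1\}\cup\{h_K:\ell(K)>\ell\}$ of its range. The price is checking orthogonality of the Haar system, which you assert rather than prove; the only non-trivial case is the nested one, where $h_K$ is constant on the smaller $K'$ and $h_{K'}$ has zero sum. You also need the dimension count $1+\sum_{m=\ell+1}^JN/2^m=N/2^\ell$, which you do verify.

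\textbf{How they relate.} The two arguments are local and global versions of the same orthogonal decomposition. The parallelogram identity is just Pythagoras for the change of orthogonal basis $\{\mathbf 1_{I_L},\mathbf 1_{I_R}\}\leftrightarrow\{\mathbf 1_I,h_I\}$, and telescoping it reproduces your expansion. Your version gives a closed form for each $E_\ell$ directly and makes the Haar-coordinate reading of Remark~\ref{rem:A.3} immediate. The paper's is shorter and entirely elementary.

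\textbf{A cosmetic point.} The general-$J$ sanity check in your last paragraph is garbled. The clean statement is $H_m(\delta_s)=1$ for every $m\ge1$, because exactly one level-$m$ interval contains $s$ and $\langle\delta_s,h_I\rangle=\pm1$ for it. The right side is then $(2-2^{-J})+\sum_{m=1}^J(1-2^{-m})=J+1$, matching the left side for all $J$, not only $J=1$.
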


\begin{proof}
For each level-$(\ell+1)$ interval $I$ the parallelogram identity $a^2+b^2=\frac12\bigl((a+b)^2+(a-b)^2\bigr)$ with $a=\langle w,\mathbf 1_{I_L}\rangle$, $b=\langle w,\mathbf 1_{I_R}\rangle$ gives $\langle w,\mathbf 1_{I_L}\rangle^2+\langle w,\mathbf 1_{I_R}\rangle^2=\frac12\langle w,\mathbf 1_I\rangle^2+\frac12\langle w,h_I\rangle^2$. Summing over the level-$(\ell+1)$ intervals, $E_\ell=\frac12E_{\ell+1}+\frac12H_{\ell+1}$ for $0\le\ell\le J-1$. Unrolling downwards from $E_J=\langle w,\mathbf 1\rangle^2$ gives $E_\ell=2^{\ell-J}E_J+\sum_{m=\ell+1}^J2^{\ell-m}H_m$. Summing over $0\le\ell\le J$, the coefficient of $E_J$ is $\sum_{\ell=0}^J2^{\ell-J}=2-2^{-J}$ and that of $H_m$ is $\sum_{\ell=0}^{m-1}2^{\ell-m}=1-2^{-m}$.
\end{proof}

\emph{Check.} For $w=\delta_s$ the left side is $J+1$, and the right side is $(2-2^{-J})+\sum_{m\ge1}(1-2^{-m})=J+1$.

\begin{corollary}\label{cor:A.2}
\emph{(i)} $\ \frac32\langle w,\mathbf 1\rangle^2+\frac12\sum_mH_m(w)\ \le\ \sum_\ell E_\ell(w)\ \le\ 2\langle w,\mathbf 1\rangle^2+\sum_mH_m(w)$ for $J\ge1$.

\emph{(ii)} For a realizable difference $w=v_{\lambda,\mu}$ the mean term is the squared \textbf{energy gap}:
\begin{equation*}
\langle v,\mathbf 1\rangle=\sum_su_\lambda(s)-\sum_su_\mu(s)=\lVert\lambda\rVert_2^2-\lVert\mu\rVert_2^2 .
\end{equation*}
\end{corollary}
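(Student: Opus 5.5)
Both parts are read off from results already in hand: part (i) from the exact identity of Proposition~\ref{prop:A.1} by bounding its coefficients, and part (ii) by exchanging the order of summation in the Ferrers staircase.

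\emph{Part (i).} Every term on the right side of Proposition~\ref{prop:A.1} is non-negative, since $\langle w,\mathbf 1\rangle^2\ge0$ and each $H_m(w)$ is a sum of squares. So it suffices to bound the two kinds of coefficient uniformly.
\begin{itemize}
\item For $J\ge1$ the mean coefficient satisfies $2-2^{-J}\in[\tfrac32,2)$. The lower endpoint is attained at $J=1$, and the expression increases with $J$.
\item For every $m\ge1$ the Haar coefficient satisfies $1-2^{-m}\in[\tfrac12,1)$.
\end{itemize}
Replacing each coefficient by its lower endpoint gives the left inequality of (i), and replacing it by its upper bound gives the right inequality. No further estimate is needed.

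\emph{Part (ii).} First, $\langle v,\mathbf 1\rangle=\sum_su_\lambda(s)-\sum_su_\mu(s)$ by linearity. Next, I evaluate $\sum_{s=1}^Nu_\lambda(s)$ by swapping the order of summation in the definition $u_\lambda(s)=\sum_{i:\lambda_i\ge s}\lambda_i$:
\begin{equation*}
\sum_{s=1}^N u_\lambda(s)=\sum_i\lambda_i\cdot\#\{s\le N:\ s\le\lambda_i\}=\sum_i\lambda_i^2=\lVert\lambda\rVert_2^2 .
\end{equation*}
The middle count equals $\lambda_i$ because $\lambda_i\le n\le N$. For the same reason the zero extension of $u_\lambda$ above $\lambda_1$ contributes nothing and no term is truncated. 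This is the same computation already used in Proposition~\ref{prop:7.1}(a). Doing the same for $\mu$ and subtracting gives the claimed energy gap.

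The only step needing any care is checking that the summation range $[N]$ covers every part, and this is guaranteed by the choice $N\ge n$. I do not expect a genuine obstacle in either part.
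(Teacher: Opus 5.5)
Your proposal is correct and follows the paper's proof. For (i) you bound the coefficients $2-2^{-J}\in[\tfrac32,2)$ and $1-2^{-m}\in[\tfrac12,1)$ in the exact identity of Proposition~\ref{prop:A.1}, using the non-negativity of every term, which you make explicit where the paper leaves it implicit. For (ii) you exchange the order of summation in $\sum_s u_\lambda(s)$, with the check $N\ge n$ confirming that no part is truncated.
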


\begin{proof}
(i) Bound the coefficients of Proposition A.1 by their extreme values. (ii) $\sum_su_\lambda(s)=\sum_i\sum_{s\le\lambda_i}\lambda_i=\sum_i\lambda_i^2$.
\end{proof}

\begin{remark}[what the identity says]\label{rem:A.3}
In normalized Haar coordinates $c_I=2^{-m/2}\langle w,h_I\rangle$ one has $H_m=2^m\sum_{\ell(I)=m}c_I^2$, so up to the bounded factors of Proposition A.1 the map $F^{(0)}$ is the square function of the homogeneous norm $\bigl(\langle w,\mathbf 1\rangle^2+\sum_m2^m\lVert c_{(m)}\rVert_2^2\bigr)^{1/2}$, of smoothness index $-\frac12$. A point mass has logarithmically divergent mass in that norm across scales, which is the $\sqrt{\log n}$ of the spike of Lemma~\ref{lem:5.6} seen analytically, and the breathing train of Lemma~\ref{lem:5.7} is the corresponding oscillation-at-fixed-scale extremal. The weight $2^{-\ell/2}$ of \S\ref{sec:6} moves the index to $-\frac14$. We use this reading only as orientation; as \S\ref{sec:6.1} records, the substance of \S\ref{sec:6} is arithmetic rather than analytic, and no statement in the body depends on this appendix.
\end{remark}

\section{Computational cross-checks}\label{sec:B}

Nothing here carries any part of the burden of proof: every theorem in the body is a self-contained universally quantified argument, and the enumerations below were run only as cross-checks against transcription errors. All are exhaustive over the stated ranges unless marked otherwise. Exact integer arithmetic is used for $q^*$, $B$ and $\operatorname{cost}$; the dyadic scans use IEEE doubles, with all intermediate values below $2^{47}$ and hence exactly represented. The values of $q^*$ were produced by two independently written solvers and agree throughout.

\textbf{B.1 The two $\ell_1$ models.}

\begin{small}
\begin{longtable}{>{\raggedright\arraybackslash}p{0.34\linewidth}>{\raggedright\arraybackslash}p{0.24\linewidth}>{\raggedright\arraybackslash}p{0.34\linewidth}}
\toprule
\textbf{Item checked} & \textbf{Range} & \textbf{Outcome} \\
\midrule\endfirsthead
\toprule
\textbf{Item checked} & \textbf{Range} & \textbf{Outcome} \\
\midrule\endhead
the two single-step formulas of Theorem~\ref{thm:3.1}(A) & all legal single steps, $n\le18$; 71244 steps & no counterexample \\
the two inequalities of Theorem~\ref{thm:3.1}(B) & all pairs, $n\le18$; 180124 pairs & no counterexample \\
the energy sandwich $q^*\le B\le2q^*-1$ of Corollary~\ref{cor:3.10} & all pairs, $n\le18$; 180124 pairs & no violation; $B=q^*$ on 70840 pairs \\
the per-$n$ maximum of $B/q^*$ & $2\le n\le18$ & equals $2-1/\lfloor n/2\rfloor$ throughout, always attained by $\bigl((2k),(k,k)\bigr)$, padded with a part $1$ for odd $n$ \\
the per-table bound of Theorem~\ref{thm:3.9}, over \textbf{every} table & all pairs and all $X\in T(\lambda,\mu)$, $n\le8$; 108592 tables & no violation; equality in 45346 cases; $\min_X\operatorname{cost}=q^*$ throughout \\
the witness family of Proposition~\ref{prop:3.11} & $k\le60$ & $q^*=k^2$, $B=2k^2-k$, Theorem~\ref{thm:3.9} an equality \\
the reduction identity of Lemma~\ref{lem:3.2} & $n\le9$; 1716 combinations & no counterexample \\
exhaustiveness of the endpoints in Lemma~\ref{lem:3.3} & $p,q\le13$; 182 combinations & $\max Q$ agrees with the endpoint value throughout \\
the residue examples of \S\ref{sec:3.3} & $(a,b,n)=(5,3,15),(7,4,28),(8,3,24)$ & true maxima $37,94,64$ against row-wise bounds $39,100,66$ \\
\bottomrule
\end{longtable}
\end{small}

\textbf{B.2 The sorted-degree coordinate.}

\begin{small}
\begin{longtable}{>{\raggedright\arraybackslash}p{0.34\linewidth}>{\raggedright\arraybackslash}p{0.24\linewidth}>{\raggedright\arraybackslash}p{0.34\linewidth}}
\toprule
\textbf{Item checked} & \textbf{Range} & \textbf{Outcome} \\
\midrule\endfirsthead
\toprule
\textbf{Item checked} & \textbf{Range} & \textbf{Outcome} \\
\midrule\endhead
the cancellation of Proposition~\ref{prop:4.4}(A) & $n\le9$; 861 pairs, exact rational comparison & identity throughout \\
$L^+=\sqrt2$ of Lemma~\ref{lem:4.5} & $n\le9$ & attained for each $n$, never exceeded \\
the padding invariance of Lemma~\ref{lem:4.7} & $k=2,3$, $n\le13$; 10 configurations & $q^*$ and $\lVert z\rVert_2$ unchanged \\
the sandwich of Proposition~\ref{prop:4.4}(F) & $n=6,\dots,10^5$ & holds throughout \\
\bottomrule
\end{longtable}
\end{small}

\textbf{B.3 The multiscale family.} Here $\kappa=\sqrt{\mathcal E(v)}\,n^{1/4}/\lVert v\rVert_1$ as in \S\ref{sec:7}, and $\kappa_\gamma=\lVert F^{(\gamma)}(\lambda)-F^{(\gamma)}(\mu)\rVert_2\,n^{1/4}/\lVert v\rVert_1$ is the same ratio at a general weight, so that $\kappa=\kappa_{1/4}$.

\begin{small}
\begin{longtable}{>{\raggedright\arraybackslash}p{0.34\linewidth}>{\raggedright\arraybackslash}p{0.24\linewidth}>{\raggedright\arraybackslash}p{0.34\linewidth}}
\toprule
\textbf{Item checked} & \textbf{Range} & \textbf{Outcome} \\
\midrule\endfirsthead
\toprule
\textbf{Item checked} & \textbf{Range} & \textbf{Outcome} \\
\midrule\endhead
quantization and the budget $\mathrm{TV}\le2n$ (Lemma~\ref{lem:2.6}), and the three counts of Lemma C.1 & all pairs $4\le n\le18$; 800 sampled pairs at $n=60,120$ & no violation \\
the per-run bound of Proposition~\ref{prop:6.6}, as $7\mathcal E/\sum_R\mu_R^2W_R^{-1/2}\ge1$ & as above, plus all named families & minimum $5.0$ (chirp, $m=102$): a margin of $5\times$ over the stated constant \\
the octave bound of Proposition~\ref{prop:6.14}, as $14\bigl(\sum_i\sqrt{\tau_i}\bigr)\mathcal E/\lVert v\rVert_1^2\ge1$ & as above & minimum $24.6$: a margin of $25\times$ \\
the scaling of Theorem~\ref{thm:5.8}, breathing pairs & $n=2^{10},\dots,2^{18}$ & $\lVert F^{(0)}v\rVert_2\,n^{1/4}/\lVert v\rVert_1\in[2.83,3.37]$, bounded oscillation with no logarithmic drift; the level-$0$ share of the energy rises to $0.9993$ \\
the scaling of Theorem~\ref{thm:5.8}, spike pairs & $n=2^{10},\dots,2^{18}$ & $\bigl(\lVert F^{(0)}w\rVert_2/\lVert w\rVert_1\bigr)/\sqrt{J+1}\to0.61$, confirming expansion of order $\sqrt{\log n}$ \\
$\min_{\lambda\ne\mu\vdash n}\kappa$, exhaustive (up to $1.57\times10^7$ pairs per $n$) & $6\le n\le30$ & values in $[0.887,1.159]$, not monotone in $n$; every minimizer is an interleaved doubled ladder with alternating ownership, and the record lows fall at $n=12,20,30$, where a ladder fits exactly: $(6,3,3)$ against $(5,5,1,1)$, $(8,5,5,1,1)$ against $(7,7,3,3)$, and $(10,7,7,3,3)$ against $(9,9,5,5,1,1)$. The first and third are the chirp pairs of \S\ref{sec:7} at $m=6,10$; the second is the $m=8$ ladder, outside the residue class $m\equiv2\pmod4$ used there. At the remaining $n$ the minimizer is a truncated or unit-padded variant of the same ladder \\
chirp calibration against $\sqrt{2/3}=0.8165$ (Proposition~\ref{prop:7.1}) & $m\approx2\sqrt n$, $n=2^{10},\dots,2^{18}$ & $\kappa$ falls from $0.8303$ to $0.8173$; energy shares $E_0:E_1:E_{\ge2}=1-O(10^{-4}):O(10^{-4}):O(10^{-5})$, matching Proposition~\ref{prop:7.1}(b) \\
adversarial families at scale: dipole trains of width $W\in\{2,\dots,256\}$; multiscale cascades over up to $7$ geometric scales & $n=2^{16}$, and $n=2^{12},\dots,2^{18}$ at fixed $W$ & $\kappa\in[1.98,2.83]$ and flat in $n$ at fixed $W$; cascades give $\kappa\ge1.59$, non-monotone in the number of scales \\
the upper Lipschitz bound $\lVert F^{(1/4)}v\rVert_2\le1.848\lVert v\rVert_1$ of Proposition~\ref{prop:5.2} & 300 random pairs plus all named families & maximum observed ratio $1.108$ \\
criticality of the exponent (Proposition~\ref{prop:5.9}), on $(n)$ against $(1^n)$ & $n=2^{10},\dots,2^{18}$ & the level-$0$-only functional decays like $n^{-1/4}$; $\kappa_{1/4}\to1.8465\approx\sqrt{2+\sqrt2}$, flat; $\kappa_{0.35}\cdot n^{0.1}$ flat, so the measured decay exponent equals $\gamma-\frac14$ \\
\bottomrule
\end{longtable}
\end{small}

\textbf{B.4 The quantized inverse-energy theorem.} Every inequality of \S\ref{sec:6} is checked
separately, on three sources: every pair of partitions of $n$, which is the realizable
family; random closed quantized sequences, which are not staircase differences and so test
the theorem in the generality it is stated in; and an exhaustive enumeration of the closed
quantized sequences with $|v(s)|\le2s$, which is the regime in which the extremal ratio of
Proposition~\ref{prop:6.9} lives. The thick regime $L\ge252\,\mathrm{TV}(v)$ is out of reach at these sizes -- the support
of $v$ lies in $[2,\mathrm{TV}(v)]$ and $\max_s|v(s)|\le\mathrm{TV}(v)/2$ by Lemma~\ref{lem:6.3}, so
$L\le\mathrm{TV}(v)^2/2$ and the regime needs $n\ge252$ -- and it is therefore checked
separately on the classical pair.

\begin{small}
\begin{longtable}{>{\raggedright\arraybackslash}p{0.34\linewidth}>{\raggedright\arraybackslash}p{0.24\linewidth}>{\raggedright\arraybackslash}p{0.34\linewidth}}
\toprule
\textbf{Item checked} & \textbf{Range} & \textbf{Outcome} \\
\midrule\endfirsthead
\toprule
\textbf{Item checked} & \textbf{Range} & \textbf{Outcome} \\
\midrule\endhead
every inequality of \S\ref{sec:6.2}--\S\ref{sec:6.6} simultaneously & all pairs $\lambda\ne\mu\vdash n$, $2\le n\le14$; 20283 pairs & no violation \\
the same, on sequences outside the realizable family & random closed quantized sequences, $N\in\{4,8,16,32,64\}$; 1995 sequences & no violation \\
the Whitney tiling of Lemma~\ref{lem:6.4}, over every subinterval of every run & as above & partition and two-per-scale hold throughout; $\sup\sum_K\lvert K\rvert^{1/2}\lvert I\rvert^{-1/2}=2.856$ against the proved $6.829$ \\
the capital lemma $M_{\rm low}\le8M_{\rm cap}$ of Proposition~\ref{prop:6.9}, exhaustively & all closed quantized sequences with $\lvert v(s)\rvert\le2s$, $N\le13$; 53808492 sequences & no violation; the maximum of $M_{\rm low}/M_{\rm cap}$ equals $(N-4)/(N+2)$ at every $N$ in range, hence $0.600$ at $N=13$, against the proved $8$ \\
Theorem~\ref{thm:6.2} in the thick regime & $(n)$ against $(1^{\,n})$, $n=600,\,2000,\,20000$ & holds with $63504\,\mathcal E\sqrt{\mathrm{TV}}/\lVert v\rVert_1^2\approx3.0\times10^5$ throughout \\
\bottomrule
\end{longtable}
\end{small}

The margins over the constants actually proved, minimized over all of the above, are:
$7\mathcal E/\sum_R\mu_R^2W_R^{-1/2}\ge6.71$ for Proposition~\ref{prop:6.6};
$14P\sqrt T/M_{\rm cap}^{\rm clean}\ge20.8$ for Lemma~\ref{lem:6.8}, over $P/\sqrt T\in\{1,1.3,2,4\}$;
$35\,\mathcal E/L^{3/2}\ge40.7$ for Lemma~\ref{lem:6.13}; and
$63504\,\mathcal E\sqrt{\mathrm{TV}(v)}/L^2\ge8.5\times10^4$ for Theorem~\ref{thm:6.2} itself.
The last figure is the numerical form of the discussion in \S\ref{sec:6.7}: the constant proved is
some five orders of magnitude smaller than anything the families reach, and the loss is in
the bookkeeping rather than in the statement.

The enumeration code, with running instructions and version information, is provided as supplementary material; without it these entries are not independently verifiable, and they are relied upon nowhere in the paper.

\section{Two deferred proofs}\label{sec:C}

Neither proof below is used by the main line. The first supplies Proposition~\ref{prop:6.14}, a bound with a far better constant than Theorem~\ref{thm:6.2} but blind on flat octave spectra; it was the estimate obtained in the first version of this work, and it is recorded because it remains the sharper of the two off that configuration. The second is a spectral restatement of Corollary~\ref{cor:3.10}, of independent interest but used nowhere.

\subsection{The octave bound}\label{sec:C.1}

Keep the notation of \S\ref{sec:6}. For $0\le i\le J$ put
\begin{equation*}
\text{octave }i=\bigl[2^i,\,2^{i+1}\bigr)\cap[N],\qquad \tau_i=\sum_{s\in[2^i,2^{i+1})}\bigl|\Delta v(s)\bigr| ,
\end{equation*}
so that $\sum_i\tau_i=\mathrm{TV}(v)$, and assign each run $R$ to the octave containing its right endpoint $b_R$.

\begin{lemma}[octave counting]\label{lem:C.1}
Let $K_i$ be the number of runs assigned to octave $i$ and $\Omega_i$ the total of their widths. Then
\begin{equation*}
\text{(1) } K_i\le\tau_i\,2^{-i},\qquad \text{(2) } \Omega_i\le2^{\,i+1},\qquad \text{(3) }\sum_{R\text{ in octave }i}\sqrt{W_R}\ \le\ \sqrt{2\tau_i} .
\end{equation*}
\end{lemma}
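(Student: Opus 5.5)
All three counts follow from the observation that a run \emph{ends with a jump}. Every bound is taken over the runs assigned to octave $i$, that is, those whose right endpoint $b_R$ lies in $[2^i,2^{i+1})\cap[N]$.

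\textbf{(1).} Let $R=[a_R,b_R]$ be such a run. By maximality of $R$, the value $v(b_R+1)$ is either zero or of sign opposite to $v(b_R)\ne0$. In both cases $\Delta v(b_R)=v(b_R)-v(b_R+1)\ne0$; when $v(b_R+1)$ has the opposite sign the jump even exceeds $|v(b_R)|$. The case $b_R=N$ is included, since the closure condition gives $v(N+1)=0$ and so $\Delta v(N)=v(N)\ne0$, with $N=2^J$ lying in octave $J$. Lemma~\ref{lem:6.3}, in the form (6.2), then gives
\begin{equation*}
|\Delta v(b_R)|\ \ge\ b_R\ \ge\ 2^i .
\end{equation*}
Distinct runs are disjoint, so they have distinct right endpoints. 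The jumps $\Delta v(b_R)$ are therefore distinct terms of the sum defining $\tau_i$, and summing over the $K_i$ runs of octave $i$ gives $K_i2^i\le\tau_i$. Adjacent runs created by a one-step sign change need no separate treatment: each run is charged only at its own right endpoint, and no two runs share one.

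\textbf{(2).} Each run of octave $i$ is an integer interval contained in $[1,b_R]\subseteq[1,2^{i+1})$, and runs are pairwise disjoint. Their widths therefore total at most $2^{i+1}-1$, which gives $\Omega_i\le2^{i+1}$. (Since $a_R\ge2$, one even gets $\Omega_i\le 2^{i+1}-2$.)

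\textbf{(3).} Apply Cauchy--Schwarz over the $K_i$ runs of octave $i$ and then (1) and (2):
\begin{equation*}
\sum_{R\text{ in octave }i}\sqrt{W_R}\ \le\ \sqrt{K_i\,\Omega_i}\ \le\ \sqrt{\tau_i2^{-i}\cdot2^{i+1}}\ =\ \sqrt{2\tau_i} .
\end{equation*}
The only step that needs care is the nonvanishing of the terminal jump in (1), including the boundary case $b_R=N$. Everything else is counting.
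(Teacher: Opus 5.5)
Your proposal is correct and matches the paper's proof step for step. Like the paper, you use the nonzero terminal jump at $b_R$ together with (6.2) for (1), disjointness of the runs inside $[1,2^{i+1})$ for (2), and Cauchy--Schwarz combined with (1) and (2) for (3); your explicit handling of $b_R=N$ and the sharper bound $\Omega_i\le2^{i+1}-2$ are harmless refinements the paper leaves implicit.
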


\begin{proof}
(1) A run $R$ is maximal, so $v(b_R+1)$ is zero or of the opposite sign to $v(b_R)\ne0$; hence $\Delta v(b_R)\ne0$ and, by (6.2), $|\Delta v(b_R)|\ge b_R\ge2^i$. Distinct runs have distinct right endpoints, all lying in octave $i$, so $\tau_i\ge K_i2^i$.

(2) Every run assigned to octave $i$ ends before $2^{i+1}$, hence is contained in $[1,2^{i+1})$, and runs are pairwise disjoint.

(3) By Cauchy--Schwarz and then (1) and (2), $\sum\sqrt{W_R}\le\sqrt{K_i\Omega_i}\le\sqrt{\tau_i2^{-i}\cdot2^{i+1}}=\sqrt{2\tau_i}$.
\end{proof}

\begin{proof}[Proof of Proposition~\ref{prop:6.14}]
Write $\Lambda_i=\sum_{R\text{ in octave }i}\mu_R$ and $Q_i=\sum_{R\text{ in octave }i}\mu_R^2/\sqrt{W_R}$. Within octave $i$, Cauchy--Schwarz and Lemma C.1(3) give
\begin{equation*}
\Lambda_i^2\ \le\ Q_i\sum_{R\text{ in octave }i}\sqrt{W_R}\ \le\ Q_i\sqrt{2\tau_i},\qquad\text{that is}\qquad Q_i\ \ge\ \frac{\Lambda_i^2}{\sqrt{2\tau_i}} .
\end{equation*}
Across octaves, Cauchy--Schwarz again, followed by this bound and then Proposition~\ref{prop:6.6} applied to the family of all runs:
\begin{equation*}
\lVert v\rVert_1^2=\Bigl(\sum_i\Lambda_i\Bigr)^2\le\Bigl(\sum_i\frac{\Lambda_i^2}{\sqrt{\tau_i}}\Bigr)\Bigl(\sum_i\sqrt{\tau_i}\Bigr)\le\sqrt2\Bigl(\sum_iQ_i\Bigr)\Bigl(\sum_i\sqrt{\tau_i}\Bigr)\le7\sqrt2\;\mathcal E(v)\sum_i\sqrt{\tau_i},
\end{equation*}
where the sums range over octaves with $\tau_i>0$, an octave with $\tau_i=0$ containing no run by Lemma C.1(1). Since $7\sqrt2<14$, this is (6.21). The two consequences follow by substituting the hypothesis, respectively by Cauchy--Schwarz over the at most $J+1$ octaves.
\end{proof}

Applied to a realizable pair with $\mathrm{TV}(v)\le2n$ (Lemma~\ref{lem:2.6}), the last form gives $\mathcal E(v)\ge\lVert v\rVert_1^2/\bigl(14\sqrt{2n(\log_2n+2)}\bigr)$ and hence $\rho(F^{(1/4)})\le25\,n^{1/4}(\log_2n+2)^{1/4}$, which is the bound Theorem~\ref{thm:5.11} improves to $O(n^{1/4})$.

\subsection{A spectral reformulation of the block-energy model}\label{sec:C.2}

The block-energy vector is a spectrum, which gives Corollary~\ref{cor:3.10} a reading in which the combinatorial alignment is relaxed to a unitary one. Let $A_\lambda$ be the adjacency matrix of $G_\lambda$, let $Q_\lambda:=\frac12(A_\lambda^2+A_\lambda)$ be its \textbf{clique-energy operator}, and write $\lVert\cdot\rVert_{S_1}$ for the trace norm.

\begin{proposition}\label{prop:C.2}
$\operatorname{spec}(Q_\lambda)=\{g(\lambda_1),\dots,g(\lambda_{k_\lambda})\}\cup\{0^{\,n-k_\lambda}\}$, and
\begin{equation*}
\min_{U\in\mathrm U(n)}\bigl\lVert Q_\lambda-UQ_\mu U^*\bigr\rVert_{S_1}\;=\;B(\lambda,\mu) .
\end{equation*}
Consequently $q^*(\lambda,\mu)\le\min_U\lVert Q_\lambda-UQ_\mu U^*\rVert_{S_1}\le2\,q^*(\lambda,\mu)$ by Corollary~\ref{cor:3.10}.
\end{proposition}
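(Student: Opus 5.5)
The proof splits into three steps: compute the spectrum of $Q_\lambda$ blockwise, reduce the unitary minimum to a statement about eigenvalue lists, and finish with Corollary~\ref{cor:3.10}.

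\textbf{Step 1: the spectrum.} Order the vertices by blocks, so that $A_\lambda$ is block diagonal with blocks $J_{\lambda_i}-I_{\lambda_i}$, and $Q_\lambda$ is block diagonal as well. On a block of size $a$ we have $(J-I)^2=(a-2)J+I$, hence
\begin{equation*}
\tfrac12\bigl((J-I)^2+(J-I)\bigr)=\tfrac12(a-1)J .
\end{equation*}
Since $J_a$ has eigenvalues $a$ (once) and $0$ ($a-1$ times), the block contributes the eigenvalue $\tfrac12 a(a-1)=g(a)$ once and $0$ otherwise. Collecting the blocks gives the claimed spectrum, with $k_\lambda$ atoms $g(\lambda_i)$ and $n-k_\lambda$ zeros. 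In particular $Q_\lambda$ is positive semidefinite and Hermitian.

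\textbf{Step 2: the unitary minimum.} I would use the classical fact that for Hermitian $X,Y$ with eigenvalues sorted non-increasingly,
\begin{equation*}
\min_{U}\lVert X-UYU^*\rVert_{S_1}=\sum_i\bigl|\lambda_i^\downarrow(X)-\lambda_i^\downarrow(Y)\bigr| .
\end{equation*}
The upper bound is attained by the $U$ that simultaneously diagonalizes $X$ and $UYU^*$ in the same sorted eigenbasis. The lower bound is Lidskii's theorem: $\lambda(X-Z)$ majorizes $\lambda^\downarrow(X)-\lambda^\downarrow(Z)$, and the trace norm, being the $\ell_1$ norm of the eigenvalues of a Hermitian matrix, is a Schur-convex symmetric gauge. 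This is the only nontrivial ingredient. I would cite it (Lidskii--Wielandt; see Bhatia, \emph{Matrix Analysis}, Ch.~IV) rather than reprove it, though the reference would have to be added to the paper. If an in-paper argument is preferred, I would instead derive the lower bound from the Ky Fan characterization of partial eigenvalue sums, applied to $\sum_i|\cdot|$ by splitting into positive and negative parts. This is the step I expect to require the most care.

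\textbf{Step 3: identification with $B$ and the consequence.} Both spectra, sorted descending, are exactly the zero-padded lists $e(\lambda)$ and $e(\mu)$ of length $n$. The sum from Step 2 is therefore $\lVert e(\lambda)-e(\mu)\rVert_1=B(\lambda,\mu)$, consistent with Proposition~\ref{prop:3.4}(b) with $K=n$.

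The consequence then follows directly from Corollary~\ref{cor:3.10}: $q^*\le B\le 2q^*-1\le 2q^*$ when $\lambda\ne\mu$. When $\lambda=\mu$ all three quantities are $0$.
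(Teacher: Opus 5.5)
Your proposal is correct and follows the paper's proof. You compute the spectrum blockwise, bound the unitary minimum below by the sorted spectral $\ell_1$ distance and attain it by co-diagonalizing in sorted order, identify the sorted spectra with the padded vectors $e(\cdot)$, and then apply Corollary~\ref{cor:3.10}. The differences are cosmetic: you compute each block directly via $(J-I)^2=(a-2)J+I$ instead of pushing the eigenvalues $t-1$ and $-1$ through $x\mapsto\frac{x^2+x}{2}$, and you invoke Lidskii's majorization plus Schur-convexity where the paper cites Mirsky's inequality. Mirsky's inequality is exactly that consequence and is already in the paper's bibliography, so no new reference is needed.
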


\begin{proof}
$A_\lambda=\bigoplus_i(J_{\lambda_i}-I)$, and $J_t-I$ has eigenvalues $t-1$ once and $-1$ with multiplicity $t-1$. The polynomial $x\mapsto\frac{x^2+x}2$ sends $t-1\mapsto g(t)$ and $-1\mapsto0$, giving the spectrum. For Hermitian $P,P'$, Mirsky's inequality \cite{ref32} gives $\lVert P-P'\rVert_{S_1}\ge\lVert\operatorname{spec}^\downarrow(P)-\operatorname{spec}^\downarrow(P')\rVert_1$; since $UQ_\mu U^*$ has the spectrum of $Q_\mu$, the minimum is at least the sorted spectral $\ell_1$ distance, which is $B(\lambda,\mu)$ because both spectra sorted descending are the padded vectors $e(\cdot)$. Choosing $U$ to co-diagonalize in matching order attains it.
\end{proof}

Relaxing the combinatorial alignment to a unitary one therefore changes the distance by at most a factor $2$, and the relaxed quantity is available in closed form in $O(n\log n)$ time, whereas $q^*$ is strongly NP-hard to compute by Theorem~\ref{thm:8.1}. Proposition C.2 is used nowhere else. It is recorded because it explains why $2$, rather than some smaller constant, is what a spectral method can see: the trace-norm minimum is blind to everything about a partition except its multiset of block energies.

\end{document}